\newif\iffinal
  \finaltrue

\documentclass[12pt]{book}
\usepackage[a4paper,vmargin=3cm,centering]{geometry}

\input{macros}  

\iffinal
\else
  \IfFileExists{crop_pages}{
  \paperwidth 15.6cm
  \paperheight 25.5cm
  \oddsidemargin  -0.8in
  \evensidemargin -0.8in
  \topmargin      -1.3in
  }{}
\fi

\begin{document}

\frontmatter 
\pagenumbering{gobble}

\begin{titlepage}
\newgeometry{hmargin=1cm,vmargin=2cm,centering}
\begin{center}
\vspace*{2cm}
{\LARGE Dissertation}\\

\vspace{8mm}
{\Huge \bf Reactive Synthesis:}

\vspace{3mm}
{\bf \huge {branching~logic~{\LARGE\&}~parameter\emph{i}zed~systems}}
\vspace{8mm}

{\LARGE Ayrat Khalimov}

\vspace{1cm}

{\large
\begin{tabular}{rl}
Advisor:    & Roderick Bloem \\&Graz University of Technology, Austria\\[0.5ex]
Reviewer: & Sven Schewe \\&University of Liverpool, UK\\
Dean of Studies: & Denis Helic\\&Graz University of Technology, Austria\\
\end{tabular}}

\vfill

\begin{figure}[!ht]
\centerline{\includegraphics[width=3.3cm,keepaspectratio=true]{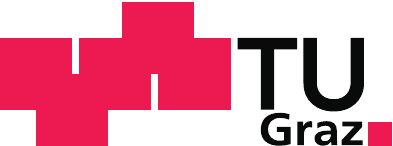}}
\end{figure}

{\large
Institute for Applied Information Processing and Communications\\
Graz University of Technology\\
A-8010 Graz, Austria\\
}

{\large Januar 2018}
\end{center}


\end{titlepage}

\restoregeometry

\cleardoublepage

\pagestyle{plain}
\pagenumbering{roman}


\begin{center}
{\Large\bfseries Acknowledgements}
\end{center}
This work would not be possible without RiSE network (established by Roderick Bloem and \boxed{\text{Helmut Veith}}).
I stumbled upon the poster with the PhD position by chance,
during a relaxed walk at EPFL where I was doing an internship.\\
I am grateful to my advisor Roderick Bloem,
who honestly answered my questions, patiently directed me by asking questions and pitching ideas,
and who always listened.
Sasha Rubin showed how to be rigid and develop theories,
with Swen Jacobs we wondered a lot around parameterised synthesis bouncing the token from each other,
and Sven Schewe convinced me that tree automata are easy-peasy.\\
My colleagues Robert K\"onighofer, Georg Hofferek, and Bettina K\"onighofer helped
in the initial integration and changed my attitude towards people outside of Russia.
Our secretaries Martina Piewald, Melanie Blauensteiner, Ursula Urwanisch, and Angelika Wagner
enabled me to focus on my work without administrative distractions.\\
Dedicated to my grandfather Rauf and our large family.

\cleardoublepage

\begin{center}
{\Large\bfseries Statutory Declaration}
\end{center}
\vspace{5mm}
\noindent
I declare that I have authored this thesis independently, that I have
not used other than the declared sources/resources, and that I have
explicitly marked all material which has been quoted either literally
or by content from the used sources.

\vspace{2cm}

\noindent
\begin{tabular}{ccc}
\hspace*{5cm}     & \hspace*{1.5cm}   & \hspace*{5.7cm}\\
\dotfill          &                 & \dotfill\\
place, date       &                 & signature\\
\end{tabular}


\cleardoublepage


\begin{center}
{\Large\bfseries Abstract}
\end{center}
Reactive synthesis is an automatic way
to translate a human intention expressed in some logic
into a system of some kind.
This thesis has two parts, devoted to logic and to systems.

\parbf{Part I}
In 1963 Alonzo Church introduced the synthesis problem~\cite{Church63}
for specifications in monadic second-order logic.
Nowadays most model checkers and synthesizers use linear temporal logic (LTL)~\cite{pnueli1977temporal}.
LTL reasons about system runs in a linear fashion.
With LTL we can ask ``does a run reach a particular state?'' or
``does a run visits a particular state infinitely often?''.
LTL is linear in its nature, leaving the designer without structural properties,
which are expressible in computation tree logic (CTL) and its generalization \CTLstar~\cite{ctl-origin,ctlstar-origin}.
With \CTLstar we can ask
``does a run never visit a particular state but it has a \emph{possibility}
  to reach it?''.
Such properties are important---they allow for fine-tuning the system structure.

In Part I, we develop two new approaches to \CTLstar synthesis.
The first approach is an extension (actually, two) of
the SMT-based bounded synthesis~\cite{BS}.
We describe two extensions:
one follows bottom-up \CTLstar model checking,
another one follows the automata framework~\cite{ATA}.
Then we develop the approach that reduces \CTLstar synthesis to LTL synthesis.
The reduction turns any LTL synthesiser into a \CTLstar synthesiser.
The approaches were implemented and are available online.

\parbf{Part II}
Modern systems become more and more distributed.
Such distributed systems are typically parameterized by the number of processes:
they should work for any number of processes.
The parameterized synthesis problem~\cite{JB14} asks,
given a parameterized specification, to find a process template,
that can be cloned to form a correctly behaving system of any size.
At the core of the method is the cutoff reduction technique:
reduce reasoning about systems with an arbitrary number of processes
to reasoning about systems of a fixed cutoff size.
The intrinsic parameter, hidden in the parameterized synthesis problem,
is how the processes are connected and how they communicate,
i.e., the system architecture.

In Part II, we study parameterized synthesis for two system architectures.
The first architecture is guarded systems~\cite{EmersonK03}
and is inspired by cache coherence protocols.
In guarded systems,
processes transitions are enabled or disabled depending on the existence of other processes in certain local states.
The existing cutoff results~\cite{EmersonK03} for guarded protocols
are restricted to closed systems, and are of limited use for liveness properties.
We close these gaps and prove tight cutoffs for open systems
with liveness properties, and also cutoffs for detecting deadlocks.

The second architecture is token-ring systems~\cite{Emerso03},
where the single token circulates processes arranged in a ring.
The experiments with the existing parameterized synthesis method~\cite{JB14}
showed that it does not scale to large specifications.
First, we optimize the method by refining the cutoff reduction,
using modularity and abstraction.
The evaluation show several orders of magnitude speed-ups.
Second, we perform parameterized synthesis case study on the industrial arbiter protocol AMBA~\cite{AMBAspec}.
We describe new tricks%
---a new cutoff extension and decompositional synthesis---%
that, together with the previously described optimizations, allowed us to synthesize AMBA
in a parameterized setting, for the first time.


\tableofcontents




\mainmatter

\chapter{Introduction}

{\small
\noindent
\li
\-[Dave:~] Hey, Elli, how can I calculate the week number from the date?
\vspace{-3mm}
\-[Elli:~] ...prints the C-function.
\vspace{-3mm}
\-[Dave:~] Great. Can this function also output three last requested dates?
\vspace{-3mm}
\-[Elli:~] ...prints another C-function.
\vspace{-3mm}
\-[Dave:~] Thanks. Can you also make it output the name of the requesting person?
\vspace{-3mm}
\-[Elli:~] I am sorry, Dave, I am afraid I can't do that.
\il
}

In synthesis,
we describe the required behaviour and ask the computer to find the solution
with such a behaviour.
(In the dialog above,
 Dave asks Elli to find a function that,
 given a date, outputs the week number to which the date belongs.)
In \emph{reactive} synthesis, we are interested not in simple ``do-and-forget'' functions,
but rather in functions that interact with the user
akin to functions with an internal state.
(In the dialog above, the second C-function is reactive.)
It is not always possible to find a solution,
in which case the synthesizer (Elli) outputs ``specification is unrealizable''.
(In the last dialogue request, the specification became unrealizable,
 because the person name is not available to the function to be synthesized.)


In 1963, Alonzo Church introduced the reactive synthesis problem~\cite{Church63}:
given a formula in Monadic Second Order Logic of One Successor,
and the inputs and the outputs of a circuit,
find such a circuit such that all behaviors of the circuit satisfy the formula.
(The circuit behaviour is an infinite string of inputs combined with the outputs.)
Church's problem was solved by Rabin~\cite{Rabin69} and
by B\"uchi and Landweber~\cite{BL69} in 1969.\ak{how?}


Recent research in reactive synthesis focused on specifications
given in Linear Temporal Logic (LTL),
introduced by Pnueli~\cite{pnueli1977temporal} in 1977.
LTL has temporal operators, like $\G$ (always) and $\F$ (eventually),
and allows one to state properties like ``every request is eventually granted'':
$\G(r \impl \F g)$.
A system satisfies a given LTL property if all its computations satisfy it.
Pnueli and Rosner proved~\cite{DBLP:conf/popl/PnueliR89}
that the LTL synthesis problem is 2EXPTIME-complete.
Their approach translates a given LTL formula into a nondeterministic B\"uchi automaton,
then determinises it into a deterministic parity automaton
with the aid of involved Safra construction~\cite{Safra},
turns the automaton into a game, and solves the game.
Recent research focused on how to overcome the high complexity and Safra construction:
the work~\cite{Bloem12} considered the synthesis for a subset of LTL called GR(1),
the work~\cite{BS,KupfermanV05} considered bounding the system size and gave a name to Bounded Synthesis,
by combining the previous bounding with efficient data structures---Anti-chains Synthesis~\cite{Filiot11}.
The SYNTCOMP competition~\cite{syntcomp} is another recent initiative
with the goal to advance efficient synthesisers and popularise reactive synthesis.


Despite substantial progress,
reactive synthesis is not as widespread as model checking.
The major reason, I believe,
is that writing the specifications---especially \emph{complete} specifications---is hard.
The issue is less pronounced in model checking,
because we do not need all the properties,
only those to model check.


In light of this issue, there are two directions to proceed.
First, we can develop synthesis approaches for richer logics,
which can ease writing the specifications.
Second, we can find application contexts where high specification costs are acceptable.
This thesis targets both directions:
we develop new synthesis approaches for the logic called \CTLstar,
and we delve into synthesis of distributed algorithms.

\subsection*{Part I: Excursion into Branching Logic}


Computation Tree Logic (CTL)~\cite{ctl-origin}
was introduced by Emerson and Clarke in 1981
to circumvent the high complexity (PSPACE-complete) of the LTL model checking problem
and to be able to specify structural properties.
In 1986 Emerson and Halpern introduced a generalization,
Computation Tree Star Logic (\CTLstar)~\cite{ctlstar-origin},
that subsumes both CTL and LTL.


In contrast to LTL, which reasons about (linear) computation runs,
\CTLstar reasons about (branching) computation trees.
We can get such a tree by unfolding the system transition structure.
\CTLstar has---in addition to temporal operators---path quantifiers:
$\A$ (on all paths) and $\E$ (there exists a path).
Such path quantifiers allow us to reason about branching structure of trees,
not just about their ``linear'' paths.
For example, \CTLstar formula ``$\AGEF reset$'' says:
``on all tree paths, from every tree node,
  there should be a path into a node where `reset' holds''.
We cannot express such a property using LTL alone.


Despite \CTLstar being more expressible than LTL,
the complexity of \CTLstar synthesis (2EXPTIME-complete)
stays the same.
This prompted us to look into approaches to \CTLstar synthesis.


The standard solution~\cite{informatio} to \CTLstar synthesis
turns the \CTLstar formula into an alternating hesitant tree automaton,
removes nondeterminism and derives a universal co-B\"uchi tree automaton,
determinises it using Safra construction~\cite{Safra} into a parity tree automaton,
and, finally, checks its non-emptiness.
If it is empty, then the specification is unrealisable,
otherwise we can extract the system from the proof of the non-emptiness.
This approach is hard to implement correctly and efficiently,
due to the involved Safra construction%
\footnote{It was a common belief that the Safra construction
  is difficult to implement and results in impractical algorithms.
  However, the belief might be wrong,
  as SYNTCOMP~\cite{syntcomp} in 2017 showed:
  the LTL synthesiser {\tt ltl-synt} that used Safra construction performed very well.}.


Part I contribution is two practical approaches to \CTLstar synthesis.

\subsubsection*{Contribution I.1: \CTLstar Bounded Synthesis}

We developed two bounded synthesis approaches for the \CTLstar specifications.
Let us recall how the SMT-based bounded synthesis by Schewe and Finkbeiner~\cite{BS} works:
we bound the system size, and
encode the resulting synthesis problem into an SMT query\footnote{%
  Satisfiability Modulo Theory (SMT)~\cite{SMT} query is a
  set of constraints over in a given theory.
  For example, in Linear Integer Arithmetic theory,
  the constraints talk about integer variables,
  use operations plus, minus, and the comparison relations.
  Such a query asks whether there are values for integer variables
  that make the constraint true.}.
The query encodes the model checking question:
whether a system---which is yet unknown---is accepted by the automaton.
Bounding the system size makes it possible to encode such a model checking query
into an SMT query.
To solve such a query,
an SMT solver efficiently enumerates every possible system of a given size,
and checks if it is correct.
Thus, if the SMT query is satisfiable, then we extract the system (of the given size),
otherwise increase the system size and repeat.
The loop stops when the bound on the system size%
---provided by the user or from the theory---is reached.

Our first bounded synthesiser for \CTLstar
resembles bottom-up \CTLstar model checking~\cite{PrinciplesMC}:
it introduces an atom for each subformula of the \CTLstar formula,
and encodes into an SMT query whether the atom holds in a system state,
for every state.
We also require the top-level atom, representing the whole \CTLstar formula,
to hold in the initial system state,
Hence, if the SMT query is satisfiable,
then there is a system of the given size,
which satisfies the \CTLstar formula.
Otherwise, increase the system size and repeat.

Our second bounded synthesiser for \CTLstar uses the automata framework~\cite{ATA}:
translate the \CTLstar formula into an alternating hesitant automaton,
then encode into an SMT query
whether there is a system of a given size that is accepted by the automaton.
Conceptually, the approach is the same as the previous one,
except that we do not introduce atoms for subformulas explicitly
and instead use their automata representation.

The results constitute Chapter~\ref{chap:bosy:ctlstar} and were published in:
\li
\-[\cite{CTLstarCAV}]
   \emph{Bounded Synthesis for Streett, Rabin, and \CTLstar},
   by Ayrat Khalimov and Roderick Bloem,
   at CAV conference, 2017
\il

\subsubsection*{Contribution I.2: \CTLstar-via-LTL Synthesis}

We reduce synthesis for \CTLstar properties to synthesis for LTL.
In the context of model checking this is impossible%
---\CTLstar is more expressive than LTL.
Yet, in synthesis we have knowledge of the system structure
\emph{and} we can add new outputs.
These outputs can be used to encode witnesses of
the satisfaction of \CTLstar subformulas directly into the system.
This way, we construct an LTL formula, over old and new outputs and original inputs,
which is realisable if, and only if, the original \CTLstar formula is realisable.
The \CTLstar-via-LTL synthesis approach preserves the problem complexity,
although it might produce systems that are larger than necessary.
Furthermore,
the approach directly benefits from the performance advances of LTL synthesisers.
The results constitute Chapter~\ref{chap:ctl-via-ltl} and were published in:
\li
\-[\cite{CTLsynt-via-LTLsynt}]
  \emph{\CTLstar Synthesis via LTL Synthesis},
  by Roderick Bloem and Sven Schewe and Ayrat Khalimov,
  at SYNT workshop, 2017
\il

\subsection*{Part II: Excursion into Parameterized Systems}


Modern systems become more and more distributed.
Distributed systems are hard to implement and even harder to debug.
Yet, the failure of such systems may be unacceptable.
Thus, substantial efforts are devoted to ensure the correctness of distributed systems.
In Part II,
we look into the hard task of automatic synthesis of distributed parameterized systems.


Most distributed systems, algorithms, and data structures are \emph{parameterized}:
they should work for a varied, not a priori fixed, number of the components.
The parameterized synthesis problem~\cite{JB14} asks, given
a parameterized specification, to find a process template,
that can be cloned to form a correctly behaving system of any size.
An example parameterized specification is:
\[ \begin{array}{ll}
  \forall i \neq j.~ & \G \neg ( g_i \land g_j ) \land \\
  \forall i.~ & \G (r_i \impl \F g_i).
  \end{array}
\]
The synthesizer should find a process template, having input $r$ and output $g$,
such that a system composed of any number of such processes,
satisfies the above specification.
The related question is that of parametrized model checking
where the process template is given.
The intrinsic parameter,
hidden in the parameterized synthesis problem,
is how the processes are connected and how they communicate,
i.e., the system architecture.
The survey of existing cutoff and decidability results
for many different system architectures can be found in~\cite{BloemETAL15}.
We focus on two system architectures: guarded systems and token-ring systems.


A common approach to solve the parameterized synthesis and model checking problems
is to use the cutoff reduction~\cite{Emerso03}:
reduce reasoning about systems with an arbitrary number of processes
to reasoning about systems of a fixed cutoff size.
For example,
if we consider the parameterized specification mentioned above and token-ring systems,
then it is enough to consider a system with 4 processes:
if it is correct, then any larger system is correct.


\subsubsection*{Contribution II.1: Cutoffs for Parameterized Guarded Systems}

Guarded systems~\cite{EmersonK03} are inspired by cache coherence protocols found
in most modern processors.
A cache coherence protocol is usually described by states,
where transitions between states happen depending on whether or not
there is a processor in a particular state.
I.e., the transitions are guarded.
Inspired by this, in guarded systems,
processes transitions are enabled or disabled depending
on the existence of other processes in certain local states.
Our contribution concerns both parameterized synthesis and parameterized verification.
Our work stems from the observation that
existing cutoff results for guarded systems
(i) are restricted to closed systems, and
(ii) are of limited use for liveness properties
because reductions do not preserve fairness.
We close these gaps and obtain new cutoff results for open systems with
liveness properties under fairness assumptions.
Furthermore, we obtain cutoffs for the detecting deadlocks,
which are of paramount importance in synthesis.
Finally, we prove tightness or asymptotic tightness for the new cutoffs.
The results constitute Chapter~\ref{chap:guarded-systems}
and were published in:
\li
\-[\cite{AJK16}]
   \emph{Tight Cutoffs for Guarded Protocols with Fairness},
   by Simon Au{\ss}erlechner and Swen Jacobs and Ayrat Khalimov,
   at VMCAI conference, 2016
\il


\subsubsection*{Contribution II.2: Case Study of Parameterized Token-ring AMBA}

In token-ring systems, a single token circulates in the system.
A process possessing the token knows that no other process has the token.
Based on this information, the process can, for example, raise the grant signal.
If all processes raise the grant only when they posses the token,
then the grants will be mutually exclusive.
Thus, the token serves as the resource token.

The experiments with the existing parameterized synthesis method~\cite{JB14}
showed that it does not scale to large specifications.
First, we optimize the method by refining the cutoff reduction.
The experiments show speed-ups of several orders of magnitude.
Second, we perform parameterized synthesis case study
on the industrial arbiter protocol AMBA~\cite{AMBAspec}.
We describe new cutoff extension and decompositional synthesis tailored to AMBA
that, together with the previously mentioned optimizations,
allowed us to synthesize AMBA in parameterized setting, for the first time.
The results constitute Chapter~\ref{chap:token-systems}
and were published in:
\li
\-[\cite{Khalimov13}]
   \emph{Towards Efficient Parameterized Synthesis},
   by Ayrat Khalimov and Swen Jacobs and Roderick Bloem,
   at VMCAI conference, 2013
\-[\cite{party}]
   \emph{PARTY: Parameterized Synthesis of Token Rings},
   by Ayrat Khalimov and Swen Jacobs and Roderick Bloem,
   at CAV conference, 2013
\-[\cite{BJK14}]
   \emph{Parameterized Synthesis Case Study: AMBA AHB},
   by Ayrat Khalimov and Swen Jacobs and Roderick Bloem,
   at SYNT workshop, 2014
\il

\subsection*{Other Results}

Here are the results that did not make their way into the thesis:
\li
\-[\cite{BloemETAL15}]
   \emph{Decidability of Parameterized Verification},
   book of 170 pages,
   by Roderick Bloem and
               Swen Jacobs and
               Ayrat Khalimov and
               Igor Konnov and
               Sasha Rubin and
               Helmut Veith and
               Josef Widder.\\
In this book we consider the important case of systems parameterized by the number of processes in the system
and where each process is independent of that number.
The literature in this area produced a wealth of computational models for systems based on token passing,
broadcast communication, guarded transitions, and other communication primitives.
We introduce a computational model that unites the central synchronization and
communication primitives of many models.
We survey existing decidability and undecidability results,
and provide a systematic overview of the basic problems in this research area.

\-[\cite{DBLP:journals/sigact/BloemJKKRVW16}]
   \emph{Decidability in Parameterized Verification},
   the journal version of the above book; appeared in SIGACT News in 2016.

\-[\cite{DBLP:journals/corr/Khalimov16}]
   \emph{Specification Format for Reactive Synthesis Problems},
   by Ayrat Khalimov,
   at SYNT workshop, 2015.\\
To do synthesis, we need a specification.
Writing specifications is hard.
In this paper, we propose a user-friendly format to ease
the specification work, in particularly, that of specifying partial implementations.
Also, we provide scripts to convert specifications in the new format into the SYNTCOMP format,
thus benefiting from state of the art synthesizers.

\-[\cite{AJKR14}]
   \emph{Parameterized Model Checking of Token-Passing Systems},
   by Benjamin Aminof and
               Swen Jacobs and
               Ayrat Khalimov and
               Sasha Rubin,
   at VMCAI conference, 2014.\\
In this paper, we revisit the parameterized model checking problem for token-passing
systems and specifications in indexed $\CTLstarmX$.
We unify and substantially extend the results of Emerson and Namjoshi~\cite{Emerso95b,Emerso03}
and Clarke et al.~\cite{Clarke04c}
by systematically exploring fragments of indexed \CTLstarmX with respect to general network topologies.
For each fragment we establish whether a cutoff exists,
and for some concrete topologies, such as rings, cliques and stars, we infer small cutoffs.
Finally, we show that the problem becomes undecidable, and thus no cutoffs exist,
if processes are allowed to choose the directions in which they send or from which they receive the token.

\-[\cite{2017arXiv171204291K}]
  \emph{OpenSEA: Semi-Formal Methods for Soft Error Analysis},
  by Patrick Klampfl and Robert K\"onighofer and Roderick Bloem and Ayrat Khalimov and
  Aiman Abu-Yonis  and Shiri Moran,
  on arxiv, 2017.\\
Due to alpha-particles and cosmic rays, modern circuits are prone to bit flips.
To alleviate the problem, designers develop protection circuits,
but they are hard to implement right.
This leads to bugs: an undetected fault can bring miscalculations,
the protection that alarms about harmless faults incurs performance penalty.
In this paper, we use formal methods on designer’s input tests, while keeping time-location open.
This idea is at the core of the tool OpenSEA.
OpenSEA can
(i) find latches vulnerable to and protected against faults,
(ii) find tests that exhibit checker false alarms,
(iii) use fixed and open inputs, and
(iv) use environment assumptions.
Evaluation on a number of industrial designs shows that OpenSEA produces valuable results.
\il

\part[Excursion into Branching Logic]{Excursion Into \\ Branching Logic \\ \ \\ \LARGE Approaches to \CTLstar Synthesis}

\section*{Overview of Part I}\label{chap:ctlstar:overview}

The reactive synthesis problem was introduced by Alonzo Church~\cite{Church63}.
Given a specification as a formula in Monadic Second Order Logic of One Successor (MSO),
the question is to produce a circuit such that \emph{all} its behaviors satisfy the formula.
Later Pnueli introduced Linear Temporal Logic (LTL)~\cite{pnueli1977temporal}
and together with Rosner solved the synthesis problem for LTL~\cite{DBLP:conf/popl/PnueliR89}.
Now LTL is the main basic logic for specifications.
Both these logics, MSO and LTL, are \emph{linear}:
they describe the set of behaviours,
but do not allow for specifying \emph{structural} properties of the systems.

To be able to specify structural properties
(and to circumvent a relatively high complexity of the verification wrt.\ LTL),
Emerson and Clarke introduced Computation Tree Logic (CTL)~\cite{ctl-origin}.
Later Emerson and Halpern introduced Computation Tree Star Logic (\CTLstar)~\cite{ctlstar-origin}
that subsumed both CTL and LTL.

Let us briefly compare LTL and \CTLstar.

LTL reasons about \emph{computations}.
The logic has \emph{temporal} operators, e.g., $\G$ (always) and $\F$ (eventually),
and can describe properties like ``every request is eventually granted'':
$\G(r \impl \F g)$.
A system satisfies such an LTL property iff \emph{all} its computations satisfy it.
Thus a system is characterized by its computations.

In contrast, \CTLstar reasons about computation \emph{trees}.
Thus, a system is viewed as a tree (cf.\ set of linear paths for LTL),
and we can get such a tree by unfolding the system.
\CTLstar has---in addition to temporal operators---\emph{path quantifiers}:
$\A$ (on all paths) and $\E$ (there exists a path).
Such path quantifiers allow us to reason about branching structure of trees,
not just about the set of its ``linear'' paths.
For example, the \CTLstar formula ``$\AGEF reset$'' says:
``on all tree paths, from every tree node,
  there should be a path into a node where `reset' holds''.
We cannot express such a property using LTL alone.

This part of the thesis explores synthesis approaches from properties in \CTLstar.
It consists of two chapters.

In Chapter~\ref{chap:bosy:ctlstar}
we introduce two approaches to synthesis from \CTLstar.
Both approaches follow the Bounded Synthesis approach
introduced by Finkbeiner and Schewe~\cite{BS}.
In Bounded Synthesis, we repeatedly search for a system of increasing sizes,
until we find a solution.
Bounded Synthesis is very flexible and can be easily adapted
to do e.g. distributed synthesis.
We extend Bounded Synthesis to specifications in \CTLstar and beyond.

The disadvantage of Bounded Synthesis is that it is susceptible to system size:
it works well when the specification admits a small implementation,
but less well when no small implementation exists.
The same holds for our \CTLstar Bounded Synthesis.

In Chapter~\ref{chap:ctl-via-ltl}, partly to overcome this disadvantage,
we introduce a reduction of the \CTLstar synthesis problem to the LTL synthesis problem.
After applying the reduction, any \emph{LTL} synthesiser can do $CTL^*$ synthesis.
Notice that for model checking such a reduction is impossible---%
\CTLstar is more expressive than LTL.
Yet, in synthesis we control the system structure,
which enables the reduction.
The \CTLstar-via-LTL synthesis approach preserves the problem complexity,
although it might increase the size of a system.

The approaches differ in how they ensure the satisfaction of existential \CTLstar subformulas
(recall that universal \CTLstar subformulas, just like LTL, talk about system paths as a whole,
 while existential \CTLstar subformulas specify the existence of a system path).
Recall from Section~\ref{defs:bounded_synthesis} that
bounded synthesis encodes the LTL synthesis problem into the SMT satisfaction problem.
The SMT constraints annotate the states of a \emph{product}
(of a yet unknown system with an automaton expressing a given \CTLstar formula)
with information that ensures that all lassos in the product are not ``bad'' (for universal subformulas)
and that there are ``good'' lassos (for existential subformulas).
In contrast, \CTLstar-via-LTL synthesis produces an LTL formula that
talks about \emph{system} paths and has no direct access to the product.
Hence we move annotations into a system which may increase its size.

This thesis part is organized as follows.
In the next Chapter~\ref{chap:defs} we introduce the definitions
which are used in both chapters.
Chapter~\ref{chap:bosy:ctlstar} focuses on extensions of Bounded Synthesis to \CTLstar,
while Chapter~\ref{chap:ctl-via-ltl} describes the \CTLstar-to-LTL synthesis reduction.
Both chapters depend on the definitions section,
but are independent of each other.

\newcommand{\nocontentsline}[3]{}
\newcommand{\toclesslab}[3]{\bgroup\let\addcontentsline=\nocontentsline#1{#2\label{#3}}\egroup}

\chapter{Common Definitions for Part I}\label{chap:defs}
Notation:
$\bbB = \{\true,\false\}$ is the set of Boolean values,
$\bbN$ is the set of natural numbers (excluding $0$),
$\bbN_0 = \bbN\cup\{0\}$,
$[k]$ is the set $\{i \in \bbN \| i \leq k\}$
and $[0,k]$ is the set $[k] \cup \{0\}$ for $k \in \bbN$.

The powerset of $A$ is denoted by $2^A$.
We often write $(a,x)$ instead of $a \cup x$ (that is from $2^{A \cup X}$),
and $a \cup x$ instead of $(a,x)$ (that is from $2^A \times 2^X$),
when $a \in 2^A$, $x \in 2^X$ and $A \cap X = \emptyset$.

We denote substitution by the symbol $\mapsto$.
E.g., $(a \land b) [a \mapsto x]$ is $x \land b$.

All systems and automata are finite,
paths are infinite,
and trees have only infinitely long paths but are finitely-branching---%
unless explicitly stated.

\toclesslab\section{Moore Systems}{defs:moore-systems}

A \emph{(Moore) system} $M$ is a tuple
$(I, O, T, t_0, \tau, out)$
where
$I$ and $O$ are disjoint sets of input and output variables,
$T$ is the set of states, $t_0 \in T$ is the initial state,
$\tau: T \times 2^I \to T$ is a transition function,
$out: T \to 2^O$ is the output function that
labels each state with a set of output variables.
Note that systems have no dead ends and have a transition for every input.
We write $t \trans{io} t'$ when $t' = \tau(t,i)$ and $out(t) = o$.
We abuse the notation and define $\tau(t,w)$ for $w_1 w_2 ... w_n \in (2^I)^+$
to be the system state $t_n$ such that $t_0 \trans{io_0} t_1 \trans{io_1} ... \trans{io_{n-1}} t_n$,
i.e., $\tau(t,w)$ is the state where the system ends after reading the word $w$,
when starting from the initial state.

A \emph{system path} is a sequence $t_1 t_2 ... \in T^\omega$
such that for every $i\in \bbN$ there is $e \in 2^I$ with $\tau(t_i,e) = t_{i+1}$.
An \emph{input-labeled system path} is a sequence $(t_1,e_1) (t_2,e_2) ... \in (T\times 2^I)^\omega$
where $\tau(t_i,e_i) = t_{i+1}$ for every $i\in \bbN$.
We sometimes use notation $t_1 \trans{e_1} t_2 \trans{e_2} t_3 ...$
to describe the input-labeled system path $(t_1,e_1) (t_2,e_2) ...$.
A \emph{system computation starting from $t_1 \in T$} is a sequence $(o_1\cup e_1) (o_2\cup e_2) ... \in (2^{I\cup O})^\omega$
for which there exists an input-labeled system path $(t_1,e_1) (t_2,e_2) ...$ 
and $o_i=out(t_i)$ for every $i \in \bbN$.
We write \emph{system computation} to mean system computation starting from the initial state.
Note that since systems are Moore,
the output $o_i$ cannot ``react'' to input $e_i$---%
the outputs are ``delayed'' with respect to inputs.

\begin{remark}\label{rem:inputs-shift}
There are two ways to group inputs and outputs into computations.
The first way is to introduce an initial transition $\tau_I: 2^I \to T$
instead of using the initial state $t_0$.
Then the input-labeled system path $\trans{e_1} t_1 \trans{e_2} t_2 \trans{e_3} t_3 ...$
corresponds to the computation $(e_1, out(t_1)) (e_2, out(t_2)) (e_3, out(t_3)) ...$.
Another way is to avoid using the initial transition---use the initial state $t_0$ instead---and ``shift'' inputs and outputs.
Then an input-labeled system path $t_0 \trans{e_1} t_1 \trans{e_2} t_2 ...$
corresponds to the computation $(out(t_0), e_1) (out(t_1), e_2) ...$.
We use the second approach.
\end{remark}

\ak{add example}

\toclesslab\section{Trees}{defs:trees}

A \emph{(infinite) tree} is a tuple $(D, L, V \subseteq D^*, l:V \to L)$,
where
\li
\- $D$ is the set of directions (in our case, finite),
\- $L$ is the set of node labels (in our case, finite),
\- $V$ is the (infinite) set of nodes satisfying:
   (i) $\epsilon \in V$ is called the root (the empty sequence),
  (ii) $V$ is closed under prefix operation (i.e., every node is connected to the root),
 (iii) for every $n \in V$ there exists a $d \in D$ such that $n\cdot d \in V$
       (i.e., there are no leafs),
\- $l$ is the node labeling function.
\il
A tree $(D,L,V,l)$ is \emph{exhaustive} iff $V=D^*$.
A tree is \emph{non-labeled} iff $|L|=1$ and then we omit $L$ and $l$.

A \emph{tree path} is a sequence $n_1 n_2 ... \in V^\omega$,
such that, for every $i$, there is $d \in D$ such that $n_{i+1} = n_i \cdot d$.


In contexts where $I$ and $O$ are inputs and outputs,
we call an exhaustive tree $(D,L,V,l)$
a \emph{computation tree},
where $D=2^I$, $L=2^O$, $V=D^*$, and $l:V \to \O$.
We omit $D$ and $L$ when they are clear from the context.

With every system $M=(I, O, T, t_0, \tau, out)$ we associate
the computation tree $(D, L, V, l)$ such that, for every $n\in V$:
$l(n)=out(\tau(t_0,n))$.
We call such a tree a \emph{system computation tree}.

A computation tree is \emph{regular}
iff it is a system computation tree for some (finite) system.

\ak{example}

\section*{Two Views on the System}\label{defs:two-views-on-system}

Later we introduce logics \CTLstar and LTL to distinguish correct from buggy systems.
The two logics look at systems from two sides.

On one side,
we can associate with a system $M$
a set of its computations $b(M) \subseteq (2^{I\cup O})^\omega$.
A formula $\varphi$ in Linear Temporal Logic (LTL) (introduced later)
describes a set of infinite words $L(\varphi)$.
Thus, we can use an LTL formula to specify all correct computations.
Then a system $M$ is correct wrt.\ LTL formula $\varphi$
iff $b(M) \subseteq L(\varphi)$,
i.e., all system computations satisfy $\varphi$.

On the other side, we might want to specify structural properties of systems.
E.g.,
whether from every system state
we can branch into a state satisfying $p$ and
we can branch into a state satisfying $\neg p$.
In this case, characterizing a system by its set of computations---e.g. using LTL---is not possible.
Instead, we associate with a system its computation tree.
A formula $\Phi$ in Computation Tree Logic (defined later)
describes a set of computation trees $L(\Phi)$.
Thus, we can use such a formula to describe a set of all correct computation trees.
Then a system $M$ is correct wrt.\ $\Phi$ iff $(V,l) \in L(\Phi)$,
i.e., the system computation tree satisfies $\Phi$.

\toclesslab\section{Logics: \CTLstar with Inputs and LTL}{defs:ctlstar}

\subsection*{\CTLstar with inputs (release PNF)}

Fix two disjoint sets: inputs $I$ and outputs $O$.
Below we define \CTLstar with inputs, in release positive normal form%
\footnote{This form is sometimes called negation normal form.
  For the name, we follow~\cite{PrinciplesMC}.
  Note that without the release operator $\R$%
  ---the dual of the until operator $\U$---%
  the logic is less expressive due to the restriction on negations.
  That explains the name ``release PNF''.}.
The definition differentiates inputs and outputs (see Remark~\ref{rem:ctlstar-subtle}).

\parbf{Syntax}
\emph{State formulas} have the grammar:
$$
\Phi = \true \| \false \|
       o \| \neg o \| \Phi \land \Phi \| \Phi \lor \Phi \|
       \A \varphi \| \E \varphi
$$
where $o \in O$ and $\varphi$ is a path formula. \emph{Path formulas} are defined by the grammar:
$$
\varphi = \Phi \|
      i \| \neg i \|
      \varphi \land \varphi \| \varphi \lor \varphi \|
      \X \varphi \|
      \varphi \U \varphi \|
      \varphi \R \varphi,
$$
where $i \in I$.
The temporal operators $\G$ and $\F$ are defined as usual.

The above grammar describes the \CTLstar formulas in positive normal form.
The general \CTLstar formula
(in which negations can appear anywhere)
can be converted into the formula of this form with no size blowup,
using the equivalence $\neg (a \U b) \equiv \neg a \R \neg b$ and some others.

\parbf{Semantics}
We define the semantics of \CTLstar with respect to a computation tree $(V,l)$
(where $D=2^I$ and $L=2^O$).
The definition is very similar to the standard one~\cite{PrinciplesMC},
except for a few cases involving inputs
(marked with ``+'').

Let $n \in V$ and $o \in O$.
Then:
\li
\- $n \not\models \Phi$ iff $n \models \Phi$ does not hold,
\- $n \models \true$ and $n \not\models \false$,
\- $n \models o$ iff $o \in l(n)$, $n \models \neg o$ iff $o \not\in l(n)$,
\- $n \models \Phi_1 \land \Phi_2$ iff $n \models \Phi_1$ and $n \models \Phi_2$.
   Similarly for $\Phi_1\lor\Phi_2$.
\- $n \models \A \varphi$ iff for all tree paths $\pi$ starting from $n$:
   $\pi \models \varphi$.
   For $\E\varphi$, replace ``for all'' with ``there exists''.
\il

Let $\pi = n_1 n_2 ... \in V^\omega$ be a tree path,
$i \in I$, and $n_2 = n_1 \cdot e$ where $e \in \I$.
For $k \in \bbN$, define $\pi_{[k:]} = n_k n_{k+1} ...$,
i.e., the suffix of $\pi$ starting in $n_k$.
Then:
\li
\- $\pi \models \Phi$ iff $n_1 \models \Phi$,
\-[+] $\pi \models i$ iff $i \in e$, $\pi \models \neg i$ iff $i \not\in e$,
\- $\pi \models \varphi_1 \land \varphi_2$ iff $\pi \models \varphi_1$ and $\pi \models \varphi_2$.
   Similarly for $\varphi_1 \lor \varphi_2$.
\- $\pi \models \X \varphi$ iff $\pi_{[2:]} \models \varphi$,
\- $\pi \models \varphi_1 \U \varphi_2$ iff
   $\exists l\in\bbN: (\pi_{[l:]} \models \varphi_2 \land \forall m \in [1,l-1]: \pi_{[m:]} \models \varphi_1)$,
\- $\pi \models \varphi_1 \R \varphi_2$ iff
   $(\forall l\in\bbN: \pi_{[l:]} \models \varphi_2) \lor 
    (\exists l\in\bbN: \pi_{[l:]} \models \varphi_1 \land \forall m\in [1,l]: \pi_{[m:]} \models \varphi_2)$.
\il

A \emph{computation tree $(V,l)$ satisfies a \CTLstar state formula $\Phi$},
written $(V,l) \models \Phi$,
iff the root node satisfies it.
A \emph{system $M$ satisfies a \CTLstar state formula $\Phi$},
written $M \models \Phi$,
iff its computation tree satisfies it.

\begin{remark}[Subtleties]\label{rem:ctlstar-subtle}
Note that $(V,l) \models i\land o$ is not defined,
since $i \land o$ is not a state formula.
Let $r \in I$ and $g \in O$.
By the semantics, $\E r \equiv \true$ and $\E \neg r \equiv \true$,
while $\E g \equiv g$ and $\E \neg g \equiv \neg g$.
These facts are the consequences of the way we group inputs with outputs
(see also Remark~\ref{rem:inputs-shift}).
\end{remark}

\subsection*{LTL}
The syntax of LTL formulas (in general form) is:
$$
\phi = \true \| p \| \neg p \| \phi \land \phi \| \neg \phi \| \phi \U \phi \| \X \phi,
$$
where $p \in I \cup O$.
The temporal operators $\G$ and $\F$ are defined as usual,
and $\false = \neg \true$.
The semantics is standard (see, e.g., \cite{PrinciplesMC}).
A computation tree $(V,l)$ satisfies an LTL formula $\phi$,
written $(V,l) \models \phi$,
iff all tree paths starting in the root satisfy it.
A system satisfies an LTL formula iff its computation tree satisfies it
(equivalently, every system computation starting from the initial state
 satisfies the LTL formula).

\toclesslab\section{Tree Automata}{defs:tree-automata}

Tree automata consume infinite trees and output ``accept'' or ``reject''.
Since every Moore system has a corresponding computation tree,
tree automata can be used to differentiate buggy Moore machines from correct ones.
Also, \CTLstar can be translated into a special type of alternating tree automata.
Thus, tree automata are the excellent tool for model checking and synthesis.

We start with a general definition of alternating tree automata,
then introduce different acceptance conditions,
then introduce alternating hesitant tree automata.

\parit{Notation ${\cal B}^+(S)$}
For a finite non-empty set $S$,
let ${\cal B}^+(S)$ be the set of all positive Boolean formulas over elements of $S$,
i.e., every such a formula $\phi$ has the syntax:
$\phi = e \| \phi \land \phi \| \phi \lor \phi$, where $e \in S$.
Note that $\false \not\in {\cal B}^+(S)$ and $\true \not\in {\cal B}^+(S)$.
As we will see later, these are not limitations in our context.
Also, since the set $S$ is finite,
any Boolean formula over atoms in $S$ and which is semantically different from $\true$ and $\false$
is equivalent to some formula in ${\cal B}^+(S)$.
Furthermore,
every formula $\phi \in {\cal B}^+(S)$ can be rewritten into formula $\phi' \in {\cal B}^+(S)$ in disjunctive normal form (DNF)
or into formula $\phi'' \in {\cal B}^+(S)$ in conjunctive normal form (CNF).
We assume that formulas in ${\cal B}^+(S)$ (and thus CNF and DNF formulas)
have neither redundant atoms, conjuncts, nor disjuncts.


\subsection*{Alternating tree automata}

An \emph{alternating tree automaton} is a tuple
$(\Sigma, D, Q, q_0, \delta, acc)$,
where $\Sigma$ is the set of node propositions,
$D$ is the set of directions, $q_0 \subseteq Q$ is the initial state,
$\delta: Q \times \Sigma \to \mathcal{B}^+(D \times Q)$
is the transition relation,
and $acc: Q^\omega \to \bbB$ is an acceptance condition.
For simplicity we assume that $\delta$ is total wrt.\ directions.
Thus, it is worth noting about $\delta(q,\sigma)$, for every $(q,\sigma) \in Q\times\Sigma$:
\li
\- if we rewrite $\delta(q,\sigma)$ into DNF,
   then each conjunct mentions each direction at least once (``totalness'' wrt.\ directions).
\- $\delta(q,\sigma) \neq \false$ and $\delta(q,\sigma) \neq \true$.
   These are not limitations,
   because we can emulate $\true$ and $\false$ by introducing additional states and modifying $acc$.
\il
The above means that $\delta$ has a transition for every possible argument and direction.

Fix two disjoint sets, inputs $I$ and outputs $O$.

Tree automata consume exhaustive trees like $(D, L=\Sigma, V=D^*, l:V \to \Sigma)$
and produce run-trees.

A \emph{run-tree} of an alternating tree automaton
$(\Sigma=2^O, D=2^I, Q, q_0, \delta, acc)$
on a computation tree $(V=(2^I)^*, l:V \to \O)$
is a tree
with directions $2^I \times Q$,
labels $V \times Q$,
nodes $V' \subseteq (2^I \times Q)^*$,
labeling function $l'$
such that
\li
\- $l'(\epsilon) = (\epsilon, q_0)$,
\- if $v \in V'$ with $l'(v) = (n,q)$, then:\\
   there exists $\{(d_1,q_1),...,(d_k,q_k)\}$ that satisfies $\delta(q, l(n))$
   and $n \cdot (d_i, q_i) \in V'$ for every $i \in [1,k]$.
\il
Intuitively,
we run the alternating tree automaton on the computation tree:
\li
\-[(1)]
   We mark the root node of the computation tree with the automaton initial state $q_0$.
   We say that initially, in the node $\epsilon$,
   there is only one copy of the automaton and it has state $q_0$.
\-[(2)]
   We read the label $l(n)$ of the current node $n$ of the computation tree
   and consult the transition function $\delta(q,l(n))$.
   The latter gives a set of conjuncts of atoms of the form $(d',q') \in D\times Q$.
   We nondeterministically choose one such conjunction $\{(d_1,q_1), ..., (d_k,q_k)\}$
   and send a copy of the alternating automaton into each direction $d_i$ in the state $q_i$.
   Note that we can send up to $|Q|$ copies of the automaton into one direction
   (but into different automaton states).
   That is why a run-tree defined above has directions $2^I\times Q$
   rather than $2^I$.
\-[(3)]
   We repeat step (2) for every copy of the automaton.
   As a result we get a run-tree:
   a tree labeled with nodes of the computation tree and
   states of the automaton.
\il

A \emph{run-tree is accepting}
iff every run-tree path starting from the root is accepting.
A run-tree path $v_1 v_2 ...$ is accepting
iff $acc(q_1q_2...)$ holds ($acc$ is defined later),
where $q_i$ for every $i\in \bbN$ is the automaton state part of $l'(v_i)$.
Note that every run-tree path is infinite.
(In particularly, we do not have \emph{finite} paths that end with $\true$ nor $\false$,
 by definition of $\delta:Q\times\Sigma\to{\cal B}^+(D\times Q)$.)

An \emph{alternating tree automaton $A=(\Sigma=2^O, D=2^I, Q, q_0, \delta, acc)$
accepts a computation tree $(V=(2^I)^*, l:V \to \O)$},
written $(V,l) \models A$,
iff
the automaton has an accepting run-tree on that computation tree.
An alternating tree automaton is \emph{non-empty} iff there exists a computation tree accepted by it.

Similarly,
\emph{a Moore system $M=(I, O, T, t_0, \tau, out)$
is accepted
by the alternating tree automaton $A=(\Sigma=2^O, D=2^I, Q, q_0, \delta, acc)$},
written $M \models A$,
iff $(V,l) \models A$,
where $(V=(2^I)^*,l:V\to 2^O)$ is the system computation tree.

Let us define different variations of an acceptance condition $acc: Q^\omega \to \bbB$.
For a given infinite sequence $\pi \in Q^\omega$,
let $\Inf(\pi)$ be the elements of $Q$ appearing in $\pi$ infinitely often.
Then:
\li
\- \emph{B\"uchi acceptance} is defined by a set $F \subseteq Q$:
   $acc(\pi)$ holds iff $\Inf(\pi) \cap F \neq \emptyset$.
   We often call the states of $F$ \emph{accepting}.
\- \emph{Co-B\"uchi acceptance} is defined by a set $F \subseteq Q$:
   $acc(\pi)$ holds iff $\Inf(\pi) \cap F = \emptyset$.
   We often call the states of $F$ \emph{rejecting}.
\- \emph{Streett acceptance} is defined by pairs $\{(A_i\subseteq Q,G_i \subseteq Q)\}_{i\in[k]}$:
   $acc(\pi)$ holds
   iff
   $\forall i \in [k]: \Inf(\pi)\cap A_i \neq \emptyset \impl \Inf(\pi) \cap G_i\neq\emptyset$.
\- \emph{Rabin acceptance} is defined by pairs $\{(F_i,I_i)\}_{i \in [k]}$:
   $acc(\pi)$ holds
   iff
   $\exists i \in [k]: \Inf(\pi) \cap F_i = \emptyset \land \Inf(\pi) \cap I_i \neq \emptyset$.
\- \emph{Parity acceptance} is defined by a priority function
   $p: Q \to [0,k]$:
   $acc(\pi)$ holds iff the minimal priority appearing infinitely often in $p(\pi)$ is even.
\il
In addition to the above acceptance conditions,
we define generalized versions.
Generalized B\"uchi acceptance condition is defined by a set $\{F_i\}_{i \in [k]}$:
$acc(\pi)$ holds iff the B\"uchi condition holds wrt.\ every $F_i$ where $i \in [k]$.
Similarly define Generalized co-B\"uchi%
\footnote{We stress that, in our work, Generalized co-B\"uchi
  for a set $\{F_i\}_{i \in [k]}$ means:
  $acc(\pi)$ holds iff the co-B\"uchi condition holds wrt.\ \emph{every} $F_i$ where $i \in [k]$.
  But often Generalized co-B\"uchi acceptance means that
  \emph{there exists} $F_i$ that is visited finitely often where $i \in [k]$.%
},
Streett, Rabin, and Parity conditions.

\subsection*{Nondeterministic and universal tree automata}

Depending on the form of $\delta(q,\sigma)$ (for every $(q,\sigma) \in Q\times\Sigma$),
we distinguish the following special cases of alternating tree automata.
\li
\- \emph{Universal} tree automata:
   $\delta(q,\sigma)$ is a conjunction of variables of $Q\times D$,
   where each direction is mentioned at least once.
\- \emph{Deterministic} tree automata:
   $\delta(q,\sigma)$ is a conjunction of variables of $Q \times D$ and
   each direction is mentioned exactly once.
\- \emph{Nondeterministic} tree automata:
   let $\delta(q,\sigma)$ be rewritten in DNF.
   Then each conjunct mentions each direction exactly once.
\il

\ak{state known facts: (1) non-empty iff regularly non-empty, (2) non-empt complexity, (3) conversion of \CTLstar into AHTs, (4) determinisation?}

\subsection*{Alternating hesitant tree automata (AHT)} \label{page:defs:aht}

An \emph{alternating hesitant tree automaton (AHT)} is an alternating tree automaton
$(\Sigma, D, Q, q_0, \delta, acc)$
with the following acceptance condition and structural restrictions.
The restrictions reflect the fact that AHTs are tailored for \CTLstar formulas.
\li 
\- $Q$ can be partitioned into $Q^N_1,\dots ,Q^N_{k_N}$, $Q^U_1,\dots
,Q^U_{k_U}$, where superscript $N$ means
nondeterministic and $U$ means universal.
Let $Q^N = \bigcup Q^N_i$ and $Q^U = \bigcup Q^U_i$.
(Intuitively,
 nondeterministic state sets describe \E-quantified subformulas of the \CTLstar formula,
 while universal state sets describe \A-quantified subformulas.)

\- There is a partial order on $\{Q^N_1,\dots ,Q^N_{k_N},Q^U_1,\dots , Q^U_{k_U}\}$.
   (Intuitively, this is because state subformulas can
    be ordered according to their relative nesting.)

\- The transition function $\delta$ satisfies: for every $q \in Q$, $a \in \Sigma$
   \li
   \- if $q \in Q^N_i$, then:
      $\delta(q,a)$ contains only disjunctively related\footnotemark[1] elements of $Q^N_i$;
      every element of $\delta(q,a)$ outside of $Q^N_i$ belongs to a lower set;
   \- if $q \in Q^U_i$, then:
      $\delta(q,a)$ contains only conjunctively related\footnotemark[1] elements of $Q^U_i$;
      every element of $\delta(q,a)$ outside of $Q^U_i$ belongs to a lower set.
   \il
   \ak{figure of disj-conj related sets: and why?!}
\il
\footnotetext[1]{In a Boolean formula, atoms $E$ are disjunctively [conjunctively] related
  iff the formula can be written into DNF [CNF] in such a way that each cube [clause] has at most one element from $E$.}

Finally, $acc: Q^\omega \to \bbB$ of AHTs is defined by a set $Acc \subseteq Q$:
$acc(\pi)$ holds for $\pi=q_1q_2...\in Q^\omega$ iff one of the following holds.
\li
\- The sequence $\pi$ eventually stays in some $Q^U_i$ and
   $\Inf(\pi) \cap (Acc\cap Q^U) = \emptyset$
   (co-B\"uchi acceptance).
   Let us denote $F = Acc \cap Q^U$.
\- The sequence $\pi$ eventually stays in some $Q^N_i$ and
   $\Inf(\pi) \cap (Acc \cap Q^N) \neq \emptyset$
   (B\"uchi acceptance).
   Let us denote $I = (Acc \cap Q^N) \cup (Q^U \!\setminus\! Acc)$.
\il
Due to the restrictions on the structure of hesitant automata,
this acceptance is equivalent to the Rabin acceptance with one pair $(F,I)$.

\ak{examples}

\toclesslab\section{Word Automata}{defs:word-automata}

In contrast to tree automata that consume infinite trees,
word automata consume infinite words.
Every LTL formula can be translated into a word automaton,
but a \CTLstar formula, in general, cannot.
This is because a \CTLstar formula describes a set of trees,
while an LTL formula describes a set of words.

We start with alternating word automata.
Such automata can concisely represent LTL formulas
(without incurring an exponential blow-up in its size).
Then we define two specializations: nondeterministic and universal word automata.
Such automata are often used as input to synthesis algorithms,
because they are simpler to work with
(although translation of an LTL formula into such an automaton can incur
 and exponential blow-up).
Finally, we define alternating hesitant word automata.
They are useful for model checking and synthesis from AHTs
(and thus from \CTLstar formulas).

\subsection*{Alternating word automata}

%

\begin{remark}[Re-using definitions from tree automata]
An infinite word can be viewed as a tree with a single branch.
Thus it is tempting to derive definitions for word automata from those of tree automata.
Without additional tricks this will not work for the following reason.
In our work, branching degree of every computation tree is, by definition, $|2^I|$.
Thus, considering only single-branch trees is equivalent to
having systems with no inputs: $|2^I|=1 \Iff |I|=0$.
But we are interested in the general case: $|I| \in \bbN_0$.
(To reuse definitions, we could move the inputs into the outputs,
 consider nondeterministic systems without edge labels,
 and require that each state has a successor containing a label $e$, for every $e \in 2^I$.)
\end{remark}

An \emph{alternating word automaton} is a tuple $(\Sigma, Q, q_0, \delta, acc)$
where $\Sigma$ is an alphabet, $Q$ is a set of states, $q_0 \in Q$ is initial,
$\delta: Q \times \Sigma \to {\cal B}^+(Q)$ is a transition function,
and $acc:Q^\omega \to \bbB$ is a path acceptance condition.
Note that $\delta(q) \neq \false$ and $\delta(q) \neq \true$ for every $q\in Q$,
i.e., there is a successor state for every letter.
These are not real limitations, they are in place to simplify definitions.

Given a word $w = a_1 a_2 ... \in \Sigma^\omega$,
let $\Pref(w) = \{\epsilon, a_1, a_1 a_2, a_1 a_2 a_3, ...\}$
denote the set of all its prefixes (including the empty prefix).
Also,
for a finite non-empty word $w \in \Sigma^+$,
let $last(w)$ denote its last letter.

A \emph{run-tree} of an alternating word automaton
$(\Sigma, Q, q_0, \delta, acc)$
on an infinite word $w = a_1 a_2 ... \in \Sigma^\omega$
is a tree
with directions $Q$,
labels $\Pref(w) \times Q$,
nodes $V' \subseteq (\Sigma \times Q)^*$,
labeling function $l': V' \to \Pref(w) \times Q$
such that
\li
\- $l'(\epsilon) = (\epsilon, q_0)$,
\- if $v \in V'$ with $l'(v) = (p,q)$, then:\\
   there exists $\{q_1,...,q_k\}$ that satisfies $\delta(q, last(p))$
   and $p \cdot q_i \in V'$ for every $i \in [1,k]$.
\il
This definition coincides with the definition of run-trees of tree automata
when restricted to trees with a singe path.

A \emph{run-tree is accepting}
iff every run-tree path starting from the root is accepting.
A run-tree path $v_1 v_2 ...$ is accepting
iff $acc(q_1 q_2 ...)$ holds,
where $q_i$, for every $i$, is an automaton state of the label $l'(v_i)$.

An \emph{alternating word automaton $A=(\Sigma, Q, q_0, \delta, acc)$
accepts an infinite word $w=a_1 a_2... \in \Sigma^\omega$},
written $w \models A$,
iff
the automaton has an accepting run-tree on the word.
An alternating word automaton is \emph{non-empty}
iff there exists an infinite word accepted by it.

So far we re-used definitions from Section~\ref{defs:tree-automata},
but now we depart.

In Section~\ref{defs:ctlstar} about \CTLstar, we introduced two path quantifiers:
$\A$ (on all paths) and $\E$ (there exists a path).
Accordingly, we define $M\models \A(A)$ and $M \models \E(A)$:
$M \models \E(A)$ iff there is a system computation accepted by the automaton;
$M \models \A(A)$ iff every system computation is accepted by the automaton.

\begin{remark}[Tree automata vs. Word automata]
Consider \CTLstar formula $\EX g \land \EX \neg g$,
outputs $O = \{g\}$, inputs $I = \{r\}$.
No alternating \emph{word} automaton, when prefixed with $\E$ or $\A$,
can describe this language,
while there is an alternating \emph{tree} automaton for it\ak{see Fig.XXX}.
\end{remark}

\subsection*{Nondeterministic and universal word automata}

We now look closer at nondeterministic and universal word automata.
Their definitions coincide
with those for tree automata when assuming the single direction,
but for clarity we recall them here.

Depending on the form of $\delta(q,a)$ (for every $(q,a) \in Q\times\Sigma$),
we distinguish:
\li
\- \emph{Deterministic} word automata: $\delta(q,a)$ is a single state.
   There is no choice: we know exactly into which state to proceed
   from $q$ when reading $a$.
   Thus, a run-tree degenerates into a line.
\- \emph{Nondeterministic} word automata: $\delta(q,a)$ is a (non-empty) disjunction of states.
   We reading $a$ in state $q$, we choose one of the states and proceed.
   A run-tree degenerates into a line.
\- \emph{Universal} word automata: $\delta(q,a)$ is a conjunction of states
   $q_1\land...\land q_k$ where $k \geq 1$.
   Thus, when reading $a$ we send one copy of the automaton into every $q_1 ... q_k$.
   A run-tree is indeed a tree.
\il
In all the cases above,
the transition function can be expressed as $\delta: Q\times\Sigma \to 2^Q \setminus \{\emptyset\}$.
We often use this notation instead of $\delta: Q\times\Sigma \to {\cal B}^+(Q)$.

\subsection*{Alternating hesitant word automata (AHW)} \label{page:defs:ahw}

An \emph{alternating hesitant word automaton} (AHW)
$A=(\Sigma, Q, q_0, \delta: Q\times\Sigma \to {\cal B}^+(Q), Acc \subseteq Q)$
is an alternating word automaton with
the similar to AHTs structural restrictions and the same as for AHTs acceptance condition.

\subsection*{Tree variants of word automata} \label{page:defs:tree_variants}

We define the following tree variants, $A_\A$ and $A_\E$, of a word automaton $A$.
Given a nondeterministic word automaton
$A = (2^{I\cup O}, Q, q_0, \delta: Q\times 2^{I\cup O} \to 2^Q, acc)$,
let $A_\E = (2^O, D=2^I, Q,q_0, \delta': Q\times 2^O \to 2^{Q\times D}, acc)$
be the nondeterministic tree automaton defined in the most natural way:
for every $(q,o) \in Q \times 2^O$,
$$
\delta'(q,o) = \bigvee_{d \in D} \delta\big(q,(o,d)\big) \left[q' \mapsto (d,q')\right].
$$
We define $A_\A$ in the same way.

We will use $A_\E$ to talk about
``the product between system $M$ and nondeterministic \emph{word} automaton $A$''.
Since we do not define such a product---it is defined for \emph{tree} automata only---%
we use $A_\E$ instead.
Similarly, when we have universal word automata, we use $A_\A$.

$A_\A$ and $A_\E$ satisfy the following property:
for every system $M$,
\li
\- for universal $A$: $M \models \A(A) ~\Iff~ M \models A_\A$,
\- for nondeterministic $A$: $M \models \E(A) ~\Iff~ M \models A_\E$.
\il



%
%

\section*{Automata Abbreviations}\label{defs:automata-abbrv}
We use the standard three letter abbreviation for automata
$$(\{A,U,N,D\}\times\{B,C,S,P,R\}\times\{W,T\}) \cup \{AHT, AHW\}.$$
For example,
NBW means Nondeterministic B\"uchi Word automaton,
UCT means Universal co-B\"uchi Tree automaton,
AHT means Alternating Hesitant Tree automaton.

\toclesslab\section{Approaches to Model Checking}{defs:model-checking-approach}

This section treats the automata-theoretic approach to \CTLstar and LTL model checking~\cite{ATA},
as well as the classical bottom-up approach to \CTLstar model checking~\cite{DBLP:journals/toplas/ClarkeES86}.
Let us start with the definition.

\smallskip
The \emph{model checking problem} is:

\smallskip
\noindent
\emph{%
Given: a Moore system $M$, formula $\Phi$ in some logic\\
Return: does $M \models \Phi$?
}

\smallskip
\noindent
Depending on the logic of $\Phi$,
we have \CTLstar and LTL model checking problems.

\medskip
We now briefly describe the automata-based solution to \CTLstar (and thus LTL)
model checking problem
introduced by Kupferman, Vardi, and Wolper~\cite{ATA}.
The idea is to translate a given formula $\Phi$ into a tree automaton:
\li
\- into AHT, when $\Phi$ is in \CTLstar,
\- into UCT, when $\Phi$ is in LTL.
\il
Then we build the ``product'' of the system and the automaton,
which will be a word automaton (an alternating one for \CTLstar, a universal one for LTL).
Such a word automaton captures the joint behaviours
of the system and the automaton for $\Phi$.
The nice property of the product is that it is empty iff
the system is accepted by the tree automaton for $\Phi$
(equivalently: iff the system satisfies $\Phi$).
Thus, we reduce the model checking problem to checking the non-emptiness
of a word automaton.
Below we define the product.

\subsection*{Product of system and tree automaton}

Let us provide the intuition.

A system $M$ can be seen as a deterministic tree automaton $A_M$
that accepts only its own computation tree.
Then the question of whether the system is accepted
by a given tree automaton $A$
is equivalent to checking whether the intersection $A_M \land A$ is non-empty.
The product of a system $M$ and a tree automaton $A$,
written $M\otimes A$,
can be seen as the intersection tree automaton $A_M \land A$,
from which we remove labels and directions%
\footnote{%
 We can remove labels because,
 for each state of $A_s$,
 the label is uniquely defined---it is $out(t)$ for the corresponding system state $t$.
 We can remove directions,
 because in the non-emptiness check of a 1-letter tree automaton
 we do not distinguish directions.%
}
and which we treat as a \emph{word} automaton.
The important property is that the product $M\otimes A$---a 1-letter alternating word automaton---
is empty iff $A_M \land A$ is empty.

%
%


A \emph{1-letter alternating hesitant word automaton (1-AHW)} is
an AHW $(Q, q_0, \delta: Q \to \mathcal{B}^+(Q), Acc \subseteq Q)$,
whose alphabet has only one letter (not shown in the tuple).
Informally,
an 1-AHW is an and-or graph of a restricted form
plus a Rabin acceptance condition.

%
%

A \emph{product} of an AHT
$A\!\!=\!\!(2^O, 2^I, Q, q_0, \delta, Acc)$ and a system $M\!\!=\!\!(I,O,T,t_0,\tau,out)$,
written $M \otimes A$,
is a 1-AHW
${(Q \times T, (q_0,t_0), \Delta, Acc')}$
such that
$${Acc'=\{(q,t) \mid q \in Acc\}}$$
and for every $(q,t) \in Q \times T$:
$$
{\Delta(q,t)=\delta(q,out(t))[(d,q') \mapsto (\tau(t,d),q')]}.
$$
As before, $M \otimes A$ is non-empty iff there exists an infinite word accepted by it.
Since $M \otimes A$ has a one-letter alphabet---let it be $\Sigma = \{l\}$---%
checking the non-emptiness of $M \otimes A$ means checking
whether the infinite word $l^\omega$ is accepted.

Recall that the states of an AHW can be partitioned into
``existential'' sets $Q^N_1, ..., Q^N_{k_N}$ and
``universal'' sets $Q^U_1, ..., Q^U_{k_U}$.
These sets are ordered, and the transition function of the AHW
satisfies the restriction that ensures the following:
every infinite path of the AHW gets trapped in some $Q^N_i$ or $Q^U_j$.

\subsubsection{\CTLstar model checking using the product}

For the general case of \CTLstar formulas and AHTs we know the following.
\begin{proposition}[\cite{ATA}]
A system $M$ is accepted by an AHT $A$ iff their product $M \otimes A$ is non-empty.
\end{proposition}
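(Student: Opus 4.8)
The plan is to prove the two directions separately, reducing the statement $M \models A \Iff M \otimes A \text{ non-empty}$ to a correspondence between accepting run-trees of the AHT $A$ on the system computation tree $(V,l)$ and accepting run-trees of the 1-AHW $M \otimes A$ on the word $l^\omega$. The key observation is that, by the construction of $M \otimes A$, its transition function is obtained from $\delta$ by the substitution $(d,q')\mapsto(\tau(t,d),q')$, which replaces each direction $d \in 2^I$ with the concrete successor state $\tau(t,d)$ of the system. So a choice of children in a run-tree of $A$---which selects, for each copy $(n,q)$, a satisfying set of atoms $(d_i,q_i)$ of $\delta(q,l(n))$---corresponds exactly to a choice of children $(q_i,\tau(\tau(t_0,n),d_i))$ in a run of $M\otimes A$, since $l(n)=out(\tau(t_0,n))$. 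The first thing I would do is make this correspondence precise as a bijection (or at least a mutual simulation) between the two families of run-trees.

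First I would show the forward direction. Assume $M \models A$, i.e., the AHT $A$ has an accepting run-tree on the system computation tree $(V,l)$. I would push this run-tree through the substitution: each run-tree node labeled $(n,q)$ with $n = \tau(t_0,\cdot)$ reaching system state $t$ is mapped to a run node of $M\otimes A$ labeled $(q,t)$, forgetting the computation-tree direction information but recording the reached system state $t=\tau(t_0,n)$. Because $\Delta(q,t)$ is literally $\delta(q,out(t))$ with directions replaced by successor states, any set of atoms satisfying $\delta(q,l(n))$ maps to a set satisfying $\Delta(q,t)$, so the image is a valid run-tree of $M\otimes A$ on $l^\omega$. Its paths carry exactly the same sequences of automaton states $q_1 q_2 \ldots$, so since $Acc'$ is defined componentwise on $q$, the acceptance condition is preserved path-by-path; hence $M\otimes A$ has an accepting run and is non-empty.

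For the converse, I would reverse the map. An accepting run-tree of $M\otimes A$ on $l^\omega$ is a tree whose nodes are labeled by pairs $(q,t)\in Q\times T$; because the system is deterministic, each such run node sits over a unique computation-tree node $n$ with $\tau(t_0,n)=t$, reconstructed by tracking which direction $d$ was chosen at each step (the substitution records $\tau(t,d)$, and determinism of $\tau$ lets me recover the corresponding $d$, hence $n$). Unwinding this yields a genuine run-tree of $A$ on $(V,l)$ whose paths again have identical state sequences, so acceptance transfers back and $M\models A$.

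The main obstacle I expect is bookkeeping around the direction-forgetting step: in $M\otimes A$ the directions $2^I$ have been absorbed into the state component via $\tau$, and distinct computation-tree nodes $n, n'$ with $\tau(t_0,n)=\tau(t_0,n')$ collapse to the same system state, so the correspondence between run nodes and computation-tree nodes is not literally a node bijection but rather an unfolding. I would handle this by arguing at the level of \emph{paths} rather than nodes: since acceptance of both automata is defined purely on the infinite sequence of $Q$-components along each run-tree path, and the substitution preserves these $Q$-components exactly, it suffices to establish that the sets of possible $Q$-labeled paths coincide. The structural AHT restrictions guarantee every path eventually stabilizes in some $Q^N_i$ or $Q^U_j$, so the Rabin pair $(F,I)$ evaluates identically on both sides; I would invoke this to conclude that path-acceptance is preserved in both directions, which completes the equivalence.
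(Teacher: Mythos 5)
The paper does not actually prove this proposition---it imports it from~\cite{ATA} with a citation and no argument---so there is nothing to compare against line by line; what follows is an assessment of your sketch on its own terms. Your overall plan (a simulation between run-trees of $A$ on $(V,l)$ and run-trees of the 1-AHW $M\otimes A$ on $l^\omega$, with acceptance transferred path-by-path because the $Q$-components are unchanged) is the right one, and your forward direction is essentially complete: the image of a satisfying set of $\delta(q,l(n))$ under the substitution $(d,q')\mapsto(q',\tau(t,d))$ satisfies $\Delta(q,t)$ because the substitution is an atom-homomorphism and the formulas are \emph{positive} (monotone), a point you should state explicitly since it is the one place the argument actually uses ${\cal B}^+$.

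The converse is where you need to be more careful. First, the claim that ``determinism of $\tau$ lets me recover the corresponding $d$'' is false: $\tau(t,\cdot)$ is a function but need not be injective, so from $t$ and $t'=\tau(t,d)$ you cannot recover $d$; several directions may collapse onto one product successor. Second, your proposed repair---``it suffices to establish that the sets of possible $Q$-labeled paths coincide''---is not the right reduction: acceptance of $A$ on $(V,l)$ requires exhibiting a single run-tree all of whose paths are accepting, and equality of path \emph{sets} between two families of trees does not by itself produce such a tree. The clean fix is a direct inductive construction of a run-tree $R'$ of $A$ together with a simulation map $\phi$ into the given accepting run-tree $R$ of $M\otimes A$: at an $R'$-node labeled $(n,q)$ mapped to an $R$-node labeled $(q,t)$ with $t=\tau(t_0,n)$, take the satisfying set $S'$ of $\Delta(q,t)$ realized by the children in $R$, pull it back to $S=\{(d,q') \mid (q',\tau(t,d))\in S'\}$ (which satisfies $\delta(q,l(n))$, again by positivity), and send each new child $(n\cdot d,q')$ to the $R$-child labeled $(q',\tau(t,d))$. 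Then every infinite path of $R'$ is $\phi$-mapped to a path of $R$ with the identical $Q$-projection, so acceptance transfers; no appeal to memoryless strategies or to coincidence of path sets is needed. With these two repairs your argument goes through.
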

\begin{corollary}
A system $M$ satisfies \CTLstar formula $\Phi$ iff the product $M \otimes A_\Phi$
is non-empty,
where $A_\Phi$ is an AHT for $\Phi$.
\end{corollary}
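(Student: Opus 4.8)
The plan is to obtain the Corollary as an immediate consequence of the preceding Proposition, by chaining it with the defining property of $A_\Phi$. Concretely, I would first rewrite ``$M \models \Phi$'' as ``$M$ is accepted by the AHT $A_\Phi$'', and then invoke the Proposition to turn acceptance into non-emptiness of the product. For the first link I would unfold what it means for $A_\Phi$ to be \emph{an AHT for} $\Phi$: by the standard translation of \CTLstar into an AHT, $A_\Phi$ accepts exactly those computation trees that satisfy $\Phi$. Specializing this to the single computation tree $(V,l)$ of $M$ gives $(V,l) \models A_\Phi \Iff (V,l) \models \Phi$, and since both $M \models \Phi$ and $M \models A_\Phi$ are defined through this same system computation tree, I obtain $M \models \Phi \Iff M \models A_\Phi$ with no extra work.

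For the second link I would apply the Proposition verbatim, instantiating the generic automaton $A$ as $A_\Phi$: this gives that $M \models A_\Phi$ holds iff $M \otimes A_\Phi$ is non-empty. Composing the two equivalences closes the argument, namely $M \models \Phi$ holds iff $M \otimes A_\Phi$ is non-empty, which is exactly the claim.

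At the level of the Corollary there is no real obstacle --- all the content is inherited. The genuine difficulty sits in the two imported facts. Correctness of the translation $\Phi \mapsto A_\Phi$ (language equality with the trees satisfying $\Phi$) rests on the hesitant structure: the partition into nondeterministic sets $Q^N_i$, encoding existential subformulas, and universal sets $Q^U_j$, encoding universal subformulas, together with their ordering by subformula nesting, is what lets the single-pair Rabin acceptance faithfully capture the alternation of path quantifiers. The Proposition itself is where I would expect the main effort if it were not cited: one must match an accepting run-tree of $A_\Phi$ on $(V,l)$ with a witness for non-emptiness of the one-letter product $M \otimes A_\Phi$ and back, using crucially that hesitancy forces every infinite path to be eventually trapped in a single $Q^N_i$ or $Q^U_j$ --- precisely the property that reduces the acceptance to one Rabin pair $(F,I)$ and keeps the emptiness test tractable.
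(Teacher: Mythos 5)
Your proposal is correct and is exactly the argument the paper intends: the corollary is obtained by composing the correctness of the \CTLstar-to-AHT translation ($M \models \Phi \Iff M \models A_\Phi$, imported from the automata-theoretic framework) with the preceding Proposition instantiated at $A = A_\Phi$. The paper gives no separate proof precisely because, as you note, all the content is inherited from these two cited facts.
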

The 1-AHW non-emptiness problem can be decided in linear time wrt.\ the size of the automaton~\cite{ATA},
therefore the approach is EXPTIME wrt.\ the size of a given \CTLstar formula.
It is known~\cite{PrinciplesMC} that \CTLstar model checking problem is PSPACE-complete.

\subsection*{LTL model checking using the product}

Consider the special case of LTL properties.
We can check whether a given system $M$ satisfies an LTL formula $\varphi$
as follows.
First, construct a UCT $A_\varphi$ for $\varphi$
(we do not need alternating automata for LTL).
Second, build the product $M \otimes A_\varphi$.
Such a product is a 1-letter \emph{universal} word automaton (1-UCW).
Then the model checking is equivalent to checking non-emptiness of the 1-UCW.
\begin{proposition}\label{defs:prop:ltl_mc_via_product}\label{page:defs:prop:ltl_mc_via_product}
A system $M$ satisfies an LTL formula $\varphi$
iff
the product $M \otimes A_\varphi$ is non-empty,
where $A_\varphi$ is the UCT for $\varphi$.
\end{proposition}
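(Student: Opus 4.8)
The plan is to obtain the equivalence by chaining through the intermediate statement $M \models A_\varphi$, reusing two facts already set up in the excerpt rather than arguing from scratch. I would first establish $M \models \varphi \Iff M \models A_\varphi$, and then $M \models A_\varphi \Iff M \otimes A_\varphi$ is non-empty; composing the two closes the proof.

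For the first equivalence, I would use that $A_\varphi$ is the tree variant $(B_\varphi)_\A$ of the universal co-B\"uchi \emph{word} automaton $B_\varphi$ for $\varphi$, i.e. the UCW accepting exactly the computations satisfying $\varphi$. The tree-variant property from Section~\ref{defs:word-automata} gives, for the universal $B_\varphi$, that $M \models A_\varphi \Iff M \models \A(B_\varphi)$. By definition $M \models \A(B_\varphi)$ holds iff every system computation is accepted by $B_\varphi$, i.e. iff every computation satisfies $\varphi$, which is exactly $M \models \varphi$. This yields $M \models \varphi \Iff M \models A_\varphi$.

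For the second equivalence, I would observe that a UCT is a special AHT: all states sit in universal partition classes, $\delta$ is purely conjunctive (so the disjunctive-relatedness and lower-set clauses of the AHT restrictions are vacuous), and the co-B\"uchi condition is precisely the universal-class acceptance clause of the AHT definition---the nondeterministic clause never fires, since no run-path ever enters a $Q^N_i$. Hence the general product proposition (\cite{ATA}) applies verbatim to $A = A_\varphi$, giving $M \models A_\varphi \Iff M \otimes A_\varphi$ is non-empty.

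The main obstacle is making the two halves snap together cleanly, since they speak about different objects: the product proposition reasons at the level of the abstract $1$-AHW $\Delta(q,t) = \delta(q,\out(t))[(d,q') \mapsto (\tau(t,d),q')]$, whereas the tree-variant property is phrased via $\A(B_\varphi)$. I would verify that in the universal case these coincide---namely that $M \otimes A_\varphi$ is a $1$-letter \emph{universal} word automaton whose unique run-tree on $l^\omega$ has its paths in bijection with pairs (system computation, run of $B_\varphi$ on it), so that non-emptiness says exactly that every such path satisfies the co-B\"uchi condition, i.e. every computation is accepted by $B_\varphi$. Checking that this bijection survives the substitution $[(d,q') \mapsto (\tau(t,d),q')]$ and that co-B\"uchi acceptance transfers across it is the load-bearing but routine step; everything else is bookkeeping.
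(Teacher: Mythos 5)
Your proposal is correct and follows essentially the route the paper itself takes: the paper states this proposition as the LTL specialization of the general result ``$M$ is accepted by an AHT $A$ iff $M \otimes A$ is non-empty'' (cited to~\cite{ATA}), without writing out a separate proof. Your two-step chain --- $M \models \varphi \Iff M \models \A(B_\varphi) \Iff M \models A_\varphi$ via the tree-variant property, followed by the observation that a UCT is an AHT with a single universal partition class so the cited product proposition applies --- just makes that specialization explicit, and is sound.
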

For the same arguments as for \CTLstar,
the approach is EXPTIME wrt.\ the size of a given LTL formula.

\subsection*{Bottom-up \CTLstar model checking}\label{page:defs:bottom-up-mc}

Here is the classical construction~\cite{DBLP:journals/toplas/ClarkeES86}
(see also \cite[p.427]{PrinciplesMC})
for \CTLstar model checking.

We will need the following notions of $F$, $P$, and $\widetilde\Phi$.
Let $\Phi$ be a \CTLstar formula with inputs $I$ and outputs $O$.
Let $F'=\{f'_1,...,f'_k\}$ be the set of \CTLstar subformulas of the form $\A\varphi$ or $\E\varphi$,
where $\varphi$ is a path formula.
The subformulas $\{f'_1,...,f'_k\}$ can be ordered wrt.\ path quantifier nesting depth $d$.
Let us assume that $f'_1,...,f'_k$ are ordered wrt.\ $d$ in increasing order,
i.e., $d(f'_i) \leq d(f'_{i+1})$ for every $i$.
With every $f'_i$ we associate a proposition $p_i$,
which makes up the set $P=\{p_1,...,p_k\}$.
Let $F=\{f_1,...,f_k\}$ be the set of formulas,
where each $f_i$ is $f'_i$ in which all subformulas were replaced by the corresponding propositions:
For example,
for the \CTLstar formula $\Phi=\EG(g \impl g \U \AX\neg g)$ we have
$F'=\{ f'_1=\AX\neg g, f'_2=\EG(g \impl g \U \AX\neg g) \}$,
$P=\{p_1, p_2\}$, and
$F=\{ f_1=\AX\neg g, f_2=\EG(g \impl g \U p_1) \}$.
Notice about $F$:
(i) $F$ are formulas over atoms $I\cup O\cup P$,
(ii) every $f_i$ is over terms $I \cup O \cup \{p_1,...,p_{j\leq i-1}\}$, and
(iii) they are of the form $\A\varphi$ or $\E\varphi$,
where $\varphi$ is a \CTLstar path formula that has \emph{no path quantifiers}.
Let $\widetilde\Phi$ be $\Phi$ where all subformulas were replaced by the corresponding propositions.
Note that $\widetilde\Phi$ is a \emph{Boolean} formula over $O\cup P$.

Given a \CTLstar formula $\Phi$ and a system $M$.
The bottom up model checker creates the formulas $F$, propositions $P$, and $\widetilde\Phi$.
Then it annotates the system states with propositions from $P$ such that
a proposition $p_i$ holds in a state $t$ iff $t \models f_i$.
It does so in a bottom up manner (inductively):
\li
\- It starts with the proposition $p_1$:
   the formula $f_1$ is a path formula over propositions $I\cup O$.
   We can use LTL model checker to check if $t \models f_1$,
   for every system state $t$.
\- Similarly for $p_i$:
   use LTL model checker to check if $t \models f_i$,
   where $f_i$ can talk about propositions $I \cup O \cup \{p_1,...,p_{i-1}\}$
   whose truth for every system state we already established.
\- Finally, $M \models \Phi$ iff $\widetilde\Phi$ holds in the initial system state.
\il
The complexity of the procedure is EXPTIME wrt.\ $|\Phi|$.

\toclesslab\section{Approaches to Synthesis}{defs:synthesis-approaches} \label{defs:bounded_synthesis} \label{defs:synthesis-problem}

This section describes
(i) the classical game-based approach~\cite{DBLP:conf/popl/PnueliR89} to LTL synthesis,
(ii) a more recent approach~\cite{BS} that avoids automata determinization and uses constraint solvers, and
(iii) an approach to \CTLstar synthesis.
Let us start with the definition.

\smallskip
The \emph{synthesis problem} is:

\smallskip
\noindent
\emph{%
$~$Given: the set of inputs $I$, the set of outputs $O$, formula $\Phi$ in some logic\\
Return: a Moore system with inputs $I$ and outputs $O$ satisfying $\Phi$,}\\
$~~~~~~~~~~~~$\emph{or ``unrealisable'' if no such system exists}

\smallskip
\noindent
The input $\tpl{I,O,\Phi}$ to the problem is called a \emph{specification}.
A specification is \emph{realisable} if the answer is a system,
otherwise the specification is \emph{unrealisable}.
Depending on the logic of $\Phi$,
we have LTL and \CTLstar synthesis problems.
Instead of a \emph{formula} $\Phi$,
we can use tree automata or word automata prefixed with the $\E$ or $\A$ path quantifier.
\ak{note: we can ask for computation tree instead of regular object---but they are equivalent}

\medskip
It is known~\cite{informatio,DBLP:conf/popl/PnueliR89} that
the \CTLstar and LTL synthesis problems are 2EXPTIME-complete.
Below we discuss two approaches to LTL synthesis problem,
game-based approach and bounded synthesis.
Then we discuss an approach to \CTLstar synthesis.

\subsection*{LTL synthesis via reduction to games}

The standard game-based approach~\cite{DBLP:conf/popl/PnueliR89} to synthesis from LTL specifications is as follows.
\li
\- Translate a given LTL formula into a nondeterministic B\"uchi word automaton~\cite{DBLP:journals/iandc/VardiW94}.
   The automaton can be exponentially larger than the LTL formula.

\- Determinise the automaton into a deterministic parity word automaton,
   e.g. using Safra construction~\cite{Safra}.
   The resulting automaton can be exponentially larger than the original one,
   leading to the doubly exponential blow up.

\- Translate the word automaton into a tree automaton,
   by splitting each transition into two transitions according to the input and output labels.

\- Check the non-emptiness of the deterministic parity tree automaton.
   The check can be done by treating the tree automaton as a game, and then solving the game.
   If the game is winning for the system player (it controls the choice of output labels),
   then the specification is realisable,
   otherwise it is unrealisable.
   The particular class of parity games that we get can be solved in polynomial time e.g. using~\cite{DBLP:conf/stoc/CaludeJKL017}.
\il
The method gives 2EXPTIME solution to the synthesis problem.

The 2EXPTIME-hardness comes from the fact that we can encode
into the LTL realisability problem
the acceptance of a given word by an alternating exponential-space bounded Turing machine~\cite{Pnueli1989,Vardi:1985:IUL:22145.22173}.%
\footnote{%
  An alternating Turing machine is, like an alternating automaton,
  has universal and existential transitions.
  A given word is accepted if there is an accepting run-tree of the machine on this word
  (and thus all its branches are accepting).
  A Turing machine is exponential-space bounded iff:
  (i) it terminates on all inputs,
 (ii) it uses $2^{cn}$ number of cells where $n$ is the length of the input word
      and $c$ is a constant.
 The problem of deciding whether a word is accepted by such a machine is 2EXPTIME-complete~\cite{Chandra:1981:ALT:322234.322243}.
}
I.e., given such a Turing machine and an input word,
we can build the LTL specification,
which is realisable
iff
the word is accepted by the machine.
The length of the specification is $2^{cn}$ where $n$ is the length of the input word and $c$ is a constant.
The specification requires a system to output,
in each step,
the set of all successor configurations of the TM until it accepts on all of them.
In each step, with the aid of additional inputs,
we choose one configuration from which to proceed.
Non-deterministic transitions of the TM are emulated using ORs in the LTL formula.

\subsection*{Bounded LTL synthesis via SMT}\label{page:defs:bounded_synthesis}

The idea of bounded synthesis via SMT~\cite{BS} is to reduce the synthesis problem to SMT solving.
The resulting SMT query encodes the \emph{model checking} question%
---the query is satisfiable iff the system satisfies a given specification.
To turn model checking into synthesis, 
we replace the given system by uninterpreted functions.
Therefore, if the query is satisfiable, then the SMT solver produces%
---in addition to YES/NO answer---%
models of the uninterpreted functions that encode the system.
From those models we extract the system, and such a system is correct.

The SMT query encodes the non-emptiness of the product of a system and a UCT,
where UCT represents a given LTL formula
(see also Proposition~\ref{defs:prop:ltl_mc_via_product} on page~\pageref{page:defs:prop:ltl_mc_via_product}).
Recall that such a product is a 1-UCW.
Thus,
the emptiness question reduces to finding a lasso
with a final state of the 1-UCW in the loop.

If a system was given (as in the model checking),
then using SMT solvers in \emph{this} way to solve such a simple graph question does not seem%
\footnote{SMT solvers \emph{are} used in verification, see, for example, papers on solving Horn clauses~\cite{bjornerhorn,RybHornSynth,BeDi}.
 Here we refer only to the way of using them as it is done in bounded synthesis.%
} to be wise
(if we fix a system,
 then the complexity of solving such%
 \footnote{Here ``such queries'' means that they have the same theory as those used by bounded synthesis
 (for example, UFLIA).
 We did not analyse whether the special structure of SMT queries from bounded synthesis
 gives way to a simpler complexity than that of solving general UFLIA queries.%
 }
 SMT queries is NPTIME-hard wrt.\ the size of the formula automaton,
 while the straight graph-based approach is in PTIME wrt.\ the size of the formula automaton;
 also, the SMT-based approach is not symbolic).
But in the case of synthesis the system is not given:
here, an SMT solver plays the role of an efficient guess-verify searcher.

The pseudo-code of the bounded synthesis is:
\begin{verbatim}
convert a given LTL formula into UCT
for system size in {1...bound}:
    encode non-emptiness of system*UCT into SMT query
    solve the query
    if the query is satisfiable:
        return REALIZABLE
return UNREALIZABLE
\end{verbatim}
Let us go through the steps.

\parbf{Automata translation}
A given LTL formula $\varphi$
is translated into a UCT $U$ which accepts a Moore machine $M$ iff $M$ satisfies $\varphi$.
This can be done, for example, using SPOT~\cite{spot} or LTL3BA~\cite{LTL3BA}:
negate the formula,
translate it into a NBW,
treat it as a UCW $A$, and turn the UCW $A$ into a UCT $A_\A$ as described
on page~\pageref{page:defs:tree_variants}.
This $A_\A$ is the sought UCT $U$.

\parbf{Iteration for increasing bounds}
Fix the number of states in a system $M$.
This allows us to encode the non-emptiness problem of $M \otimes U$ into a decidable fragment of SMT.
The bound $bound$ can be either user-chosen or it is the upper bound on the system size ($O(2^{2^{|\varphi|}})$).
A better way, from the practical point of view, is described in Remark~\ref{rem:bosy-unreal}.

\parbf{SMT encoding}
Let inputs and outputs be $I$ and $O$.
Fix the states $T$ of a system $M=(I,O,T,t_0,\tau,out)$.
Let UCT $U = (2^O, 2^I, Q,q_0, \delta': Q\times 2^O \to 2^{Q\times 2^I}, F \subseteq Q)$.
In the SMT query, we use uninterpreted functions to express system functions $\delta$ and $out$.
We also use two uninterpreted functions:
$\reach: Q\times T \to \bbB$ denotes
whether a pair $(q,t) \in Q \times T$ is reachable in the product $M\otimes U$,
and $\rank: Q \times T \to \bbN$ which is called \emph{ranking function} and is used
to ensure the absence of bad lassos
(they visit $q \in F$ in the loop of the lasso).
The constraints are:
\begin{equation}\label{defs:eq:bs}
\begin{aligned}
&\reach(q_0,t_0) \land \\
\bigwedge_{(q,t) \in Q\times T} ~\Big[ &\reach(q,t) \impl
\!\!\!\!\!\!\!\!\!\!\bigwedge_{(d,q') \in \delta(q,out(t))} \!\!\!\!\!\!\!\!\!\!
\reach\!\left(q',\tau(t,d)\right) ~\land~ \rank(q,t) ~\triangleright~ \rank\!\left(q',\tau(t,d)\right)\Big]
\end{aligned}
\end{equation}
where $\triangleright$ is $>$ if $q \in F$, otherwise $\geq$.
The intuition is as follows.
We mark the initial state $(q_0,t_0)$ of the product $M\otimes U$ as reachable.
For every reachable state $(q,t)$,
we mark every successor state $(q',\tau(t,d))$ of the product as reachable,
and we require the rank to strictly decrease if $q \in F$, and non-strictly decrease otherwise.
Thus, all reachable states of the product are marked with $\reach$.
Additionally,
if there is a bad lasso (that has $(q_b,t)$ with $q_b \in F$ inside its loop),
then the query will have an unsatisfiable cycle of constraints $(q_b,t) > ... \geq (q_b,t)$.
Note that this query is satisfiable iff there exists functions $\tau$ and $out$
(and $\rank$ and $\reach$) such that the product does no have a bad lasso~\cite{BS}\ak{which theorem there?}.

\parbf{Solving the SMT query}
To solve the query one can use e.g.\ Z3 solver~\cite{Moura08}.

\begin{remark}[Checking unrealisability]\label{rem:bosy-unreal}
When a given LTL specification is unrealisable,
the above procedure iterates through all system sizes up to a bound $O(2^{2^{|\varphi|}})$.
The bound is computationally difficult to reach on non-toy unrealisable specifications,
making the approach impractical.
To overcome this, we can use the determinacy of the LTL synthesis problem, which states:
an LTL specification is unrealisable iff the dual LTL specification is realisable.
For a specification $\tpl{I,O,\varphi,\textit{Moore}}$
the specification $\tpl{O,I,\neg\varphi,\textit{Mealy}}$
is called dual,
i.e., we swap inputs and outputs, negate the formula,
and search for a Mealy machine instead of a Moore machine.
(Mealy machines are just like Moore machines
 except that the output function $out: T\times 2^I \to 2^O$ also depends on inputs.)
\ak{cite this trick}\ak{provide proof}
Thus, instead of iterating for increasing system bound,
we can run two processes in parallel:
one checks for realisability of the original specification,
another checks for realisability of the dual specification.
The process that finishes first,
returns the answer, while the other process is terminated.
This approach is used in most bounded synthesis implementations~\cite{syntcomp}.
\end{remark}

\subsection*{\CTLstar synthesis}

The standard approach to \CTLstar synthesis~\cite{informatio} is:
translate a given \CTLstar formula $\Phi$ into an alternating Rabin tree automaton~\cite{ATA}
with $\approx 2^{|\Phi|}$ many states and $\approx|\Phi|$ many acceptance pairs,
turn it into a nondeterministic Rabin tree automaton~\cite{MS95}
with $\approx2^{2^{|\Phi|}}$ many states and $\approx2^{|\Phi|}$ many acceptance pairs,
and check its non-emptiness.
The latter check is polynomial in the size of the automaton~\cite{EJ99,DBLP:conf/popl/PnueliR89},
i.e., requires $\approx2^{2^{|\Phi|}}$ time.
Thus, the approach gives a 2EXPTIME algorithm.
The lower bound comes from the 2EXPTIME completeness of the LTL synthesis problem~\cite{Pnueli1989,Vardi:1985:IUL:22145.22173}
and the fact that \CTLstar subsumes LTL.

\chapter{Bounded Synthesis for Streett, Rabin, and \CTLstar}\label{chap:bosy:ctlstar}

\hfill {\footnotesize\textit{This chapter is based on joint work with Roderick Bloem~\cite{CTLstarCAV}}.~~~~~~~~}

\begin{quotation}
\noindent\textbf{Abstract.}
SMT-based Bounded Synthesis uses an SMT solver to synthesize systems from 
LTL properties by going through co-\buchi automata. In this chapter,
we show how to extend
the ranking functions used in Bounded Synthesis, and thus the bounded
synthesis approach, to \buchi, Parity, Rabin, and Streett
conditions. We show that we can handle both existential and universal
properties this way, and therefore, that we can extend Bounded
Synthesis to \CTLstar. Thus, we obtain the first Safraless synthesis
approach and the first synthesis tool for (conjunctions of) the acceptance conditions
mentioned above, and for \CTLstar.
\end{quotation}

\section{Introduction}
For Linear Temporal Logic \cite{pnueli1977temporal},
the standard approach to reactive synthesis involves Safra's relatively complex
construction \cite{Safra}
to determinize B\"uchi automata \cite{DBLP:conf/popl/PnueliR89}.
The difficulty to implement the construction has led to the development
of \emph{Safraless} approaches~\cite{KupfermanV05,BS}.
In this chapter, we focus on one such approach, called Bounded Synthesis,
introduced by Finkbeiner and Schewe~\cite{BS}.

The idea behind Bounded Synthesis is the following.
LTL properties can be translated to \buchi automata \cite{DBLP:journals/iandc/VardiW94}
and verification of LTL properties can be reduced to deciding
emptiness of the product of this automaton and the Kripke structure
representing a system
\cite{DBLP:conf/lop/MannaW81,DBLP:conf/focs/WolperVS83}
(see also Section~\ref{defs:synthesis-approaches}).
This product is a \buchi automaton in its own right.
Finkbeiner and Schewe made two important observations:
(1) Using a ranking function,
    the emptiness problem of \buchi automata
    can be encoded as a Satisfiability modulo Theories (SMT) query, and
(2) by fixing its size, the Kripke structure can be left uninterpreted,
    resulting in an SMT query for a system that fulfills the property.
Because the size of the system is bounded by Safra's construction,
this yields an approach to LTL synthesis that is complete in principle.
(Although proofs of unrealizability are usually computed differently.)  

The reduction to SMT used by Bounded Synthesis provides two benefits:
the performance progress of SMT solvers and the flexibility.
The flexibility allows one to easily adapt the SMT constraints,
produced by Bounded Synthesis,
to build semi-complete synthesizers for
distributed~\cite{BS},
self-stabilising~\cite{Nico},
parameterized~\cite{JB12},
assume-guarantee~\cite{Bloem2015},
probabilistic~\cite{bounded-pctl},
and partially implemented systems.

In this chapter, we extend Bounded Synthesis in two directions.

First,
we show how to directly encode into SMT that some path of a system
is accepted by an $X$ automaton, for $X \in \{$\buchi, co-\buchi,
Parity, Streett, Rabin$\}$.
We do this by introducing new ranking functions.
Therefore we avoid the explicit translation of these automata into \buchi automata,
which would be needed if we were to use the original Bounded Synthesis.

Second, we extend Bounded Synthesis to the branching logic \CTLstar.
\CTLstar formulas allow the user to specify \emph{structural} properties of the system.
For example,
if $g$ is system output and $r$ is system input,
then the \CTLstar formula $\AG\EF g$ says that
a state satisfying $g$ is always reachable;
and the \CTLstar formula $\EFG(g \land r)$ says that
a state satisfying $g$ is reachable and it has a loop when reading $r$ that satisfies $g$.
In both cases, the existential path quantifier $\E$ allows us to refrain
from specifying the exact path that leads to such states.

In this chapter we show \emph{two} Bounded Synthesis approaches for \CTLstar.
First, we show how to use the ranking functions for
$X$ automata to either decide that some path of a system fulfills such
a condition, or that \emph{all} paths of the system do. Once we have
established this fact, we can extend Bounded Synthesis to logics like
\CTLstar by replacing all state subformulas by fresh atomic
propositions and encoding them each by a \buchi automaton. This
approach follows the classical construction
\cite{DBLP:journals/toplas/ClarkeES86} of model checking \CTLstar,
extending it to synthesis setting.
Alternatively, we show that we can
use a translation of \CTLstar to Alternating Hesitant Tree Automata~\cite{ATA}
to obtain a relatively simple encoding to SMT.

Thus, we obtain a relatively simple, Safraless synthesis procedure to
(conjunctions of) various acceptance conditions and \CTLstar. This
gives us a decision procedure that is efficient when the
specification is satisfied by a small system, but is admittedly
impractical at showing unrealizability.
Just like Bounded Synthesis does for LTL synthesis,
it also gives us a semi-decision procedure for
undecidable problems such as distributed
\cite{DBLP:conf/focs/PnueliR90} or parameterized synthesis \cite{JB12,party}.
We have implemented the \CTLstar synthesis
approach in a tool\footnote{%
Available at \url{https://github.com/5nizza/party-elli}, branch ``cav17''.
}
that to our knowledge is the only tool that
supports \CTLstar synthesis.

The chapter is structured as follows.
In the next section we list the definitions that this chapter uses.
Then in Section~\ref{bosy:synt-from-word-automata} we introduce ranking functions that can be used to verify
and synthesize properties expressed as word automata.
Section~\ref{bosy:ctlstar-synt} contains two approaches to Bounded Synthesis for \CTLstar:
Section~\ref{bosy:ctlstar-synt:direct}
describes the direct encoding into SMT,
in the spirit of bottom-up \CTLstar model checking,
while Section~\ref{bosy:ctlstar-synt:aht} describes the approach via hesitant tree automata.
Section~\ref{bosy:experiments} describes the prototype \CTLstar synthesizer and
the experiments that show applicability of the approach for the synthesis of
small monolithic and distributed systems.

%

\section{Synthesis from B\"uchi, Streett, Rabin, and Parity Automata}\label{bosy:synt-from-word-automata}

In this section we describe how to verify and synthesize
properties described by \buchi, co-\buchi, Parity, Streett, and Rabin
conditions. 
For each acceptance condition $X \in \{$\buchi, co-\buchi,
Parity, Streett, Rabin$\}$, we can handle the question whether (the word defined
by) some path of a system is in the language of a nondeterministic $X$
automata, as well as the question of whether all paths of the system
are in the language defined by a universal $X$ automaton. There does
not appear to be an easy way to mix these queries (``do all paths of
the system fulfill the property defined by a given nondeterministic
automaton?'').

\subsection{Preliminaries on Ranking}\label{sec:ranking}

In the following, given a system $M = (I,O,T,t_0,\tau,\out)$ and
a nondeterministic (universal) word automaton
$A = (2^{I\cup O}, Q, q_0, \delta, acc)$,
we describe how to build an SMT query $\Phi(M,A)$
that is satisfiable iff some path (all paths, resp.) of $M$ are in $L(A)$.
That is, we focus on the verification problem.
When the verification problem is solved, we obtain the solution to the
synthesis problem easily, following the Bounded Synthesis approach:
given an automaton $A$, we ask the SMT solver whether there is a
system $M$ such that $\Phi(M,A)$ is satisfiable.
More precisely, for increasing $k$,
we fix a set of $k$ states and ask the SMT solver for
a transition relation $\tau$ and a labeling $\out$ (and a few more objects)
for which $\Phi(M,A)$ is satisfiable.


Our constructions use ranking functions.
A \emph{ranking function} is a function $\rank: Q\times T \to D$
for some totally ordered set $D$ with order $\geq$.
A \emph{rank comparison relation} is a (possibly partial) relation
$\GR \subseteq Q \times D \times D$.
In the following,
we write
$d \GR_q d'$
to mean
$(q,d,d') \in \GR$.
We will usually define $\GR$ using $\geq$ and $\rank$.

We will first establish how to use the ranking functions
to check existential and universal properties,
expressed as $\E(A)$ and $\A(A)$.
Then we define the ranking functions and comparisions for the different acceptance conditions,
i.e., for different types of the word automaton $A$.

Given a rank comparison $\GR$,
we define the following formula to check an existential property $\E(A)$:
\begin{equation*}\label{eq:nondet}
\begin{aligned}
\PhiE(M,A) = & \reach(q_0,t_0) \land \\
\bigwedge_{q,t \in Q\times T} \mkern-12mu &\reach(q,t) \impl \mkern-18mu 
\bigvee_{(q,i \cup o, q') \in \delta} \mkern-18mu 
out(t)\!=\!o \land \reach(q'\!,\tau(t,i)) \land \rank(q,t)\triangleright_q\! \rank(q'\!,\tau(t,i)).
\end{aligned}
\end{equation*}
Similarly, to check a universal property $\A(A)$, we define
\begin{equation*}\label{eq:universal}
\begin{aligned}
\PhiA(M,A) = & \reach(q_0,t_0) \land \\
\bigwedge_{q,t \in Q\times T}\mkern-12mu & \reach(q,t) \impl\mkern-18mu
\bigwedge_{(q,i \cup o, q') \in \delta}\mkern-18mu 
out(t)\!=\!o \impl \reach(q',\tau(t,i)) \land \rank(q,t)\triangleright_q\! \rank(q'\!,\tau(t,i)).
\end{aligned}
\end{equation*}
In these formulas,
\li 
\- the free variable $\reach: Q\times T \to \bbB$
   is an uninterpreted function that marks reachable states
   in the product of $M$ and $A_\E$ or $A_\A$,
   where $A_\E$ and $A_\A$ are the tree automata for $\E(A)$ and $\A(A)$
   (defined on page~\pageref{page:defs:tree_variants}),
   and

\- the free variable $\rank: Q\times T \to D$ is an uninterpreted ranking function. 
\il

Intuitively,
$\PhiE$ will be used to encode that there is an accepting loop in the product automaton,
while $\PhiA$ will be used to ensure that all loops are accepting.

Given a path $\pi=(q_1,t_1) (q_2,t_2)\dots \in (Q \times T)^\omega$,
a rank comparison relation $\GR$,
totally ordered set $D$, and
a ranking function $\rho$,
\emph{$\pi$ satisfies $\GR$ using $\rank$ and $D$},
written $(\pi,D,\rank) \models \GR$,
iff
$\rank(q_i,t_i) \GR_q \rank(q_{i+1},t_{i+1})$ holds for every $i$.

Let us look at the properties of these equations.

\begin{lemma}\label{le:nondet-eq}
For every totally ordered set $D$,
rank comparison relation $\GR$,
ranking function $\rank$,
nondeterministic word automaton $A$, and machine $M$:
  $\PhiE(M,A)$ is satisfiable using $\rank$ and $D$
  iff
  the product $M \otimes A_\E$
  has an infinite path that satisfies $\GR$ using $\rank$ and $D$.
\end{lemma}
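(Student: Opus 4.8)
The claim is an equivalence between satisfiability of $\PhiE(M,A)$ and the existence of a particular infinite path in the product $M \otimes A_\E$. The plan is to prove both directions by directly unfolding the definition of $\PhiE$ and relating its conjuncts to the structure of the product automaton. The key observation is that $\reach$ is an uninterpreted predicate marking states of the product, and the disjunction inside $\PhiE$ forces that from any reachable $(q,t)$ there is \emph{at least one} successor $(q',\tau(t,i))$ — matching the transition $(q,i\cup o,q')\in\delta$ with $out(t)=o$ — that is itself marked reachable and whose rank relates correctly via $\GR_q$. Since the product $M\otimes A_\E$ is built from exactly these matched transitions (using the definition of $A_\E$ on page~\pageref{page:defs:tree_variants}, where directions $i$ are existentially combined), a satisfying assignment of $\reach$ and $\rank$ is essentially a witness subgraph carrying a consistent rank annotation.

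\textbf{Forward direction.} First I would assume $\PhiE(M,A)$ is satisfiable using some $\rank$ and $D$, fixing also the interpretations of the uninterpreted functions $\tau$, $out$, and $\reach$. Starting from $(q_0,t_0)$, which the first conjunct marks reachable, I would inductively build an infinite path: at each reachable state $(q,t)$ the satisfied disjunction yields a concrete disjunct, i.e.\ a transition $(q,i\cup o,q')\in\delta$ with $out(t)=o$, such that $(q',\tau(t,i))$ is reachable and $\rank(q,t)\GR_q\rank(q',\tau(t,i))$ holds. Choosing that successor at every step produces an infinite path in $M\otimes A_\E$ along which the rank comparison holds at every position, i.e.\ a path satisfying $\GR$ using $\rank$ and $D$.

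\textbf{Backward direction.} Conversely, given such an infinite path in $M\otimes A_\E$, I would define $\reach$ to be true exactly on the states appearing on that path (or on any reachable-closed superset) and use the given $\rank$ and $D$ directly. Then the first conjunct holds because $(q_0,t_0)$ is the path's start, and for each reachable state the required disjunct is witnessed by the path's own successor, whose existence and rank relation are guaranteed by hypothesis; the transition used corresponds to a genuine $\delta$-transition with matching output by the construction of $A_\E$. I expect the main obstacle to be the careful bookkeeping between the word-automaton transition $(q,i\cup o,q')\in\delta$ and its rendering as a product edge via $A_\E$: one must check that ``$out(t)=o$ and successor $\tau(t,i)$'' faithfully encodes the existential tree-automaton transition, so that the path constructed really is a path of $M\otimes A_\E$ and not merely of $A$. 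Once that correspondence is pinned down, both directions are routine inductions.
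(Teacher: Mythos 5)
Your proposal is correct and follows essentially the same argument as the paper: the forward direction extracts an infinite (in the paper, lasso-shaped) path from a satisfying model of $\reach$ by repeatedly following a witnessing disjunct, and the backward direction marks exactly the path's states with $\reach$ and lets the path's own successors witness each disjunction. The bookkeeping concern you raise about matching $(q,i\cup o,q')\in\delta$ with product edges is legitimate but resolves immediately from the definition of $A_\E$ and of $M\otimes A_\E$, and the paper does not elaborate on it either.
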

\begin{proof}[Proof idea]
Direction $\Leftarrow$.
Let us assume that the product contains a path $\pi=(q_1,t_1) (q_1,t_1)\dots$ such that $(\pi,D,\rank) \models \GR$.
By definition, $\rank(q_i,t_i)\GR\rank(q_{i+1},t_{i+1})$ holds for every $i$.
If we set $\reach(q,t)$ to true for $(q,t) \in \pi$ and to false for all the other states,
then the formula $\PhiE(M,A)$ holds.

Direction $\Rightarrow$.
Let us assume that $\PhiE$ is satisfiable,
then there is a model for $\reach$.
We can use $\reach$ to construct a lasso-shaped infinite path $\pi$
such that $(\pi,D,\rank) \models \GR$ and that belongs to the product.
\end{proof}
A similar result holds for universal word automata.
\begin{lemma}\label{le:universal-eq}
For every well-founded domain $D$,
rank comparison relation $\triangleright$,
ranking function $\rank$,
universal word automaton $A$, and machine $M$:
  $\PhiA(M,A)$ is satisfiable using $\rank$ and $D$
  iff
  in the product $M \otimes A_\A$
  every infinite path satisfies $\triangleright$ using $\rank$ and $D$.
\end{lemma}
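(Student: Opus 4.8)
The plan is to mirror the structure of Lemma~\ref{le:nondet-eq}, but dualize the two directions: where the nondeterministic case exhibited \emph{one} accepting path, the universal case must control \emph{all} paths, so the quantifier flips and the harder direction flips with it. I would structure the argument around the product $M \otimes A_\A$, whose states are pairs $(q,t) \in Q \times T$ and whose transitions follow $\delta$ of the universal automaton (every disjunct in $\delta$ replaced by a conjunct over directions, as in the definition of $A_\A$ on page~\pageref{page:defs:tree_variants}).

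First I would prove the direction ``$\PhiA(M,A)$ satisfiable $\Rightarrow$ every infinite path of $M\otimes A_\A$ satisfies $\triangleright$''. Suppose we have a model for $\reach$ and $\rank$ satisfying $\PhiA(M,A)$. The key step is to show by induction along any infinite path $\pi=(q_1,t_1)(q_2,t_2)\dots$ starting at $(q_0,t_0)$ that every state on $\pi$ is marked reachable: the base case is the conjunct $\reach(q_0,t_0)$, and the inductive step uses the implication inside $\PhiA$, noting that the side condition $out(t)=o$ is automatically satisfied because the product transition was built from $\delta(q,(o,d))$ with $o=out(t)$. Once reachability propagates to the whole path, the same conjunct forces $\rank(q_i,t_i)\triangleright_q \rank(q_{i+1},t_{i+1})$ at every step, which is exactly $(\pi,D,\rank)\models\triangleright$. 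Since $\pi$ was an arbitrary path, all paths satisfy $\triangleright$.

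For the converse, ``all paths satisfy $\triangleright$ $\Rightarrow$ $\PhiA(M,A)$ satisfiable'', I would construct the witnessing $\reach$ by setting $\reach(q,t)$ true exactly on the states reachable in $M\otimes A_\A$ from $(q_0,t_0)$, and take $\rank$ to be the ranking function given by hypothesis. The initial conjunct holds by construction. For a reachable $(q,t)$, each conjunct $out(t)=o \impl \big(\reach(q',\tau(t,i)) \land \rank(q,t)\triangleright_q \rank(q',\tau(t,i))\big)$ holds because either the premise fails (then the implication is vacuously true), or $o=out(t)$, in which case $(q',\tau(t,i))$ is a genuine successor in the product and hence reachable, giving the $\reach$ conjunct, while the rank conjunct follows from applying the all-paths hypothesis to a path that threads through the edge $(q,t)\to(q',\tau(t,i))$.

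The main obstacle is this last step: unlike the nondeterministic case, where one exhibited a single lasso, here I must verify the rank inequality on \emph{every} product edge out of a reachable state, not merely along one path. The fix is that every such edge extends to \emph{some} infinite path (systems have no dead ends, and $\delta$ is total with respect to directions, so successors always exist), and the all-paths hypothesis applies to that path, pinning down the inequality on the chosen edge. The role of well-foundedness of $D$, which is assumed here but not in Lemma~\ref{le:nondet-eq}, is worth flagging: it is not needed for this equivalence per se but becomes essential when the rank comparison $\triangleright$ is later instantiated for co-B\"uchi and the other universal acceptance conditions, to rule out infinite strictly-decreasing chains corresponding to bad loops.
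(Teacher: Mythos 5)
Your proposal is correct and follows essentially the same route as the paper's proof: the $\Leftarrow$ direction instantiates $\reach$ with the exact reachable set of $M\otimes A_\A$, and the $\Rightarrow$ direction propagates $\reach$ along every path from $(q_0,t_0)$ so that the constraint forces the rank inequality on each edge (the paper phrases this as a proof by contradiction, but the content is the same induction you spell out). Your closing remark that well-foundedness of $D$ is not needed for this equivalence itself, only for the later instantiations of $\triangleright$, is also accurate.
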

\begin{proof}[Proof idea]
Direction $\Leftarrow$.
If we set $\reach$ to true for every $(q,t)$ reachable in the product $M \otimes A_\A$,
then $\PhiA(M,A)$ holds.

Direction $\Rightarrow$.
(Note that $\reach$ may mark some $(q,t)$ with true,
 although it is not reachable in the product $M \otimes A_\A$.
 But for any reachable $(q,t)$, $\reach(q,t)$ holds.)
We prove this direction by contradiction.
Assume that there is an infinite path $\pi = (q_1,t_1)(q_2,t_2)\dots$
such that $(\pi,D,\rank) \not\models \GR$.
Hence there is $i$ such that $\neg\big(\rank(q_i,t_i)\GR \rank(q_{i+1},t_{i+1})\big)$.
Since $(q_i,t_i)$ is reachable (thus $\reach(q_i,t_i)=\true$) and $M\otimes A_\A$ has a transition into $(q_{i+1},t_{i+1})$,
this falsifies $\PhiA(M,A)$ when using $\rank$.
Contradiction.
\end{proof}

These two lemmas will help us to establish the main results:
$M \models A_\E$ whenever $\PhiE(M,A)$ is satisfiable, and
$M \models A_\A$ whenever $\PhiA(M,A)$ is satisfiable,
where the word automata $A$ are nondeterministic and universal respectively,
with different acceptance conditions,
and the form of $\GR$ in $\PhiE$ and $\PhiA$ depends on the acceptance condition.
In the next sections, we describe rank comparison relations
$\triangleright$ for the acceptance conditions B\"uchi, co-B\"uchi,
Streett, Rabin, and Parity.
For didactic purposes,
let us start with the relatively simple B\"uchi and co-\buchi conditions.

\subsection{Ranking for B\"uchi Automata}

B\"uchi conditions were also presented in \cite{VacSy}
and implicitly in \cite{bounded-pctl}.
Given a B\"uchi automaton $A = (2^{I\cup O}, Q, q_0, \delta, F)$, we
define the rank comparison relation $\triangleright^A_B$ as
\begin{equation}\label{eq:rank:buchi}
\rank(q,t) \triangleright^A_B \rank(q',t') =
\begin{cases}
  \true                    & \text{if } q \in F,\\
  \rank(q,t)>\rank(q',t')  & \text{if } q \not\in F.
\end{cases}
\end{equation}

\begin{theorem}[\cite{VacSy,bounded-pctl}]
Let $D$ be $\bbN_0$.
For every universal B\"uchi word automaton $U$,
nondeterministic B\"uchi word automaton $N$, and
machine $M$:
\li
\- $M \models \E(N)$
   iff~
   $\PhiE(M,N)$ is satisfiable,
   where $\GR=\triangleright^N_B$.

\- $M \models \A(U)$
   iff~
   $\PhiA(M,U)$ is satisfiable,
   where $\GR=\triangleright^U_B$.
\il
\end{theorem}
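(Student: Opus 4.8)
The plan is to combine the two structural lemmas (Lemma~\ref{le:nondet-eq} and Lemma~\ref{le:universal-eq}) with the equivalences $M \models \E(A) \Iff M \models A_\E$ and $M \models \A(A) \Iff M \models A_\A$ (stated on page~\pageref{page:defs:tree_variants}). The lemmas reduce satisfiability of $\PhiE(M,N)$ and $\PhiA(M,U)$ to the existence of an infinite path (resp.\ to a universal statement about all infinite paths) in the product that satisfies $\GR=\triangleright_B$ using $\rank$ and $D=\bbN_0$. So the whole argument reduces to a single combinatorial claim: for a path $\pi = (q_1,t_1)(q_2,t_2)\dots$ in the product, $(\pi,\bbN_0,\rank)\models \triangleright_B$ for some ranking $\rank$ into $\bbN_0$ \emph{iff} $\pi$ satisfies the Büchi acceptance condition, i.e.\ $\Inf(\pi)\cap F \neq \emptyset$ when $F$-states of $A$ are lifted to the product in the obvious way.

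First I would nail down that combinatorial claim, which is the heart of the matter. For the forward direction, suppose a suitable $\rank:\,Q\times T\to\bbN_0$ exists. Along $\pi$, whenever $q_i\notin F$ the relation $\triangleright_B$ forces $\rank(q_i,t_i) > \rank(q_{i+1},t_{i+1})$, a \emph{strict} decrease, while $q_i\in F$ imposes no constraint (the case returns \true). If $\pi$ visited $F$ only finitely often, then past some index the rank would strictly decrease at every step, giving an infinite strictly descending sequence in $\bbN_0$---impossible by well-foundedness. Hence $F$ is visited infinitely often, which is exactly acceptance. For the converse, suppose $\pi$ visits $F$ infinitely often; I would build $\rank$ by counting the distance (number of steps) to the next $F$-visit along $\pi$, i.e.\ $\rank(q_i,t_i)$ is the least $j\geq 0$ with $q_{i+j}\in F$. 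This is well-defined and finite precisely because $F$ recurs, it decreases by exactly one on each non-$F$ step (so $>$ holds when $q_i\notin F$), and on $F$-steps the relation is vacuously satisfied. One subtlety is that $\rank$ is a function of the product \emph{state} $(q,t)$, not of the position $i$; on a lasso-shaped path a state may recur, so I would argue it suffices to exhibit the value along the path (the lemmas quantify over paths, and for the $\Impl$ direction of Lemma~\ref{le:nondet-eq} a single lasso witness is all that is needed), or alternatively take the minimum over all positions carrying that state to get a genuine state function.

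Having established the combinatorial claim, the two bullet points follow by assembling the pieces. For the existential case: $M\models\E(N)$ iff $M\models N_\E$ (page equivalence) iff the product $M\otimes N_\E$ contains an accepting path iff (combinatorial claim) it contains a path satisfying $\triangleright^N_B$ using $\rank$ and $\bbN_0$ iff (Lemma~\ref{le:nondet-eq}) $\PhiE(M,N)$ is satisfiable. For the universal case the chain is dual, using $M\models\A(U)\Iff M\models U_\A$, the fact that acceptance of \emph{all} product paths is equivalent to all of them satisfying $\triangleright^U_B$, and Lemma~\ref{le:universal-eq}. The main obstacle I anticipate is the direction forcing acceptance from the existence of a rank in the universal case: here I must be careful that $\reach$ in $\PhiA$ may over-approximate reachability, but since Lemma~\ref{le:universal-eq} already packages the universal-path statement correctly and only uses well-foundedness of $D=\bbN_0$, the remaining work is just verifying that the $\triangleright_B$ definition correctly encodes "every path hits $F$ infinitely often," which is the same descending-chain argument applied to each path.
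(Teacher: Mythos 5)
Your overall route is the same as the paper's: both reduce the theorem to Lemmas~\ref{le:nondet-eq} and \ref{le:universal-eq} plus the $A_\E/A_\A$ equivalences, use well-foundedness of $\bbN_0$ to show that a $\triangleright_B$-satisfying path must hit $F$ infinitely often, and use a ``distance to $F$'' ranking for the converse. For the existential bullet your argument is complete, including the correct fix for the state-vs-position issue on a lasso (take the shortest distance within the loop, which is exactly what the paper does).

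There is, however, a gap in the universal bullet, and you point at the wrong direction as the hard one. The direction ``$\PhiA$ satisfiable $\Impl$ $M\models\A(U)$'' is the easy descending-chain argument. The hard direction is ``$M\models\A(U)\Impl \PhiA$ satisfiable'': your combinatorial claim quantifies the ranking \emph{per path} (``for some ranking $\rank$ \ldots iff $\pi$ is accepting''), but Lemma~\ref{le:universal-eq} needs a \emph{single} function $\rank: Q\times T\to\bbN_0$ under which \emph{every} infinite path of $M\otimes U_\A$ satisfies $\triangleright_B$. Your construction ``$\rank(q_i,t_i)$ = distance to the next $F$-visit along $\pi$'' is path-dependent: two paths through the same product state can assign it different values, and taking the minimum (which works for a single lasso in the existential case) does not guarantee strict decrease along \emph{all} outgoing non-$F$ edges. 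The paper resolves this by setting $\rank(q,t)$ to the \emph{longest} distance from $(q,t)$ to an accepting state over all paths; one must additionally argue this is finite, which follows because in the finite product an unbounded distance would yield a reachable cycle avoiding $F$ (equivalently, by K\"onig's lemma, an infinite $F$-free path), contradicting $M\models\A(U)$. Adding that construction and finiteness argument closes the gap.
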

\begin{proof}[Proof idea]
Consider the first item, direction $\Leftarrow$.
If $\PhiE(M,N)$ is satisfiable,
then, using the model of $\reach$, we can extract a lasso-shaped path $\pi = (q_1,t_1)(q_2,t_2)\dots$ of $M \otimes N_\E$,
which satisfies $\GR$ for every $i$.
Such a path visits at least one accepting state of $N$ in its loop part and therefore is B\"uchi accepting.

Consider the direction $\Rightarrow$.
There is an infinite path of $M \otimes N$, in the shape of a lasso,
that has an accepting state in its loop.
We set $\reach(q,t)=\true$ for every state $(q,t)$ visited on the lasso-path,
and set $\rank(q,t)$ to the shortest distance to an accepting state.
Such $\reach$ and $\rank$ make $\PhiE(M,N)$ hold.

Consider now the case $M \models \A(U)$.
The direction $\Leftarrow$ is simple, consider the direction $\Rightarrow$.
We describe $\rank$ and $\reach$ that make $\PhiA(M,U)$ hold.
For every $(q,t)$ reachable in $M \otimes U_\A$,
let $\reach(q,t)=\true$.
For every reachable $(q,t)$,
let $\rank(q,t)$ be a longest distance to an accepting state.
These $\rank$ and $\reach$ make $\PhiA(M,U)$ hold.
\end{proof}
Note that in the theorem a machine $M$ is either fixed
(then we solve the model checking problem),
or we fix the number of states in $M$ and express it using uninterpreted functions
(then we solve the bounded synthesis problem).
Also note that we used the set $\bbN$ of natural numbers for $D$,
but we could prove the results for some other large-enough well-founded sets.

\subsection{Ranking for co-B\"uchi Automata}

This case was presented in the original paper~\cite{BS} on Bounded Synthesis.
Given a co-\buchi automaton $A = (2^{I\cup O}, Q, q_0, \delta, F)$,
the ranking constraint relation $\triangleright^A_C$
for co-\buchi is defined as
\begin{equation}\label{eq:rank:cobuchi}
\rank(q,t) \triangleright^A_C \rank(q',t') =
\begin{cases}
  \rank(q,t)>\rank(q',t) \text{~~if } q \in F,\\
  \rank(q,t)\geq\rank(q',t') \text{ if } q \not\in F.
\end{cases}
\end{equation}

\begin{theorem}[\cite{BS}]
Let $D$ be $\bbN_0$.
For every universal co-B\"uchi word automaton $U$,
nondeterministic co-B\"uchi word automaton $N$, and
machine $M$:
\li
\- $M \models \E(N)$
   iff~
   $\PhiE(M,N)$ is satisfiable,
   where $\GR=\GR^N_C$.

\- $M \models \A(U)$
   iff~
   $\PhiA(M,U)$ is satisfiable,
   where $\GR=\GR^U_C$.
\il
\end{theorem}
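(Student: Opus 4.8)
The plan is to reduce the statement to Lemmas~\ref{le:nondet-eq} and~\ref{le:universal-eq} together with the equivalences $M \models \E(N) \Iff M \models N_\E$ and $M \models \A(U) \Iff M \models U_\A$ recorded on page~\pageref{page:defs:tree_variants}. Combined with the product characterization of acceptance, these give the following reformulation: $M \models \E(N)$ holds iff the product $M \otimes N_\E$ (a one-letter nondeterministic co-B\"uchi word automaton) has an infinite path visiting the rejecting set $F$ only finitely often, and $M \models \A(U)$ holds iff \emph{every} infinite path of $M \otimes U_\A$ visits $F$ only finitely often. Now Lemma~\ref{le:nondet-eq} rewrites ``$\PhiE(M,N)$ is satisfiable'' as ``$M \otimes N_\E$ has \emph{some} path satisfying $\triangleright^N_C$'', and Lemma~\ref{le:universal-eq} rewrites ``$\PhiA(M,U)$ is satisfiable'' as ``\emph{every} path of $M \otimes U_\A$ satisfies $\triangleright^U_C$''. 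Hence in both items it suffices to prove the purely graph-theoretic claim that, for a fixed path $\pi$ of the finite product, $\pi$ visits $F$ finitely often iff there is a ranking $\rank: Q\times T \to \bbN_0$ under which $\pi$ satisfies the co-B\"uchi comparison defined in~\eqref{eq:rank:cobuchi}.

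For the direction from ranks to acceptance I would use well-foundedness of $\bbN_0$. Along any path satisfying $\triangleright_C$ the rank is non-increasing, and by the first case of~\eqref{eq:rank:cobuchi} it strictly decreases whenever the current automaton state lies in $F$. A strictly decreasing chain in $\bbN_0$ is finite, so $F$ is entered only finitely often and the path is co-B\"uchi accepting. This argument is uniform for the two items; for the universal one it is applied to every path, which is exactly what Lemma~\ref{le:universal-eq} supplies.

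The substantive direction is constructing the ranks from acceptance. For the universal item, ``every path of $M \otimes U_\A$ visits $F$ finitely often'' is equivalent to ``no cycle reachable in the product passes through an $F$-state'', since otherwise one could reach such a cycle and loop forever, producing a path with infinitely many $F$-visits. I would then define, for each reachable product state $(q,t)$, the rank $\rank(q,t)$ to be the maximal number of $F$-states occurring on any finite path starting at $(q,t)$; the absence of reachable $F$-cycles makes this maximum a legal $\bbN_0$ value. Prepending one edge to a path from a successor $(q',\tau(t,d))$ leaves the $F$-count unchanged when $q \notin F$ and increases it by one when $q \in F$, which yields exactly $\rank(q,t) \geq \rank(q',\tau(t,d))$ in the non-rejecting case and $\rank(q,t) > \rank(q',\tau(t,d))$ in the rejecting case; setting $\reach$ true on the reachable states then satisfies $\PhiA(M,U)$. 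For the existential item I only need one witnessing path: since the product is finite and has a co-B\"uchi accepting path, it has a \emph{simple} lasso whose loop lies in an SCC free of $F$-states. On such a lasso I set the rank of the state at position $i$ to be the number of $F$-states at positions $\geq i$ (finite because the loop contains none, and conflict-free because the lasso is simple), and $0$ elsewhere; this makes the lasso satisfy $\triangleright^N_C$, so by Lemma~\ref{le:nondet-eq} the formula $\PhiE(M,N)$ is satisfiable.

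The main obstacle is exactly this rank-construction direction: one must guarantee the constructed rank stays in the well-founded set $\bbN_0$ rather than growing without bound. In the universal case this hinges on the observation that co-B\"uchi acceptance of all paths forbids reachable rejecting cycles and thereby bounds the ``future $F$-count''; in the existential case it hinges on extracting a simple lasso so that the ``remaining $F$-count'' is both finite and well defined as a function on states. The reverse direction, by contrast, is a one-line consequence of well-foundedness, and the glue connecting $\PhiE$/$\PhiA$-satisfiability to path properties is entirely provided by Lemmas~\ref{le:nondet-eq} and~\ref{le:universal-eq}.
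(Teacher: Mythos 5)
Your proposal is correct and follows essentially the same route as the paper's proof: both directions reduce (via Lemmas~\ref{le:nondet-eq} and~\ref{le:universal-eq} and the $A_\E$/$A_\A$ equivalences) to a path-versus-ranking claim, the rank is built as the ``remaining count of rejecting states'' along a lasso with an $F$-free loop in the existential case and as the maximal number of $F$-visits on paths from a state in the universal case, and the converse is the unsatisfiable-cycle/well-foundedness argument. Your write-up is somewhat more explicit than the paper's proof idea (e.g., in justifying finiteness of the universal rank via the absence of reachable $F$-cycles and well-definedness via a simple lasso), but the underlying argument is the same.
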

\begin{proof}[Proof idea]
Consider the first item.
Direction $\Rightarrow$:
$M \otimes N_\E$ has an infinite path, in the shape of a lasso,
that never visits a rejecting state in the loop.
We set $\reach(q,t)=\true$ for all reachable $(q,t)$ in the path,
and set $\rank(q,t)$ to be the number of rejecting states visited before entering the loop.
Direction $\Leftarrow$:
From the model of $\reach$ we can construct an infinite path that is accepted by $N$.
Any such path must be accepting, 
because having a rejecting state $(q,t)$ visited infinitely often
implies having an unsatisfiable cycle of constraints $(q,t) > ... \geq (q,t)$.

Consider the case $M \models \A(U)$, direction $\Rightarrow$.
We set $\reach(q,t)=\true$ for every reachable $(q,t)$ in $M \otimes U_\A$,
and set $\rank(q,t)$ to the maximal number of visits to rejecting states among the paths starting from $(q,t)$.
Such a number is finite,
because all paths visit a rejecting state only finitely often.
The direction $\Leftarrow$ holds,
because every rejecting path visits a rejecting state $(q,t)$ infinitely often,
which implies having an unsatisfiable cycle of constraints $(q,t) > ... \geq (q,t)$.
\end{proof}

\subsection{Ranking for Streett Automata}

The ranking below is our contribution.

Fix a Streett automaton
$A = (2^{I\cup O}, Q, q_0, \delta, \{(A_i,G_i)\}_{i \in [k]})$.
We slightly modify the definitions to have
$\rank: Q\times T \to D^k$ and 
$\GR \subseteq Q \times D^k \times D^k$,
i.e.,
the ranking function consists of $k$ components.

The ranking function $\rank: Q \times T \to D^k$
is defined using $k$ components
$\rank_i: Q \times T \to D$,
$\rank(q,t) = \big(\rank_1(q,t),...,\rank_k(q,t)\big)$.
The rank comparison relation $\GR_S: Q \times D^k \times D^k$ is
$\rank(q,t) \GR_S^A \rank(q',t') = \bigwedge_{i \in [k]} \left(\rank_i(q,t) \triangleright_S^{A,i} \rank_i(q',t')\right)$,
where
\begin{equation}\label{eq:streett-ranking}
\rank_i(q,t) \triangleright_S^{A,i} \rank_i(q',t') =
\begin{cases}
  \true                            &\text{if } q \in G_i,\\
  \rank_i(q,t)>\rank_i(q',t')      &\text{if } q \in A_i \land q \not\in G_i,\\
  \rank_i(q,t)\geq\rank_i(q',t')   &\text{if } q \not\in A_i \cup G_i.
\end{cases}
\end{equation}

\begin{theorem}
Let $D$ be $\bbN_0$.
For every universal Streett word automaton $U$,
nondeterministic Streett word automaton $N$, and
machine $M$:
\li
\- $M \models \E(N)$
   iff~
   $\PhiE(M,N)$ is satisfiable,
   where $\GR=\GR^N_S$.

\- $M \models \A(U)$
   iff~
   $\PhiA(M,U)$ is satisfiable,
   where $\GR=\GR^U_S$.
\il
\end{theorem}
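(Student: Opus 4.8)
The plan is to reduce both items to the two reduction lemmas and then analyze, pair by pair, when a path of the product meets the Streett comparison~\eqref{eq:streett-ranking}. By Lemma~\ref{le:nondet-eq}, $\PhiE(M,N)$ is satisfiable (for some $\rank$ into $\bbN_0^k$) iff $M\otimes N_\E$ has an infinite path meeting $\GR_S^N$; and since $M\models\E(N)$ iff $M\otimes N_\E$ carries a Streett-accepting path, the existential item becomes the claim that the product has a path meeting $\GR_S^N$ iff it has a Streett-accepting path. Dually, by Lemma~\ref{le:universal-eq} the universal item becomes the claim that \emph{every} path of $M\otimes U_\A$ meets $\GR_S^U$ for one common $\rank$ iff every path of $M\otimes U_\A$ is Streett-accepting. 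So it suffices to tie the pointwise comparison~\eqref{eq:streett-ranking} to the global Streett condition $\forall i:\ \Inf(\pi)\cap A_i\neq\emptyset\impl\Inf(\pi)\cap G_i\neq\emptyset$.

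For the direction ``ranking $\Rightarrow$ accepting'' (the $\Leftarrow$ of the first item and the $\Rightarrow$ of the second) I would argue componentwise and use well-foundedness of $\bbN_0$. Fix a path $\pi$ meeting $\GR_S$ and a pair $i$ with $\Inf(\pi)\cap A_i\neq\emptyset$, and suppose toward a contradiction that $\Inf(\pi)\cap G_i=\emptyset$. After the last visit to $G_i$, every source state in $A_i$ lies in $A_i\setminus G_i$ and hence, by~\eqref{eq:streett-ranking}, strictly decreases $\rank_i$, while all remaining states decrease it non-strictly. As $A_i$ is visited infinitely often, $\rank_i$ would decrease strictly infinitely often with no reset, which is impossible in $\bbN_0$. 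Hence $\Inf(\pi)\cap G_i\neq\emptyset$ for each $i$, i.e.\ $\pi$ is Streett-accepting. Applied to a single witnessing path this gives the existential $\Leftarrow$; applied to every path of $M\otimes U_\A$ it gives the universal $\Rightarrow$.

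For the universal $\Leftarrow$ (the technically heaviest part) I would build one ranking that works for all paths at once. The key observation is that if every path of $M\otimes U_\A$ is Streett-accepting then, for each $i$, no reachable cycle of the product may contain a state of $A_i$ without also containing a state of $G_i$, since otherwise looping on it forever would yield a path visiting $A_i$ infinitely often and $G_i$ finitely often. Consequently the quantity $\rank_i(q,t)=$ the maximal number of $A_i\setminus G_i$ states occurring on any path out of $(q,t)$ before its first $G_i$-state is finite, so $\rank=(\rank_1,\dots,\rank_k)$ maps into $\bbN_0^k$. A short case split on whether the source state is in $G_i$ (comparison is \true), in $A_i\setminus G_i$ (then $\rank_i(q,t)\ge 1+\rank_i(q',t')$), or outside $A_i\cup G_i$ (then $\rank_i(q,t)\ge\rank_i(q',t')$) shows every edge satisfies~\eqref{eq:streett-ranking}, and Lemma~\ref{le:universal-eq} finishes the item. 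For the existential $\Rightarrow$ I would instead extract a Streett-accepting lasso, define each $\rank_i$ on the states of its loop as the count of $A_i\setminus G_i$ states up to the next $G_i$-state encountered along the loop (the $G_i$-states serving as resets), and set $\reach$ true exactly on that lasso.

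The main obstacle I expect is the construction of these witnessing rankings in the hard directions, and the genuinely delicate feature is that for Streett the resets are \emph{per pair}: I must guarantee that the counting functions are finite (universal case) and single-valued on recurring product states (existential case). This is where the acyclicity observation above does the real work, and I would spend most of the effort making it precise; in particular, I would need to check that, on the chosen loop, between two visits of the same $A_i$-state a $G_i$-state always occurs, so that each $\rank_i$ is a well-defined function of the product state rather than of the position along the path.
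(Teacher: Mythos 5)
Your overall plan coincides with the paper's: reduce both items to Lemmas~\ref{le:nondet-eq} and \ref{le:universal-eq}, prove ``ranking implies acceptance'' by well-foundedness of $\bbN_0$ (the paper phrases this as an unsatisfiable cycle of constraints $\rank_i(q,t) > \dots \geq \rank_i(q,t)$ after the last $G_i$-visit), and prove the universal hard direction by observing that no reachable cycle of $M\otimes U_\A$ can meet $A_i$ without meeting $G_i$ and then taking $\rank_i(q,t)$ to be the maximal number of $A_i\setminus G_i$ states on a path from $(q,t)$ before its first $G_i$-state. The paper builds this last ranking via an explicit SCC/DAG detour (cut the out-edges of $G_i$-states, contract SCCs, take longest ``bad-state'' paths), but it is the same function and the same finiteness argument as yours. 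One cosmetic slip: throughout the universal item you have swapped the labels $\Rightarrow$ and $\Leftarrow$ relative to the statement (constructing a ranking from acceptance is the $\Rightarrow$ of ``$M\models\A(U)$ iff $\PhiA$ satisfiable'', not the $\Leftarrow$), though the argument you attach to each task is the right one.

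The genuine gap is exactly where you suspect it: the existential hard direction, and the check you defer cannot in general be discharged. Consider a reachable product consisting of two petals sharing one state $v$: $v \to a_1 \to g_2 \to v$ and $v \to a_2 \to g_1 \to v$, with $a_i \in A_i\setminus G_i$, $g_i \in G_i$, and $v$ in no $A_i \cup G_i$. The run alternating the two petals visits all of $a_1,g_1,a_2,g_2$ infinitely often and is Streett-accepting, so $M\models\E(N)$; but any model of $\PhiE$ must satisfy at least one disjunct at $v$, i.e.\ commit to a $\reach$-marked successor $a_1$ or $a_2$, and either choice forces the cycle of constraints $\rank_1(v)\geq\rank_1(a_1)>\rank_1(g_2)\geq\rank_1(v)$ (respectively the pair-$2$ analogue), which is unsatisfiable over $\bbN_0$. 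So no accepting lasso here admits a $\rank$ that is a function of the product state: every accepting loop revisits $v$ with inconsistent continuations, and no cleverer choice of lasso repairs this. This is the same phenomenon that forces the index components $j_1,\dots,j_k$ into the paper's Rabin ranking and the $k!$ memory in Streett games mentioned in its comparison with Piterman--Pnueli: for Streett, the existential successor choices are not positional. The paper dismisses the first item with ``can be proven similarly,'' so your proposal is no further from a complete proof than the paper's own text --- but be aware that the deferred well-definedness check is not a formality; as written, Eq.~\ref{eq:streett-ranking} together with $\PhiE$ yields a sound but not complete criterion for $M\models\E(N)$, and the existential item would need either an enriched rank domain (carrying an index-appearance record, as in the Rabin case) or a restriction on the witnessing loops to go through.
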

\begin{proof}[Proof idea]
We prove only the second item, the first item can be proven similarly.
\ak{can be shortened: 1) remove out-edges from acc 2) assign to others the number of visits to a bad state from A}

Direction $\Rightarrow$.
We construct $\rank = (\rank_1,...,\rank_k)$ and $\reach$ that satisfy $\PhiA(M,U)$.
Set $\reach(q,t) =\true$ for all reachable $(q,t)$ in the product $\Gamma = M \otimes U_\A$,
and for unreachable states set $\reach$ to $\false$ and set $\rank = (0,...,0)$.
Now let us remove all unreachable states from $\Gamma$.
Then for each $i \in [k]$, $\rank_i$ is defined as follows.
\li
\- For every $(q,t) \in \cup_{i\in[k]} G_i \times T$,
   let $\rank_i(q,t) = 0$.

\- Define an \emph{SCC $S$ of a graph} to be any maximal subset of the graph states
   such that for any $s \in S$, $s' \in S$,
   the graph has a path $\pi = s,...,s'$ of length $\geq 2$,
   where the length is the number of states appearing on the path.
   Thus, a single-state SCC can appear only if the state has a self-loop.

\- Remove all outgoing edges from every state $(q,t)$ of $\Gamma$ with $q \in G_i$.
   The resulting graph $\Gamma'$ has no SCCs that have a state $(q,t)$ with $q \in \cup_{i \in [k]} A_i$.

\- Let us define the graph $\Gamma''$.
   Let $\mathcal{S}$ be the set of all SCCs of $\Gamma'$.
   Then $\Gamma''$ has the states
   $V_{\Gamma''} =
   \mathcal{S}
   \cup
   \{ \{ s \} \| s \not\in \cup_{S \in \mathcal{S}} S \}$,
   i.e., each state is either an SCC or a singleton-set
   containing a state outside of any SCC
   (but in both cases, a state of $\Gamma'$ is a set of states of $\Gamma$).
   The edges $E_{\Gamma''}$ of $\Gamma''$ are:
   $(S_1,S_2) \in E_{\Gamma''}$
   iff
   $\exists s_1 \in S_1, s_2 \in S_2: S_1 \neq S_2 \land (s_1,s_2) \in E_{\Gamma'}$.
   Intuitively, $\Gamma''$ is a graph derived from $\Gamma$
   by turning all accepting states into leafs,
   and by making SCCs the new states.
   Note that the graph $\Gamma''$ is a DAG.

\- Given a path $\pi = S_1,...,S_m$ in $\Gamma''$,
   let $nb(\pi)$ be the number of ``bad'' states visited on the path,
   i.e., $nb = |\pi \cap \{ \{ (q,t) \} : q \in \cup_{i \in [k]} A_i \}|$.
   Such a number exists since all paths of $\Gamma''$ are finite.

\- For all $(q,t) \in S \in V_{\Gamma''}$ with $q \not\in G_i$,
   let $\rank_i(q,t)$ be the max number of ``bad'' states
   visited on any path from $S$:
   $\rank_i(q,t) = max(\{ nb(\pi) \| \pi \textit{ is a path from $S$} \})$.
   Such a number exists since the number of paths in $\Gamma''$ is finite.
\il
This concludes the direction $\Rightarrow$.

The direction $\Leftarrow$ is proven by contradiction.
Suppose $\PhiA(M,U)$ is satisfiable with some $\reach$ and $\rank = (\rank_1,...,\rank_k)$,
but $M \otimes U_\A$ is empty.
The latter means that there is a lasso-shaped path that is not accepted by some pair $(A_i,G_i)$:
it visits $A_i$ infinitely often but visits $G_i$ only finitely often.
Thus, the loop part of the path contains state $(q,t)$ with $q \in A_i$ but has no states visiting $G_i$.
Recall that such a path is labeled $\true$ by $\reach$,
because $\reach$ over-approximates the set of reachable states.
Altogether this makes $\PhiA(M,U)$ unsatisfiable,
because of the unsatisfiable cycle of constraints $\rank_i(q,t) > ... \geq \rank(q,t)$.
\end{proof}

\begin{remark}[Comparison with ranking from \cite{Nir06}]
Piterman and Pnueli~\cite{Nir06} introduced ranking functions
to solve Streett \emph{games}.
Our ranking functions can be adapted to solve games, too.
(Recall that our SMT encoding describes \emph{model checking}
 with an uninterpreted system.)
It may seem that in the case of games, our construction uses fewer counters
than \cite{Nir06}, but that is not the case.
Given a DSW with $k$ Streett pairs and $n$ states, a winning strategy in the corresponding Streett game
may require a memory of size $k!$.
In this case, the size of the product system$\times$automaton is $k!n$.
Our construction introduces $2k$ counters with the domain $[k!n] \to
[k!n]$ to associate a rank with each state.
In contrast, \cite{Nir06} introduces $k!k$ counters with the domain
$[n] \to [0,n]$.
Encoding these counters into SAT would require
$2k\cdot k!n \cdot log_2(k!n)$ bits for our construction, and
$k!k\cdot n\cdot log_2(n)$ bits for the construction of \cite{Nir06}.
Thus, our construction introduces $2(1\!+\!\frac{log_2(k!)}{log_2(n)}) \approx 2(1\!+\!log_2(log_2(n)))$ times more bits
(the approximation assumes that $k = log_2(n)$ and $n$ is large).
On the positive side, our construction is much simpler.
\end{remark}

\subsection*{Ranking for Parity Automata}

Given a Parity automaton
$A = (2^{I\cup O}, Q, q_0, \delta, p)$
with priorities $0,\dots,k-1$,
it is known that we can translate it into an equivalent Streett automaton with pairs
$(A_1,G_1), \dots , (A_{m/2},G_{m/2})$,
where
$A_i=\{q \| p(q)=2i-1 \}$, $G_i=\{q \| p(q) \in \{0,2,\dots ,2i-2\} \}$.
We can then apply the encoding for Streett automata.
The resulting ranking resembles Jurdzi\'nski's progress measure~\cite{jurdzinski2000small}.

\subsection{Ranking for Rabin Automata}

Given a Rabin automaton
$A = (2^{I\cup O}, Q, q_0, \delta, \{(F_i,I_i)\}_{i\in [k]})$
and a system $M = (I,O,T,t_0,\tau,\out)$,
we use ranking constraints described by Piterman and Pnueli~\cite{Nir06} to
construct a rank comparison relation.
The ranking function $\rank: Q \times T \to \bbN_0^{2k+1}$
maps a state of the product to a tuple of numbers
$(b, ~j_1,\!d_1, \dots , j_k,\!d_k)$,
where the numbers have the following meaning.
For each $l \in [k]$,
\li
\- $j_l \in [k]$ is the index of a Rabin pair,
\- $b   \in [0,|Q \times T|]$ is an upper bound on the number of times the set $F_{j_1}$ can be visited from $(q,t)$, 
\- $d_l \in [0, |Q\times T|]$ is the maximal distance from $(q,t)$ to the set $I_{j_l}$,
\il

We define the rank comparison relation
$\GR \subseteq Q \times \bbN_0^{2k+1} \times \bbN_0^{2k+1}$
as follows:
$(b, ~j_1,\!d_1, \dots , j_k,\!d_k) \GR_q (b', ~j'_1,\!d'_1, \dots , j'_k,\!d'_k)$
iff
there exists $l \in [k]$ such that one of the following holds:
\begin{equation}\label{eq:rabin-rank}
\arraycolsep=4pt\def\arraystretch{1.2}
\begin{array}{ccccc}
b > b', &  & & &  \\
(b,\dots,j_{l-1},d_{l-1}) = (b',\dots,j'_{l-1},d'_{l-1})  &\land & j_l > j_l'&\land&  q \not\in \underset{m\in [l-1]}{\cup} F_{j_m},   \\
(b,\dots ,j_l) = (b',\dots, j'_l)  &\land & d_l > d_l' & \land & q \not\in \underset{m\in[l]}{\cup} F_{j_m},    \\
(b,\dots, j_l) = (b',\dots, j'_l)  &\land & q \in I_{j_l} &\land &  q \not\in \underset{m\in[l]}{\cup} F_{j_m}.
\end{array}
\end{equation}
Here is the intutition.
The first line bounds the number of visits to $F_{j_1}$
($b$ decreases each time $F_{j_1}$ is visited).
The second line limits the changes of order $j_1,\dots ,j_k$
in the rank $(b,~j_1,d_1,\dots ,j_k,d_k)$ to a finite number.
Together, these two lines ensure that on any path some $F_m$
is not visited infinitely often.
The third and fourth lines require $I_{j_l}$ to be visited within $d_l$ steps;
once it is visited, the distance $d_l$ can be reset to any number
$\leq |Q\times T|$.

We can encode the rank comparison constraints in Eq.~\ref{eq:rabin-rank} into SMT as follows.
For each of $j_1,\dots,j_k$ introduce an uninterpreted function:
$Q\times T \to [k]$.
For each of $b,d_1,\dots,d_k$ introduce an uninterpreted function:
$Q\times T \to [0,|Q\times T|]$.
Finally, replace in Eq.~\ref{eq:rabin-rank} counters $b,j,d,b',j',d'$ with
expressions $b(q,t)$, $j(q,t)$, $d(q,t)$, $b(q',t')$, $j(q',t')$, $d(q',t')$ resp.

\subsection*{Ranking for Generalized Automata}

The extension to generalized automata is simple:
replace
$\rank(q,t) \GR \rank(q',t')$ with
$\bigwedge_i \rank^i(q,t) \GR^i \rank^i(q',t')$
where $\rank^i$ and $\GR^i$ are for $i$th automaton acceptance component.

\ak{\hl{for all the cases: complexity of the non-emptiness check using SMT!}}

\subsection{Discussion of Ranking}

\ak{check rankings in \url{https://link.springer.com/content/pdf/10.1007\%2F978-3-540-31980-1_14.pdf}. I bet they are the same! ``Complementation Constructions for Nondeterministic Automata on Infinite Words''}

A close work on rankings is the work by Beyene et al.~\cite{RybHornSynth}
on solving \emph{infinite}-state games using SMT solvers.
Conceptually,
they use co-B\"uchi and B\"uchi ranking functions to encode game winning into SMT,
which was also partially done by Schewe and Finkbeiner \cite{BS} a few years earlier
(for finite-state systems).
The authors focused on co-B\"uchi and B\"uchi automata,
while we also considered Rabin and Streett automata
(for finite-state systems).
Although they claimed their approach can be extended to $\mu$-calculus
(and thus to \CTLstar),
they did not elaborate beyond noting that \CTLstar verification can be reduced to games.
In the next section we introduce two approaches to bounded synthesis from \CTLstar.
Both approaches inherit the ideas on rankings presented in this section.

\section{Bounded Synthesis from \CTLstar} \label{bosy:ctlstar-synt}

We describe two ways to encode model checking for \CTLstar into SMT.
The first one, direct encoding (Section~\ref{bosy:ctlstar-synt:direct}),
resembles bottom-up \CTLstar model checking \cite{DBLP:journals/toplas/ClarkeES86} (see also page~\ref{page:defs:bottom-up-mc}).
The second encoding (Section~\ref{bosy:ctlstar-synt:aht})
follows the automata-theoretic approach~\cite{ATA}
(see also Section~\ref{defs:model-checking-approach})
and goes via hesitant tree automata.
As usual,
replacing a concrete system function with an uninterpreted one of a fixed size gives a
bounded synthesis procedure.

Let us compare the approaches.
In the direct encoding,
the main difficulty is the procedure that generates the constraints:
we need to walk through the formula and
generate constraints for nondeterministic B\"uchi or
universal co-B\"uchi sub-automata.
In the approach via hesitant tree automata,
we first translate a given \CTLstar formula into a hesitant tree automaton $A$,
and then encode the non-emptiness problem of the product of $A$ 
and the system into an SMT query.
In contrast to the direct encoding,
the difficult---from the implementation point of view---part is to construct the automaton $A$,
while the definition of the rank comparison relation is very easy.

In the next section we define \CTLstar with inputs
and then describe two approaches.
The approaches are conceptually the same,
thus automata fans are invited to read Section~\ref{bosy:ctlstar-synt:aht}
about the approach using hesitant automata,
while the readers preferring bottom-up \CTLstar model checking are welcomed to Section~\ref{bosy:ctlstar-synt:direct}.

\subsection{Direct Encoding}\label{bosy:ctlstar-synt:direct}

We reduce the \CTLstar model checking problem into SMT
following the classical bottom-up model checking approach (see page~\pageref{page:defs:bottom-up-mc}).

Let $M=(I,O,T,t_0,\tau,out)$ be a machine
and $\Phi$ be a \CTLstar state formula (in positive normal form).
We use the notions of $F$, $P$, and $\widetilde\Phi$ defined on page~\pageref{page:defs:bottom-up-mc}:
recall that with every state subformula $\E\varphi$ or $\A\varphi$
we associate a Boolean proposition,
whose truth in a system state $t$ implies that the corresponding subformula holds.
The set $P=\{p_1,...,p_k\}$ is the set of such propositions,
the set $F=\{f_1,...,f_k\}$ is the set of subformulas corresponding to $\{p_1,...,p_k\}$
(note that each $f_i$ is of the form $\A\varphi$ or $\E\varphi$ and $\varphi$ has no path quantifiers),
and $\widetilde\Phi$ is the top-level Boolean formula.
We define the SMT query as follows.
\li

\-[(1)] The query talks about uninterpreted functions
$\reach: Q_\textit{all} \times T \to \bbB$,
$\rank: Q_\textit{all} \times T \to \bbN$,
$\tau: T \times 2^I \to T$,
$out: T \to 2^O$, and
$p: T \to 2^P$.
What is $Q_\textit{all}$ will become clear later.


\-[(2)] For each $f \in \{f_1, \dots, f_k\}$,
we do the following.
If $f$ is of the form $\A \varphi$, we translate $\varphi$ into a UCW%
\footnote{To translate $\varphi$ into a UCW,
          translate $\neg\varphi$ into an NBW and treat it as a UCW.},
otherwise into an NBW;
let the resulting automaton be
$A_\varphi = (2^{I \cup O \cup P},Q,q_0,\delta,F)$.
Note that $\delta \subseteq Q \times 2^I \times 2^O \times 2^P \times Q$,
and it depends on $P$.
For every $(q,t) \in Q \times T$, the query contains the constraints:
\li
\-[(2a)] If $A_\varphi$ is an NBW, then:
\begin{equation*}
\reach(q,t) ~\impl
\bigvee_{(e, q') \in \delta(q,out(t),p(t))}
\reach(q',t') \land \rank(q,t)\triangleright_{\textit{\tiny B}} \rank(q',t')
\end{equation*}

\-[(2b)] If $A_\varphi$ is a UCW, then:
\begin{equation*}
\reach(q,t) ~\impl
\bigwedge_{(e, q') \in \delta(q,out(t),p(t))} 
\reach(q',t') \land \rank(q,t)\triangleright_{\textit{\tiny C}} \rank(q',t')
\end{equation*}
\il
In both cases, we have:
  $p(t)=\{p_i \in P \| \reach(q_0^{p_i},t)=\true \}$,
  $q_0^{p_i}$ is the initial state of  $A_{\varphi_i}$,
  $\triangleright_{\textit{\tiny B}}$ and $\triangleright_{\textit{\tiny C}}$
  are the B\"uchi and co-B\"uchi rank comparison relations
  wrt.\ $A_\varphi$ (see Eq.~\ref{eq:rank:buchi}--\ref{eq:rank:cobuchi}),
  and $t' = \tau(t,i)$.
  Intuitively
  $p(t)$ under-approximates the subformulas that hold in $t$:
  if $p_i \in p(t)$, then $t \models f_i$.

\-[(3)] The query contains the constraint
       $\widetilde\Phi[p_i \mapsto \reach(q_0^{p_i},t_0)]$,
       where $q_0^{p_i}$ is the initial state of $A_{\varphi_i}$.
       For example,
       for $\Phi = g \land \AGEF \neg g$ where $g \in O$,
       the constraint is $g(t_0) \land \reach(q_0^{p_2}, t_0)$,
       where
       $p_2$ corresponds to $\AG p_1$,
       $p_1$ corresponds to $\EF \neg g$.
\il


\begin{example}\label{ex:direct-encoding}
Let $I=\{r\}$, $O=\{g\}$, $\Phi = g \land \AGEF\neg g$.
We associate $p_1$ with $\EF\neg g$ and $p_2$ with $\AG p_1$.
Automata for $p_1$ and $p_2$ are in Fig.~\ref{fig:direct-encoding:ex:automata}, the SMT constraints are in Fig.~\ref{fig:direct-encoding:ex:constraits}.
\end{example}
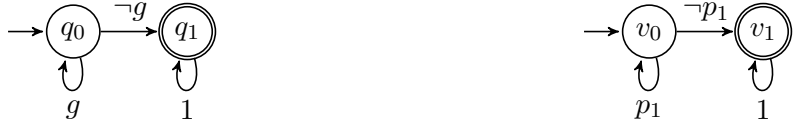
\begin{figure}[tb]\center
\begin{subfigure}[b]{0.45\textwidth}\center
\begin{tikzpicture}[->,>=stealth',shorten >=1pt,auto,node distance=1.5cm,semithick]
  \tikzset{every state/.style={minimum size=7mm,inner sep=0mm}, initial text={}}
  \tikzstyle{dot}=[draw,circle,minimum size=1mm,inner sep=0pt,outer sep=0pt,fill=black]
  \tikzstyle{small}=[minimum size=1mm,inner sep=1pt,outer sep=0pt]
  \tikzstyle{every edge} = [align=center,draw=black]

  \node[state,initial] (q0) {$q_0$};
  \node[state,double] (q1) [right of=q0] {$q_1$};

  \path
  (q0) edge [loop below] node {$g$} (q0)
  (q0) edge node {$\neg g$} (q1)
  (q1) edge [loop below] node {$1$} (q1);
\end{tikzpicture}
\caption{NBW for $\F\neg g$, associated with $p_1$ that encodes the truth of $\EF\neg g$.}
\label{ex:direct-encoding:nbw}
\end{subfigure}
~~~~~
\begin{subfigure}[b]{0.45\textwidth}\center
\begin{tikzpicture}[->,>=stealth',shorten >=1pt,auto,node distance=1.5cm,semithick]
  \tikzset{every state/.style={minimum size=7mm,inner sep=0.3mm}, initial text={}}
  \tikzstyle{dot}=[draw,circle,minimum size=1mm,inner sep=0pt,outer sep=0pt,fill=black]
  \tikzstyle{small}=[minimum size=1mm,inner sep=1pt,outer sep=0pt]
  \tikzstyle{every edge} = [align=center,draw=black]

  \node[state,initial] (v0) {$v_0$};
  \node[state,double] (v1) [right of=v0] {$v_1$};

  \path
  (v0) edge [loop below] node {$p_1$} (v0)
  (v0) edge node {$\neg p_1$} (v1)
  (v1) edge [loop below] node {$1$} (v1);
\end{tikzpicture}
\caption{UCW for $G p_1$, associated with $p_2$ that encodes the truth of $\AG p_1$.}
\label{ex:direct-encoding:ucw}
\end{subfigure}
\caption{Automata for Example~\ref{ex:direct-encoding}}
\label{fig:direct-encoding:ex:automata}
\end{figure}
\begin{figure}[t]\center
\small
\begin{align*}
\text{initial}:~~~
& g(t_0) \land \reach(q_0,t_0) \\
\edge{v_0}{p_1}{v_0}:~~~
& \reach(v_0,t) \land \reach(q_0,t)
  \impl
  \bigwedge_{r\in \bbB} 
  \reach(v_0,\tau(t,r)) \land \rank(v_0,t) \geq \rank(v_0,\tau(t,r)) \\
\edge{v_0}{\neg p_1}{v_1}:~~~
& \reach(v_0,t) \land \neg\reach(q_0,t)
  \impl
  \bigwedge_{r\in \bbB} 
  \reach(v_1,\tau(t,r)) \land \rank(v_0,t) \geq \rank(v_1,\tau(t,r)) \\
\edge{v_1}{\true}{v_1}:~~~
& \reach(v_1,t) 
  \impl
  \bigwedge_{r\in \bbB}
  \reach(v_1,\tau(t,r)) \land \rank(v_1,t) > \rank(v_1,\tau(t,r)) \\
\edge{q_0}{g}{q_0}:~~~
& \reach(q_0,t) \land g(t)
  \impl
  \bigvee_{r\in \bbB} 
  \reach(q_0,\tau(t,r)) \land \rank(q_0,t) > \rank(q_0,\tau(t,r)) \\
\edge{q_0}{\neg g}{q_1}:~~~
& \reach(q_0,t) \land \neg g(t)
  \impl
  \bigvee_{r\in \bbB} 
  \reach(q_1,\tau(t,r)) \land \rank(q_0,t) > \rank(q_1,\tau(t,r)) \\
\edge{q_1}{\true}{q_1}:~~~
& \reach(q_1,t) 
  \impl
  \bigvee_{r\in \bbB}
  \reach(q_1,\tau(t,r))
\end{align*}
\caption{SMT constraints for Example~\ref{ex:direct-encoding} for some $t \in T$.
  The final query is the conjunction of the constraints for every $t \in T$.
  The first line is the initialisation, item (3).
  The second line encodes the transition $\edge{v_0}{p_1}{v_0}$ of the automaton in Fig.~\ref{ex:direct-encoding:ucw}
  and corresponds to item (2b):
  since we do not know whether $p_1$ holds in state $t$,
  we add the assumption $\reach(q_0,t)$.}
\label{fig:direct-encoding:ex:constraits}
\end{figure}

\begin{theorem}[Correctness of direct encoding]
Given a \CTLstar formula $\Phi$ over inputs $I$ and outputs $O$
and a system $M=(I,O,T,t_0,\tau,out)$:
$M \models \Phi$
iff the SMT query is satisfiable.
\end{theorem}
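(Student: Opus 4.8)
I would prove both directions by a single induction on path-quantifier nesting, driven by two ingredients already in hand: (i) the ranking arguments for B\"uchi and co-B\"uchi automata behind Lemmas~\ref{le:nondet-eq} and~\ref{le:universal-eq} (with the comparisons of Eq.~\ref{eq:rank:buchi}--\ref{eq:rank:cobuchi}), and (ii) the fact that $\widetilde\Phi$ is \emph{monotone} in the propositions $P$. Monotonicity holds because $\Phi$ is in positive normal form, so no state subformula $\A\varphi$ or $\E\varphi$ is ever negated, hence each $p_i$ occurs only positively in $\widetilde\Phi$. The subtlety over the single-automaton lemmas is that the shared $\reach$/$\rank$ (restricted to the states of $A_{\varphi_i}$) must act as the check of $f_i$ started from \emph{every} system state $t$, not just $t_0$; I handle this by reading the lemmas off the global product $M\otimes(A_{\varphi_i})_\E$ (resp.\ $_\A$) rather than one anchored lasso.

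\textbf{Step 1 (key invariant).} I would first show, by induction on $i$ in the order $f_1,\dots,f_k$ of increasing nesting depth, the soundness invariant: in any model of the query, $\reach(q_0^{p_i},t)=\true \Impl t\models f_i$ for every $t$. Fix $i$ and assume the invariant for all $j<i$; then $p(t)=\{p_j\mid \reach(q_0^{p_j},t)=\true\}$ carries only correct truth values of the lower subformulas, so any run accepting for $A_{\varphi_i}$ while reading $p(\cdot)$ corresponds to a genuine computation of $M$ satisfying $\varphi_i$. If $f_i=\E\varphi_i$ and $\reach(q_0^{p_i},t)=\true$, constraint~(2a) lets me follow the $\reach$-marked disjuncts into a lasso of $M\otimes(A_{\varphi_i})_\E$; the B\"uchi comparison of Eq.~\ref{eq:rank:buchi} forbids an infinite strict rank descent, so the loop meets $F$ and $t\models\E\varphi_i$. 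Dually, if $f_i=\A\varphi_i$ and $\reach(q_0^{p_i},t)=\true$, constraint~(2b) keeps every successor reachable and co-B\"uchi-ranked (Eq.~\ref{eq:rank:cobuchi}), so every path from $(q_0^{p_i},t)$ visits $F$ finitely often and is accepting, whence $t\models\A\varphi_i$. These are exactly the extraction arguments of Lemmas~\ref{le:nondet-eq} and~\ref{le:universal-eq}.

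\textbf{Steps 2 and 3 (the two directions).} For SMT-satisfiable $\Impl M\models\Phi$, apply the invariant at $t_0$: with $b_i=\reach(q_0^{p_i},t_0)$ and $c_i=(t_0\models f_i)$ we have $b_i\Impl c_i$, and constraint~(3) gives $\widetilde\Phi[p_i\mapsto b_i]$; monotonicity then yields $\widetilde\Phi[p_i\mapsto c_i]$, which is precisely the bottom-up evaluation of $\Phi$ at $t_0$ (see page~\pageref{page:defs:bottom-up-mc}), i.e.\ $M\models\Phi$. For the converse I would build a model: keep $\tau,out$ those of $M$, set $p(t)=\{p_i\mid t\models f_i\}$, and for each $i$ define $\reach(q,t)=\true$ iff some run from $(q,t)$ in $M\otimes(A_{\varphi_i})_\E$ is accepting (resp.\ iff all runs from $(q,t)$ in $M\otimes(A_{\varphi_i})_\A$ are accepting), with $\rank$ the shortest distance to the accepting loop (resp.\ the maximal number of rejecting visits), as in the completeness halves of the B\"uchi and co-B\"uchi theorems. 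Since an accepting run passes to a successor (resp.\ ``all runs accepting'' is suffix-closed for co-B\"uchi), constraints~(2a)/(2b) hold, and by correctness of the NBW/UCW for $\varphi_i$ together with the inductively correct $p(\cdot)$ one gets $\reach(q_0^{p_i},t)=\true$ iff $t\models f_i$; in particular constraint~(3) reduces to $\widetilde\Phi[p_i\mapsto(t_0\models f_i)]$, true because $M\models\Phi$.

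\textbf{Main obstacle.} The crux is coordinating the two inductions in Step~1: the ranking extraction is local to one automaton, but it is only legitimate once the propositions $p(\cdot)$ read \emph{inside} $A_{\varphi_i}$ already equal the semantic truth of $f_1,\dots,f_{i-1}$, so the nesting induction must be threaded through each ranking argument rather than applied afterwards. The secondary difficulty is that the shared $\reach$ serves all start states at once; defining it by the \emph{global} accepting-run / all-runs-accepting property of the whole product (instead of along one chosen lasso) is what keeps Step~3 consistent across anchors and, via suffix-closure of the co-B\"uchi condition, prevents a universal anchor $(q_0^{p_i},t)$ from being spuriously marked, so that $\reach(q_0^{p_i},t)$ is exactly $t\models f_i$.
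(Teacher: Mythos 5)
Your proposal is correct and follows essentially the same route as the paper, which only sketches the argument: the invariant $\reach(q_0^{p_i},t)\Impl t\models f_i$ established bottom-up over the nesting order, the appeal to positive normal form so that an under-approximating marking $p(\cdot)$ suffices, and the converse by reconstructing $\reach$ and $\rank$ from the semantic marking exactly as in the B\"uchi/co-B\"uchi completeness arguments. Your write-up merely supplies the details (threading the ranking lemmas through the nesting induction, and anchoring $\reach$ at every state rather than only at $t_0$) that the paper leaves implicit.
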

Here is the intuition behind the proof.
The standard bottom-up model checker (see page~\ref{page:defs:bottom-up-mc})
marks every system state with state subformulas it satisfies.
The model checker returns ``Yes'' iff the initial state
satisfies the top-level Boolean formula.
The direct encoding conceptually follows that approach.
If for some system state $t$, $\reach(q_0^{p_i},t)$ holds,
then $t$ satisfies the state formula $f_i$ corresponding to $p_i$.
Thus, if the top-level Boolean constraint (3) holds,
then $t_0 \models \Phi$.
And vice versa:
if a model checker returns ``Yes'',
then the marking it produced can be used to satisfy the SMT constraints.
Finally, the positive normal form of $\Phi$
allows us to get away with encoding of positive obligations only
($\reach(q_0^{p_i},t) \Impl t \models f_i$),
eliminating the need to encode
$\neg\reach(q_0^{p_i},t) \Impl t \models \neg f_i$.

\subsection{Encoding via Alternating Hesitant Tree Automata}\label{bosy:ctlstar-synt:aht}


Let us recall how we can model check and synthesize systems from \CTLstar formulas
(see also Section~\ref{defs:model-checking-approach}).
First,
we convert a given \CTLstar formula into an alternating hesitant tree automaton.
Then we build the product between the system and the automaton---%
such a product is a 1-letter alternating hesitant word automaton.
Then we check the non-emptiness of the product automaton.
We show how to encode the latter check into an SMT query.
Such an SMT query is satisfiable iff the product is non-empty
(thus the system satisfies the formula).
As before, if we want to do synthesis, we replace a given system
with an unknown system of a fixed size.
Then an SMT solver returns a model (from which we extract a system), if such exists,
together with a proof of the non-emptiness.

It is worth refreshing the following definitions:
AHT and AHW (Chapter~\ref{chap:defs}, pages \pageref{page:defs:aht} and \pageref{page:defs:ahw}),
1-AHW and model checking wrt.\ \CTLstar (Section~\ref{defs:model-checking-approach}).

\subsubsection{Encoding non-emptiness of the product into SMT}

We start by converting a given \CTLstar formula $\Phi$ into an AHT.
Then we build the product between a given system and the AHT.
Such a product is a 1-AHW.
We are going to encode the non-emptiness of the 1-AHW into an SMT query.

Let us explain the idea of the encoding.
Recall that the states of the 1-AHW can be partitioned into
``existential'' sets $Q^N_1, ..., Q^N_{k_N}$ and
``universal'' sets $Q^U_1, ..., Q^U_{k_U}$.
Such sets are ordered and the 1-AHW transition function ensures the following:
Every path in every run-tree of the 1-AHW
gets trapped in some $Q^N_i$ or in $Q^U_j$.
Such a path $\pi$ is accepting iff
$\Inf(\pi) \cap Acc \neq \emptyset$ for the case of $Q^N_i$ (B\"uchi acceptance)
or $\Inf(\pi) \cap Acc = \emptyset$ for the case of $Q^U_j$ (co-B\"uchi acceptance).
We will build an SMT query where the SMT solver has to:
(a) resolve nondeterminism in the 1-AHW,
(b) ensure that every path in the resulting universal word automaton is accepting.

Consider a system
$M=(I,O,T,t_0,\tau,out)$
and an AHT
$A = (2^O, 2^I, Q, q_0, \delta: Q \times 2^O \to {\cal B}^+(2^I\times Q), Acc \subseteq Q)$
that corresponds to a given \CTLstar formula $\Phi$.
We encode the non-emptiness of the product $M \otimes A$,
which has the states $Q\times T$,
into the following SMT query:
\begin{equation}\label{eq:ctl-haa}
\begin{aligned}
&\reach(q_0,t_0) \land \\
\bigwedge_{(q,t) \in Q\times T}\!\!\! &\reach(q,t) \impl
\delta\big(q,out(t)\big) ~\Big[(d,q') \mapsto \reach\!\left(q',\tau(t,d)\right) \land \\
&~~~~~~~~~~~~~~~~~~~~~~~~~~~~~~~~~~~~~~~~~~~~
\rank(q,t) ~\triangleright_{q,q'}~ \rank\!\left(q',\tau(t,d)\right)\Big]
\end{aligned}
\end{equation}
where $\triangleright_{q,q'}$\footnote{%
  Here $\triangleright_{q,q'}$
  depends on $q$ \emph{and} $q'$,
  but it can also be defined to depend on $q$ only, as it is originally introduced.%
}
is:
\li
\- if $q$ and $q'$ are in the same $Q_i^N$,
   then the B\"uchi rank comparison $\GR^{Q_i^N}_B$;
\- if $q$ and $q'$ are in the same $Q_i^U$,
   then the co-B\"uchi rank comparison $\GR^{Q_i^N}_C$;
\- otherwise, true.
\il

\begin{theorem}
Given a system $(I,O,T,t_0,\tau,out)$ and \CTLstar formula $\Phi$ over inputs $I$ and outputs $O$:
$system \models \Phi$ iff the SMT query in Eq.~\ref{eq:ctl-haa} is satisfiable.
\end{theorem}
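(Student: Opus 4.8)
The plan is to reduce the statement to the non-emptiness characterisation already available for the product. By the corollary and proposition of~\cite{ATA} recalled in Section~\ref{defs:model-checking-approach}, $M \models \Phi$ iff $M \models A$, and $M \models A$ iff the product $M \otimes A$ is non-empty; since $M \otimes A$ is a 1-AHW read on the single word, it suffices to prove that $M \otimes A$ is non-empty iff Eq.~\ref{eq:ctl-haa} is satisfiable. First I would recall that non-emptiness of the 1-AHW means the existence of an accepting run-tree, that every run-tree path eventually gets trapped in a single set $Q^N_i$ or $Q^U_j$, and that such a path is accepting iff it satisfies the B\"uchi condition (when trapped in $Q^N_i$) or the co-B\"uchi condition (when trapped in $Q^U_j$). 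The comparison $\triangleright_{q,q'}$ in Eq.~\ref{eq:ctl-haa} is designed precisely so that it coincides with the B\"uchi comparison $\GR_B$ inside each existential set, with the co-B\"uchi comparison $\GR_C$ inside each universal set, and imposes nothing on the finitely many boundary-crossing transitions.

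For the direction ``satisfiable $\Rightarrow$ non-empty'', I would start from a model $(\reach,\rank)$ and unfold a run-tree rooted at $(q_0,t_0)$. The transition constraint guarantees that for every $(q,t)$ with $\reach(q,t)=\true$ the substituted formula $\delta(q,out(t))$ evaluates to true, so one can pick a satisfying set of successor atoms $(d,q')$, each reachable and each meeting the required rank comparison; iterating this choice yields a run-tree all of whose nodes are reachable. I would then argue that every path is accepting: a path trapped in $Q^N_i$ cannot avoid $Acc$ forever, since the B\"uchi comparison forces a strict decrease of $\rank$ on each non-accepting step and $\bbN_0$ is well-founded, so it visits $Acc$ infinitely often; dually, a path trapped in $Q^U_j$ meets $Acc$ only finitely often, since an infinite recurrence of a rejecting state would create an unsatisfiable cycle $\rank(q,t) > \dots \geq \rank(q,t)$. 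The finitely many unconstrained cross-set steps do not affect this trapping analysis.

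For the converse ``non-empty $\Rightarrow$ satisfiable'', I would fix an accepting run-tree, set $\reach(q,t)=\true$ exactly on the product states it visits, restrict attention to this finite reachable subgraph, and define $\rank$ set-wise by reusing the two constructions from the B\"uchi and co-B\"uchi theorems proved above: on each existential set the shortest distance to an $Acc$-state, and on each universal set the maximal number of future visits to rejecting states (finite by the same argument as in the co-B\"uchi case). Because $\{Q^N_i\}\cup\{Q^U_j\}$ partitions $Q$, these two prescriptions never conflict and the cross-set comparisons are vacuously satisfied. The transition constraint then holds at every reachable state, because the run-tree itself selected at each node a set of successors satisfying $\delta$, and those successors are reachable and respect the matching rank comparison by construction.

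The main obstacle will be the genuine alternation: unlike the earlier theorems, where $A$ is purely nondeterministic or purely universal, here $\delta(q,out(t))$ is an arbitrary positive Boolean formula mixing disjunctions and conjunctions at one node, and the substitution $[(d,q')\mapsto \reach\land\text{rank}]$ must be evaluated inside it. The key that makes everything go through is the hesitant structure, which guarantees that each run-tree path stabilises in a single pure set, so the local B\"uchi or co-B\"uchi ranking argument applies verbatim while boundary-crossing steps carry no obligation. Making this decomposition precise --- that the piecewise $\rank$ over the disjoint partition is globally well-defined and that the well-foundedness and finiteness arguments of the two base cases transfer unchanged --- is where the care is needed.
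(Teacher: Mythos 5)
Your proposal takes essentially the same route as the paper's proof: both directions reduce the theorem to non-emptiness of the 1-AHW product and then exploit the hesitant structure---every run-tree path is eventually trapped in a single $Q^N_i$ or $Q^U_j$, where $\triangleright_{q,q'}$ degenerates to the B\"uchi or co-B\"uchi comparison---so that the well-foundedness argument (for forcing visits to $Acc$) and the unsatisfiable-cycle argument (for forbidding them) apply verbatim inside each component, with the finitely many cross-set steps carrying no obligation. Your ``satisfiable $\Rightarrow$ non-empty'' direction matches the paper's, which additionally notes that the query is Horn-like and extracts a \emph{minimal} model of $\reach$ so that exactly one disjunct is selected per state; your ``pick a satisfying set of successor atoms'' is the same move.

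The one place where you should be more careful is the direction ``non-empty $\Rightarrow$ satisfiable''. The paper does not start from an arbitrary accepting run-tree: it first reformulates 1-AHW non-emptiness as a one-pair Rabin game played on $\delta$ and invokes the fact that \emph{memoryless} strategies suffice for Automaton. That memorylessness is exactly what makes $\reach$ and $\rank$ well-defined as functions of product states $(q,t)$ rather than of tree nodes: it guarantees that every occurrence of $(q,t)$ in the run-tree resolves the disjunctions of $\delta(q,out(t))$ in the same way, so ``the reachable subgraph'' on which you compute shortest distances to $Acc$ and maximal bad-distances is a genuine subgraph of the product. From an arbitrary accepting run-tree, two nodes carrying the same $(q,t)$ may select different satisfying sets; taking the union of their choices is harmless for the existential components (only one disjunct needs to evaluate to true, so the shortest-distance rank still certifies some disjunct), but in a universal component the $Q^U_i$-successors need not be uniquely determined by $\delta$ (conjunctive relatedness only bounds each CNF clause to one $Q^U_i$-atom), and the union graph could in principle contain rejecting cycles not present on any single run-tree path, breaking the finiteness of your ``maximal number of future visits to rejecting states''. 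So either insert the paper's game-theoretic observation (or any other positional-determinacy argument) before defining $\rank$, or argue explicitly that the run-tree can be normalised to make consistent choices at each product state. With that addition your argument is complete.
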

\begin{proof}[Proof idea]
Direction $\Rightarrow$.
Let $(Q, q_0, \delta, Acc)$ be the 1-AHW representing the product system$\otimes$AHT.
We will use the following observation.
\begin{quote}
Observation: 
{\em The 1-AHW non-emptiness can be reduced to solving the following 1-Rabin game.}
The game states are $Q$, the game graph corresponds to $\delta$,
there is one Rabin pair $(F,I)$ with
$F = Acc \cap Q^U$, $I = (Acc \cap Q^N) \cup (Q^U\!\setminus\! Acc$).
Let us view $\delta$ to be in the DNF.
Then, in state $q$ of the game,
the ``existential'' player (Automaton) chooses a disjunct in $\delta(q)$,
while the ``universal'' player (Pathfinder) chooses a state in that disjunct.
Automaton's strategy is winning iff for any Pathfinder's strategy the resulting play
satisfies the Rabin acceptance $(F,I)$.
Note that Automaton has a winning strategy iff the 1-AHW is non-empty;
also, memoryless strategies suffice for Automaton.
\end{quote}
Since the 1-AHW is non-empty, Automaton has a memoryless winning strategy.
We will construct $\reach$ and $\rank$ from this strategy.
For $\reach$:
set it to true if there is a strategy for Pathfinder such that the state will reached.
Let us prove that $\rank$ exists.

Since states from different $Q_i$ can never form a cycle
(due to the partial order),
$\rank$ of states from different $Q_i$ are independent.
Hence we consider two cases separately:
$\rank$ for some $Q^N_i$ and for some $Q^U_i$. 
\li
\-
The case of $Q^N_i$ is simple:
by the definition of the 1-AHW,
we can have only simple loops within $Q^N_i$.
Any such reachable loop visits some state from $Acc \cap Q_i^N$.
Consider such a loop:
assign $\rank$ for state $q$ of the loop to be
the minimal distance from any state $Acc \cap Q_i^N$.

\-
The case of $Q^U_i$:
in contrast, we can have simple and non-simple loops within $Q^U_i$.
But none of such loops visits $Acc \cap Q_i^U$.
Then, for each $q \in Q^U_i$ assign $\rank$
to be the maximum bad-distance from any state of $Q^U_i$.
The bad-distance between $q$ and $q'$ is
the maximum number of $Acc \cap Q^U$ states visited
on any path from $q$ to $q'$.
\il

Direction $\Leftarrow$.
The query is satisfiable means there is a model for $\reach$.
Note that the query is Horn-like ($\ldots  \rightarrow \ldots $),
hence there is a minimal marking $\reach$ of states
that still satisfies the query%
\footnote{Minimal in the sense that it is not possible
          to reduce the number of $\reach$ truth values
          by falsifying some of $\reach$.}%
\footnote{Non-minimality appears when $\delta$
          of the alternating automaton has OR
          and the SMT solver marks with $\reach$
          more than one OR argument.
          Another case is when the solver
          marks some state with $\reach$
          but there is no antecedent requiring that.}.
Wlog., assume $\reach$ is minimal.
Consider the subset of the states of the 1-AHW that are marked with $\reach$,
and call it $U$.
Note that $U$ is a 1-AHW and it has only universal transitions
(i.e., we never mark more than one disjunct of $\delta$ on the right side of
 $\ldots \rightarrow \delta(\ldots)$).
Intuitively,
$U$ is a finite-state representation of the (infinite) run-tree of the original 1-AHW.

\begin{quote}
Claim: {\em the run-tree---the unfolding of $U$---is accepting}.
Suppose it is not:
there is a run-tree path that violates the acceptance.
Consider the case when the path is trapped in some $Q^U_i$.
Then the path visits a state in $Q^U_i \cap Acc$
infinitely often.
But this is impossible since we use co-B\"uchi ranking for $Q_i^U$.
Contradiction.
The case when the path is trapped in some $Q_i^N$ is similar%
---the B\"uchi ranking prevents from not visiting $Acc\cap Q_i^N$ infinitely often.
\end{quote}
Thus, the 1-AHW is non-empty since it has an accepting run-tree (the unfolding of $U$).
\end{proof}

\subsection{Prototype Synthesizer for $\CTLstar$}\label{bosy:experiments}

We implemented both approaches to \CTLstar synthesis described
in Sections~\ref{bosy:ctlstar-synt:direct} and \ref{bosy:ctlstar-synt:aht}
inside the tool PARTY~\cite{party}:
{\small\url{https://github.com/5nizza/party-elli}} (branch ``cav17'').
In this section we illustrate the approach via AHTs.

The synthesizer works as follows:
\li
\-[(1)] Parse the specification that describes inputs, outputs,
       and a \CTLstar formula $\Phi$.
\-[(2)] Convert $\Phi$ into a hesitant tree automaton using the procedure
       described in~\cite{ATA},
       using LTL3BA~\cite{LTL3BA} to convert path formulas into NBWs.
\-[(3)] For each system size $k$ in increasing order (up to some bound):
       \li
       \- encode ``$\exists M_k: M_k\otimes AHT \neq \emptyset$?''
          into SMT using Eq. \ref{eq:ctl-haa} where $|M_k|=k$
       \- call Z3 solver \cite{Moura08}:
       if the solver returns ``unsatisfiable'',
          goto next iteration;
       otherwise
          print the system in the dot graph format.
       \il
\il
This procedure is complete,
because there is a $O(2^{2^{|\Phi|}})$ bound on the size of the system,
although reaching it is impractical.

\begin{figure}
\center
\begin{subfigure}[b]{0.45\textwidth}\center
\begin{tikzpicture}[->,>=stealth',shorten >=1pt,auto,node distance=1.9cm,semithick]
  \tikzset{every state/.style={minimum size=7mm,inner sep=0mm}, initial text={}}
  \tikzstyle{dot}=[draw,circle,minimum size=1mm,inner sep=0pt,outer sep=0pt,fill=black]
  \tikzstyle{small}=[minimum size=1mm,inner sep=1pt,outer sep=0pt]
  \tikzstyle{every edge} = [align=center,draw=black]

  \node[state,initial,green] (q0) {$q_0$};
  \node[dot] (q0_nG) [below of=q0] {};
  \node[state,double,green] (p0) [below left of=q0_nG] {$p_0$};
  \node[state,red] (r0) [below of=q0_nG] {$r_0$};
  \node[state,red] (s0) [right of=q0_nG] {$s_0$};

  \node[dot] (p0_nG) [below of=p0] {};

  \node[dot] (r0_nG) [below of=r0] {};
  \node[dot] (r0_G) [below right of=r0] {};

  \node[state,green] (t0) [below left of=r0_G] {$t_0$};
  \node[dot] (t0_G) [left of=t0] {};
  \node[small] (t0_nG) [right of=t0] {$\top$};
  
  \node[dot] (s0_G) [above right of=s0] {};
  \node[dot] (s0_nG) [below right of=s0] {};

  \node[state,red,double] (s1) [below of=s0_nG] {$s_1$};
  \node[dot] (s1_nG) [below of=s1] {};

  \path
  (q0) edge node {$\neg g$} (q0_nG)
  (q0_nG) edge node {$\A r$} (s0)
  (q0_nG) edge[near end] node {$\A r$} (r0)
  (q0_nG) edge[left] node {$\E r$} (p0)
  (p0) edge node {$\neg g$} (p0_nG)
  (p0_nG) edge[bend left] node {$\E r$} (p0)
  (r0) edge node {$\neg g$} (r0_nG)
  (r0) edge node {$g$} (r0_G)
  (r0_G) edge node {$\E r$} (t0)
  (r0_nG) edge[bend left] node {$\A r$} (r0)
  (t0) edge node {$g$} (t0_G)
  (t0_G) edge[bend left] node {$\E r$} (t0)
  (s0) edge [sloped,below] node {$\neg g$} (s0_nG)
  (s0) edge [below] node {$g$} (s0_G)
  (s0_G) edge [bend right,above] node {$\A r$} (s0)
  (s0_nG) edge node {$r$} (s1)
  (s0_nG) edge [bend right,right] node {$\A r$} (s0)
  (s1) edge node {$\neg g$} (s1_nG)
  (t0) edge [below] node {$\neg g$} (t0_nG)
  (q0_nG) edge node {$r$} (s1)
  (s1_nG) edge [bend left] node {$\A r$} (s1);
\end{tikzpicture}
\end{subfigure}
\begin{subfigure}[b]{0.45\textwidth}\center
\begin{tikzpicture}[->,>=stealth',shorten >=1pt,auto,node distance=1.3cm,semithick]
  \tikzset{every state/.style={minimum size=7mm,inner sep=0.3mm}, initial text={}}
  \tikzstyle{dot}=[draw,circle,minimum size=1mm,inner sep=0pt,outer sep=0pt,fill=black]
  \tikzstyle{small}=[minimum size=1mm,inner sep=1pt,outer sep=0pt]
  \tikzstyle{every edge} = [align=center,draw=black]

  \node[state,initial,green] (q0) {$q_0,m$};
  \node[dot] (q0_nG) [below of=q0] {};
  \node[state,double,green] (p0) [below left of=q0_nG] {$p_0,m$};
  \node[state,red] (r0) [below of=q0_nG] {$r_0,m$};
  \node[state,red] (s0) [right of=q0_nG] {$s_0,m$};

  \node[dot] (p0_nG) [below of=p0] {};

  \node[dot] (r0_nG) [below of=r0] {};

  \node[dot] (s0_nG) [below right of=s0] {};

  \node[state,red,double] (s1) [below of=s0_nG] {$s_1,m$};
  \node[dot] (s1_nG) [below of=s1] {};

  \path
  (q0) edge node {} (q0_nG)
  (q0_nG) edge node {} (s0)
  (q0_nG) edge[near end] node {} (r0)
  (q0_nG) edge[left] node {} (p0)
  (p0) edge node{} (p0_nG)
  (p0_nG) edge[bend left] node {} (p0)
  (r0) edge node{} (r0_nG)
  (r0_nG) edge[sloped,bend left,below] node {$\leq$} (r0)
  (s0) edge [sloped,below] node {} (s0_nG)
  (s0_nG) edge [sloped,above] node {$>$} (s1)
  (s0_nG) edge [sloped,bend right,above] node {$\leq$} (s0)
  (s1) edge [above] node {} (s1_nG)
  (q0_nG) edge node {} (s1)
  (s1_nG) edge [sloped,bend left,below] node {$<$} (s1);
\end{tikzpicture}
\end{subfigure}
\caption{On the left: AHT for the \CTLstar formula
         $\EG\neg g \land \AG\EF \neg g \land \AG(r \impl \F g)$.
         Green states are from the nondeterministic partion,
         red states are from the universal partition,
         double states are final
         (a red final state is rejecting, a green final state is accepting).
         State $\top$ denotes an accepting state.
         All transitions going out of the black dots are conjuncted.
         For example,
         $\delta(q_0, \neg g) =
         ((r,p_0) \lor (\neg r,p_0)) \land
         ((r,r_0) \land (\neg r, r_0)) \land
         (r, s_1) \land
         ((r,s_0) \land (\neg r, s_0))$.
         States $s_0$ and $s_1$ describe the property $\AG(r \impl \F g)$,
         state $p_0$ --- $\EG \neg g$,
         states $r_0$ and $t_0$ --- $\AG\EF\neg g$,
         state $t_0$ --- $\EF\neg g$.
         (Note that some states, e.g. $q_0$, do not have a transition for some letters.
          We assume that non-existing transitions go into a non-rejecting self-loop state for red (universal) states,
          and into a non-accepting self-loop state for green (nondeterministic) states.)\\
         On the right side is the product (1-AHW) of the AHT with 
         the one state system that never grants
         (thus it has $\edge{m}{\true}{m}$ and $out(m) = \neg g$).
         The edges are labeled with the relation $\triangleright_{q,q'}$
         defined in Eq.~\ref{eq:ctl-haa}.
         The product has no plausible annotation due to
         the cycle $\edge{(s_1,m)}{>}{(s_1,m)}$,
         thus the system does not satisfy the property.}
\label{fig:aht-resettable-arbiter}
\end{figure}
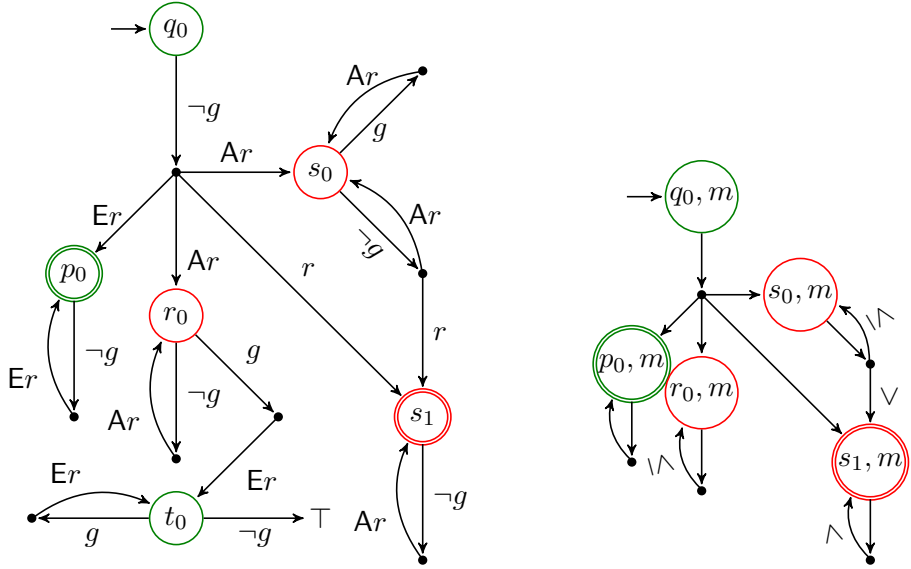
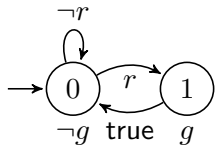
\begin{figure}[hbp]
\centering
\caption{The system that satisfies the property of the resettable 1-arbiter.}
\label{fig:aht-resettable-arbiter:model}
\begin{tikzpicture}[->,>=stealth',shorten >=1pt,auto,node distance=1.5cm,semithick]
  \tikzset{every state/.style={minimum size=7mm,inner sep=0.0mm}, initial text={}}
  \tikzstyle{every edge} = [align=center,draw=black]

  \node[state,initial] (A) [label={below:$\neg g$}] {$0$};
  \node[state] (B) [right of=A, label={below:$g$}] {$1$};

  \path 
  (A) edge [loop above] node {$\neg r$} (A)
  (A) edge [bend left,below] node {$r$} (B)
  (B) edge [bend left,below] node {$\true$} (A);
\end{tikzpicture}
\end{figure}

\parbf{Running example: resettable 1-arbiter}
Let $I = \{r\}$, $O=\{g\}$.
Consider a simple \CTLstar property of an arbiter
$$
\EG(\neg g ) \land \AG(r \impl \F g) \land \AG\EF \neg g.
$$
The property says:
there is a path from the initial state where the system never grants
(including the initial state);
every request should be granted;
and finally, a state without the grant should always be reachable.
We now invite the reader to Figure~\ref{fig:aht-resettable-arbiter}.
It contains the AHT produced by our tool,
and on its right side we show the product of the AHT
with the one-state system that does not satisfy the property.
The correct system needs at least two states and is on Figure~\ref{fig:aht-resettable-arbiter:model}.

\parbf{Resettable 2-arbiter}
Let $I = \{r_1, r_2\}$, $O=\{g_1,g_2\}$.
Consider the formula
\begin{align*}
& \EG(\neg g_1 \land \neg g_2) ~\land \AG\EF(\neg g_1 \land \neg g_2) ~\land \\
& \AG(r_1 \impl \F g_1) \land \AG (r_2 \impl \F g_2) \land \AG(\neg (g_1 \land g_2)).
\end{align*}
Note that without the properties with $\E$,
the synthesizer can produce the system in Figure~\ref{fig:vac_resettable_arbiter}
which starts in the state without grants and then always grants one or another client.
Our synthesizer outputs the system in Figure~\ref{fig:resettable_arbiter} (in one second).
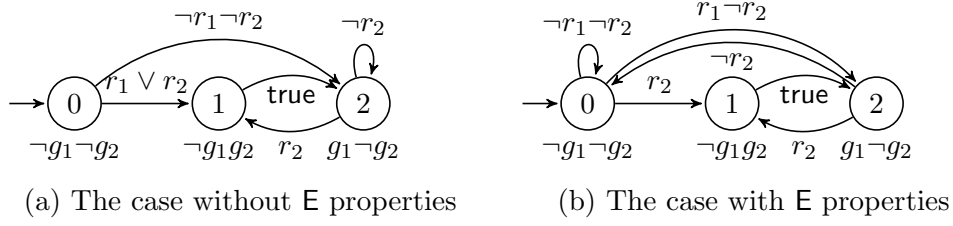
\begin{figure}[tb]
\center
\begin{subfigure}[b]{0.45\textwidth}
\begin{tikzpicture}[->,>=stealth',shorten >=1pt,auto,node distance=1.9cm,semithick]
  \tikzset{every state/.style={minimum size=7mm,inner sep=0.0mm}, initial text={}}
  \tikzstyle{every edge} = [align=center,draw=black]

  \node[state,initial] (0) [label={below:$\neg{g_1}\neg{g_2}$}] {$0$};
  \node[state] (1) [right of=0, label={below:$\neg g_1 g_2$}] {$1$};
  \node[state] (2) [right of=1, label={below:$g_1\neg g_2$}] {$2$};

  \path
  (0) edge node {$r_1 \lor r_2$} (1)
  (1) edge [bend left,below] node {$\true$} (2)
  (2) edge [bend left,below] node {$r_2$} (1)
  (2) edge [loop above] node {$\neg r_2$} (2)
  (0) edge [bend left=40] node {$\neg r_1 \neg r_2$} (2);
\end{tikzpicture}
\caption{The case without $\E$ properties}
\label{fig:vac_resettable_arbiter}
\end{subfigure}
\begin{subfigure}[b]{0.45\textwidth}
\begin{tikzpicture}[->,>=stealth',shorten >=1pt,auto,node distance=1.9cm,semithick]
  \tikzset{every state/.style={minimum size=7mm,inner sep=0.0mm}, initial text={}}
  \tikzstyle{every edge} = [align=center,draw=black]

  \node[state,initial] (0) [label={below:$\neg{g_1}\neg{g_2}$}] {$0$};
  \node[state] (1) [right of=0, label={below:$\neg g_1 g_2$}] {$1$};
  \node[state] (2) [right of=1, label={below:$g_1\neg g_2$}] {$2$};

  \path
  (0) edge node {$r_2$} (1)
  (0) edge [loop above] node {$\neg r_1 \neg r_2$} (0)
  (0) edge [bend left=47] node {$r_1 \neg r_2$} (2)
  (2) edge [bend right=38] node {$\neg r_2$} (0)
  (1) edge [bend left,below] node {$\true$} (2)
  (2) edge [bend left,below] node {$r_2$} (1);
\end{tikzpicture}
\caption{The case with $\E$ properties}
\label{fig:resettable_arbiter}
\end{subfigure}
\caption{Synthesized systems for the resettable arbiter example}
\end{figure}


\parbf{Sender-receiver system}
Consider a sender-receiver system of the following structure.
It has two modules,
the sender (S) with inputs $\{i_1, i_2\}$ and output $wire$
and the receiver (R) with input $wire$ and outputs $\{o_1,o_2\}$.
\begin{wrapfigure}{r}{4cm}
\vspace{-0.2cm}
\centering
\includegraphics[width=4.2cm]{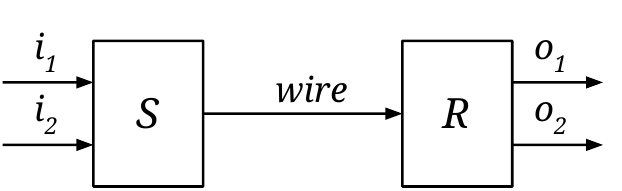}
\vspace{-0.5cm}
\end{wrapfigure}
The sender can send one bit over the wire to the receiver.
We would like to synthesize the sender and receiver modules that satisfy
the following \CTLstar formula over $I = \{i_1,i_2\}$ and $O=\{o_1,o_2\}$:
\begin{align*}
& \AG((i_1 \land i_2)  \impl  \F(o_1  \land  o_2))  \land  \\
& \AG((i_1 \land i_2 \land o_1 \land o_2)  \impl  \X(o_1 \land o_2))  \land  \\
& \AG(~\EF(o_1 \land  \neg o_2)  \land  
       \EF( \neg o_1 \land o_2)  \land  
       \EF( \neg o_1 \land  \neg o_2)  \land  
       \EF(o_1 \land o_2)~).
\end{align*}
Our tool does not support distributed synthesis,
so we manually adapted the SMT query it produced,
by introducing the following uninterpreted functions.
\li
\- For the sender:
   the transition function $\tau_s: T_s \times 2^{\{i_1,i_2\}} \to T_s$
   and the output function $out_s: T_s \times 2^{\{i_1,i_2\}} \to \bbB$.
   We set $T_s$ to have a single state.
\- For the receiver:
   the transition function $\tau_r: T_r \times 2^{\{wire\}} \to T_r$
   and the output functions $o_1: T_r \to \bbB$ and $o_2: T_r \to \bbB$.
   We set $T_r$ to have four states.
\il
It took Z3 solver about 1 minute to find the solution shown in Figure~\ref{fig:sender-receiver}.

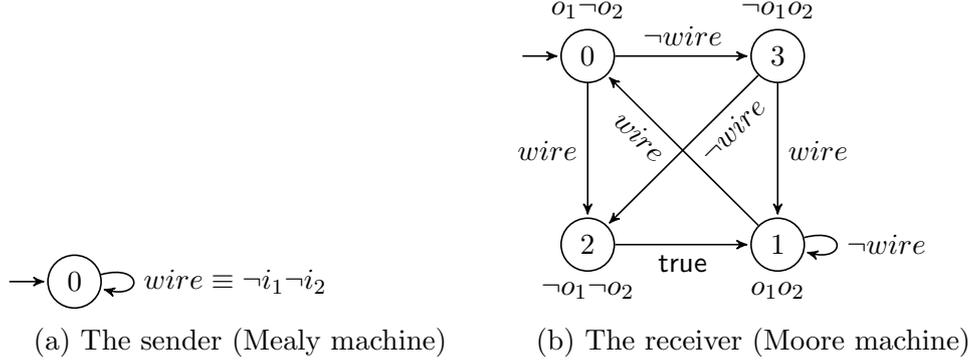
\begin{figure}[tb]
\center
\begin{subfigure}[b]{0.45\textwidth}
\begin{tikzpicture}[->,>=stealth',shorten >=1pt,auto,node distance=2cm,semithick]
  \tikzset{every state/.style={minimum size=7mm,inner sep=0.0mm}, initial text={}}
  \tikzstyle{every edge} = [align=center,draw=black]

  \node[state,initial] (0) {$0$};

  \path 
  (0) edge [loop right] node {$wire \equiv \neg i_1\neg i_2$} (0);
\end{tikzpicture}
\caption{The sender (Mealy machine)}
\end{subfigure}
\begin{subfigure}[b]{0.45\textwidth}
\begin{tikzpicture}[->,>=stealth',shorten >=1pt,auto,node distance=2.5cm,semithick]
  \tikzset{every state/.style={minimum size=7mm,inner sep=0.0mm}, initial text={}}
  \tikzstyle{every edge} = [align=center,draw=black]

  \node[state,initial] (0) [label={above:${o_1}\neg{o_2}$}] {$0$};
  \node[state] (3) [right of=0, label={above:$\neg{o_1}{o_2}$}] {$3$};
  \node[state] (1) [below of=3, label={below:${o_1}{o_2}$}] {$1$};
  \node[state] (2) [left of=1, label={below:$\neg{o}_1\neg{o}_2$}] {$2$};

  \path 
  (0) edge node {$\neg{wire}$} (3)
  (0) edge [left] node {$wire$} (2)

  (1) edge [loop right] node {$\neg{wire}$} (1)
  (1) edge [sloped] node {$wire$} (0)

  (2) edge [below] node {$\true$} (1)

  (3) edge node {${wire}$} (1)
  (3) edge [sloped] node {$\neg{wire}$} (2);
\end{tikzpicture}
\caption{The receiver (Moore machine)}
\end{subfigure}
\caption{The synthesized system for the sender-receiver example}
\label{fig:sender-receiver}
\end{figure}

\subsection{Discussion of Bounded Synthesis from \CTLstar}

We described two approaches to the \CTLstar synthesis and
the only (to our knowledge) synthesizer supporting \CTLstar.
(For CTL synthesis see \cite{klenze2016fast,de2012synthesizing,ctlsat},
 and \cite{bounded-pctl} for PCTL.)
The two approaches are conceptually similar.
The approach via direct encoding is easier to code.
The approach via alternating hesitant automata hints,
for example, at how to reduce \CTLstar synthesis to solving safety games:
via
bounding the number of visits to co-B\"uchi final states and
bounding the distance to B\"uchi final states,
and then determinizing the resulting automaton.
A possible future direction is to extend the approach to the logic ATL* and distributed systems.
In the next chapter, we show how \CTLstar synthesis can be reduced to LTL synthesis,
which avoids developing specialized \CTLstar synthesisers,
presented here.

\chapter{\CTLstar Synthesis via LTL Synthesis}\label{chap:ctl-via-ltl}

\hfill {\footnotesize\textit{This chapter is based on joint work with Roderick Bloem and Sven Schewe~\cite{CTLsynt-via-LTLsynt}}.~~~~~~~~}

\begin{quotation}
\noindent\textbf{Abstract.}
We reduce synthesis for \CTLstar properties to synthesis for \LTL.
In the context of model checking this is impossible --- \CTLstar is more expressive than LTL.
Yet, in synthesis we have knowledge of the system structure \emph{and} we can add new outputs.
These outputs can be used to encode witnesses of the satisfaction of \CTLstar subformulas directly into the system.
This way, we construct an LTL formula, over old and new outputs and original inputs,
which is realisable if, and only if, the original \CTLstar formula is realisable.
The \CTLstar-via-LTL synthesis approach preserves the problem complexity,
although it might increase the minimal system size.
We implemented the reduction,
and evaluated the \CTLstar-via-LTL synthesiser on several examples.
\end{quotation}

\iffinal\else
\textcolor{blue}{\small{
TODOs:
\li
\- related work
\- unrealisability section: elaborate on the witness
\il
}
}
\fi

\section{Introduction} \label{sec:intro}

The problem of reactive synthesis was introduced by Church for Monadic Second Order Logic~\cite{Church63}.
Later Pnueli introduced Linear Temporal Logic (LTL)~\cite{pnueli1977temporal}
and together with Rosner proved 2EXPTIME-completeness
of the reactive synthesis problem for LTL~\cite{DBLP:conf/popl/PnueliR89}.
In parallel, Emerson and Clarke introduced Computation Tree Logic (CTL)~\cite{ctl-origin},
and later Emerson and Halpern introduce Computation Tree Star Logic (\CTLstar)~\cite{ctlstar-origin}
that subsumes both CTL and LTL.
Kupferman and Vardi showed~\cite{informatio} that the synthesis problem for \CTLstar is 2EXPTIME-complete.

Intuitively, LTL allows one to reason about infinite computations.
The logic has \emph{temporal} operators, e.g., $\G$ (always) and $\F$ (eventually),
and allows one to state properties like ``every request is eventually granted''
($\G(r \impl \F g)$).
A system satisfies a given LTL property if \emph{all} its computations satisfy it.

In contrast, CTL and \CTLstar reason about computation trees,
usually derived by unfolding the system.
The logics have---in addition to temporal operators---\emph{path quantifiers}:
$\A$ (on all paths) and $\E$ (there exists a path).
CTL forbids arbitrary nesting of path quantifiers and temporal operators:
they must interleave.
E.g.,
$\AG g$ (``on all paths we always grant'') is a CTL formula,
but $\AGF g$ (``on all paths we infinitely often grant'') is not a CTL formula.
\CTLstar lifts this limitation.

The expressive powers of CTL and LTL are incomparable:
there are systems indistinguishable by CTL but distinguishable by LTL, and vice versa.
One important property inexpressible in LTL is the resettability property:
``there is always a way to reach the `reset' state'' ($\AGEF reset$).


There was a time when CTL and LTL competed for ``best logic for model checking''~\cite{LTL-vs-CTL}.
Nowadays most model checkers use LTL,
because it is easier for designers to think about paths rather than about trees.
LTL is also prevalent in reactive synthesis.
SYNTCOMP~\cite{syntcomp}---the reactive synthesis competition with the goal to popularise reactive synthesis---%
has two distinct tracks, and both use LTL (or variants) as their specification language.

Yet LTL leaves the designer without \emph{structural} properties.
One solution is to develop general \CTLstar synthesisers like the one we developed in Chapter~\ref{chap:bosy:ctlstar}.
Another solution is to transform the \CTLstar synthesis problem into
the form understandable to LTL synthesisers, i.e., to reduce \CTLstar synthesis to LTL synthesis.
Such a reduction would automatically transfer performance advances in LTL synthesisers
to a \CTLstar synthesiser.
In this chapter we show one such reduction.

Our reduction of \CTLstar synthesis to LTL synthesis works as follows.

First, recall how the standard \CTLstar model checking works (see page~\ref{page:defs:bottom-up-mc}).
The verifier introduces a proposition for every state subformula---formulas starting with an $\A$ or an $\E$ path quantifier---of a given \CTLstar formula.
Then the verifier annotates system states with these propositions,
in the bottom up fashion,
starting with propositions that describe subformulas over original propositions (system inputs and outputs).
Therefore the system satisfies the \CTLstar formula iff the initial system state is annotated
with the proposition describing the whole \CTLstar formula
(assuming that the \CTLstar formula starts with $\A$ or $\E$).

Now let us look into \CTLstar synthesis.
The synthesiser has the flexibility to choose the system structure, 
as long as it satisfies a given specification.
We introduce new propositions---outputs that later can be hidden from the user---%
for state subformulas of the \CTLstar formula,
just like in the model checking case above.
We also introduce additional propositions for existentially quantified subformulas---%
to encode the witnesses of their satisfaction.
Such propositions describe the directions (inputs) the environment should provide
to satisfy existentially quantified path formulas.
The requirement that new propositions indeed denote the truth of the subformulas can be stated in LTL.
For example, for a state subformula $\A\varphi$, we introduce proposition $p_{\A\varphi}$,
and require $\G\left[ p_{\A\varphi} \impl \varphi' \right]$,
where $\varphi'$ is $\varphi$ with state subformulas substituted by the propositions.
For an existential subformula $\E\varphi$,
we introduce proposition $p_{\E\varphi}$ and require,
\emph{roughly}, $\G\left[ p_{\E\varphi} \impl ((\G d_{p_{\E\varphi}}) \impl \varphi')\right]$, which states:
if the proposition $p_{\E\varphi}$ holds, then the path along directions encoded by $d_{p_{\E\varphi}}$
satisfies $\varphi'$ (where $\varphi'$ as before).
We wrote ``roughly'', because
there can be several different witnesses for the same existential subformula
starting at different system states:
they may meet in the same system state,
but depart afterwards---then, to be able to depart from the meeting state,
each witness should have its own direction $d$.
We show that, for each existential subformula, a number $\approx 2^{|\Phi_\CTLstar|}$ of witnesses is sufficient,
where $\Phi_\CTLstar$ is a given \CTLstar formula.
This makes the LTL formula exponential in the size of the \CTLstar formula,
but the special---conjunctive---nature of the LTL formula ensures
that the synthesis complexity is 2EXPTIME wrt.\ $|\Phi_\CTLstar|$.

Our reduction is ``if and only if'' and preserves the synthesis complexity.
However, it may increase the size of the system, and is not very well suited to establish unrealisability.
Of course, to show that a given \CTLstar formula is unrealisable,
one could reduce \CTLstar synthesis to LTL synthesis,
then reduce the LTL synthesis problem to solving parity games,
and derive the unrealisability from there%
\footnote{Reducing LTL synthesis to solving parity games \emph{is} practical, as SYNTCOMP'17~\cite{syntcomp} showed:
  such synthesiser {\tt ltlsynt} was among the fastest.}.
But the standard approach for unrealisability checking---by synthesising the dualised LTL specification---does not seem to be practical.
The reason is that the LTL formula $\Phi_\LTL$ is exponential in the size $|\Phi_\CTLstar|$ of the \CTLstar formula.
The negated LTL formula $\neg \Phi$ (used in the dualised specification)
is a big disjunction (vs.\ big conjunction for $\Phi_\LTL$),
which makes a corresponding universal co-B\"uchi automaton doubly-exponential in $|\Phi_\CTLstar|$
(vs.\ singly-exponential for $\Phi_\LTL$).
The double exponential blow up in the size of the automaton---which is used as input to bounded synthesis---%
makes this unrealisability check impractical~\footnote{%
  This is a conjecture:
  we have \emph{not} proven that the synthesis of dualised LTL formulas, produced by our reduction, takes triply exponential time.}%
.

Finally, we have implemented\footnote{Available at \url{https://github.com/5nizza/party-elli}, branch ``cav17''}
the converter from \CTLstar into LTL,
and evaluated our \CTLstar-via-LTL synthesis approach,
using two LTL synthesisers and \CTLstar synthesiser (Chapter~\ref{chap:bosy:ctlstar}),
on several examples.
The experimental results show that such an approach works very well%
---outperforming the specialised \CTLstar synthesiser
   (Chapter~\ref{chap:bosy:ctlstar})---%
when the number of \CTLstar-specific formulas is small.

The chapter depends on notions defined in Chapter~\ref{chap:defs}
and is structured as follows.
In the next Section~\ref{sec:reductions-to-ltl}
we present the main contribution: the reduction.
Then Section~\ref{sec:ctlstar-unreal} briefly discusses checking unrealisability of \CTLstar specifications.
Section~\ref{sec:experiments} describes the experimental setup, specifications, solvers used, and synthesis timings,
and Section~\ref{sec:conclusion} concludes.
\ak{mention related work section}

\section{Converting \CTLstar to \LTL for Synthesis}
\label{sec:reductions-to-ltl}

\ak{
\li
\- unify: LTL vs. path formula
\- system/tree path: sync the def with its usage
\- define det/universal atm that runs on annotated computation trees?
\il
}

In this section, we describe how and why we can reduce \CTLstar synthesis
to LTL synthesis.
First, we recall the standard approach to \CTLstar synthesis,
then describe, step by step, the reduction and the correctness argument,
and then discuss some properties of the reduction.

\subsection{LTL Encoding}

Let us first look at standard automata based algorithms for \CTLstar synthesis~\cite{informatio}\ak{find the approach in their paper}.
When synthesising a system that realizes a \CTLstar specification, we normally do the following.
\li
\- We turn the \CTLstar formula into an alternating hesitant tree automaton $A$.

\- Move from computation trees to annotated computation trees that move the (memoryless) strategy of the verifier%
   \footnote{Such a strategy maps, in each tree node, an automaton state to a next automaton state and direction.}
   into the label of the computation tree.
   This allows for using the derived universal co-B\"uchi tree automaton $U$,
   making the verifier deterministic: it does not make any decisions, as they are now encoded into the system;

\- Determinise $U$ to a deterministic tree automaton $D$.

\- Play an emptiness game for $D$.

\- If the verifier wins, his winning strategy (after projection of the additional labels) defines a system, if the spoiler wins, the specification is unrealisable.
\il

We draw from this construction and use particular properties of the alternating hesitant tree automaton $A$.
Namely, $A$ is not a general alternating tree automaton,
but is an alternating hesitant tree automaton.
Such an automaton is built from a mix of nondeterministic B\"uchi
and universal co-B\"uchi word automata,
called ``existential word automata'' and ``universal word automata''.
These universal and existential word automata start at any system state [tree node] where a universally or existentially, respectively, quantified subformula is marked as true in the annotated system [annotated computation tree].
We use the term ``existential word automata'' to emphasise that the automaton is not only a non-deterministic word automaton, but it is also used in the alternating tree automaton in a way, where the verifier can pick the system [tree] path, along which it has to accept.

We will use the following notions defined on page~\pageref{page:defs:bottom-up-mc}:
the set $F$ of state subformulas of a given \CTLstar formula $\Phi$,
the set of corresponding propositions $P$,
and the top-level Boolean formula $\widetilde\Phi$.

\begin{example}[Word and tree automata]
Consider the formula $\EG\EX(g \land \X(g \land \F\neg g))$
where inputs $I=\{r\}$ and outputs $O=\{g\}$.
The set $F=\{ f_{\EG}=\EG p_{\EX}, f_{\EX}=\EX(g \land \X(g \land \F\neg g)) \}$,
the set $P=\{ p_{\EX}, p_{\EG} \}$, and
$\widetilde\Phi=p_{\EG}$.
Figure~\ref{fig:automata} shows the nondeterministic word automata for the path formulas of the subformulas,
and the alternating (actually, nondeterministic) tree automaton for the whole formula.
In what follows, we work mostly with word automata.

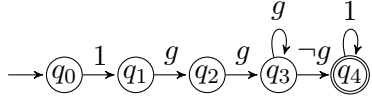
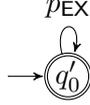
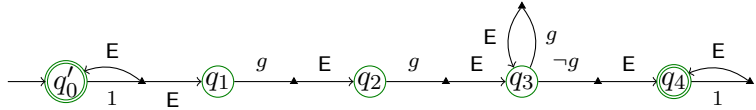
\begin{figure}[tb]
\begin{subfigure}[t]{\linewidth}\center
\begin{tikzpicture}[->,>=stealth',shorten >=1pt,auto,node distance=0.94cm]
  \tikzset{every state/.style={minimum size=3mm,inner sep=0.2mm}, initial text={}}
  \tikzstyle{every edge} = [align=center,draw=black]

  \node[state,initial] (0) {$q_0$};
  \node[state] (1) [right of=0] {$q_1$};
  \node[state] (2) [right of=1] {$q_2$};
  \node[state] (3) [right of=2] {$q_3$};
  \node[state,double] (4) [right of=3] {$q_4$};

  \path
  (0) edge node {$1$} (1)
  (1) edge node {$g$} (2)
  (2) edge node {$g$} (3)
  (3) edge [loop above] node {$g$} (3)
  (3) edge node {$\neg g$} (4)
  (4) edge [loop above] node {$1$} (4);
\end{tikzpicture}
\caption{%
  NBW for $\X (g \land \X (g \land \F\neg g))$,
  the alphabet $\Sigma=2^{\{r,g\}}$.
  Transitions to the non-accepting state $sink$ are not shown.}
\label{fig:nbw-x}
\end{subfigure}
\begin{subfigure}[t]{\linewidth}\center
\begin{tikzpicture}[->,>=stealth',shorten >=1pt,auto,node distance=0.94cm]
  \tikzset{every state/.style={minimum size=3mm,inner sep=0.2mm}, initial text={}}
  \tikzstyle{every edge} = [align=center,draw=black]

  \node[state,initial,double] (0) {$q_0'$};

  \path (0) edge [loop above] node {$p_{\EX}$} (0);
\end{tikzpicture}
\caption{%
  NBW for $\G(p_{\EX})$,
  the alphabet $\Sigma=2^{\{r, g, p_{\EX}\}}$.
  The transition to the non-accepting state $sink$ is omitted.}
\label{fig:nbw-g}
\end{subfigure}
\begin{subfigure}[t]{\linewidth}\center
\begin{tikzpicture}
	\begin{pgfonlayer}{nodelayer}
		\node [style=gn, initial, double] (0) at (0, -0) {$q_0'$};
		\node [style=gn] (1) at (2, -0) {$q_1$};
		\node [style=uptriangle] (2) at (1, -0) {};
		\node [style=gn] (3) at (4, -0) {$q_2$};
		\node [style=gn] (4) at (6, -0) {$q_3$};
		\node [style=gn, double] (5) at (8, -0) {$q_4$};
		\node [style=uptriangle] (6) at (7, -0) {};
		\node [style=uptriangle] (7) at (5, -0) {};
		\node [style=uptriangle] (8) at (3, -0) {};
		\node [style=uptriangle] (9) at (9, -0) {};
		\node [style=uptriangle] (10) at (6, 1) {};
	\end{pgfonlayer}
	\begin{pgfonlayer}{edgelayer}
		\draw [style=simple] (0) to node[below]{$_1$} (2);
		\draw [style=arrow] (2) to node[below=]{$_{\sf E}$} (1);
		\draw [style=arrow, bend right, looseness=1.00] (2) to node[above]{$_{\sf E}$} (0);
		\draw [style=simple] (1) to node[above]{$_g$} (8);
		\draw [style=simple] (3) to node[above]{$_g$} (7);
		\draw [style=simple] (4) to node[above]{$_{\neg g}$} (6);
		\draw [style=simple] (5) to node[below]{$_1$} (9);
		\draw [style=arrow] (8) to node[above]{$_{\sf E}$} (3);
		\draw [style=arrow] (7) to node[above]{$_{\sf E}$} (4);
		\draw [style=arrow] (6) to node[above]{$_{\sf E}$} (5);
		\draw [style=arrow, bend right, looseness=1.00] (9) to node[above]{$_{\sf E}$} (5);
		\draw [style=simple, bend right, looseness=1.00] (4) to node[right]{$_g$} (10);
		\draw [style=arrow, bend right, looseness=1.00] (10) to node[left]{$_{\sf E}$} (4);
	\end{pgfonlayer}
\end{tikzpicture}
\caption{Alternating hesitant tree automaton for $\EG\EX(g \land \X(g \land \F\neg g))$ (actually it is nondeterministic).
  The green color of the states indicate that they are from
  the nondeterministic partition of the states
  (and thus double-circled states are from the B\"uchi acceptance condition).
  The edges starting in the filled triangle are connected with $\land$.
  Edge label $\E$ abbreviates the set of edges, for each tree direction, connected with $\lor$.
  Thus, the transition from $q'_0$ is $((q'_0, r) \lor (q'_0,\neg r)) \land ((q_1, r) \lor (q_1,\neg r))$.
  To get an alternating automaton for $\AG\EX(...)$,
  replace in the self-loop edge of $q_0'$ label $\E$ with $\A$,
  and make the state non-rejecting
  (these also move the state into the universal partition of the states).}
\end{subfigure}
\caption{Word and tree automata.}
\label{fig:automata}
\end{figure}
\end{example}

\smallskip

We are going to show how and why we can reduce \CTLstar-synthesis to LTL synthesis.
The argument is split into steps (a), (b), (c), (d), and (e).
Figure~\ref{fig:discussion-summary} summarises the steps.
\begin{figure}[tbp]
\begin{subfigure}{\linewidth}\center
\includegraphics[width=\textwidth]{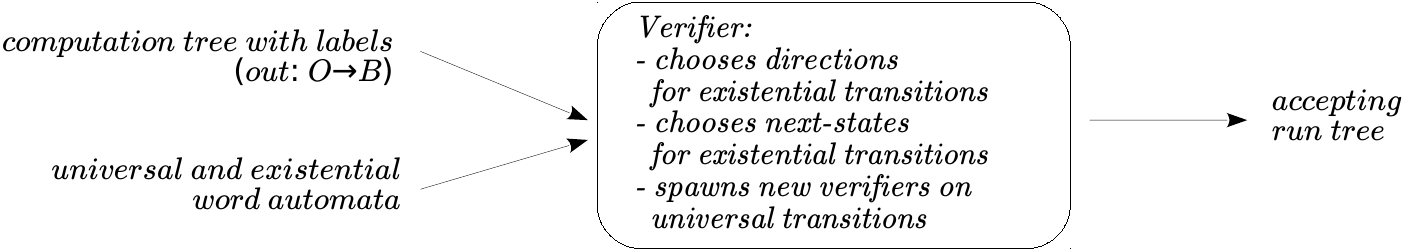}
\caption{The verifier takes a computation tree, universal and existential word automata, and the top-level proposition, that together encode a given \CTLstar formula. It produces an accepting run tree (if the computation tree satisfies the formula).}
\label{fig:stepA}
\end{subfigure}
\vspace{0.1cm}
\hrule
\vspace{0.2cm}
\begin{subfigure}{\linewidth}\center
\includegraphics[width=\textwidth]{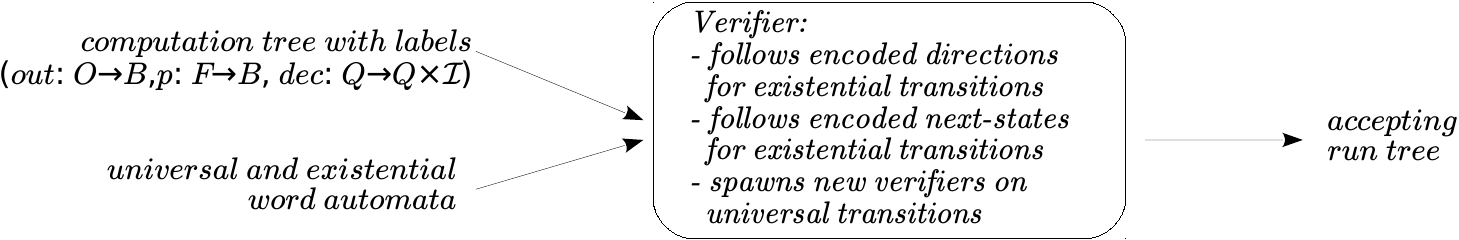}
\caption{We encode the verifier decisions into annotated computation trees,
  making the verifier deterministic.
  Figure~\ref{fig:annotated-computation-tree} shows such an annotated computation tree.}
\label{fig:stepB}
\end{subfigure}
\vspace{0.1cm}
\hrule
\vspace{0.2cm}
\begin{subfigure}{\linewidth}\center
\includegraphics[width=\textwidth]{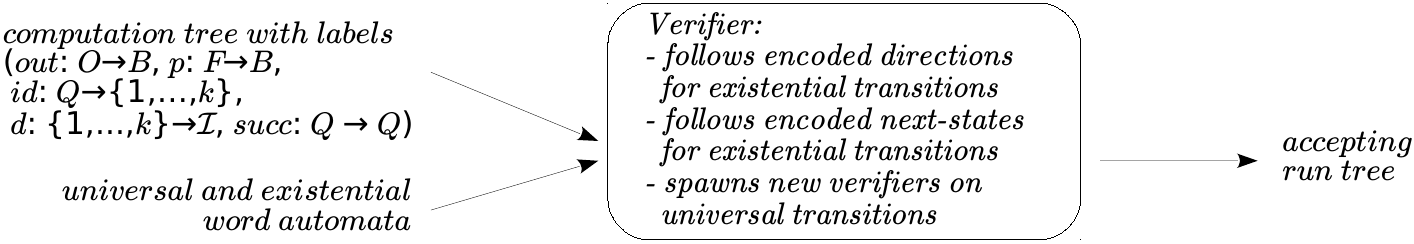}
\caption{The new annotation is a re-phrasing of the previous one.
Figure~\ref{fig:relabeled-tree} gives an example.}
\label{fig:stepC}
\end{subfigure}
\vspace{0.1cm}
\hrule
\vspace{0.2cm}
\begin{subfigure}{\linewidth}\center
\includegraphics[width=\textwidth]{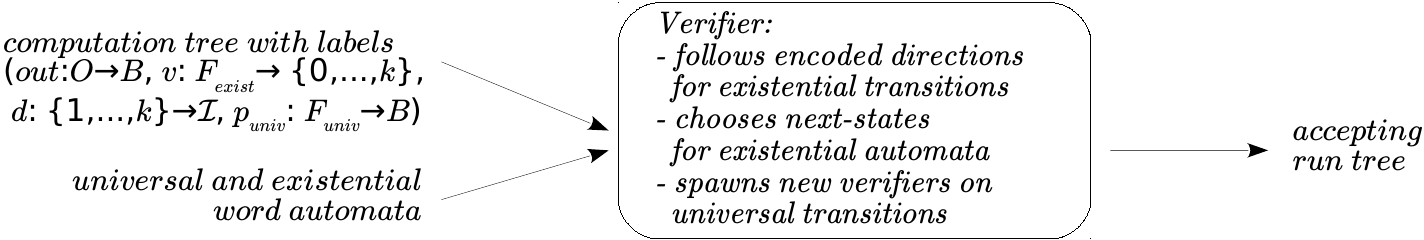}
\caption{We keep directions in the annotation but remove next-states---now the verifier has to choose.
  Figure~\ref{fig:lean-numbered-tree} gives an example.
  The change
  from the label $id: Q \to \{1,...,k\}$
  to the label $v:F_\textit{exist} \to \{0,...,k\}$
  is the reason why the system size can increase.}
\label{fig:stepD}
\end{subfigure}
\vspace{0.1cm}
\hrule
\vspace{0.2cm}
\begin{subfigure}{\linewidth}\center
\includegraphics[width=\textwidth]{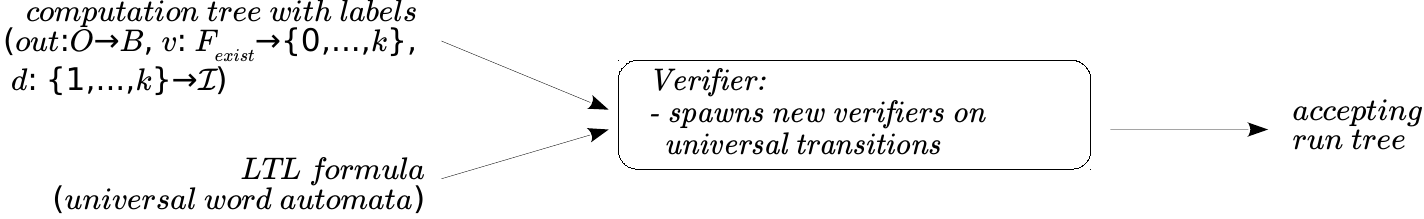}
\caption{Now the obligation of the verifier can be stated in LTL (or using universal co-B\"uchi word automata).}
\label{fig:stepE}
\end{subfigure}
\vspace{0.1cm}
\hrule
\vspace{0.2cm}
\caption{Steps in the correctness argument.
  We transform the input to the verifier and its task, step by step.
  We begin with a computation tree labeled with $2^O$
  and end with a computation tree labeled additionally with $v: F_{exist} \to \{0,...,k\}$ and $d:\{1,...,k\} \to \cal I$.
  (Calligraphic $\cal I$ denotes $2^I$.)
  This is about verifying a given computation tree (labels are fixed),
  in synthesis we would search for such a tree.}
\label{fig:discussion-summary}
\end{figure}

\parbf{Step A (the starting point)}
The verifier takes as input:
a computation tree,
universal and existential word automata for the \CTLstar subformulas, and
the top-level proposition corresponding to the whole \CTLstar formula.
It has to produce an accepting run tree
(if the computation tree satisfies the formula).

\parbf{Step B}
Given a computation tree,
the verifier maps each tree node to a (universal or existential word) automaton state,
and moves from a node according to the quantification of the automaton
(either in all tree directions or in one direction).
The decision
in which tree direction to move and which automaton state to pick for the successor node,
constitutes the strategy of the verifier.
Each time the verifier has to move in several tree directions
(this happens when the node is annotated with a \emph{universal} word automaton state),
we spawn a new version of the verifier,
for each tree direction and transition of the universal word automaton.

The strategy of the verifier is a mapping of states of the existential word automata
to a decision,
which consists of a tree direction
(the continuation of the tree path, along which the automaton shall accept)
and an automaton successor state transition.
For every node $n$, this is a mapping
$dec: Q \rightarrow \I \times Q$ \label{page:decision-mapping}
such that $dec(q)=(e,q')$ implies that $q' \in \delta\big(q,(l(n),e)\big)$,
where $\delta$ corresponds to the existential word automaton to which $q$ belongs,
and $l(n) \in \O$ is a label of the current tree node $n$%
\footnote{%
  The verifier, when in the tree node or system state, moves according to this strategy.}%
.
Note that strategies are defined per-node-basis,
i.e., $dec$ may differ in two different nodes $n_1$ and $n_2$.
(All node labels depend on the current node, but we will omit specifying this explicitly.)
Notice that the strategy is memoryless wrt.\ the history of automata states\ak{note why there exists such a strategy}.

We call a model, in which every state is additionally annotated with a verifier strategy, an \emph{annotated model}.
Similarly, an \emph{annotated computation tree} is a computation tree in which every node is additionally annotated with a verifier strategy.
Thus, in both cases, every system state [node] is labeled with:
(i) original propositional labeling $out: O \to \bbB$,
(ii) propositional labeling for universal and existential subformulas $F=F_\textit{univ}\cupdot F_\textit{exist}$, $p: F \to \bbB$, and
(iii) decision labeling $dec: Q \to \I \times Q$ where $Q$ are the states of all existential automata.
\label{page:def:annotated-tree}

\ak{define when an annotated model/tree is accepted}\

\ak{state the relation btw accepted system and accepted annotated system}

\begin{example}
Figure~\ref{fig:annotated-model-tree} shows an annotated system and computation tree.
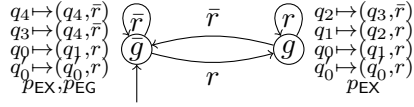
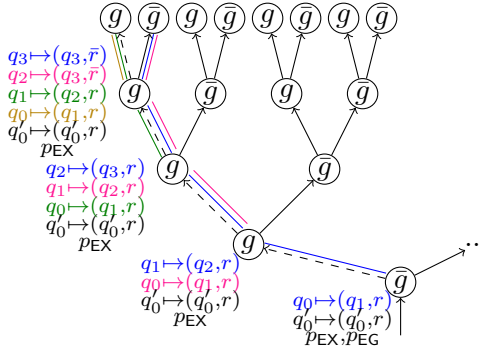
\begin{figure}
\begin{subfigure}[t]{\linewidth}\center
\begin{tikzpicture}
	\begin{pgfonlayer}{nodelayer}
		\node [style=wn, initial below] (0) at (0, -0) {$\bar g$};
		\node [style=wn] (1) at (2, -0) {$g$};
		\node [style=textual] (2) at (-1, -0) {$\color{black}{_{q_0 \mapsto (q_1,r)}}$};
		\node [style=textual] (3) at (-1, -0.25) {$_{q_0' \mapsto (q_0',r)}$};
		\node [style=textual] (4) at (3, -0) {$\color{black}{_{q_0 \mapsto (q_1,r)}}$};
		\node [style=textual] (5) at (3, -0.25) {$_{q_0' \mapsto (q_0',r)}$};
		\node [style=textual] (6) at (-1, 0.25) {$\color{black}{_{q_3 \mapsto (q_4,\bar r)}}$};
		\node [style=textual] (7) at (3, 0.25) {$\color{black}{_{q_1 \mapsto (q_2,r)}}$};
		\node [style=textual] (8) at (3, 0.5) {$\color{black}{_{q_2 \mapsto (q_3,\bar r)}}$};
		\node [style=textual] (9) at (-1, 0.5) {$\color{black}{_{q_4 \mapsto (q_4,\bar r)}}$};
		\node [style=textual] (10) at (3, -0.5) {$_{p_{\sf EX}}$};
		\node [style=textual] (11) at (-1, -0.5) {$_{p_{\sf EX}, p_{\sf EG}}$};
	\end{pgfonlayer}
	\begin{pgfonlayer}{edgelayer}
		\draw [style=arrow, bend right=15, looseness=1.00] (0) to node[below]{$r$} (1);
		\draw [style=arrow, bend right=15, looseness=1.00] (1) to node[above]{$\bar r$} (0);
		\draw [style=arrow, in=120, out=60, loop] (0) to node[below]{$\bar r$} ();
		\draw [style=arrow, in=120, out=60, loop] (1) to node[below]{$r$} ();
	\end{pgfonlayer}
\end{tikzpicture}
\caption{An annotated system satisfying $\EG\EX(g \land \X(g \land \F\neg g))$.
  Near the nodes is the annotation that encodes the winning strategy of the verifier,
  the label $p_{\EX}$ means the subformula $\EX(g \land \X(g \land \F\neg g))$ holds,
  the label $p_{\EG}$ means the subformula $\EG p_{\EX}$ holds.}
\label{fig:annotated-model}
\end{subfigure}

\vspace{0.5cm}
\begin{subfigure}[t]{\linewidth}\center
\center
\begin{tikzpicture}
	\begin{pgfonlayer}{nodelayer}
		\node [style=invisible] (0) at (1, 1) {...};
		\node [initial below, style=wn] (1) at (0, 0.5) {$\bar g$};
		\node [style=wn] (2) at (-2, 1) {$g$};
		\node [style=wn] (3) at (-3, 2) {$g$};
		\node [style=wn] (4) at (-1, 2) {$\bar g$};
		\node [style=wn] (5) at (-0.5, 3) {$\bar g$};
		\node [style=wn] (6) at (-1.5, 3) {$g$};
		\node [style=wn] (7) at (-2.5, 3) {$\bar g$};
		\node [style=wn] (8) at (-3.5, 3) {$g$};
		\node [style=wn] (9) at (-3.75, 4) {$g$};
		\node [style=wn] (10) at (-3.25, 4) {$\bar g$};
		\node [style=wn] (11) at (-2.75, 4) {$g$};
		\node [style=wn] (12) at (-2.25, 4) {$\bar g$};
		\node [style=wn] (13) at (-1.75, 4) {$g$};
		\node [style=wn] (14) at (-1.25, 4) {$\bar g$};
		\node [style=wn] (15) at (-0.75, 4) {$g$};
		\node [style=wn] (16) at (-0.25, 4) {$\bar g$};
		\node [style=textual] (17) at (-0.75, 0.25) {$\color{blue}{_{q_0 \mapsto (q_1,r)}}$};
		\node [style=textual] (18) at (-0.75, -0) {$_{q_0' \mapsto (q_0',r)}$};
		\node [style=textual] (19) at (-2.75, 0.75) {$\color{blue}{_{q_1 \mapsto (q_2,r)}}$};
		\node [style=textual] (20) at (-2.75, 0.5) {$\color{DeepPink}{_{q_0 \mapsto (q_1,r)}}$};
		\node [style=textual] (21) at (-2.75, 0.25) {$_{q_0' \mapsto (q_0',r)}$};
		\node [style=textual] (22) at (-4, 2) {$\color{blue}{_{q_2 \mapsto (q_3,r)}}$};
		\node [style=textual] (23) at (-4, 1.75) {$\color{DeepPink}{_{q_1 \mapsto (q_2,r)}}$};
		\node [style=textual] (24) at (-4, 1.25) {$_{q_0' \mapsto (q_0',r)}$};
		\node [style=textual] (25) at (-4.5, 3.25) {$\color{DeepPink}_{q_2 \mapsto (q_3,\bar r)}$};
		\node [style=textual] (26) at (-4.5, 3.5) {$\color{blue}_{q_3 \mapsto (q_3,\bar r)}$};
		\node [style=textual] (27) at (-4.5, 2.5) {$_{q_0' \mapsto (q_0',r)}$};
		\node [style=textual] (28) at (-4, 1.5) {$\color{Green}{_{q_0 \mapsto (q_1,r)}}$};
		\node [style=textual] (29) at (-4.5, 3) {$\color{Green}_{q_1 \mapsto (q_2,r)}$};
		\node [style=textual] (30) at (-4.5, 2.75) {$\color{DarkGoldenrod}{_{q_0 \mapsto (q_1,r)}}$};
		\node [style=textual] (31) at (-0.75, -0.25) {$_{p_{\sf EX}, p_{\sf EG}}$};
		\node [style=textual] (32) at (-2.75, -0) {$_{p_{\sf EX}}$};
		\node [style=textual] (33) at (-4, 1) {$_{p_{\sf EX}}$};
		\node [style=textual] (34) at (-4.5, 2.25) {$_{p_{\sf EX}}$};
	\end{pgfonlayer}
	\begin{pgfonlayer}{edgelayer}
		\draw [style=arrow] (1) to node{} (0);
		\draw [style=arrow] (4) to (5);
		\draw [style=arrow] (4) to (6);
		\draw [style=arrow] (3) to (7);
		\draw [style=dashed arrow] (3) to (8);
		\draw [style=blue, transform canvas={yshift=0.3mm,xshift=0.8mm}] (3) to (8);
		\draw [style=pink, transform canvas={yshift=0.8mm,xshift=1.4mm}] (3) to (8);
		\draw [style=green, transform canvas={yshift=-0.3mm,xshift=-0.5mm}] (3) to (8);
		\draw [style=dashed arrow, in=-45, out=135, looseness=0.75] (2) to node{} (3);
		\draw [style=blue, transform canvas={yshift=0.7mm,xshift=0.7mm}] (2) to (3);
		\draw [style=pink, transform canvas={yshift=1.3mm,xshift=1.3mm}] (2) to (3);
		\draw [style=arrow] (2) to (4);
		\draw [style=dashed arrow] (1) to node[right]{} (2);
		\draw [style=blue, transform canvas={yshift=0.7mm,xshift=0.1mm}] (1) to (2);
		\draw [style=dashed arrow] (8) to (9);
		\draw [style=green, transform canvas={xshift=-0.5mm,yshift=-0.1mm}] (8) to (9);
		\draw [style=yellow, transform canvas={xshift=-1mm,yshift=-0.2mm}] (8) to (9);
		\draw [style=arrow] (8) to (10);
		\draw [style=blue, transform canvas={xshift=0.5mm,yshift=-0.1mm}] (8) to (10);
		\draw [style=pink, transform canvas={xshift=1mm,yshift=-0.2mm}] (8) to (10);
		\draw [style=arrow] (7) to (11);
		\draw [style=arrow] (7) to (12);
		\draw [style=arrow] (6) to (13);
		\draw [style=arrow] (6) to (14);
		\draw [style=arrow] (5) to (15);
		\draw [style=arrow] (5) to (16);
	\end{pgfonlayer}
\end{tikzpicture}
\caption{%
  An annotated computation tree that satisfies
  $\EG\EX(g \land \X(g \land \F\neg g))$.
  The root node is called $\epsilon$,
  its left child $\mathrm r$, the left child of node $\mathrm r$ is $\mathrm{rr}$, and so on.
  Let $p_{\EG}$ correspond to $\EG(p_{\EX})$
  and let $p_{\EX}$ correspond to $\EX(g \land \X(g \land \F\neg g))$.
  The annotation for the verifier strategy is on the left side of nodes,
  and decisions for non mapped states are irrelevant.
  Paths used by the winning strategy are depicted using dashed and colored lines.
  The black dashed path witnesses $p_{\EG}$,
  the blue path witnesses $p_{\EX}$ starting in the root node $\epsilon$,
  the pink path witnesses $p_{\EX}$ starting in the node $\mathrm r$, and so on.
  The pink and blue paths share the tail.
  Note that this particular annotated computation tree
  is \emph{not} the unfolding of the annotated system above:
  in the annotated system the right state maps $q_2 \mapsto (q_3, \bar r)$,
  while in the tree the node $\mathrm{rr}$ has $q_2 \mapsto (q_3, r)$.
  (This is done to illustrate that mapped out tree paths can share the tail.)
  \ak{put on the right side the atm that eats such trees...and produces acc \emph{run} trees?}}
\label{fig:annotated-computation-tree}
\end{subfigure}
\caption{Annotated system and computation tree.}
\label{fig:annotated-model-tree}
\end{figure}
\end{example}

\parbf{Step C}
\ak{unify: nondet and existential atm}
The verifier strategy (encoded in the annotated computation tree) encodes both
the words on which the nondeterministic automata are interpreted and
witnesses of acceptance
(accepting automata paths on those words).
For the encoding in LTL that we will later use,
it is enough to map out the word,
and replace the witness by what it actually means:
that the automaton word satisfies the respective path formula.
I.e.,
if a proposition $p$ corresponding to an existential formula $\E \varphi$ holds in a tree node,
then it will be enough to require that $\varphi$ holds
on the path starting in that node and that follows the directions encoded in the tree.

\begin{example}
Let us look at Figure~\ref{fig:annotated-computation-tree}
to understand the notions of mapped out paths and words.
For every proposition marking a tree node there is a mapped out path.
Consider the root node labeled with $p_{\EX}$ and $p_{\EG}$
and look at $p_{\EG}$ first.
The proposition $p_{\EG}$ corresponds to $\EG p_{\EX}$
and is associated with the NBW in Figure~\ref{fig:nbw-g}
that has the initial state $q'_0$.
We consult the strategy $q'_0 \mapsto (q'_0,r)$
and move in direction $r$ into node $\mathrm r$
(note that the root is labeled $p_{\EX}$ and thus we \emph{can} transit $\edge{q'_0}{r}{q'_0}$).
In the node $\mathrm r$ we consult the strategy $q'_0 \mapsto (q'_0,r)$
and again move in direction $r$ into node $\mathrm{rr}$, and so on.
This way we map out the tree path $\epsilon, \mathrm{r}, \mathrm{rr}, ...$
for $p_{\EG}$ from the root,
and the corresponding mapped out word is $(\bar g,p_{\EX},r) (g,p_{\EX},r)^\omega$.
Now consider the root label $p_{\EX}$ that corresponds to $\EX(g \land \X(g \land \F\neg g))$
and is associated with the NBW in Figure~\ref{fig:nbw-x}.
We consult the strategy $q_0 \mapsto (q_1,r)$
that tells us to move in direction $r$ into node $\mathrm{r}$
(again, note that the root label $g$ makes it possible to transit $\edge{q_0}{g}{q_1}$).
In the node $\mathrm r$ we consult the strategy $q_1 \mapsto (q_2,r)$ and move into node $\mathrm{rr}$,
while the automaton state advances to $q_2$.
From the node $\mathrm{rr}$ the strategy $q_2 \mapsto (q_3,r)$ directs us into node $\mathrm{rrr}$,
then the node $\mathrm{rrr}$ has the strategy $q_3 \mapsto (q_3,\bar r)$, and so on.
Thus, from the root for the proposition $p_{\EX}$ the strategy maps out the path
$\epsilon, \mathrm{r}, \mathrm{rr}, \mathrm{rrr}, \mathrm{rr\bar r}, ...$
and the word $(\bar g, r)(g,r)(g,r)(g,\bar r)(\bar g,\bar r)^\omega$.
\end{example}


Let \emph{two tree paths be equivalent} if they share a tail
(equivalently, if one is the tail of the other).
Our interest will be in equivalence of mapped out tree paths.

There is a simple sufficient condition for two mapped out tree paths to be equivalent:
if they pass through the same node of the annotated computation tree in the same automaton state,
then they have the same future, and are therefore equivalent.
The condition is sufficient but not necessary%
\footnote{%
 Recall that each mapped out tree path corresponds to at least one copy of the verifier that ensures the path is accepting.
 When two verifiers go along the same tree path,
 it can be annotated with different automata states (for example, corresponding to different automata).
 Then such paths do not satisfy the sufficient condition, although they are trivially equivalent.}%
.

\begin{example}
In Figure~\ref{fig:annotated-computation-tree} the blue and pink paths are equivalent,
since they share a tail.
The sufficient condition fires in the node $\mathrm{rrr\bar r}$,
where the tree paths meet in the automaton state $q_3$
\end{example}

The sufficient condition implies that we cannot have more non-equivalent tree paths
passing through a tree node than there are states in all existential word automata,
let us call this number $k$:
$k = sum_{\E\varphi \in F_\textit{exist}} |Q_\varphi|$,
where $Q_\varphi$ are the states of an NBW for $\varphi$.
For each tree node, we assign unique numbers from $\{1,...,k\}$ to equivalence classes,
and thus any two non-equivalent tree paths that go through the same tree node have different numbers.
As this is an intermediate step in our translation, we are wasteful with the labeling:
\li
\-[(1)] for every node $n$,
        we map existential word automata states to numbers (IDs) using
        $id: Q \to \{1,\ldots,k\}$,
        we also use labels
        $d:\{1,\ldots,k\}\to \I$ (``direction to take'') and
        $succ: Q \to Q$ (``successor to take''),
        such that $succ(q) \in \delta\Big(q, \big(l(n),d(id(q))\big)\Big)$, \emph{and}

\-[(2)] we maintain the same state ID along the chosen direction:\\
        $id_n(q) = id_{n\cdot e}(succ_n(q))$,
        where the subscript denotes a node to which the label belongs
        and $e=d_n(id_n(q))$.
\il

Note that every annotated computation tree can be re-labeled in the above way.
Indeed:
the item (1) alone can be viewed as a re-phrasing of the labeling $dec$ that we had before on page \pageref{page:decision-mapping},
and the requirement (2) is satisfiable because a tree path maintains its equivalence class.
This step is shown in Figure~\ref{fig:stepC}, the labels are:
$(out:O\to\bbB,
p:F\to\bbB,
id: Q \to \{1,\ldots,k\},
d: \{1,\ldots,k\}\to \I,
succ: Q \to Q)$.

Figure~\ref{fig:relabeled-tree} shows a re-labeled computation tree of Figure~\ref{fig:annotated-computation-tree}.
\begin{figure}[tb]
\center
\begin{tikzpicture}
	\begin{pgfonlayer}{nodelayer}
		\node [style=invisible] (0) at (1, 1) {...};
		\node [initial below, style=wn] (1) at (0, 0.5) {$\bar g$};
		\node [style=wn] (2) at (-2, 1) {$g$};
		\node [style=wn] (3) at (-3, 2) {$g$};
		\node [style=wn] (4) at (-1, 2) {$\bar g$};
		\node [style=wn] (5) at (-0.5, 3) {$\bar g$};
		\node [style=wn] (6) at (-1.5, 3) {$g$};
		\node [style=wn] (7) at (-2.5, 3) {$\bar g$};
		\node [style=wn] (8) at (-3.5, 3) {$g$};
		\node [style=wn] (9) at (-3.75, 4) {$g$};
		\node [style=wn] (10) at (-3.25, 4) {$\bar g$};
		\node [style=wn] (11) at (-2.75, 4) {$g$};
		\node [style=wn] (12) at (-2.25, 4) {$\bar g$};
		\node [style=wn] (13) at (-1.75, 4) {$g$};
		\node [style=wn] (14) at (-1.25, 4) {$\bar g$};
		\node [style=wn] (15) at (-0.75, 4) {$g$};
		\node [style=wn] (16) at (-0.25, 4) {$\bar g$};
		\node [style=textual] (17) at (-1, -0) {$\color{blue}{_{q_0 \mapsto (1, q_1), 1 \mapsto r}}$};
		\node [style=textual] (18) at (-3, 0.75) {$\color{blue}{_{q_1 \mapsto (1,q_2), 1 \mapsto r}}$};
		\node [style=textual] (19) at (-3, 0.5) {$\color{DeepPink}{_{q_0 \mapsto (1,q_1)~~~~~}}$};
		\node [style=textual] (20) at (-4.25, 2) {$\color{blue}{_{q_2 \mapsto (1,q_3), 1 \mapsto r}}$};
		\node [style=textual] (21) at (-4.25, 1.75) {$\color{DeepPink}{_{q_1 \mapsto (1,q_2)~~~~~}}$};
		\node [style=textual] (22) at (-4.75, 3.25) {$\color{DeepPink}_{q_2 \mapsto (1,q_3)~~~~~}$};
		\node [style=textual] (23) at (-4.75, 3.5) {$\color{blue}_{q_3 \mapsto (1,q_3), 1 \mapsto \bar r}$};
		\node [style=textual] (24) at (-4.25, 1.5) {$\color{Green}{_{q_0 \mapsto (2,q_1), 2 \mapsto r}}$};
		\node [style=textual] (25) at (-4.75, 3) {$\color{Green}_{q_1 \mapsto (2,q_2), 2 \mapsto r}$};
		\node [style=textual] (26) at (-4.75, 2.75) {$\color{DarkGoldenrod}{_{q_0 \mapsto (3,q_1), 3 \mapsto r}}$};
		\node [style=textual] (27) at (-1, -0.25) {$\color{black}{_{q_0' \mapsto (4, q_0'), 4 \mapsto r}}$};
		\node [style=textual] (28) at (-3, 0.25) {$\color{black}{_{q_0' \mapsto (4, q_0'), 4 \mapsto r}}$};
		\node [style=textual] (29) at (-4.25, 1.25) {$\color{black}{_{q_0' \mapsto (4, q_0'), 4 \mapsto r}}$};
		\node [style=textual] (30) at (-4.75, 2.5) {$\color{black}{_{q_0' \mapsto (4, q_0'), 4 \mapsto r}}$};
		\node [style=textual] (31) at (-1.5, -0.5) {$_{p_{\sf EX}, p_{\sf EG}}$};
		\node [style=textual] (32) at (-3.75, -0) {$_{p_{\sf EX}}$};
		\node [style=textual] (33) at (-5, 1) {$_{p_{\sf EX}}$};
		\node [style=textual] (34) at (-5.5, 2.25) {$_{p_{\sf EX}}$};
	\end{pgfonlayer}
	\begin{pgfonlayer}{edgelayer}
		\draw [style=arrow] (1) to node{} (0);
		\draw [style=arrow] (4) to (5);
		\draw [style=arrow] (4) to (6);
		\draw [style=arrow] (3) to (7);
		\draw [style=dashed arrow] (3) to (8);
		\draw [style=blue, transform canvas={yshift=0.3mm,xshift=0.8mm}] (3) to (8);
		\draw [style=pink, transform canvas={yshift=0.8mm,xshift=1.4mm}] (3) to (8);
		\draw [style=green, transform canvas={yshift=-0.3mm,xshift=-0.5mm}] (3) to (8);
		\draw [style=dashed arrow, in=-45, out=135, looseness=0.75] (2) to node{} (3);
		\draw [style=blue, transform canvas={yshift=0.7mm,xshift=0.7mm}] (2) to (3);
		\draw [style=pink, transform canvas={yshift=1.3mm,xshift=1.3mm}] (2) to (3);
		\draw [style=arrow] (2) to (4);
		\draw [style=dashed arrow] (1) to node[right]{} (2);
		\draw [style=blue, transform canvas={yshift=0.7mm,xshift=0.1mm}] (1) to (2);
		\draw [style=dashed arrow] (8) to (9);
		\draw [style=green, transform canvas={xshift=-0.5mm,yshift=-0.1mm}] (8) to (9);
		\draw [style=yellow, transform canvas={xshift=-1mm,yshift=-0.2mm}] (8) to (9);
		\draw [style=arrow] (8) to (10);
		\draw [style=blue, transform canvas={xshift=0.5mm,yshift=-0.1mm}] (8) to (10);
		\draw [style=pink, transform canvas={xshift=1mm,yshift=-0.2mm}] (8) to (10);
		\draw [style=arrow] (7) to (11);
		\draw [style=arrow] (7) to (12);
		\draw [style=arrow] (6) to (13);
		\draw [style=arrow] (6) to (14);
		\draw [style=arrow] (5) to (15);
		\draw [style=arrow] (5) to (16);
	\end{pgfonlayer}
\end{tikzpicture}
\caption{%
  A re-labeled computation tree.
  Notation ``$q_0 \mapsto (1,q_1)$'' means
  $id(q_0) = 1$ and $succ(q_0) = q_1$, and ``$1 \mapsto r$'' means $d$ maps $1$ to $\{r\}$.
  Since the blue and pink paths are equivalent,
  the label $id$ maps the corresponding automata states in the nodes
  to the same number, $1$.
  The IDs of the green and yellow paths differ implying that they are not equivalent and hence do not share the tail (their tails cannot be seen in the figure).
}
\label{fig:relabeled-tree}
\end{figure}

\parbf{Step D}
In the new annotation with labels $(out, p, id, d, succ)$,
labeling $d$ alone maps out the tree path for each ID.
The remainder of the information is mainly there to establish that the corresponding word\ak{which alphabet?}
is accepted by the respective word automaton (equivalently: satisfies the respective path formula).
If we use only $d$, then the only missing information is where the path starts and which path formula it belongs to---the information originally encoded by $p$.


We address these two points by using \emph{numbered} computation trees.
Recall that the annotated computation trees have a \emph{propositional} labeling
$p: F \to \bbB$ that labels nodes with subformulas.
In the numbered computation trees,
we replace $p$ for \emph{existential} subformulas $F_\textit{exist} \subseteq F$
by labeling ${v: F_\textit{exist} \to \{0,...,k\}}$,
where for every existentially quantified formula $\E\varphi \in F_\textit{exist}$ and a tree node $n$:
\li
\- $v_{\E\varphi,n}=0$ encodes that no claim that $\E\varphi$ holds in $n$ is made
   (similarly to the proposition $p_{\E\varphi}$ being false in the annotated tree), whereas

\- a value $v_{\E\varphi,n} \in \{1,...,k\}$
   requires that the word of a tree path with ID $v_{\E\varphi,n}$
   starting in $n$ and that follows $d(v_{\E\varphi,n})$ satisfies $\varphi$,
   i.e.,
   the word corresponding to
   $n,
   n\!\cdot\! d_n(v_{\E\varphi,n}),
   n\!\cdot\! d_n(v_{\E\varphi,n}) \!\cdot\! d_{n\!\cdot\! d_n(v_{\E\varphi,n})}(v_{\E\varphi,n}),...$
   satisfies $\varphi$
   (where $d_n$ denotes $d$ in node $n$).
\il

\begin{example}
The tree in Figure~\ref{fig:relabeled-tree} becomes a numbered computation tree
if we replace the propositional labels $p_{\EX}$ and $p_{\EG}$ with ID numbers as follows.
The root $\epsilon$ has $v_{\EX} = 1$ and $v_{\EG} = 4$,
the left child $\mathrm r$ has $v_{\EX} = 1$,
the node $\mathrm{rr}$ has $v_{\EX} = 2$,
the node $\mathrm{rrr}$ has $v_{\EX} = 3$.
Note that $id(q_0) = v_{\EX}$ and $id(q_0') = v_{\EG}$ whenever those $v$s are non-zero.
The nodes outside of the dashed path have $v_{\EX} = v_{\EG} = 0$,
meaning that no claims about satisfaction of the corresponding path formulas is made.
\end{example}

Initially, we use \emph{ID} labeling $v$ in addition with ($out, id, d, succ, p^\textit{univ}$),
where $p^\textit{univ}$ is a restriction of $p$ on $F_\textit{univ}$,
and then there is no relevant change in the way the (deterministic) verifier works.
I.e., a numbered computation tree can be turned into annotated computation tree, and vice versa,
such that the numbered tree is accepted iff the annotated tree is accepted.

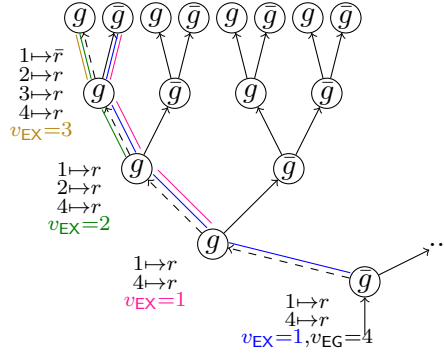
\begin{figure}[tb]
\center
\begin{tikzpicture}
	\begin{pgfonlayer}{nodelayer}
		\node [style=invisible] (0) at (1, 1) {...};
		\node [style=wn, initial below] (1) at (0, 0.5) {$\bar g$};
		\node [style=wn] (2) at (-2, 1) {$g$};
		\node [style=wn] (3) at (-3, 2) {$g$};
		\node [style=wn] (4) at (-1, 2) {$\bar g$};
		\node [style=wn] (5) at (-0.5, 3) {$\bar g$};
		\node [style=wn] (6) at (-1.5, 3) {$g$};
		\node [style=wn] (7) at (-2.5, 3) {$\bar g$};
		\node [style=wn] (8) at (-3.5, 3) {$g$};
		\node [style=wn] (9) at (-3.75, 4) {$g$};
		\node [style=wn] (10) at (-3.25, 4) {$\bar g$};
		\node [style=wn] (11) at (-2.75, 4) {$g$};
		\node [style=wn] (12) at (-2.25, 4) {$\bar g$};
		\node [style=wn] (13) at (-1.75, 4) {$g$};
		\node [style=wn] (14) at (-1.25, 4) {$\bar g$};
		\node [style=wn] (15) at (-0.75, 4) {$g$};
		\node [style=wn] (16) at (-0.25, 4) {$\bar g$};
		\node [style=textual] (17) at (-0.75, 0.25) {$\color{black}{_{1 \mapsto r}}$};
		\node [style=textual] (18) at (-0.75, -0) {$\color{black}{_{4 \mapsto r}}$};
		\node [style=textual] (19) at (-0.75, -0.25) {$_{{\color{blue}{v_{\sf EX}=1}}, v_{\sf EG}=4}$};
		\node [style=textual] (20) at (-2.75, 0.5) {$\color{black}{_{4 \mapsto r}}$};
		\node [style=textual] (21) at (-2.75, 0.75) {$\color{black}{_{1 \mapsto r}}$};
		\node [style=textual] (22) at (-2.75, 0.25) {$_{{\color{DeepPink}{v_{\sf EX}=1}}}$};
		\node [style=textual] (23) at (-3.75, 1.5) {$\color{black}{_{4 \mapsto r}}$};
		\node [style=textual] (24) at (-3.75, 1.25) {$_{{\color{Green}{v_{\sf EX}=2}}}$};
		\node [style=textual] (25) at (-3.75, 2) {$\color{black}{_{1 \mapsto r}}$};
		\node [style=textual] (26) at (-3.75, 1.75) {$\color{black}{_{2 \mapsto r}}$};
		\node [style=textual] (27) at (-4.25, 2.75) {$\color{black}{_{4 \mapsto r}}$};
		\node [style=textual] (28) at (-4.25, 2.5) {$_{{\color{DarkGoldenrod}{v_{\sf EX}=3}}}$};
		\node [style=textual] (29) at (-4.25, 3.5) {$\color{black}{_{1 \mapsto \bar r}}$};
		\node [style=textual] (30) at (-4.25, 3.25) {$\color{black}{_{2 \mapsto r}}$};
		\node [style=textual] (31) at (-4.25, 3) {$\color{black}{_{3 \mapsto r}}$};
	\end{pgfonlayer}
	\begin{pgfonlayer}{edgelayer}
		\draw [style=arrow] (1) to node{} (0);
		\draw [style=arrow] (4) to (5);
		\draw [style=arrow] (4) to (6);
		\draw [style=arrow] (3) to (7);
		\draw [style=dashed arrow] (3) to (8);
		\draw [transform canvas={yshift=0.3mm,xshift=0.8mm}, style=blue] (3) to (8);
		\draw [transform canvas={yshift=0.8mm,xshift=1.4mm}, style=pink] (3) to (8);
		\draw [transform canvas={yshift=-0.3mm,xshift=-0.5mm}, style=green] (3) to (8);
		\draw [style=dashed arrow, in=-45, out=135, looseness=0.75] (2) to node{} (3);
		\draw [transform canvas={yshift=0.7mm,xshift=0.7mm}, style=blue] (2) to (3);
		\draw [transform canvas={yshift=1.3mm,xshift=1.3mm}, style=pink] (2) to (3);
		\draw [style=arrow] (2) to (4);
		\draw [style=dashed arrow] (1) to node[right]{} (2);
		\draw [transform canvas={yshift=0.7mm,xshift=0.1mm}, style=blue] (1) to (2);
		\draw [style=dashed arrow] (8) to (9);
		\draw [transform canvas={xshift=-0.5mm,yshift=-0.1mm}, style=green] (8) to (9);
		\draw [transform canvas={xshift=-1mm,yshift=-0.2mm}, style=yellow] (8) to (9);
		\draw [style=arrow] (8) to (10);
		\draw [transform canvas={xshift=0.5mm,yshift=-0.1mm}, style=blue] (8) to (10);
		\draw [transform canvas={xshift=1mm,yshift=-0.2mm}, style=pink] (8) to (10);
		\draw [style=arrow] (7) to (11);
		\draw [style=arrow] (7) to (12);
		\draw [style=arrow] (6) to (13);
		\draw [style=arrow] (6) to (14);
		\draw [style=arrow] (5) to (15);
		\draw [style=arrow] (5) to (16);
	\end{pgfonlayer}
\end{tikzpicture}
\caption{%
  Numbered computation tree with redundant annotations removed.}
\label{fig:lean-numbered-tree}
\end{figure}

\ak{define 'tree path satisfies LTL', remove 'word' in many places}

Now we observe that the labelings $id$ and $succ$
are used only to witness that each word mapped out by $d$
is accepted by respective existential word automata.
I.e., $id$ and $succ$ make the verifier deterministic.
Let us remove $id$ and $succ$ from the labeling.
We call such trees \emph{lean-numbered computation trees};
they have labeling
$
(
out:O \to \bbB, 
v:F_\textit{exist} \to \{0,...,k\},
d:\{1,...,k\} \to \I,
p^\textit{univ}: F_\textit{univ} \to \bbB
)
$.
This makes the verifier nondeterministic.
We still have the property that
every accepting annotated computation tree can be turned into
an accepting lean-numbered computation tree,
and vice versa.
This step is shown in Figure~\ref{fig:stepD};
an example of a lean-numbered computation tree is in Figure~\ref{fig:lean-numbered-tree}.

\parbf{Step E (the final step)}
We show how labeling $(out,v,d,p^\textit{univ})$ 
allows for using LTL formulas instead of directly using automata for the acceptance check.
The encoding into LTL is as follows.
\li
\- For each existentially quantified formula $\E\varphi$,
   we introduce the following LTL formula
   (recall that $v_{\E\varphi} = 0$ encodes that we do \emph{not} claim that $\E\varphi$ holds in the current     tree node,
    and $v_{\E\varphi} \neq 0$ means that $\E\varphi$ does hold
    and $\varphi$ holds if we follow $v_{\E\varphi}$-numbered directions):
   \begin{equation}\label{eq:ltl-existential}
   \bigwedge_{j \in \{1,...,k\}} \G\Big[ v_{\E\varphi} = j ~\impl~ \big( \G d_j \impl \varphi'\big) \Big],
   \end{equation}
   where $\varphi'$ is obtained from $\varphi$
   by replacing the subformulas of the form $\E\psi$ by $v_{\E\psi} \neq 0$
   and the subformulas of the form $\A\psi$ by $p_{\A\psi}$.

\- For each subformula of the form $\A\varphi$, we simply take
   \begin{equation}\label{eq:ltl-universal}
   \G\Big[ p_{\A\varphi} ~\impl~ \varphi' \Big],
   \end{equation}
   where $\varphi'$ is obtained from $\varphi$ as before.

\- Finally, the overall LTL formula is the conjunction
   \begin{equation}\label{eq:ltl-full}
   \boxed{
   \Phi' \land \bigwedge_{\E\varphi \in F_\textit{exist}} \text{Eq.}\ref{eq:ltl-existential} ~\land \bigwedge_{\A\varphi \in F_\textit{univ}} \text{Eq.}\ref{eq:ltl-universal}
   }
   \end{equation}
   where the Boolean formula $\Phi'$ is obtained by replacing in the original \CTLstar formula
   every $\E\varphi$ by $v_{\E\varphi} \neq 0$ and every $\A\varphi$ by $p_{\A\varphi}$.
\il

\begin{example}\label{ex:ctlstar}
Let $I = \{r\}$, $O=\{g\}$.
Consider the \CTL formula
$$
\EG \neg g \land \AG\EF \neg g \land \EF g.
$$
The sum of states of individual NBWs is $5$
(assuming the natural translations),
so we introduce integer propositions $v_{\EF\!\bar{g}}$, $v_{\EG\!\bar{g}}$, $v_{\EF\!g}$, ranging over $\{0,...,5\}$,
and five Boolean propositions $d_1$, ..., $d_5$;
we also introduce the Boolean proposition $p_{\AG(v_{\EF\!\bar{g}}\neq 0)}$.
The LTL formula is:
\begin{align*}
&~~~v_{\EG\!\bar{g}} \neq 0 \land p_{\AG(v_{\EF\!\bar{g}}\neq 0)} \land v_{\EF\!g} \neq 0 ~\land \\
&\bigwedge_{j \in \{1...5\}}\G \left[
\begin{aligned}
& v_{\EF\!\bar{g}}=j ~\impl~ (\G d_j \impl \F \neg g) \\
& v_{\EG\!\bar{g}}=j ~\impl~ (\G d_j \impl \G \neg g) \\
& v_{\EF\!g}=j ~\impl~ (\G d_j \impl \F g)
\end{aligned}
\right] ~\land\\
&~~~\G\left[p_{\AG(v_{\EF\!\bar{g}}\neq 0)} \impl \G(v_{\EF\!\bar{g}}\neq 0)\right].
\end{align*}
Figure~\ref{fig:ctlstar:system} shows a system satisfying the LTL specification.

\begin{figure}[bt]
\center
\begin{tikzpicture}[->,>=stealth',shorten >=1pt,auto,node distance=3.2cm]
  \tikzset{every state/.style={minimum size=6mm,inner sep=0.0mm}, initial text={}}
  \tikzstyle{every edge} = [align=center,draw=black]

  \node[state,initial below] (0)
  [label={left:\specialcellC{$p_{\AG(v_{\EF\bar g}\neq 0)}$\\$v_{\EF\!\bar{g}}=v_{\EG\!\bar{g}}=2,d_2=\neg r$\\$v_{\EF\!g}=3, d_3=r$}},
   label={below right:$t_0$}] {$\neg g$};
  \node[state] (1)
  [right of=0,
  label={right:\specialcellL{$v_{\EF\!\bar{g}}=2,d_2=r$\\$v_{\EG\!\bar{g}}=v_{\EF\!g}=0$}},
  label={below left:$t_1$}] {$g$};

  \path
  (0) edge [bend left=10] node {$r$} (1)
  (1) edge [bend left=10] node {$r$} (0)
  (1) edge [loop above] node {$\neg r$} (1)
  (0) edge [loop above] node {$\neg r$} (0);
\end{tikzpicture}
\caption{A Moore machine for Example~\ref{ex:ctlstar}.
The witness for $\EG\neg g$ is:
$v_{\EG\!\bar{g}}(t_0)=2$, we move along $d_2=\neg r$ looping in $t_0$, thus the witness is $(t_0)^\omega$.
The witness for $\EF g$:
since $v_{\EF\!g}(t_0)=3$, we move along $d_3=r$ from $t_0$ to $t_1$,
where $d_3$ is not restricted, so let $d_3=\neg r$ (not drawn) and then the witness is $t_0 (t_1)^\omega$.
The satisfaction of $\AGEF \neg g$ means that every state has $v_{\EF\!\bar{g}} \neq 0$, which is true.
In $t_0$ we have $\neg g$, so $\EF \neg g$ is satisfied;
for $t_1$ we have $v_{\EF\!\bar{g}}(t_1)=2$ hence we move $t_1 \trans{r} t_0$ and $\EF \neg g$ is also satisfied.
}
\label{fig:ctlstar:system}
\end{figure}
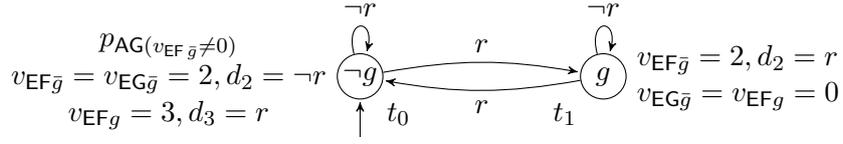
\end{example}

\begin{remark}[We \emph{need} propositions for universal subformulas]
It is intuitively clear that we need new propositions for existential subformulas.
But it is tempting to believe that we can skip introducing new propositions for universal subformulas
and directly use the subformulas instead of the propositions.
This is wrong.
Consider the \CTLstar formula $\AFAG g$.
Our reduction produces the LTL formula
$$
p_{\AF} \land \G [p_{\AF} \impl \F p_{\AG}] \land \G [p_{\AG} \impl \G g].
$$
If we substitute the new propositions with what they express ($p_{\AF}$ by $\F p_{\AG}$ and $p_{\AG}$ by $\G g$),
then we get $\FG g$.
But $\AFAG g$ is different from $\AFG g$.
\end{remark}

\ak{restore this note}

\smallskip

The whole discussion leads us to the theorem.
\begin{theorem}
  Let $I$ be the set of inputs and $O$ be the set of outputs,
  and $\Phi_\LTL$ be derived from a given $\Phi_\CTLstar$ as described above.
  Then:
  $$
  \textit{
  $\Phi_\CTLstar$ is realisable
  $~\Iff~$
  $\Phi_\LTL$ is realisable.
  }
  $$
\end{theorem}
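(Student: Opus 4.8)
The plan is to prove both implications by reading off the chain of equivalences set up in Steps A--E, and then to argue separately that \emph{realisability} (as opposed to mere satisfiability over some fixed tree) is preserved, i.e.\ that a finite witnessing system on one side yields a finite witnessing system on the other.

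For the direction $\Phi_\CTLstar \Impl \Phi_\LTL$, I would start from a finite Moore system $M=(I,O,T,t_0,\tau,out)$ whose computation tree satisfies $\Phi_\CTLstar$. First I run the bottom-up \CTLstar labelling (page~\pageref{page:defs:bottom-up-mc}) to decorate each system \emph{state} with the truth values of all state subformulas $F=F_\textit{univ}\cupdot F_\textit{exist}$; this gives $p^\textit{univ}$ directly, since the truth of an $\A$-subformula is a property of the state. Next I invoke memoryless determinacy of the verifier's acceptance game: because the $1$-AHW non-emptiness problem reduces to a $1$-Rabin game (the Observation in Section~\ref{bosy:ctlstar-synt:aht}), for which the existential player has memoryless winning strategies, the verifier strategy $dec:T\times Q\to\I\times Q$ may be taken positional. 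This is exactly Step~B. The remaining work is to show the resulting annotation is \emph{regular}, so that it is the unfolding of a finite system $M'$.

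Then comes what I expect to be the crux. I would form $M'$ as the product of $M$ with a finite memory that tracks, for every automaton state $q\in\bigcup_\varphi Q_\varphi$, which witness \emph{thread} (if any) currently occupies $q$. The key combinatorial fact (Step~C) is that two mapped-out paths meeting at the same node in the same automaton state have identical futures and may be merged, so at every node at most $k=\sum_{\E\varphi\in F_\textit{exist}}|Q_\varphi|$ threads are simultaneously alive, one per automaton state. Because threads merge, an identifier $j\in\{1,\dots,k\}$ can be carried along each thread, the direction $d_n(j)$ is well defined per alive thread, and a fresh $\E\varphi$-witness spawned at a node reuses the identifier of the thread already sitting in the NBW initial state $q_0^\varphi$. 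This makes $id,d,succ$ (Step~C) and then $v$ (Step~D) finite-memory functions of the pair (system state, memory), so $M'$ is finite; dropping $id,succ$ to obtain the lean labelling $(out,v,d,p^\textit{univ})$ does not change the produced tree. By Step~E the lean labelling of $M'$ satisfies every conjunct of Eq.~\ref{eq:ltl-full} on every path, hence $M'\models\Phi_\LTL$. The well-definedness of $d$ when several alive threads share a node but must leave it in different directions---resolved precisely because each identifier owns its own direction slot---is the delicate point, and the finite-memory routing of all existential witnesses at once is the main obstacle of the whole proof.

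For the converse $\Phi_\LTL\Impl\Phi_\CTLstar$, I would take a finite Moore system $M'$ over the extended outputs all of whose computations satisfy $\Phi_\LTL$, and let $M$ be its projection onto $O$. The goal $M\models\Phi_\CTLstar$ follows from a single soundness lemma proved by induction on path-quantifier nesting depth: at every node $n$ of the lean-numbered computation tree, $p_{\A\varphi}$ true implies $n\models\A\varphi$, and $v_{\E\varphi}\neq 0$ implies $n\models\E\varphi$. In the base case $\varphi'=\varphi$, and Eq.~\ref{eq:ltl-universal} (read as an LTL property of all root-paths) forces every path leaving $n$ to satisfy $\varphi$, while Eq.~\ref{eq:ltl-existential} forces the unique path following the $d_j$-directions from $n$ to satisfy $\varphi$, yielding the required universal, respectively existential, witness. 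The inductive step uses that $\varphi'$ is obtained from $\varphi$ in positive normal form, so the substituted propositions occur \emph{monotonically}; combined with the induction hypothesis (proposition true $\Rightarrow$ subformula true) this upgrades ``$\varphi'$ holds under the proposition labelling'' to ``$\varphi$ holds under the real truth values''. Applying the lemma at the root together with the top-level conjunct $\Phi'$ of Eq.~\ref{eq:ltl-full} gives $t_0\models\Phi_\CTLstar$. This direction is largely routine once the LTL-over-trees semantics and the monotonicity of the positive normal form are pinned down.
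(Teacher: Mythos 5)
Your overall architecture is the paper's: the backward direction via a soundness lemma for the propositional annotation (induction on path-quantifier nesting depth, using positivity of the release-PNF), and the forward direction via the Step A--E chain (positional verifier strategy from the one-pair Rabin game, mapped-out witness paths, the bound $k$ on simultaneously live threads per node, the lean labelling $(out,v,d,p^\textit{univ})$). The backward direction is correct and is in fact spelled out more carefully than in the paper, which only gestures at Step E's soundness.

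The gap is where you yourself place it, in the forward direction, and the mechanism you propose does not close it. Your finite memory records, for every automaton state $q$, ``which witness thread (if any) currently occupies $q$'' --- at most one thread per automaton state --- and you only merge a freshly spawned witness into a thread already sitting in $q_0^\varphi$. But two \emph{already live} threads with distinct IDs $j_1 \neq j_2$ can land in the same automaton state at the same node (same direction chosen at the parent, same successor state under $succ$). Your memory cannot represent this situation; and if you absorb one thread into the other, the conjunct $\G\big[v_{\E\varphi}=j_2 \impl (\G d_{j_2} \impl \varphi')\big]$ of Eq.~\ref{eq:ltl-existential} still quantifies over the path that follows $d_{j_2}$ from the original spawn node, so once $j_2$ is freed or reassigned that path is severed and the conjunct may fail. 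Either you keep $d_{j_2}$ shadowing $d_{j_1}$ forever and then argue separately that IDs can still be recycled (which is exactly the equivalence-class bookkeeping of Step C that you elided), or --- more simply --- you drop the explicit product construction: it suffices to exhibit the (possibly non-regular) annotated computation tree of $M$ all of whose paths satisfy $\Phi_\LTL$, and then invoke the finite-model property of \LTL\ realisability (non-emptiness of the universal co-B\"uchi tree automaton for $\Phi_\LTL$ is witnessed by a regular tree) to recover a finite $M'$. That is the route the paper implicitly takes; it localises the only genuinely combinatorial claim --- that $k$ IDs suffice at every node --- to the tree level, where the paper's own (admittedly informal) argument about equivalence of mapped-out paths is meant to apply.
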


\subsection{Complexity}
The translated LTL formula $\Phi_\text{LTL}$, due to Eq.~\ref{eq:ltl-existential},
in the worst case, can be exponentially larger than $\Phi_\CTLstar$,
$|\Phi_\LTL| = 2^{\Theta(|\Phi_\CTLstar|)}$.
Yet, the upper bound on the size of $UCW_{\Phi_\LTL}$ is $2^{\Theta(|\Phi_\CTLstar|)}$
rather than $2^{\Theta(|\Phi_\LTL|)}=2^{2^{\Theta(|\Phi_\CTLstar|)}}$,
because:
\li
\- the size of the UCW is additive in the size of the UCWs of the individual conjuncts, and
\- each conjunct UCW has almost the same size as a UCW of the corresponding subformula,
   since, for every LTL formula $\varphi$, $|UCW_{\G[p \impl (\G\!d \impl \varphi)]}| = |UCW_\varphi|+1$.%
   \footnote{To see this, recall that we can get $UCW_\psi$ by
     treating $NBW_{\neg\psi}$ as a UCW,
     and notice that $|NBW_{\F[p\land\G\!d\land\neg\varphi]}| = |NBW_{\neg\varphi}|+1$.}
\il
Determinising $UCW_{\Phi_\LTL}$ gives a parity game with up to $2^{2^{\Theta(|\Phi_\CTLstar|)}}$ states and
$2^{\Theta(|\Phi_\CTLstar|)}$ priorities~\cite{Schewe/09/determinise,Piterman07,Safra}.
The recent quasipolynomial algorithm~\cite{DBLP:conf/stoc/CaludeJKL017} for solving parity games
has a particular case for $n$ states and $log(n)$ many priorities,
where the time cost is polynomial in the number of game states.
This gives us $O(2^{2^{|\Phi_\CTLstar|}})$-time solution to the derived LTL synthesis problem.
The lower bound comes from the 2EXPTIME-completeness of the \CTLstar synthesis problem~\cite{RosnerThesis}.

\begin{theorem}
  Our solution to the \CTLstar synthesis problem
  via the reduction to \LTL synthesis
  is 2EXPTIME-complete.
\end{theorem}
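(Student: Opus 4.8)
The plan is to establish 2EXPTIME-completeness by proving the two bounds separately: membership in 2EXPTIME (upper bound) and 2EXPTIME-hardness (lower bound). The correctness theorem above already guarantees that the reduction, composed with any sound-and-complete LTL synthesiser, decides $\CTLstar$ realisability; so the whole content is the running-time analysis, most of which is prepared in the preceding Complexity subsection and only needs to be assembled.

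For the upper bound, I would track the sizes of the objects produced along the game-based LTL synthesis pipeline of Section~\ref{defs:synthesis-approaches} when applied to $\Phi_\LTL$. The only delicate point---already isolated in the Complexity subsection---is that, although $|\Phi_\LTL| = 2^{\Theta(|\Phi_\CTLstar|)}$, the universal co-B\"uchi automaton $UCW_{\Phi_\LTL}$ is only \emph{singly} exponential, $|UCW_{\Phi_\LTL}| = 2^{\Theta(|\Phi_\CTLstar|)}$. I would justify this from the conjunctive shape of Eq.~\ref{eq:ltl-full}: first, the UCW of a conjunction is, up to an additive constant, the union of the UCWs of its conjuncts, so its size is the \emph{sum} of the conjunct sizes; second, each conjunct has the form $\G[p \impl (\G d \impl \varphi)]$ or $\G[p \impl \varphi]$, where $\varphi$ is a path formula occurring in $\Phi_\CTLstar$ and negation is applied only to this $\varphi$ rather than to the whole exponential formula, whence $|UCW_{\G[p \impl (\G d \impl \varphi)]}| = |UCW_\varphi| + 1$ as computed in the footnote. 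Summing over the $O(|\Phi_\CTLstar|)$ conjuncts, each carrying a path subformula of size $O(|\Phi_\CTLstar|)$, keeps the total at $2^{\Theta(|\Phi_\CTLstar|)}$.

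I would then run the remainder of the pipeline on this singly-exponential UCW: determinise it into a deterministic parity automaton~\cite{Schewe/09/determinise,Piterman07,Safra}, read the automaton as a parity game, and solve the game. Determinisation yields a game with $2^{2^{\Theta(|\Phi_\CTLstar|)}}$ states but only $2^{\Theta(|\Phi_\CTLstar|)}$ priorities. The crucial arithmetic step is that this lands exactly in the regime where the number of priorities is logarithmic in the number of states, since $2^{\Theta(|\Phi_\CTLstar|)} = \log\big(2^{2^{\Theta(|\Phi_\CTLstar|)}}\big)$; hence the quasipolynomial algorithm of~\cite{DBLP:conf/stoc/CaludeJKL017} runs in time polynomial in the state count, i.e.\ $O\big(2^{2^{|\Phi_\CTLstar|}}\big)$. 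Together with the ``iff'' correctness theorem, this places $\CTLstar$ synthesis via the reduction in 2EXPTIME.

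For the lower bound, I would observe that $\CTLstar$ subsumes LTL and invoke the 2EXPTIME-hardness of the $\CTLstar$ (indeed, already the LTL) synthesis problem recalled earlier in the excerpt; since our procedure is a complete decision procedure for $\CTLstar$ realisability, it cannot beat this bound, which gives 2EXPTIME-hardness and thus completeness. The main obstacle is precisely the size bookkeeping of the second paragraph: one must verify that the conjunctive shape of Eq.~\ref{eq:ltl-full} genuinely confines every negation to a polynomial-sized path subformula, so that the second exponential never materialises inside $UCW_{\Phi_\LTL}$ (a naive dualisation of the whole formula \emph{would} incur it, as the unrealisability discussion notes); the rest of the pipeline is entirely standard.
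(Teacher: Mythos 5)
Your proposal is correct and follows essentially the same route as the paper's own argument: the Complexity subsection preceding the theorem is the paper's proof, and it likewise bounds $|UCW_{\Phi_\LTL}|$ at $2^{\Theta(|\Phi_\CTLstar|)}$ via additivity over the conjuncts of Eq.~\ref{eq:ltl-full} and the identity $|UCW_{\G[p \impl (\G d \impl \varphi)]}| = |UCW_\varphi|+1$, then determinises to a parity game with $2^{2^{\Theta(|\Phi_\CTLstar|)}}$ states but only $2^{\Theta(|\Phi_\CTLstar|)}$ priorities, applies the quasipolynomial algorithm in its ``$\log(n)$ priorities'' regime, and takes the lower bound from the 2EXPTIME-hardness of \CTLstar (already LTL) synthesis. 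The one slip is your claim of ``$O(|\Phi_\CTLstar|)$ conjuncts'': Eq.~\ref{eq:ltl-existential} ranges over $j \in \{1,\dots,k\}$ with $k$ up to $2^{\Theta(|\Phi_\CTLstar|)}$, so there are exponentially many conjuncts --- but a sum of exponentially many singly-exponential automata is still singly exponential, so the bound and the rest of your argument stand.
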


\subsection*{Minimality}
Although the reduction to LTL synthesis preserves the complexity class,
it does not preserve the minimality of the systems.
Consider an existentially quantified formula $\E\varphi$.
A system path satisfying the formula may pass through the same system state more than once
and exit it in different directions.%
  \footnote{%
  E.g., in Figure~\ref{fig:annotated-model} the system path $t_0 t_1 t_1 (t_0)^\omega$,
  satisfying $\EX(g\land\X(g\land\F\neg g))$,
  double-visits state $t_1$ and exits it first in direction $r$ and then in $\neg r$,
  where $t_0$ is the system state on the left and $t_1$ is on the right.}
Our encoding forbids that.%
  \footnote{%
  Recall that with $\E\varphi$ we associate a number $v_{\E\varphi}$,
  such that whenever in a system state $v_{\E\varphi}$ is non-zero,
  then the path mapped out by $v_{\E\varphi}$-numbered directions satisfies the path formula $\varphi$.
  Therefore whenever $v_{\E\varphi}$-numbered path visits a system state,
  it exits it in the \emph{same} direction $d_{v_{\E\varphi}}$.
  }
I.e., in any system satisfying the derived LTL formula,
a system path mapped out by an ID has a unique outgoing direction from every visited state.
As a consequence, such systems are less concise.\ak{what is the worst case blowup?}
This is illustrated in the following example.

\begin{example}[Non-minimality]\label{ex:ctlstar:nonminimal}
Let $I = \{r\}$, $O=\{g\}$,
and consider the \CTLstar formula
$$
\EX(g \land \X(g \land \F\neg g))
$$
The NBW automaton for the path formula has 5 states (Figure~\ref{fig:nbw-x}),
so we introduce the integer proposition $v$ ranging over $\{0,...,5\}$
and Boolean propositions $d_1$, $d_2$, $d_3$, $d_4$, $d_5$.
The LTL formula is
$$
~~~v \neq 0 ~\land
\bigwedge_{j \in \{1...5\}}\!\!\!\!
\G\big[v=j ~\impl~ (\G d_j \impl \X(g \land \X(g \land \F\neg g)))\big]
$$
A smallest system for this LTL formula is in Figure~\ref{fig:ctlstar:nonminimal:system:ltl}.
It is of size is $3$,
while a smallest system for the original \CTLstar formula is of size $2$
(Figure~\ref{fig:annotated-model}).
\begin{figure}[tb]\center
\begin{tikzpicture}[->,>=stealth',shorten >=1pt,auto,node distance=2.2cm]
  \tikzset{every state/.style={minimum size=6mm,inner sep=0.0mm}, initial text={}}
  \tikzstyle{every edge} = [align=center,draw=black]

  \node[state,initial] (0)
  [label={above right:\specialcellC{$v=1$\\$d_1=\neg r$}},
   label={below:$t_0$}] {$\neg g$};

  \node[state] (1)
  [right of=0,
  label={right:{$d_1=r$}},
  label={below left:$t_1$}] {$g$};

  \node[state] (2)
  [below of=1,
  label=left:{$t_2$},
  label=right:{$d_1=r$}
  ] {$g$};

  \path
  (0) edge node[below] {$\neg r$} (1)
  (1) edge [loop above] node {$\neg r$} (1)
  (1) edge node {$r$} (2)
  (2) edge node {$1$} (0)
  (0) edge [loop above] node {$r$} (0);
\end{tikzpicture}
\caption{%
  A smallest Moore machine satisfying the LTL formula from Example~\ref{ex:ctlstar:nonminimal}.}
\label{fig:ctlstar:nonminimal:system:ltl}
\end{figure}
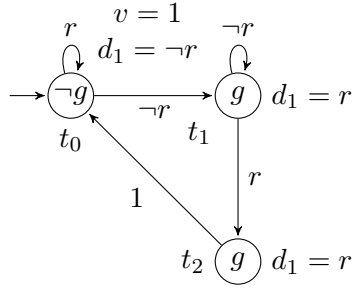
\end{example}

\subsection{Bounded Reduction}

While we have realisability equivalence for sufficiently large $k$, $k$ is a parameter,
where much smaller $k$ might suffice.
In the spirit of Bounded Synthesis,
it is possible to use smaller parameters in the hope of finding a system.
These systems might be of interest in that they guarantee a limited entanglement of different mapped out paths,
as they cap the number of such paths that can go through the same node of a tree.
Such systems are therefore simple wrt.\ this metric,
and this metric is independent of the automaton representation.
(As opposed to a lower bound for $k$
 that depends on the existential automata.)

\ak{CTL?}

\section{Checking Unrealisability of \CTLstar} \label{sec:ctlstar-unreal}

What does a witness of unrealisability for \CTLstar look like?
I.e., when a formula is unrealisable,
is there an ``environment model'', like in the LTL case,
which disproves any system model?

The LTL formula and the annotation shed light on this:
the system for the dualised case is a strategy how to choose original inputs
(depending on the history of $v$, $d$, $p$, and original outputs),
such that any path in the resulting tree violates the original LTL formula.
I.e., the spoiler strategy is a tree, whose nodes are labeled with original inputs,
and whose directions are defined by $v$, $d$, $p$, and original outputs.


\begin{example}
Consider an unrealisable \CTLstar specification:
$\AG g \land \EFX\neg g$ with inputs $\{r\}$ and outputs $\{g\}$.
After reduction to LTL we get the specification:
inputs $\{r\}$, outputs $\{g,p_{\AG\!g},v_{\EFX\!\bar g},d_1,d_2\}$,
and the LTL formula
$$
p_{\AG\!g} \land v_{\EFX\!\bar g} \neq 0 \land
\G\big[ p_{\AG\!g} \impl \G g \big] \land\!\!\!
\bigwedge_{j \in \{1,2\}}\!\!\!\!\!\G\big[ (v_{\EFX\!\bar g} = j \land \G d_j)  \impl  \F\!\X\neg g \big].
$$
The dual specification is:
the system type is Mealy,
new inputs $\{g,p_{\AG\!g},v_{\EFX\!\bar g},d_1,d_2\}$,
new outputs $\{r\}$,
and the LTL formula is the negated original LTL:
$$
p_{\AG\!g} \land v_{\EFX\!\bar g} \neq 0 \land
\G\big[ p_{\AG\!g} \impl \G g \big] ~\impl\!
\bigvee_{j \in \{1,2\}}\!\!\!\!\!\F\big[ (v_{\EFX\!\bar g} = j \land \G d_j)  \land  \G\!\X g \big].
$$
This dual specification is realisable, and it exhibits, e.g., the following witness of unrealisability:
the output $r$ follows $d_1$ or $d_2$ depending on input $v_{\EFX\!\bar g}$.
(The new system needs two states.
 State $1$ describes ``I've seen $v_{\EFX\!\bar g}\in\{0,1\}$ and I output $r$ equal to  $d_1$'';
 from state $1$ we irrevocably go into state $2$ once $v_{\EFX\!\bar g}=2$ and make $r$ equal to $d_2$).
\end{example}

Although our encoding allows for checking unrealisability of \CTLstar
(via dualising the converted LTL specification),
this approach suffers from a very high complexity.
Recall that the LTL formula can become exponential in the size of a \CTLstar formula,
which could only be handled
because it became a big conjunction with sufficiently small conjuncts.
After negating it becomes a large disjunction,
which makes the corresponding UCW doubly exponential in the size of the initial \CTLstar specification
(vs.\ single exponential for the non-negated case).
This seems---there may be a more clever analysis of the formula structure---%
to make the unrealisability check via reduction to LTL cost three exponents in the worst case
(vs.\ 2EXP by the standard approach).

What one could try is to let the new system player in the dualised game
choose a number of disjunctive formulas to follow,
and allow it to revoke the choice finitely many times.\ak{clarify}
This is conservative:
if following $m$ different disjuncts in the dualised formula is enough to win,
then the new system wins.

Alternatively, one could try to synthesise environment model for parts of the disjunction
increasing them until all disjunctions are used.
This is precise.

\ak{
\hl{%
  Is there a bounded procedure which is like $2^{2^{|\Phi_\CTLstar|}\cdot 2^{\textit{smallest size}}}$
  where the 2exp is in the SMT solver? (just like in the bounded synthesis?)}
}

\section{Experiments}\label{sec:experiments}

We implemented the \CTLstar to LTL converter {\small\tt ctl\_to\_ltl.py} inside PARTY~\cite{party}.
PARTY also has two implementations of Bounded Synthesis~\cite{BS},
one encodes the problem into SMT and another reduces the problem to safety games.
Also, PARTY has a \CTLstar synthesiser based on Bounded Synthesis idea
that encodes the problem into SMT (presented in Chapter~\ref{chap:bosy:ctlstar}).
In this section we compare those three solvers,
where the first two solvers take LTL formulas produced by our converter.
All logs and the code are available in repository {\small\url{https://github.com/5nizza/party-elli}},
the branch ``cav17''.


\parbf{Specifications}
We created realisable arbiter-like \CTLstar specifications.
The number after the specification name indicates the number of clients.
All specifications have LTL properties in the spirit of ``every request must eventually be granted'' and the mutual exclusion of the grants,
plus some \CTLstar properties.
Below we provide details.
\li
\- ``res\_arbiter'' has the properties:\\
   $\bigwedge_{i \neq j} \AG \neg(g_i \land g_j) ~\land~ \bigwedge_{i} \AG (r_i \impl \F g_i) ~\land$\\
   $\bigwedge_{i} \AG\EFG (\neg g_i)$.

\- ``loop\_arbiter'' has the properties:\\
    $\bigwedge_{i \neq j} \AG \neg(g_i \land g_j) ~\land~ \bigwedge_{i} \AG (r_i \impl \F g_i) ~\land$\\
    $\bigwedge_{i} \AG\EFG (\neg g_i) \land$\\
    $\bigwedge_{i} \EFG g_i$.
   
\- ``postp\_arbiter'' has the properties:\\
   $\bigwedge_{i \neq j} \AG \neg(g_i \land g_j) ~\land~ \bigwedge_{i} \AG (r_i \impl \F g_i) ~\land$ \\
   $\bigwedge_{i} \neg g_i ~\land$\\
   $\bigwedge_{i} \AGEF(\neg g_i \land r_i \land \X(\neg g_i \land r_i \land X \neg g_i ))$.

\- ``prio\_arbiter'' has the properties:\\
   $\A\big[\GF\neg rm \impl $\\
   $~~~~~~\bigwedge_{i \neq j} \G \neg(g_i \land g_j) \land \G\neg(g_i \land gm) \land \bigwedge_{i} \G (r_i \impl \F g_i) \land \G(rm \impl \F gm) \land $\\
   $~~~~~\G(rm \impl \X(\bigwedge_i \neg g_i \U gm)\big]\land$\\
   $\bigwedge_{i} \AG\EFG (\neg g_i) \land \AG\EFG (\neg gm)$.\\
   (It additionally has the prioritised request input $rm$ and grant output $gm$.)

\- ``user\_arbiter'' contains only existential properties
   that specify different sequences of requests and grants.
\il

\parbf{LTL formula and automata sizes}
Our experiments confirm that the LTL formulas increase $\approx |Q|$ times when $k$ increases from $1$ to $|Q|$,
just as described by Eq.~\ref{eq:ltl-existential}.
But the increase does not incur the exponential blow up of the UCWs:
they also increase only $\approx |Q|$ times (just like the theory predicts).

\parbf{Synthesis time}
The table below compares different synthesis approaches for the (realisable) \CTLstar specifications described above.
The column $|\CTLstar|$ is the size of the non-reduced AST of the \CTLstar formula,
the column $|\LTL|$ has two numbers:
the size of the non-reduced AST of the LTL formula for $k=1$ ($k$ is the number of witness IDs)
and the size for $k$ being the upper bound (the sum of the number of states in all existential automata).
The column $|AHT|$ is the sum of the number of states in existential and universal automata.
The column $|UCW|$ is the number of states in the UCW of the translated LTL formula:
we show two numbers, for $k=1$ and when it is the upper bound.
Timings are in seconds, the timeout is 3 hours\footnote{Except for the last specification ``user\_arbiter1'' for which the timeout was 1 hour.} (denoted ``$to$'').
``Time \CTLstar'' is the synthesis time and [system size] required for
\CTLstar synthesizer {\small \tt star.py},
``time LTL(SMT)'' --- for synthesizer {\small\tt elli.py} which implements the original Bounded Synthesis for LTL via SMT~\cite{BS},
``time LTL(game)'' --- for synthesizer {\small\tt kid.py} which implements the original Bounded Synthesis for LTL via reduction to safety games~\cite{BS}.
Both ``time LTL'' columns have two numbers:
when $k$ is set to the minimal value for which the LTL is realisable,
and when $k$ is set to the upper bound.
The subscript near the number indicates the value of $k$:
e.g., $to_8$ means the timeout on all values of $k$ from 1 to $|Q|=8$;
$to_{12(3)}$ means there was the timeout for $k=|Q|=12$
and the last non-timeout was for $k=3$;
$20_1$ means 20 seconds and the minimal $k$ is 1.
The running commands were:
  ``{\small \tt elli.py --incr spec}'',
  ``{\small \tt star.py --incr spec}'',
  ``{\small \tt kid.py spec}''.

\begin{table}[h]
\scriptsize
\centering
\setlength{\tabcolsep}{3pt}
\begin{tabular}{ lcc|cc|r|cc }
\toprule
  & $|\CTLstar|$
  & \specialcellC{$|\LTL|$\\($k_{1}$:$k_{|Q|}$)}
  & $|\text{AHT}|$
  & \specialcellC{$|\text{UCW}|$\\($k_{1}$:$k_{|Q|}$)}
  & \specialcellC{time\\ \CTLstar} & \specialcellC{time\\LTL(SMT)\\($k_{min}$:$k_{|Q|}$)} & \specialcellC{time\\LTL(game)\\($k_{min}$:$k_{|Q|}$)} \\
\midrule
res\_arbiter3     &  65   &  ~78 : 127  &  9   &   7 : 9    & 25 [5]         &  $40_1:260_2$              &  $\mathbf{~7_1:20_2}$   \\
res\_arbiter4     &  97   &  109 : 168  &  10  &  ~8 : 10   & 7380 [7]       &    $to_1$                  &  $\mathbf{30_1:60_2}$   \\
loop\_arbiter2    &  49   &  105 : 682  &  12  &  11 : 41   & {\bf 2} [4]    &  $20_3:131_6$              & ~~~$18_3:to_{6(5)}$   \\
loop\_arbiter3    &  80   & ~183 : 1607 &  15  &  14 : 70   & {\bf 6360} [7] &  $to_{8}$                  & $to_{8}$   \\
postp\_arbiter3   &  113  & ~177 : 2097 &  19  & ~~15 : 114 & 3 [4]          &  ~~~~$\mathbf{2_1}:1735_{12}$  & ~~~~$20_1:to_{12(3)}$   \\
postp\_arbiter4   &  162  & ~276 : 4484 &  24  &  19 : $to$ & 2920 [5]       &  ~~~$\mathbf{68_1}:to_{16(5)}$  & ~~~~$70_1:to_{16(2)}$  \\
prio\_arbiter2    &  82   &  ~92 : 141  &  13  &  14 : 16   & 60 [5]         &    $14_1:19_2$~            & ~$\mathbf{9_1:17_2}$   \\
prio\_arbiter3    &  117  &  125 : 184  &  15  &  16 : 18   & $to$~~~~       &  $4318_1:to_2$~~~~         & ~$\mathbf{26_1:56_2}$~   \\
user\_arbiter1    &  99   & ~203 : 4323 &  23  &  23 : $to$ &  {\bf 3} [5]   &  $to_{16}$                 & $to_{16}$  \\
\bottomrule
\end{tabular}
\label{tab:optimizations} 
\end{table}

When the minimal $k$ is 1,
the game-based synthesiser is the fastest in most of the cases.
However, it struggles to find a system when we set $k$ to a ``large'' number
(see the timeouts in rows 3--6).
The LTL part of specifications ``res\_arbiter'' and ``prio\_arbiter''
is known to be easier for the game-based synthesiser than for the SMT-based ones%
---adding the simple resettability property does not change this.
For \CTLstar specifications whose minimal $k$ is ``large'' (``loop\_arbiter'' and ``user\_arbiter'' that requires $k>4$),
the specialised \CTLstar synthesiser outperforms
both the game-based and SMT-based synthesisers for the translated LTL specifications.
Our preliminary conclusion is that for \CTLstar specifications that do not require large $k$,
the reduction to LTL synthesis is beneficial.
(Currently we do not know how to predict if a large $k$ is required.)

\parbf{System sizes}
The reduction did not increase the system size in most of the cases
(for the cases ``loop\_arbiter3'', ``res\_arbiter4'', and ``user\_arbiter1''
 we do not know the minimal system size when synthesising from the LTL specification).

\iffinal
\else
\section{Related Work} \label{sec:related}

\ak{todo: not done}

\ak{check Sven's thesis}
\ak{check those monotonic SAT guys paper}
\ak{Rudi mentions in his thesis the translation of LTL to CTL: ``As a side-result, we also obtain the first procedure to translate a formula in linear-time temporal logic (LTL) to a computation tree logic (CTL) formula with only universal path quantifiers, whenever possible.''}

The standard approach to \CTLstar synthesis problem~\cite{informatio} is:
translate a given \CTLstar specification into an alternating hesitant tree automaton~\cite{ATA},
turn it into a nondeterministic Rabin tree automaton~\cite{MS95},
and check its non-emptiness~\cite{Rab70}.
The approach gives a 2EXPTIME algorithm, and uses Safra construction~\cite{Safra}\ak{where?}.

Another approach~\cite{ATLSatisfiability,ScheweThesis}\ak{wait! but that is sat question only!} is:
translate the specification into an alternating hesitant tree automaton,
then resolve nondeterminism by moving from computation trees to annotated computation trees
(that specify how the nondeterminism should be resolved),
then check the non-emptiness of the resulting universal tree automaton~\cite{XXX}.
The approach gives 2EXPTIME algorithm, and does not use Safra construction.

In this work we provided yet another approach to \CTLstar synthesis,
which is conceptually similar to the latter approach above.
The approach reduces \CTLstar synthesis to LTL synthesis.

Apart from \CTLstar synthesis, people studied the \CTLstar satisfiability question~\cite{WHO?},
which as an input takes \CTLstar formula,
and returns a tree satisfying the formula, or otherwise ``unsatisfiable''.
In contrast to the synthesis (or realisability) problem,
the satisfiability problem does not constraint the branching structure of the trees
(whereas in the synthesis problem we search for $2^I$-exhaustive trees 
 where $I$ is the set of inputs).

Chapter~\ref{chap:bosy:ctlstar} provides a \CTLstar synthesiser which, in the spirit of the bounded synthesis,
reduces to synthesis problem to SMT solving.

\ak{for satisfiability check those tableaux guys}

\cite{klenze2016fast} describes an approach to \CTL satisfiability via reduction to ``monotonic'' SAT.

\cite{de2012synthesizing} describes \hl{todo}.

\cite{ctlsat} describes \hl{todo}.

\cite{FLL10} describes \hl{todo}.

\cite{ES84} describes 3EXPTIME approach \hl{todo}.

\cite{EJ99} describes 2EXPTIME approach \hl{todo}

\fi

\section{Conclusion}\label{sec:conclusion}
We presented the reduction of \CTLstar synthesis problem to LTL synthesis problem.
The reduction preserves the worst-case complexity of the synthesis problem,
although possibly at the cost of larger systems.
The reduction allows the designer to write \CTLstar specifications
even when she has only an LTL synthesiser at hand.
We experimentally showed---on the \emph{small} set of specifications---%
that the reduction is practical when the number of existentially quantified formulas is small.

We briefly discussed how to handle unrealisable \CTLstar specifications.
Whether our suggestions are practical on typical specifications---%
this is still an open question.
A possible future direction is to develop a similar reduction for logics like ATL*~\cite{Alur97},
and to look into the problem of satisfiability of \CTLstar~\cite{ES84}.

\part[Excursion into Parameterized Systems]{Excursion Into\\Parameterized Systems\\ \ \\ \LARGE Guarded and Token-ring Systems}

\section*{Overview of Part II}

Concurrent systems are hard to implement and even harder to debug.
On the other side, they are relatively easy to specify.
Consider, for example, the arbiter serving many clients.
A possible specification is
$
  \forall i \neq j. \G \neg ( g_i \land g_j ) \land \G (r_i \impl \F g_i),
$
which says that, for every client, every request should be eventually granted,
and the grants are mutually exclusive.
If a human implements such an arbiter,
he would try to come up with a basic block that handles a single client,
and connect such a block into a system, that handles as many clients as needed.
On the other side,
the computer tries to synthesize a system as one monolithic block.
This hides the insight that
a system for $n+1$ clients is very similar to a system for $n$ clients.
This leads to the scalability problem, once we require a large number of clients.

The parameterized synthesis approach~\cite{JB14} addresses the issue.
The idea is---just like the human would do---%
to automatically synthesize a basic block that
can be arranged into a system of any desired size.
There are several ways to arrange such blocks into a system,
depending on how they communicate with each other.
In this thesis part we will look into two system architectures.

The first architecture is inspired by cache coherence protocols found in modern processors.
Such a protocol is described by states,
where transitions between states happen depending on whether or not there is a processor in a particular state.
I.e., the transitions are guarded.
Chapter~\ref{chap:guarded-systems} studies guarded systems.

The second kind of systems is token-ring systems.
In such a system, the single token circulates in the system.
A process possessing the token knows that no other process has the token.
Based on this information, the process can, for example, raise the grant.
If all processes raise the grant only when they posses the token,
then the grants will be mutually exclusive.
Chapter~\ref{chap:token-systems} studies token-ring systems.

For both architectures we study their parameterized synthesis problems.
The parameterized synthesis problems asks, given a parameterized specification,
to find a process implementation,
such that a system of any size composed of such processes,
satisfies the specification.
The solution to the seemingly difficult problem---we now ask for correctness of a system of \emph{any} size---is based on the cutoff reduction:
to synthesize a process that works for all system sizes,
it is enough to synthesize a process that works in a system of a cutoff size.
For example, for the specification of the arbiter mentioned above,
the cutoff for token-ring systems is 4.
This means that it is enough to find a process implementation that
works in a system with 4 such processes.
Once we find it, a system of size 5, 6, 7,... is also correct.

In Chapter~\ref{chap:guarded-systems} we prove cutoff results for guarded systems.
Our results extend the results of Emerson and Kahlon~\cite{EmersonK03}.
Our contribution concerns both parameterized synthesis and parameterized verification.
We prove new cutoff results that are applicable to a previously unconsidered setting
of open systems with liveness properties under fairness assumptions.
We also prove new cutoff results for deadlock detection.
The work is theoretical; it is yet to find its application.

In Chapter~\ref{chap:token-systems} on token-ring systems,
we extend the cutoffs of Emerson and Namjoshi~\cite{Emerso03}
to a new setting of fully asynchronous systems and richer specifications.
Then we apply them to an industrial arbiter protocol called AMBA.
Thus, we synthesize for the first time the AMBA protocol in the parameterized sense.

The chapters can be read in any order.

\chapter{Parameterized Guarded Systems} \label{chap:guarded-systems}
\hfill {\footnotesize\textit{This chapter is based on joint work with S.Au{\ss}erlechner and S.Jacobs~\cite{AJK16,SimonThesis}~~~~~~~~}}

\begin{quotation}
\noindent\textbf{Abstract.}
Guarded protocols were introduced in a seminal
paper by Emerson and Kahlon (2000), and describe systems of processes whose transitions
are enabled or disabled depending on the existence of other processes in
certain local states.
In this chapter we study parameterized model checking and synthesis of
guarded protocols, both aiming at formal correctness arguments for systems with any number of processes. 
Cutoff results reduce reasoning about systems with an arbitrary 
number of processes to systems of a determined, fixed size. Our work stems 
from the observation that existing cutoff results for guarded protocols
(i) are restricted to closed systems, and
(ii) are of limited use for liveness properties because reductions do not preserve fairness.
We close these gaps and obtain new cutoff results for open systems with liveness properties under
fairness assumptions.
Furthermore, we obtain cutoffs for the detection of global and local deadlocks,
which are of paramount importance in synthesis.
Finally, we prove tightness or asymptotic tightness for the new cutoffs.
\end{quotation}

\newcommand{\mP}{\mathcal{P}}
\newcommand{\mB}{\mathcal{B}}
\newcommand{\mF}{\mathcal{F}}
\newcommand{\mD}{\mathcal{D}}
\newcommand{\mC}{\mathcal{C}}

\newcommand{\pclass}{{\sf cl}}
\newcommand{\val}{{\sf val}}
\newcommand{\guard}{{\sf guard}}
\newcommand{\eval}{{\sf eval}}
\newcommand{\init}{{\sf init}\xspace}
\newcommand{\templateI}{A}
\newcommand{\templateII}{B}
\newcommand{\formula}{F}
\newcommand{\formulas}{\mathcal{\formula}}
\newcommand{\inputs}{\Sigma}
\newcommand{\localin}{\sigma}
\newcommand{\globIn}{E}

\newcommand{\visited}{\mathsf{Visited}}
\newcommand{\visInf}[2]{\mathsf{Visited}^\inf_{#1}\!(#2)}
\newcommand{\visFin}[2]{\mathsf{Visited}^\fin_{#1}\!(#2)}
\newcommand{\witness}{\mathsf{witness}}
\newcommand{\witfirst}{\mathsf{wit\_first}}
\newcommand{\witsecond}{\mathsf{wit\_second}}
\newcommand{\witlast}{\mathsf{wit\_last}}
\newcommand{\first}{\mathsf{first}}
\newcommand{\enter}{\mathsf{enter}}
\newcommand{\exit}{\mathsf{exit}}

\newcommand{\deadlock}{\mathsf{lock}}
\newcommand{\cutoff}{\mathsf{cutoff}}
\newcommand{\destutter}{\ensuremath{\mathsf{destutter}}\xspace}
\newcommand{\bound}{b}

\newcommand{\sched}{{{\sf move}}}
\newcommand{\enabled}{{{\sf en}}}
\newcommand{\moves}{{{\sf moves}}}

\definecolor{darkgreen}{rgb}{0,0.5,0}
\definecolor{darkblue}{rgb}{0,0,.5}
\definecolor{mygray}{gray}{.3}

\newcommand{\state}{q}
\newcommand{\initstate}{{\sf init}\xspace}
\newcommand{\stateset}{\expandafter\MakeUppercase\expandafter{\state}}
\newcommand{\State}{s}
\newcommand{\Stateset}{\expandafter\MakeUppercase\expandafter{\State}}
\newcommand{\LTS}{\mathcal \stateset}
\newcommand{\labeling}{L}
\newcommand{\labelings}{\mathcal{\labeling}}
\newcommand{\outputs}{O}
\renewcommand{\time}{m}
\newcommand{\dead}{{\sf dead}\xspace}

\newcommand{\myloop}[1]{\ensuremath{#1 \rightarrow \initstate \rightarrow #1}\xspace}

\newcommand{\card}[1]{\left| {#1} \right|}
\newcommand{\Nat}{\ensuremath{\mathbb{N}}}
\newcommand{\smi}{\!\setminus\!}

\newcommand{\spec}{\Phi}
\newcommand{\pspec}{\Phi}

\newtheorem{tightness}{Tightness}

\newcommand{\disj}[1]{\ensuremath{\exists\{#1\}}}
\newcommand{\conj}[1]{\ensuremath{\forall\{#1\}}}

\newcommand{\sys}[1]{(A,B)^{(1,#1)}}
\newcommand{\cutoffsys}{\ensuremath{(A,B)^{(1,c)}}\xspace}
\newcommand{\largesys}{\ensuremath{(A,B)^{(1,n)}}\xspace}
\renewcommand{\interleave}{\ensuremath{\textit{interleave}}\xspace}
\newcommand{\last}{\mathsf{last}}
\renewcommand{\l}{\mathsf{l}}
\newcommand{\f}{\mathsf{f}}
\newcommand{\rpt}{\mathsf{rpt}}
\newcommand{\ppi}[1]{\mathsf{p}_{\circlearrowleft #1}}
\newcommand{\ei}[1]{\mathsf{e}_{\circlearrowleft #1}}
\newcommand{\qstar}{{\ensuremath{q^\star}}}
\newcommand{\estar}{{\ensuremath{\mathsf{e}^\star}}}
\newcommand{\fstar}{{\ensuremath{\mathsf{f}^\star}}}
\newcommand{\fin}{\textit{fin}}
\newcommand{\pref}{\mathsf{pref}}

\renewcommand{\reach}{\mathsf{reach}}
\newcommand{\tpref}{\textit{pref}}

\renewcommand{\inf}{\textit{inf}}
\newcommand{\lo}{\circlearrowleft}

\newcommand{\others}{\textit{others}}
\newcommand{\looop}{\textit{loop}}
\newcommand{\xloop}{\mathsf{xloop}}
\newcommand{\yloop}{\mathsf{yloop}}
\newcommand{\yround}{\mathsf{round}}
\newcommand{\inject}{\mathsf{inject}}

\iffinal
\newcommand{\gray}[1]{}
\newcommand{\blue}[1]{{\color{blue!80} #1}}
\else
\newcommand{\gray}[1]{{\color{black!50} #1}}
\newcommand{\blue}[1]{{\color{blue!80} #1}}
\fi

\newcommand{\transition}[3]{#1 \stackrel{#3}{\rightarrow} #2}
\newcommand{\slice}[2]{\ensuremath{[{#1}\!:\!{#2}]}}
\newcommand{\MinComp}[1]{\ensuremath{x(B^{\witfirst_{#1}})\slice{1}{\first_{#1}\!\!-\!\!1}}}
\newcommand{\flood}{\textit{flood}}
\newcommand{\yflood}{y^\flood}
\newcommand{\sflood}{s^\flood}
\newcommand{\eflood}{e^\flood}
\newcommand{\Pflood}{P^\flood}
\renewcommand{\fair}{\textit{fair}}
\newcommand{\yfair}{y^\fair}
\newcommand{\sfair}{s^\fair}
\newcommand{\efair}{e^\fair}
\newcommand{\Pfair}{P^\fair}
\newcommand{\evac}{\textit{evac}}
\newcommand{\yevac}{y^\evac}
\newcommand{\sevac}{s^\evac}
\newcommand{\eevac}{e^\evac}
\newcommand{\Pevac}{P^\evac}
\newcommand{\qloop}{q}
\newcommand{\eloop}{e}

\newcommand{\myparagraphraw}[1]{\smallskip\noindent{\bf#1}}
\newcommand{\myparagraph}[1]{\myparagraphraw{#1.}}

\NewEnviron{tikzLTS}{%
  \begin{tikzpicture}[node distance=2cm,inner sep=1pt,minimum size=0.5mm,bend angle=20]
	 	\tikzstyle{proc} = [rectangle,draw=black,fill=green!20,thick,inner sep=10pt]
	  	\tikzstyle{state} = [circle,draw=black,thick,inner sep=3pt]
	  	\tikzstyle{noproc} = [circle]
	  	\tikzstyle{lbl} = [rectangle,node distance=2cm]
  		\tikzstyle{pre} = [ <-,shorten <=2pt,shorten >=2pt, >=stealth', semithick]
	  	\tikzstyle{post} = [ ->,shorten <=2pt,shorten >=2pt, >=stealth', semithick]
    \BODY
  \end{tikzpicture}
}

\NewEnviron{tikzPath}{%
  \begin{tikzpicture}	
  	\tikzstyle{pstate} = [circle, fill=black,thick, inner sep=2pt, minimum size=2mm]
    \tikzstyle{hiddenstate} = [circle]
  	\tikzstyle{trans-notsched} = [ ->,shorten <=2pt,shorten >=2pt, >=stealth', semithick]
  	\tikzstyle{trans-sched} = [ ->,shorten <=2pt,shorten >=2pt, >=stealth', very thick] 
    \BODY
    \end{tikzpicture}
}

\section{Introduction}
\label{gua:sec:intro}

Concurrent hardware and software systems are notoriously hard to get right.
Formal methods like model 
checking or synthesis can be used to guarantee correctness, but the state explosion 
problem prevents us from using such methods for systems with a large number 
of components. Furthermore, correctness properties are often expected to hold 
for an \emph{arbitrary} number of components. Both problems can be solved by 
\emph{parameterized} model checking and synthesis approaches, which give correctness 
guarantees for systems with any number of components without considering every 
possible system instance explicitly.

While parameterized model checking (PMC) is undecidable in general~\cite{Suzuki88}, 
there exists a number of methods that decide the problem for specific classes 
of systems~\cite{German92,Emerson00,Emerso03}, as well as semi-decision 
procedures that are successful in many interesting 
cases~\cite{Kurshan95,Clarke08,KaiserKW10}.
In this chapter, we consider the \emph{cutoff} method that can guarantee properties of 
systems of arbitrary size by considering only systems of up to a certain 
fixed size, thus providing a decision procedure for PMC if components are finite-state.

We consider systems that are composed of an arbitrary number of processes,
each an instance of a process template from a given, finite set.  
Process templates can be viewed as synchronization 
skeletons~\cite{EmersonC82}, i.e., program abstractions that suppress information not 
necessary for synchronization. In our system model, processes communicate by guarded updates, 
where guards are statements about other processes that are interpreted either 
conjunctively (``every other process satisfies the guard'') or disjunctively 
(``there exists a process that satisfies the guard''). Conjunctive guards can 
model atomic sections or locks, disjunctive guards can model token-passing or to some extent pairwise rendezvous (cf.~\cite{EmersonK03}). 

This class of systems has been studied by Emerson and 
Kahlon~\cite{Emerson00}, and cutoffs that depend on the 
size of process templates are known for specifications of the form 
$\forall{\bar{p}}.\ \spec(\bar{p})$, 
where $\spec(\bar{p})$ is an $\LTLmX$ property over the local states of one or more processes 
$\bar{p}$. Note that this does not allow us to specify fairness 
assumptions, for two reasons: (i) to specify fairness, additional atomic propositions for enabledness and scheduling of processes are needed, and (ii) specifications with global fairness assumptions are of the form 
$(\forall{\bar{p}}.\ \fair(\bar{p})) \impl (\forall{\bar{p}}.\ \spec(\bar{p}))$.
Because neither is supported by \cite{Emerson00}, the existing cutoffs are of 
limited use for 
reasoning about liveness properties. 

Emerson and Kahlon~\cite{Emerson00} mentioned this limitation and illustrated it
using the process template on the figure on the right.
Transitions from the initial state $N$
\begin{wrapfigure}{r}{0.28\linewidth}
\vspace{-9pt}
\scalebox{0.84}{
\begin{tikzpicture}[->,>=stealth',shorten >=1pt,auto,node distance=2cm,
                    semithick]
  \tikzstyle{every state}=[align=center, anchor=center, inner sep=2.2pt,minimum size=0.5mm]
  \tikzstyle{every edge} = [align=center,draw=black,]
  
  \node[state] (N) [label={below:$$},] {$N$};
  \node[state] (T) [right of=N, label={below:$$},] {$T$};
  \node[state] (C) [right of=T, label={below:$$},] {$C$};

  \path 
  (N) edge  node {$true$} (T)
  (T) edge node {$\forall \{T,N$\}} (C)
  (C) edge [bend left=24] node {$true$} (N);
\end{tikzpicture}
}
\label{fig:mutex}
\vspace{-20pt}
\end{wrapfigure}
to the ``trying'' state $T$, and from the critical state $C$ to $N$ are always 
possible, while the transition from $T$ to $C$ is only possible if no other 
process is in $C$. The existing cutoff results can be 
used to prove safety properties like mutual exclusion for systems composed of 
arbitrarily many copies of this template. However, they cannot be used to prove 
starvation-freedom properties like 
$\forall{p}. \pforall \always (T_p \rightarrow \eventually C_p)$,
stating that every process $p$ that enters its local state $T_p$ will 
eventually enter state $C_p$, because without fairness 
of scheduling the property does not hold. 

Also, Emerson and Kahlon~\cite{Emerson00} consider only closed systems. 
Therefore, in this example, processes always try to enter $C$. 
In contrast, in open systems the transition to $T$ might be a reaction 
to a corresponding input from the environment that makes entering $C$ necessary. While it is possible to convert an open system to a closed system that is equivalent under \LTL\ properties, this comes at the cost of a blow-up.

\smallskip
\noindent {\bf Motivation.} Our work is inspired by applications in
parameterized synthesis~\cite{JB14}, where the goal is to automatically
construct process templates such that a given specification is satisfied in
systems with an arbitrary number of components.
In this setting,
one generally considers \emph{open systems} that interact with an uncontrollable environment (user).
Also, most specifications contain liveness properties that cannot be guaranteed without fairness assumptions.
Note that in the parameterized setting liveness properties cannot be reduced to safety properties,
because the size of a system is not bounded a priori.
Finally, we are interested in synthesizing deadlock-free systems.
\emph{Cutoffs} are essential for parameterized synthesis,
because they enable a semi-decision procedure to parameterized synthesis.

\smallskip
\noindent
{\bf Contributions.}
\begin{itemize}
\item We show that existing cutoffs for model checking of $\LTLmX$ properties are in general not sufficient for systems with \emph{fairness assumptions}, and provide new cutoffs for this case.
\item We improve some of the existing cutoff results, and give separate cutoffs for the problem of \emph{deadlock detection}, which is closely related to fairness. 
\item We prove \emph{tightness} or asymptotical tightness for all of our cutoffs, showing that smaller cutoffs cannot exist with respect to the parameters we consider.
\end{itemize}
Moreover, all of our cutoffs directly support \emph{open systems}, where each process may communicate with an adversarial environment. This makes the blow-up incurred by translation to an equivalent closed system unnecessary. 
Finally,
we will show in Sect.~\ref{gua:sec:paramsynt}
how to integrate our size-dependent cutoffs into the parameterized synthesis approach.

\section{Related Work} \label{gua:sec:related}
\ak{ cite ``Parameterized Model-Checking of Timed Systems with Conjunctive Guards'' and another Sasha's paper that contain results for disj guards (including undec result for some CTL*-like logic). }

In this work we extend the results of Emerson and Kahlon~\cite{Emerson00} who study PMC of guarded protocols, but do not support fairness assumptions, nor provide cutoffs for 
deadlock detection.
In \cite{EmersonK03} they extended their work to systems with limited forms of guards and broadcasts, and also proved undecidability 
of PMC of conjunctive guarded protocols wrt.\ $\LTL$ (including $\nextt$), 
and undecidability wrt.\ $\LTLmX$ for systems with both 
conjunctive and disjunctive guards.

Bouajjani et al.~\cite{Bouajjani08} study parameterized model checking of
resource allocation systems (RASs). Such systems have a bounded number of resources, 
each owned by at most one process at any time. Processes are pushdown automata, 
and can request resources with high or normal priority.
RASs are similar to conjunctive guarded protocols in that certain
transitions are disabled unless a processes has a certain resource. 
RASs without priorities and where all the processes are finite state Moore machines can 
be converted to conjunctive guarded protocols (at the price of blow up), 
but not vice versa.
The authors study parameterized model checking wrt.\ $\LTLmX$ properties under certain fairness assumptions,
and deadlock detection.
Their proofs are based on ideas of~\cite{Emerson00} (our proofs are also based on ideas of~\cite{Emerson00}).
 
German and Sistla~\cite{German92} considered global deadlocks and strong fairness properties for systems with pairwise rendezvous communication in a clique.
In such systems, processes communicate pairwise using messages:
one process sends a message and blocks until another process reads the message.
Emerson and Kahlon~\protect\cite{EmersonK03} have shown that disjunctive guard systems can be reduced to such pairwise rendezvous systems.
However, German and Sistla \cite{German92} do not provide cutoffs, 
nor do they consider deadlocks for individual processes,
and their specifications can talk about one process only.
Aminof et al.~\cite{AminofKRSV14} have 
recently extended these results to more general topologies, and have 
shown that for some decidable parameterized model checking problems there are no cutoffs,
even in cliques.
\ak{update the paragraphs, they also have some results on disj?}

Many of the decidability results above have been surveyed in our book~\cite{BloemETAL15}.

\section{Preliminaries} \label{gua:sec:prelim}\label{gua:sec:definitions}
Many definitions intersect with those defined in previous chapters,
but to keep the chapter self-contained we define them here.

Notation:
$\bbB = \{\true,\false\}$ is the set of Boolean values,
$\bbN$ is the set of natural numbers (excluding $0$),
$\bbN_0 = \bbN\cup\{0\}$,
$[k]$ is the set $\{i \in \bbN \| i \leq k\}$
and $[0..k]$ is the set $[k] \cup \{0\}$ for $k \in \bbN$.
For a sequence $x=x_1x_2\ldots$ denote the $i$-$j$-subsequence as $x\slice{i}{j}$,
i.e., ${x\slice{i}{j}}=x_i \ldots x_j$.

\subsection{System Model} \label{gua:sec:model}

We consider systems $A {\parallel} B^n$, usually written $\largesys$, 
consisting of
one copy of a process template $A$ and $n$ copies of a process template $B$,
in an interleaving parallel composition.%
We distinguish objects that belong to different templates by indexing them with
the template. E.g., for process template $U \in \{A,B\}$, $Q_U$ is the set of
states of $U$. For this section, fix two disjoint finite sets $Q_A$, $Q_B$ as
sets of states of process templates $A$ and $B$, and a positive integer $n$.

\parbf{Processes} A \emph{process template} 
 is a transition system
  $U=(\stateset, \init, \inputs, \delta)$ with 
	\begin{itemize}
	\item $\stateset$ is a finite set of states including the
  initial state $\init$,
	\item $\inputs$ is a finite input alphabet,
	\item $\delta: \stateset \times \inputs \times \mP(Q_A \cupdot Q_B) \times \stateset$ is a guarded transition relation.
	\end{itemize}
A process template is \emph{closed} if $\inputs = \emptyset$, and otherwise \emph{open}.

By $\transition{q_i}{q_j}{e:g}$
we denote a process transition from $q_i$ to $q_j$
for input $e \in \Sigma$ and guarded by guard $g \in \mP(Q_A \cupdot Q_B)$.
We skip the input $e$ and guard $g$
if they are not important or can be inferred from the context.

We define the size $\card{U}$ of a process template $U \in \{A,B\}$ as $\card{\stateset_U}$. A copy of a template $U$ will be called a \emph{$U$-process}.
Different $B$-processes are distinguished by subscript, i.e., for $i \in [1..n]$, $B_i$ is the $i$th copy of $B$, and $\state_{B_i}$ is a state of $B_i$. A state of the $A$-process is denoted by $q_A$. 

For the rest of this subsection, fix templates $A$ and $B$. We assume that $\inputs_A \cap \inputs_B = \emptyset$. We will also write $p$ for a process in $\{ A, B_1, \ldots, B_n\}$, unless $p$ is specified explicitly.
We often denote the set $\{B_1,...,B_n\}$ as $\mB$.

\parbf{Disjunctive and conjunctive systems}
In a system $\largesys$,
consider the global state $s = (\state_A,\state_{B_1},\ldots,\state_{B_n})$ and
global input $e=(\localin_A,\localin_{B_1},\ldots,\localin_{B_n})$.
We write $s(p)$ for $q_p$, and $e(p)$ for $\sigma_p$.
A local transition $(\state_p,\localin_p,g,\state_p') \in \delta_U$ of a process $p$ is \emph{enabled for $s$ and $e$}
if the \emph{guard} $g$ is satisfied by the state $s$ wrt.\ the process $p$, written $(s,p) \models g$ (defined below).
The semantics of $(s,p) \models g$ differs for disjunctive and conjunctive systems:
\begin{align*}
\text{In disjunctive systems: } & (s,p) \models g \text{~~~iff~~~} 
\exists p' \in \{A,B_1,\ldots,B_n\} \setminus \{p\}:\ \ \state_{p'} \in g. \\
\text{In conjunctive systems: } & (s,p) \models g \text{~~~iff~~~} 
\forall p' \in \{A,B_1,\ldots,B_n\} \setminus \{p\}:\ \ \state_{p'} \in g.
\end{align*}

Note that we check containment in the guard (disjunctively or conjunctively) 
only for local states of processes \emph{different from} $p$. A process is \emph{enabled} for $s$ and $e$ if at least one of its transitions is enabled for $s$ and $e$, otherwise it is \emph{disabled}.

Like Emerson and Kahlon~\cite{Emerson00}, 
we assume that in conjunctive systems $\init_A$ and $\init_B$ are contained in all guards,
i.e., they act as neutral states.
Furthermore, we call a conjunctive system \emph{$1$-conjunctive} if every guard is of the form $(Q_A \cupdot Q_B) \setminus \{q\}$ for some $q \in Q_A\cupdot Q_B$.

Then, \largesys is defined as the transition 
system $(S,\init_S,\globIn,\delta)$ with 
\begin{itemize}
\item set of global states $S = \stateset_A \times \stateset_B^{n}$, 
\item global initial state $\init_S = (\initstate_A,\initstate_B,\ldots,\initstate_B)$, 
\item set of global inputs $\globIn = (\inputs_A) \times (\inputs_B)^{n}$,
\item and global transition relation $\delta \subseteq S \times \globIn \times S$ with $(s,e,s') \in \delta$ iff 
\begin{enumerate}[label=\roman*)] 
  \item $s=(\state_A,\state_{B_1},\ldots,\state_{B_n})$, 
  \item $e=(\localin_A, \localin_{B_1},\ldots,\localin_{B_n})$, and 
  \item $s'$ is obtained from $s$ by replacing one local state $\state_p$ with a new local state $\state_p'$, where $p$ is a $U$-process with local transition $(\state_{p},\localin_{p},g,\state_p') \in \delta_U$ and $(s,p) \models g$. 
        Thus, we consider so-called interleaved systems,
        where in each step exactly one process transits.
\end{enumerate}
\end{itemize}
We say that a system $\largesys$ is \emph{of type} $(A,B)$. It is called a
\emph{conjunctive system} if guards are interpreted conjunctively, and a
\emph{disjunctive system} if guards are interpreted disjunctively. 
A system is \emph{closed} if all of its templates are closed.

\parbf{Runs} 
A \emph{configuration} of a system is a triple $(s,e,p)$, where $s \in S$, $e 
\in \globIn$, and $p$ is either a system process, or the special symbol $\bot$.
 A \emph{path} of a system is a configuration sequence 
$x = (s_1,e_1,p_1),(s_2,e_2,p_2),\ldots$ such that, for all $\time < |x|$, there is a 
transition $(s_\time,e_\time,s_{\time+1}) \in \delta$ based on a local 
transition of process $p_\time$. We say that process 
$p_\time$ \emph{moves} at \emph{moment} $\time$. 
Configuration $(s,e,\bot)$ appears
 iff all processes are disabled for $s$ and $e$.
Also, for every $p$ and $\time < |x|$: 
either $e_{\time+1}(p) = e_\time(p)$ or process $p$ moves at moment $\time$. 
That is, the environment keeps the input to each process unchanged until 
the process can read it.\footnote{By only considering inputs that are actually processed, we 
approximate an 
action-based semantics. Paths that do not fulfill this requirement are not 
very interesting, since the environment can violate any interesting 
specification that involves input signals by manipulating them when the 
corresponding process is not allowed to move.} 

A system \emph{run} is a maximal path starting in the initial state. Runs are either infinite, or they end in a configuration $(s,e,\bot)$. We say that a run is \emph{initializing} if every 
process
that moves infinitely often also visits 
its $\initstate$ 
infinitely often.

Given a system path $x = (s_1,e_1,p_1),(s_2,e_2,p_2),\ldots$ and a process $p$, the \emph{local path} of $p$ in $x$ is the projection $x(p) = (s_1(p),e_1(p)),(s_2(p),e_2(p)),\ldots$ of $x$ onto local states and inputs of $p$.
Similarly, we define the projection on two processes $p_1,p_2$ denoted by $x(p_1,p_2)$.


\parbf{Deadlocks and fairness}
A run is \emph{globally deadlocked} if it is finite.
An infinite run is \emph{locally deadlocked} for process $p$ if there exists $\time$ such that $p$ is disabled for all $s_{\time'},e_{\time'}$ with $\time'\ge \time$. A run is \emph{deadlocked} if it is locally or globally deadlocked.
A system \emph{has a (local/global) deadlock} if it has a (locally/globally) deadlocked run. Note that the absence of local deadlocks for all $p$ implies the absence of global deadlocks, but not the other way around.

A run $(s_1,e_1,p_1), (s_2,e_2,p_2),...$ is \emph{unconditionally-fair} if every process moves infinitely often. 
A run is \emph{strong-fair} if it is infinite and, for every process $p$, if $p$ is enabled infinitely often, then $p$ moves infinitely often.
We will discuss the role of deadlocks and fairness in synthesis in Section~\ref{gua:sec:paramsynt}.

\begin{remark}[$A^m {\parallel} B^n$]
One usually starts with studying parameterized systems of the form $A^n$ (having one process template),
then proceeds to systems of the form $A^m {\parallel} B^n$ (having two templates)
and $U_1^{n_1} {\parallel} \ldots {\parallel} U_m^{n_m}$ (having an arbitrary fixed number of templates).
Our work studies systems $A {\parallel} B^n$,
which have one $A$-process and a parameterized number of $B$-processes,
because the results for such systems can be generalized to systems $U_1^{n_1} {\parallel} \ldots {\parallel} U_m^{n_m}$
(see~\cite{Emerson00} for details).
This generalization works for our results as well,
except for the cutoffs for deadlock detection that are restricted to 1-conjunctive systems of the form $A\parallel B^n$
(Section~\ref{gua:sec:cutoffs}).
\end{remark}

\subsection{Specifications}
\label{gua:sec:semantics}
Fix templates $(A,B)$.
We consider formulas in $\LTLmX$---$\LTL$ without the next-time operator $\nextt$---%
that are prefixed by path quantifiers $\E$ or $\A$
(for LTL and path quantifiers see Section~\ref{defs:ctlstar}).
Let $h(A,B_{i_1},\ldots,B_{i_k})$ be an $\LTLmX$ formula over atomic propositions from $Q_A \cup \Sigma_A$ and indexed propositions from $(Q_B \cup \Sigma_B) \times \{i_1,\ldots,i_k\}$.
For a system $\largesys$ with $n \geq k$ and every $i_j \in [1..n]$,
satisfaction of $\A h(A,B_{i_1},\ldots,B_{i_k})$ and $\E h(A,B_{i_1},\ldots,B_{i_k})$ is defined in the usual way.

\parbf{Parameterized specifications} \label{gua:sec:parameterized}
A \emph{parameterized specification} is a temporal logic formula
with indexed atomic propositions and quantification over indices. 
We consider formulas of the forms
$\forall{i_1,\ldots,i_k.} \A h(A,B_{i_1},\ldots,B_{i_k})$ and\\ 
$\forall{i_1,\ldots,i_k.} \E h(A,B_{i_1},\ldots,B_{i_k})$. 
For a given $n \geq k$, 
$$
\largesys \models \forall{i_1,{\ldots},i_k.} \A h(A,B_{i_1},{\ldots},B_{i_k})
$$
~iff~
$$
\largesys \models \!\!\!\!\!\!\!\!\bigwedge_{j_1 \neq {\ldots} \neq j_k \in [1..n]}\!\!\!\!\!\!\!\!\A h(A,B_{j_1},{\ldots},B_{j_k}).
$$ 
By symmetry of guarded systems (see~\cite{Emerson00}),
the second formula is equivalent to
$\largesys \models \A h(A,B_1,\ldots,B_k)$. 
The formula $\A h(A,B_1,\ldots,B_k)$ is denoted by $\A h(A,B^{(k)})$, 
and we often use it instead of the original $\forall{i_1,\ldots,i_k.} \A h(A,B_{i_1},...,B_{i_k})$.
For formulas with the path quantifier $\E$,
satisfaction is defined analogously
and is equivalent to satisfaction of $\E h(A,B^{(k)})$.
\begin{example}
Consider the formula
$$
\forall{i_1,i_2}.\A \big(\G (r_{i_1} \impl \F g_{i_1}) \land \G \neg (g_{i_1} \land g_{i_2})\big).
$$
By our definition, its satisfaction by a system $(A,B)^{(1,3)}$ means
\begin{align*}
(A,B)^{(1,3)} \models
\A \left(
\begin{aligned}
&\G(r_1 \impl \F g_1) \land \G(r_2 \impl \F g_2) \land \G(r_3 \impl \F g_3) \land \\
&\G \neg (g_1 \land g_2) \land \G \neg (g_1 \land g_3) \land \G \neg (g_2 \land g_3)
\end{aligned}
\right),
\end{align*}
where $g_1$ and $r_1$ refer to the propositions $g$ and $r$ of the process $B_1$,
$g_2$ and $r_2$ belong to $B_2$, and so on.
By symmetry, the latter satisfaction is equivalent to
\begin{align*}
(A,B)^{(1,3)} \models
\A \left(
\begin{aligned}
&\G(r_1 \impl \F g_1) \land \G(r_2 \impl \F g_2) \land \\
&\G \neg (g_1 \land g_2)
\end{aligned}
\right).
\end{align*}
Note that this formula talks about processes $B_1$ and $B_2$, but does not mention $B_3$.
\end{example}

\parbf{Specification of fairness and local deadlocks}
It is often convenient to express fairness assumptions and local deadlocks 
as parameterized specifications.
To this end,
define auxiliary atomic propositions $\sched_p$ and $\enabled_p$ for every process $p$ of system $(A,B)^{(1,n)}$. At moment $\time$ of a given run $(s_1,e_1,p_1),(s_2,e_2,p_2), \ldots$, let $\sched_p$ be true whenever $p_\time = p$, and let $\enabled_p$ be true if $p$ is enabled for $s_\time, e_\time$. Note that we only allow the use of these propositions to define fairness, but not in general specifications.
Then, an infinite run is 
\begin{itemize}
\item \emph{local-deadlock-free} if it satisfies $\forall{p}. \GF \enabled_p$, abbreviated as $\spec_{\neg dead}$,
\item \emph{strong-fair} if it satisfies $\forall{p}. \GF \enabled_p \impl \GF \sched_p$, abbreviated as $\spec_{strong}$, and 
\item \emph{unconditionally-fair} if it satisfies $\forall{p}. \GF \sched_p$, abbreviated as $\spec_{uncond}$.
\end{itemize}

If $f \in \{strong, uncond\}$ is a fairness notion and 
$\A h(A,B^{(k)})$
a specification, then we write
$\A_{f} h(A,B^{(k)})$ for $\A (\spec_{f} 
\rightarrow h(A,B^{(k)}))$.
Similarly, we write $\E_{f} h(A,B^{(k)})$ for $\E (\spec_{f} \land h(A,B^{(k)}))$.

\subsection{Model Checking and Synthesis Problems}
\label{gua:sec:nonparameterized_synthesis}
Given a system $\largesys$ and a specification $\A h(A,B^{(k)})$, where $n \ge k$. Then:
\begin{itemize}
\item the \emph{model checking problem} is to decide whether $\largesys \models \A h(A,B^{(k)})$,
\item the \emph{deadlock detection problem} is to decide whether $\largesys$
      does not have global nor local deadlocks,
\item the \emph{parameterized model checking problem} (PMCP) is to decide whether $\forall m \ge n:\ (A,B)^{(1,m)} \models \A h(A,B^{(k)})$, and 
\item the \emph{parameterized deadlock detection problem} is to decide whether, 
      for all $m \ge n$, $(A,B)^{(1,m)}$ does not have global nor local deadlocks.
\end{itemize}
For a given number $n \in \bbN$ and specification $\A h(A,B^{(k)})$ with $n \ge k$,
\begin{itemize}
\item the \emph{template synthesis problem} is to find process templates $A,B$ such that
$\largesys \models \A h(A,B^{(k)})$ and $\largesys$ does not have global deadlocks\footnote{\label{footnote:local-deadlocks}Here we do not explicitly mention local deadlocks because they can be specified as a part of $\A h(A,B^{(k)})$.}
\item 
the \emph{bounded template synthesis problem} for a pair of bounds $(\bound_A,\bound_B) \in \bbN \times \bbN$ 
is to solve the template synthesis problem with
$\card{A} \leq \bound_A$ and $\card{B} \leq \bound_B$.
\item the \emph{parameterized template synthesis problem} is to find process templates $A,B$ such that $\forall m \ge n:\ (A,B)^{(1,m)} \models \A h(A,B^{(k)})$ and $(A,B)^{(1,m)}$ does not have global deadlocks\footnoteref{footnote:local-deadlocks}.
\end{itemize}
Similarly, we define problems for specifications having $\E$ instead of $\A$.
The definitions can be flavored with different notions of fairness.

\section{Reduction Method and Challenges} \label{gua:sec:paramsynt}
 
We show how to use existing cutoff results of Emerson and Kahlon~\cite{Emerson00} to reduce the PMCP to a standard model checking problem,
and parameterized template synthesis to template synthesis.
We note the limitations of the existing results that are crucial in the context of synthesis.

\subsection*{Reduction by Cutoffs} \label{page:gua:def:cutoff}

\parbf{Cutoffs}
A \emph{cutoff} for a system type $(\templateI,\templateII)$ and a specification $\spec$ is a number $c \in \bbN$ such that:
\[ 
\forall n \ge c: \left( \cutoffsys \models \spec ~~\Iff~~ \largesys \models \spec \right).
\]
Similarly,
a \emph{cutoff for deadlock detection} for a system type $(\templateI,\templateII)$ is a number $c \in \bbN$ such that:
\[ 
\forall n \ge c: \left( \cutoffsys \textit{ has a deadlock} ~~\Iff~~ \largesys \textit{ has a deadlock}\right).
\]
Here, ``has a deadlock'' means ``there is a locally or globally deadlocked run''.

For the systems and specifications presented in this work, cutoffs can be computed from 
the size of the process template $B$ and the number $k$ of copies of $B$ 
mentioned in the specification, and are given as expressions like 
$\card{B}+k+1$.

\begin{remark}\label{re:EK_cutoffs}
Our definition of a cutoff is different from that of Emerson and Kahlon~\cite{Emerson00}, and instead similar to, e.g., Emerson and Namjoshi~\cite{Emerso03}. The reason is that we want the following property to hold for any $(A,B)$ and $\Phi$: 
\begin{quote}
if $n_0$ is the smallest number such that ~$\forall n \geq n_0:\ \largesys \models \Phi$,
then any $c<n_0$ is not a cutoff, any $c\geq n_0$ is a cutoff.
\end{quote}
We call $n_0$ the \emph{tight} cutoff.
The definition of Emerson and Kahlon~\cite[page 2]{Emerson00} requires that
$\forall{n\leq c}. \largesys \models \Phi$
if and only if
$\forall{n \geq 1}: \largesys \models \Phi$, and thus allows stating $c<n_0$ as a cutoff if $\Phi$ does not hold for all $n$.
\end{remark}

\parbf{Parameterized synthesis}
We encourage the reader to revisit Chapter~\ref{defs:bounded_synthesis} on page~\pageref{page:defs:bounded_synthesis}
to recall how bounded synthesis works in the case of non-distributed systems.
Now we adapt the procedure to guarded parameterized systems.
In parameterized model checking
a cutoff allows us to check whether any ``big'' system satisfies the specification
by checking it in the cutoff system.
A similar reduction applies to the parameterized synthesis problem~\cite{JB14}.
For guarded protocols, we obtain the following 
\emph{semi-decision procedure for parameterized synthesis}\ak{is it decidable?}:
\begin{enumerate}
  \item[0.] set initial bound $(\bound_A,\bound_B)$ on the size of the process templates;
  \item[1.] determine the cutoff for $(\bound_A,\bound_B)$ and $\spec$;
  \item[2.] solve the bounded template synthesis problem for cutoff, size bound, and $\spec$;
  \item[3.] if successful, return $(A,B)$, else increase $(\bound_A,\bound_B)$ and goto (1).
\end{enumerate}
This procedure was implemented inside our parameterized synthesis tool PARTY~\cite{party}
by Simon Au{\ss}erlechner as a part of his Master Thesis~\cite{SimonThesis}.

\subsection*{Existing Cutoff Results}
Emerson and Kahlon~\cite{Emerson00} have shown:

\begin{theorem}[Disjunctive Cutoff Theorem] \label{thm:disj-cutoff-pairs}
    For closed disjunctive systems $A{\parallel}B^n$,
    $\card{B}+2$ is a cutoff {$^{(\dagger)}$} for formulas of the
    form $\A h(A,B^{(1)})$ and $\E h(A,B^{(1)})$, and for global
    deadlock detection.
\end{theorem}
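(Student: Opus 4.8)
The plan is to prove the cutoff by exhibiting, for every $n \ge c = \card{B}+2$, a two-way correspondence between runs of $\cutoffsys$ and runs of $\largesys$ that preserves the local behaviour of the two processes named by the specification. By the symmetry of guarded systems, satisfaction of $\A h(A,B^{(1)})$ (resp.\ $\E h(A,B^{(1)})$) in $\largesys$ is equivalent to $\largesys \models \A h(A,B_1)$ (resp.\ $\E h(A,B_1)$), so it suffices to track the projection $x(A,B_1)$ of a run onto the $A$-process and the single monitored copy $B_1$. Crucially, since $h$ is an $\LTLmX$ formula its truth depends only on the \emph{stutter-equivalence class} of $x(A,B_1)$; this is what lets the simulations insert or delete finitely many stuttering steps of the monitored processes while helper processes maneuver. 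For the $\A$-case I would argue contrapositively (a run violating $h$ upstairs yields one downstairs and vice versa), and for the $\E$-case directly (a witnessing run exists upstairs iff one exists downstairs). Thus the theorem reduces to two simulation lemmas, going up and down in the number of $B$-copies.

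\textbf{Upward direction (monotonicity).} First I would show that any run of $\cutoffsys$ lifts to a run of $\largesys$ with the same $x(A,B_1)$. This is the easy direction and rests on the fact that disjunctive guards are \emph{monotone} in the set of occupied states: a transition guarded by $g$ is enabled as soon as \emph{some} other process sits in $g$, so adding processes never disables a previously enabled move. Hence I park the $n-c$ extra $B$-copies in their initial state $\init_B$ and simply never schedule them; every transition taken in the small run remains enabled. The one point requiring care is \emph{maximality}: adding idle processes could turn a globally deadlocked configuration into a non-deadlocked one, so this lifting must be applied in the direction consistent with the quantifier, and deadlock itself is treated separately below.

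\textbf{Downward direction (flooding).} The core of the argument is the converse: given a run $\rho$ of $\largesys$, construct a run of $\cutoffsys$ with a stutter-equivalent $x(A,B_1)$. I keep $A$ and $B_1$ with their exact moves and must reproduce, using only $\card{B}+1$ further $B$-copies, every guard satisfaction that $A$ and $B_1$ relied on in $\rho$. Because the composition is interleaved, each step checks \emph{at most one} disjunctive guard, so at any moment only one witness state is required; the subtlety is that the required state changes over time and that a witness process must itself reach its state by legal transitions. Here I would invoke a \emph{flooding lemma}: every local $B$-state reachable in some system is already reachable in $\cutoffsys$, and can be occupied and held by a dedicated witness. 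Since $B$ has only $\card{B}$ states, $\card{B}$ witnesses suffice to cover simultaneously all states that could ever be needed, the extra ``$+1$'' copy serving as a spare that relocates while coverage is maintained; together with $A$ and $B_1$ this yields the count $\card{B}+2$. The small run is then assembled step by step, prefixing each monitored move with a stuttering segment (invisible to $h$ by $\LTLmX$) during which the witnesses move into the positions recorded from $\rho$.

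\textbf{Deadlock detection and the main obstacle.} For global deadlock I would use that, under disjunctive guards, whether a process is disabled depends only on \emph{which} states are currently occupied by others, not on their multiplicities: a configuration is globally deadlocked iff, for every occupied state $q$ and every outgoing transition with guard $g$, no \emph{other} occupied state lies in $g$. Thus a deadlocked configuration of $\largesys$ is characterised by its set of occupied states, and realizing that same set reachably in $\cutoffsys$ again costs at most $\card{B}$ $B$-witnesses plus the $A$-process and a spare, i.e.\ $\card{B}+2$; the converse lift is monotonicity. I expect the flooding construction to be the principal obstacle: one must guarantee that witnesses can be positioned \emph{legally and punctually}---their own enabling guards may demand yet further witnesses, threatening a circular dependency---while simultaneously preserving run maximality and the stutter class of $x(A,B_1)$. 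Showing that the number of witnesses needed is bounded by $\card{B}$ \emph{independently of $n$} is precisely the heart of the cutoff.
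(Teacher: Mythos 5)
Your skeleton matches the paper's: a monotonicity lemma (park the extra copies in $\init_B$) plus a bounding lemma based on flooding, with the count $1 + \card{B} + 1$ helper $B$-copies and stutter-invariance of $\LTLmX$ doing the bookkeeping. But the proposal stops exactly where the proof has to do its work. You flag the ``principal obstacle'' --- that witnesses must reach and hold their target states legally, and that their own guards may demand further witnesses, threatening circularity --- and then leave it unresolved. The paper's flooding construction dissolves this circularity by \emph{time-synchronized prefix copying}: the witness for state $q$ replays, step for step and at the same moments as in the original run $x$, the local run of the \emph{first} process to reach $q$, up to the moment $m_q$ of that first visit, and then parks in $q$ forever. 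This gives the invariant ``whenever some process occupies $q'$ at moment $m$ in $x$, some helper occupies $q'$ at moment $m$ in $y$'' (before $m_{q'}$ nobody is in $q'$ in $x$ either), so every copied transition --- of $A$, of $B_1$, and of the witnesses themselves --- is enabled for the same reason it was enabled in $x$. Your alternative scheduling, where witnesses ``move into position'' during a stuttering segment inserted \emph{before} each monitored move, abandons this synchronization and reintroduces the very enabling problem you identified. Relatedly, the ``$+1$'' copy is not a relocating spare maintaining coverage; it copies an infinitely-moving process of $\largesys$ so that the constructed run is infinite (runs are maximal, and $A$, $B_1$ and all parked flooders may eventually stop moving).

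The deadlock part contains a genuine error. You claim that under disjunctive guards disabledness ``depends only on which states are occupied by others, not on their multiplicities.'' It does depend on multiplicities: a guard is evaluated over the local states of processes \emph{different from} the one taking the transition, so a process sitting alone in $q$ with a transition guarded $\disj{q}$ is disabled, while two processes in $q$ would enable each other. Hence a deadlocked configuration is not characterised by its occupied-state \emph{set}, and reproducing or extending it requires tracking which states need exactly one versus at least two occupants (this is precisely the $\dead_1$/$\dead_2$ split the paper introduces for its deadlock cutoffs). For the same reason the upward direction for global deadlock is not the park-in-$\init_B$ monotonicity argument: an idle extra process may remain enabled forever and destroy the global deadlock. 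The paper instead lets the new process mimic a process that is deadlocked in a state already occupied by two processes, which exists by pigeonhole once $n > \card{B}$ --- and it is exactly this direction that the remark $(\dagger)$ singles out as not following from Emerson and Kahlon's original statement under the thesis's definition of cutoff.
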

 
\begin{theorem}[Conjunctive Cutoff Theorem] \label{thm:conj-cutoff}
    For closed conjunctive systems ${A{\parallel}B^n}$,
    $2\card{B}$ is a cutoff {$^{(\dagger)}$} for formulas of the
    form $\A h(A)$ and $\E h(A)$, and for global deadlock detection.
    For formulas of the form $\A h(B^{(1)})$ and $\E h(B^{(1)})$,
    $2\card{B}+1$ is a cutoff.
\end{theorem}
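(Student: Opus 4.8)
The plan is to reduce the statement to a single \emph{monotonicity} property of realizable local behaviours and to prove it in the two size directions separately. Since $\A h \equiv \neg\,\E\,\neg h$ and $\neg h$ is again an $\LTLmX$ property, and since by symmetry of guarded systems a formula of the form $\A h(B^{(1)})$ only constrains $A$ together with a single copy $B_1$ (while $\A h(A)$ constrains $A$ alone), it suffices to show: for every $n \ge c$, the set of stutter-equivalence classes of local runs of the formula processes realizable in \cutoffsys{} coincides with the set realizable in \largesys, and likewise for the existence of globally deadlocked runs. Here I would exploit that $\LTLmX$ is stutter-insensitive, so that interleaved steps of background processes may be inserted or deleted freely.

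First I would treat the easy direction, \textbf{up-monotonicity} (from $m$ to $m+1$ copies). Given a run of $(A,B)^{(1,m)}$, I add one further $B$-process and keep it frozen in $\initstate_B$. Because in conjunctive systems $\initstate_B$ is a neutral state contained in every guard, this process never falsifies any conjunctive guard and never has to move; an infinite run stays infinite and the projection onto the formula processes is unchanged. Thus realizability is monotone upward with no bound required.

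The crux is the converse, \textbf{down-monotonicity} (from $n$ copies down to the cutoff). Given a maximal run $x$ of \largesys{} realizing the desired behaviour, I keep the formula processes and then select a \emph{bounded} pool of background $B$-processes to retain, dropping the rest. Dropping processes can only enable conjunctive transitions, so every retained transition stays enabled; the only real difficulty is to keep the reproduced run \emph{maximal} while preserving the formula-process projection. If the formula processes move infinitely often, retaining them alone already yields an infinite valid run. The hard case is an infinite $x$ in which the formula processes eventually idle while the background processes carry the infinite tail: then I must retain enough background processes to reproduce an infinite background computation that keeps all the (idle) guards satisfied and never gets stuck. I would bound the number of helpers by a pigeonhole over the $\card{B}$ states of $B$: in the recurring tail only the \emph{set} of simultaneously occupied background states matters, so one representative per occupied state reproduces each configuration, and one additional ``spare'' per state lets a helper migrate between the states required at consecutive moments while the guards remain satisfied (using $\initstate_B$ as a universally safe parking state). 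This yields at most $2\card{B}$ background processes; adding $0$ formula $B$-processes for $\A h(A),\E h(A)$ and one ($B_1$) for $\A h(B^{(1)}),\E h(B^{(1)})$ gives the cutoffs $2\card{B}$ and $2\card{B}+1$ respectively.

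Finally, for \textbf{global deadlock detection} I would reproduce a deadlocked configuration rather than a run: in a globally deadlocked configuration every transition of every process is disabled, and in a conjunctive system each disabled transition is witnessed by \emph{some} other process lying outside its guard. Keeping one witness per required blocking state and again bounding the witnesses by a pigeonhole over the $\card{B}$ states reproduces a deadlocked configuration in \cutoffsys, while the converse uses the same $\initstate_B$-padding as above. The last ingredient is \textbf{tightness}: I would exhibit families of templates on which fewer than $2\card{B}$ (respectively $2\card{B}+1$) copies fail to reproduce some behaviour or deadlock, so that the counting in the down-monotonicity step cannot be improved. I expect the helper-counting in down-monotonicity, together with these matching lower-bound constructions, to be where essentially all of the work lies.
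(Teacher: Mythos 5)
Your up-monotonicity step and your key observation --- that deleting processes in a conjunctive system can only \emph{enable} transitions --- are both correct, and they are exactly the paper's starting point (the paper cites this theorem from Emerson--Kahlon and reproves the relevant lemmas in Section~\ref{gua:sec:proofs-conj}). But two of your counting arguments have genuine gaps. For global deadlock detection, ``one witness per required blocking state'' is not enough: guards are evaluated non-reflexively, over the \emph{other} processes only, so a process in state $q$ whose transitions are guarded by ``$\forall\neg q$'' is blocked precisely because a \emph{second} process sits in $q$; keep a single representative of $q$ and that representative becomes enabled, destroying the deadlock. This is what forces the factor $2$ in $2\card{B}$ --- the paper's construction keeps two processes per occupied final state whenever two were present --- and the paper's tightness example (every transition guarded by $\forall\neg 1_B,\dots,\forall\neg k_B$) deadlocks only with two processes in each state, so a one-per-state count provably fails. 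Your up direction for deadlocks is also too quick: padding with a process frozen in $\init_B$ need not preserve a \emph{global} deadlock, since the new process may have enabled transitions forever, leaving only a local deadlock; the paper flags exactly this in its $(\dagger)$ remark and has to prove monotonicity for general (local or global) deadlocks separately.

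On the $\LTLmX$ part your bound is not wrong, but the justification is borrowed from the wrong system model and hides the real picture. In a conjunctive system no retained process ever needs other processes present to stay enabled --- ``one representative per occupied state so that the guards remain satisfied'' is disjunctive-guard reasoning --- and your ``spare'' helpers that ``migrate between the states required at consecutive moments'' would need actual template transitions between those states, which you never exhibit. None of this machinery is needed: the only obstacle to down-reduction is maximality of the run, so if the formula processes eventually idle you retain \emph{one} background process that moves infinitely often and copy its local run verbatim (its transitions stay enabled after deletion of the others). That yields cutoff $k+1$ for $h(A,B^{(k)})$, which is the paper's new, much smaller bound; the $2\card{B}$ and $2\card{B}+1$ in the theorem you are proving are driven entirely by the deadlock-detection part, not by the temporal properties.
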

\noindent
In the above theorems,
$h(A)$ (resp.\ $h(B^{(1)})$) means that the formula talks about the $A$-process only (resp.\ $B_1$).

\begin{remark} ${(\dagger)}$
Note that Emerson and Kahlon \cite{Emerson00} proved these results for
a different definition of a cutoff (see Remark \ref{re:EK_cutoffs}).  
Their results also hold for our definition, except possibly for
global deadlocks.  For the latter case to hold with the new cutoff definition, one 
also needs to prove the direction ``global deadlock in the cutoff system implies global
deadlock in a large system'' (later called Monotonicity Lemma).
In Sections~\ref{gua:sec:proofs-disj-deadlock-unfair} and \ref{gua:sec:proofs-disj-deadlock-fair},
Sections~\ref{gua:sec:proofs-conj-deadlock-unfair} and \ref{gua:sec:proofs-conj-deadlock-fair},
we prove these lemmas for the case of general deadlock (global \emph{or} local).
\end{remark}

\subsection*{Challenge: Open Systems}
For any open system $S$ there exists a closed system $S'$ such that $
S$ and $S'$ cannot be distinguished by $\LTL$ specifications 
(e.g., see Manna and Pnueli~\cite{Manna92}). Thus, one approach to PMC for open 
systems is to use a translation between open and closed systems, and then use the 
existing cutoff results for closed systems.

While such an approach works in theory, it might not be feasible in practice:
since cutoffs depend on the size of the process templates,
and the translation blows up the process template,
it also blows up the cutoffs.
Thus, cutoffs that directly support open systems are important.

\subsection*{Challenge: Liveness and Deadlocks under Fairness}
We are interested in cutoff results that support liveness properties.
Consider a specification $\Phi=h(A,B^{(k)})$.
In general, we would like to consider only runs where all processes move infinitely often,
i.e.,
use the unconditional fairness assumption $\forall{p}. \GF \sched_p$ and thus have $\A_{uncond}\Phi$.
However, this would mean that we accept all systems that always go into a local deadlock,
since then the assumption is violated (i.e., there will be no unconditionally-fair runs).
This is especially undesirable in synthesis, because the synthesizer often tries to violate the assumptions to satisfy the specification.
To avoid this,
we require the absence of local deadlocks.
But local deadlocks may appear due to unfair scheduling.
Therefore we require the absence of local deadlocks under the strong fairness assumption,
i.e., we require satisfaction of the formula
$\A_{strong} \spec_{\neg dead}=\big(\forall{p}. (\GF \enabled_p \impl \GF \sched_p)\big) \impl \forall{p}. \GF\enabled_p$.
This formula can be roughly read as ``the absence of local deadlocks under fair scheduling''.
Since absence of global deadlocks and absence of local deadlocks under strong fairness imply unconditional fairness,
we can safely use $\A_{uncond}\Phi$.


In summary, for a parameterized specification $\spec$, we consider satisfaction of
\[
\begin{array}{lllll}
\textit{``all runs are infinite''} &~~\land~~& \A_{strong} \spec_{\neg dead} & ~~\land~~ & \A_{uncond} \spec.
\end{array}
\]
This is equivalent to $\textit{``all runs are infinite''} \land \A_{strong} (\spec_{\neg dead} \,\land\, \spec)$, but by considering the form above we can separate the tasks of deadlock detection and of model checking $\LTLmX$-properties, and obtain modular cutoffs. 
(The phrase ``all runs are infinite'' is another way of saying ``all runs have no global deadlocks''.)


\section{New Cutoff Results} \label{gua:sec:cutoffs}

We present new cutoff results that extend Theorems~\ref{thm:disj-cutoff-pairs} and \ref{thm:conj-cutoff}.
The new and previous results are summarized in the table below.

\begin{table}[h]
\centering
\label{table:cutoffs}
\centering
\setlength{\tabcolsep}{2pt}
{
\begin{tabular}{ r|c|c|c|c }
   & \specialcellC{$h(A,B^{(k)})$ \\ no fairness} & \specialcellC{deadlock detection \\ no fairness} & \specialcellC{$h(A,B^{(k)})$ \\ uncond. fairness} & \specialcellC{deadlock detection \\ strong fairness}  \\[9pt]
\hline
Disjunctive~ & $|B| + k + 1$ &
        $2|B| - 1$ & 
        $2|B| + k - 1$ &
        $2|B| - 1$ 
          \\[4pt]
\hline
Conjunctive~ & 
        $k+1$ &  
        $2|B|-2~(*)$ & 
        $k+1~(*)$ &  
        $2|B|-2~(*)$
\end{tabular}
}
\end{table}
The table distinguishes between disjunctive and conjunctive systems (in rows).
In the columns,
we consider satisfaction of properties $h(A,B^{(k)})$ and the existence of deadlocks,
with and without fairness assumptions.
All results hold for open systems, and for both path quantifiers $\pforall$ and $\pexists$.
Cutoffs depend on the size of process template $B$ and the number $k \geq 1$ of $B$-processes a property talks about.

Results marked with a $(*)$ are for a restricted class of systems:
for conjunctive systems with fairness, we require infinite runs to be 
initializing, i.e., all non-deadlocked 
processes
return to 
$\init$
infinitely often.\footnote{This assumption is in the same 
flavor as the restriction that $\initstate_A$ and $\initstate_B$ appear in 
all conjunctive guards. Intuitively, the additional restriction makes sense 
since conjunctive systems model shared resources, and everybody who takes a 
resource should eventually release it.}
Additionally, the cutoffs for deadlock detection in conjunctive systems only support $1$-conjunctive systems.
\ak{explain why we need it}

All cutoffs in the table are tight---no smaller cutoff can exist for 
this class of systems and properties---except for the case of deadlock 
detection in disjunctive systems without fairness. There, the cutoff is 
asymptotically tight, i.e., it must increase linearly with the size 
of the process template.

Note that the table does not describe all possible combinations:
for example, we do not consider satisfaction of $h(A,B^{(k)})$ on strong-fair runs.
But the results in the table are the most interesting, from our view,
for parameterized synthesis.

In the following sections we prove the results.

\section{Proof Structure}\label{gua:sec:proof-structure}

The proofs for the cutoff results, new and original,
are based on two lemmas, Monotonicity and Bounding~\cite{Emerson00}.
When combined together, the lemmas give a cutoff.
We state the lemmas, and discuss them in the context of deadlock detection and fairness.
The detailed proofs are in Sections~\ref{gua:sec:proofs-disj} and \ref{gua:sec:proofs-conj}.
Note that we only consider properties of the form $h(A,B^{(1)})$---the
proof ideas extend to general properties $h(A,B^{(k)})$ without difficulty.
Similarly, in most cases the proof ideas extend to open systems
without major difficulties---mainly because when we construct a simulating
run, we have the freedom to choose the input that is needed.
Only for the case of deadlock detection we have to handle open systems explicitly.

\smallskip
\noindent
{\bf 1) Monotonicity lemma:} if a behavior is possible in a (conjunctive or disjunctive) system with $n \in \Nat$ copies of $B$,
then it is also possible in a (conjunctive or disjunctive resp.) system with one additional process:
\[
\largesys \models \pexists h(A,B^{(1)}) 
~\implies~
(A,B)^{(1, n+1)} \models \pexists h(A,B^{(1)}), 
\]
and if a deadlock is possible in $(A,B)^{(1, n)}$, then it is possible in $(A,B)^{(1, n+1)}$.

\parit{Discussion}
The lemma is easy to prove for properties 
$\pexists h(A,B^{(1)})$ in both disjunctive and conjunctive systems, by letting the 
additional process stay in its initial state $\init_B$ forever (see~\cite{Emerson00}).
This cannot disable transitions with disjunctive guards, as
these check for \emph{existence} of a local state in another process (and we 
do not remove any processes), and it cannot disable conjunctive guards since 
they contain $\init_B$ by assumption. 
However, this construction violates fairness, since the new process
never moves. This can be resolved in the disjunctive case by letting the
additional process mimic all transitions of an existing process. But in
general this does not work in conjunctive systems (due to the non-reflexive
interpretation of guards).
For this case and for deadlock detection, the proof is not
trivial and may only work for $n \geq c$, for some lower bound $c \in \Nat$.
The following sections provide the details.

\smallskip
\noindent
{\bf 2) Bounding lemma:} there exists a number $c \in \Nat$ such that
a behavior is possible in a system with $c$ copies of $B$ if it is possible in a system with
$n \geq c$ copies of process $B$:
\[
(A,B)^{(1, c)} \models \pexists h(A,B^{(1)})
~\impliedby~
(A,B)^{(1, n)} \models \pexists h(A,B^{(1)})
,
\]
and a deadlock is possible in \cutoffsys if it is possible in \largesys.

\parit{Discussion}
For disjunctive systems,
the main difficulty is that removing processes
might falsify guards of the local transitions of $A$ or $B_1$ in a given run.
To address this,
Emerson and Kahlon~\cite{Emerson00} came up with so-called flooding construction (described later).
For conjunctive systems,
removing processes from a run is easy for the case of infinite runs,
since a transition that was enabled before cannot become disabled.
Here, the difficulty is in preserving deadlocks,
because removing processes may enable processes that were deadlocked before.
The next sections explain how to address this.

\parbf{Tightness}
Recall from Section~\ref{gua:sec:paramsynt} that
$c$ is a tight cutoff iff $c$ is a cutoff and there are templates $(A,B)$ and a property $\Phi$,
such that
$$\sys{c-1} \not\models \Phi \textit{ and } \sys{c} \models \Phi.$$
For deadlock detection this is equivalent to:
$\sys{c-1}$ does not have a deadlock but $\sys{c}$ does.
To prove tightness, we provide a template $(A,B)$ and a property.

The next sections contains all the proofs of the results in the table.
For each row and column, we prove monotonicity and bounding lemmas,
as well as tightness.
Note that for simplicity the proofs are for the case of $h(A,B_1)$,
while the generalization to the case $h(A,B^{(k)})$ follows.

\section{Proof Techniques for Disjunctive Systems} \label{gua:sec:proofs-disj}

\subsection{\LTLmX\ Properties without Fairness: Existing Constructions}
\label{gua:sec:ideas-disj-nofair}

We revisit the main techniques
of the original proof of Theorem~\ref{thm:disj-cutoff-pairs}~\cite{Emerson00}. 

\begin{lemma}[Monotonicity: Disj, \LTLmX, Unfair] \label{disj:le:NonFairDisjunctiveMono}
    For disjunctive systems:
    \begin{align*}
    &\forall n \geq 1:\\
    &(A,B)^{(1,n)} \models \pexists h(A,B_1)
    \ \Impl \
    (A,B)^{(1,n+1)} \models \pexists h(A,B_1).
    \end{align*}
\end{lemma}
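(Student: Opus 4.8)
The plan is to prove the Monotonicity Lemma for disjunctive systems by exhibiting an explicit construction that transports a witnessing run from the $n$-process system into the $(n+1)$-process system, while preserving the local behavior of $A$ and $B_1$ (which is all that $h(A,B_1)$ observes). First I would note that since we are dealing with $\pexists h(A,B_1)$, it suffices to take \emph{one} run $x$ of $(A,B)^{(1,n)}$ that satisfies $h(A,B_1)$ under the relevant path quantifier semantics, and build a corresponding run $x'$ of $(A,B)^{(1,n+1)}$ whose projection onto $A$ and $B_1$ is identical, so that $x' \models h(A,B_1)$ as well.

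The key construction is to add one fresh process $B_{n+1}$ and have it \emph{mimic} an existing process, say $B_1$ (this is the standard trick alluded to in the Monotonicity discussion on page dealing with the disjunctive case). Concretely, whenever $B_1$ makes a move in $x$, I would schedule $B_{n+1}$ to make the identical move immediately afterwards (or interleave the two copies in lockstep), so that at all relevant moments $B_{n+1}$ is in the same local state as $B_1$. The crucial point to verify is that every transition taken in $x$ remains enabled in $x'$: disjunctive guards are satisfied by the \emph{existence} of some other process in a guard state, and since $x'$ contains all the processes of $x$ (we only added one), any guard satisfied in $x$ is still satisfied in $x'$ — adding processes can only make disjunctive guards easier to satisfy, never harder. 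Thus no transition of $A$ or any $B_i$ becomes disabled. I would carry out the argument by induction on the length of the run (or by defining $x'$ directly and checking the transition relation holds at each step), using the interleaving semantics so that exactly one process moves per step.

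For the deadlock part of the lemma, I would argue that if $(A,B)^{(1,n)}$ has a deadlocked run, then the same flooding/mimicking idea produces a deadlocked run of $(A,B)^{(1,n+1)}$. For a global deadlock (finite run ending in configuration $(s,e,\bot)$), I would have $B_{n+1}$ mimic $B_1$ so that it ends in $B_1$'s final local state; I then must check that $B_{n+1}$ is also disabled in the final configuration. Here I would use that if $B_1$ is disabled in $s$ (all its guards fail disjunctively, meaning no \emph{other} process is in the required guard state), then $B_{n+1}$, sitting in the same local state, sees the same set of other-process states modulo the presence of $B_1$ itself — and since $B_1$ is present to witness states for $B_{n+1}$ and vice versa, a short case analysis confirms both remain disabled. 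The main obstacle I anticipate is precisely this deadlock-preservation check: the non-reflexive interpretation of guards means $B_{n+1}$ and $B_1$ can each ``see'' the other, so I must ensure that placing two processes in the same state does not accidentally \emph{enable} a transition that was disabled in the smaller system. For disjunctive guards this is manageable (two copies in the same state witness the same guards as one copy would for the others, and neither witnesses a guard for the other that was not already checkable), but it requires careful bookkeeping about which process witnesses which guard, and this is where I would spend the bulk of the rigorous effort.
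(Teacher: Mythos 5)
Your argument for the stated lemma is correct, but you take a more elaborate route than the paper. The paper's proof is a one-liner: copy $x$ into $y$ and let the new process sit in $\init_B$ forever. Since disjunctive guards only ask for the \emph{existence} of some other process in a guard state, and no process is removed, every transition of $x$ stays enabled, and $y(A,B_1)=x(A,B_1)$ exactly. Your mimicking construction also works, but it costs you two extra obligations: (i) interleaving the clone's moves inserts stuttering into the projection onto $A$ and $B_1$, so you only get stuttering equivalence and must invoke that $h$ is an $\LTLmX$ formula (the paper needs this argument in the Bounding lemma, but not here); (ii) you must check that the clone's copied transition is itself enabled one step later, which holds because the witness process from the original run $x$ has not moved. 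What mimicking buys you is that it generalizes to the \emph{fair} case (where a forever-idle process would violate unconditional fairness), which is exactly where the paper switches to your construction; for the unfair case it is unnecessary machinery.

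Be aware, though, that the statement you were asked to prove contains no deadlock claim --- deadlock monotonicity is a separate lemma in the paper, stated only for $n \geq |B|+1$. The place where you say you would spend the bulk of your effort rests on a claim that is false in general: adding a second process into a state $q$ occupied by a single deadlocked process \emph{can} enable a transition, namely one guarded by ``$\exists q$'', since the two copies now witness that guard for each other even though neither had a witness before. This is precisely why the paper's deadlock lemma assumes $n \geq |B|+1$: by pigeonhole some deadlocked state already holds two processes, and the new process mimics one of \emph{those}, so no previously unwitnessed guard becomes witnessed. Mimicking $B_1$ specifically, as you propose, does not work for deadlock preservation.
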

\begin{proof}
Given a run $x$ of $(A,B)^{(1,n)}$,
we construct a run $y$ of $(A,B)^{(1,n+1)}$: 
copy $x$ into $y$ and keep the additional process in the initial state.
\end{proof}

As for the bounding lemma, we construct an infinite run $y$ of $\cutoffsys$ 
with $y \models h(A,B^{(1)})$, 
based on an infinite run $x$ of $\largesys$ with $n>c$ and $x \models h(A,B^{(1)})$. 
The idea is to copy local runs $x(A)$ and $x(B_1)$ into $y$, 
and construct runs of other processes in a way 
that enables all transitions along $x(A)$ and $x(B_1)$. 
The latter is achieved with the flooding construction.

\myparagraph{Flooding construction \cite{Emerson00}}
Given a run $x = (s_1,e_1,p_1), (s_2,e_2,p_2) \ldots$ of $\largesys$, let
$\visited_\mB(x)$ be the set of all local states visited by $B$-processes in $x$,
i.e., $\visited_\mB(x) = \{ q \in Q_B \| \exists m \exists i.\ s_m(B_i) = q \}$. 

For every $q \in \visited_\mB(x)$ there is a local run of \largesys, say $x(B_i)$,
that visits $q$ first, say at moment $m_q$. Then, saying that process 
$B_{i_q}$ of \cutoffsys \emph{floods $q$} means:
$$y(B_{i_q}) = x(B_i)\slice{1}{m_q}(q)^\omega.$$ 
In words: the run $y(B_{i_q})$ is the same as $x(B_i)$ until moment $\time_q$,
and after that the process never moves.

The construction achieves the following. 
If we copy local runs of $A$ and $B_1$ from $x$ to $y$, 
and in $y$ for every $q \in \visited_\mB(x)$ introduce one process that floods $q$, 
then: 
if in $x$ at some moment $\time$ there is a process in state $q'$, 
then in $y$ at moment $\time$ there will also be a process (different from 
$A$ and $B_1$) in state $q'$. Thus, every transition of $A$
 and $B_1$, which is enabled at moment $\time$ in $x$, will also be enabled in $y$. 

\begin{lemma}[Bounding: Disj, \LTLmX, Unfair] \label{disj:le:NonFairDisjunctiveBounding}
    For disjunctive systems:
    \begin{align*}
    \forall n \geq |B|+2:\ 
    (A,B)^{(1,|B|+2)} \models \pexists h(A,B_1)
    \ \ \Implied \ \ 
    \largesys \models \pexists h(A,B_1).
    \end{align*}
\end{lemma}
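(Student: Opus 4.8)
The plan is to prove the Bounding Lemma for disjunctive systems via the flooding construction described just above the statement. I start with an infinite run $x$ of $\largesys$ with $n \geq |B|+2$ and $x \models \pexists h(A,B_1)$, and build a run $y$ of $(A,B)^{(1,|B|+2)}$ that preserves the local behaviours $x(A)$ and $x(B_1)$, hence $y \models \pexists h(A,B_1)$ as well. The key counting observation is that $|\visited_\mB(x)| \leq |B|$, so I need at most $|B|$ flooding processes to reproduce every local state ever visited by a $B$-process in $x$. Together with the distinguished processes $A$ and $B_1$ (whose local runs are copied verbatim), this gives a system with $1 + 1 + |B| = |B|+2$ processes, matching the cutoff.

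First I would construct $y$ explicitly. Designate $B_1$ in the small system to copy $x(B_1)$, and for each $q \in \visited_\mB(x)$ designate one of the remaining $B$-processes to flood $q$, i.e.\ follow $x(B_{i})\slice{1}{m_q}$ (where $B_i$ is the process first reaching $q$, at moment $m_q$) and then idle in $q$ forever. Since there are at most $|B|$ distinct such states, and $n \geq |B|+2$, there are enough processes; any surplus processes (if $|\visited_\mB(x)| < |B|$) can simply stay in $\init_B$. The scheduling of $y$ mirrors that of $x$: whenever $A$ or $B_1$ moves in $x$, the corresponding process moves in $y$; whenever some flooding process $B_{i_q}$ moves during its copying phase (moment $\le m_q$), the corresponding process moves in $y$.

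Next I would verify that $y$ is a legitimate run, which reduces to checking that every transition taken by $A$, $B_1$, or a flooding process is enabled in $y$ at the moment it is taken. This is exactly the enabledness-preservation property stated in the paragraph on the flooding construction: if at moment $\time$ in $x$ some process occupies a state $q'$, then at moment $\time$ in $y$ some process (distinct from the mover) also occupies $q'$. For disjunctive guards this is all that is required, since a disjunctive guard $g$ is satisfied for $(s,p)$ iff \emph{some} other process sits in $g$, and flooding only ever adds occupancies of visited states, never removes the witnessing ones. I would make this precise by an induction on $\time$, maintaining the invariant that the multiset of occupied $B$-states in $y$ at moment $\time$ contains (as a set) the set of occupied $B$-states in $x$ at moment $\time$. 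Finally, since $y(A)=x(A)$ and $y(B_1)=x(B_1)$ as local runs, and $h(A,B_1)$ is an $\LTLmX$ property over the propositions of $A$ and $B_1$ only, satisfaction transfers: $y \models \pexists h(A,B_1)$.

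The main obstacle, and the point deserving the most care, is the timing/stuttering alignment between $x$ and $y$ together with the $\X$-freeness of the logic. Flooding processes stop moving after moment $m_q$, and in the interleaving semantics the schedules of $x$ and $y$ need not step in lockstep; I would argue that the relevant witness run $y(A,B_1)$ equals $x(A,B_1)$ up to the positions of stuttering steps, and invoke the fact that $\LTLmX$ cannot distinguish stutter-equivalent paths to conclude $h$ is preserved. I would also confirm the enabledness invariant survives the moments where $A$ or $B_1$ itself is the unique occupant of some state in $x$: because the guards are checked non-reflexively (over processes \emph{different from} the mover), and flooding supplies an independent witness for each visited $B$-state, no guard along $x(A)$ or $x(B_1)$ can fail in $y$. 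Once these stuttering and enabledness points are settled, the lemma follows, and combined with Lemma~\ref{disj:le:NonFairDisjunctiveMono} it yields that $|B|+2$ is a cutoff.
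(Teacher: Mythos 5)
Your construction is the paper's flooding argument, and the enabledness invariant you state (every $B$-state occupied in $x$ at moment $\time$ is occupied in $y$ at moment $\time$ by a flooder, so disjunctive guards along $x(A)$ and $x(B_1)$ stay satisfied) is exactly right. But there is a genuine gap: you never ensure that the constructed path $y$ is an \emph{infinite} run. Under your schedule the only processes that ever move in $y$ are $A$, $B_1$, and the flooding processes during their copying phases, and each flooder freezes forever at its moment $m_q$; a surplus process ``staying in $\init_B$'' contributes no moves at all. In the unfair setting it is perfectly possible that $x(A)$ and $x(B_1)$ make only finitely many moves (the infinitude of $x$ can be carried entirely by other $B$-processes), in which case after finitely many steps no designated process of $y$ ever moves again. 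Then $y$ is a finite path that need not even be maximal, so it is not a run and does not witness $\pexists h(A,B_1)$. The paper closes this hole by devoting one additional $B$-process to copy verbatim the local run of an infinitely moving process $B_\infty$ of $\largesys$ (which exists because $x$ is infinite).

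This missing process is not a technicality: it is the reason the cutoff is $|B|+2$ rather than the $|B|+1$ copies of $B$ your construction actually consumes ($B_1$ plus at most $|B|$ flooders; your count $1+1+|B|$ reaches $|B|+2$ only by including the $A$-process, which is not a copy of $B$). Tightness~\ref{obs:disj:tight_prop} exhibits templates where every run satisfying the property needs $|B|+2$ copies of $B$ precisely because one extra process must sit in the looping state to keep the run infinite. Add the $B_\infty$ process to your construction, fold its moves into the interleaving/destuttering step, and the proof goes through as in the paper.
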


The proof of the lemma is from \cite[Lemma 4.1.2]{Emerson00}. 
We recapitulate it to introduce the notions of ``a process floods a state'', 
\destutter, \interleave, and ``process mimics another process'',
which are used in our proofs later.

\begin{proof}[Proof idea]
The lemma is proved by copying local runs $x(A)$ and $x(B_1)$,
and flooding all states in $\visited_\mB(x)$.
To ensure that at least one process moves infinitely often
in $y$, we copy one additional (infinite) local run from $x$. Finally, it may
happen that the resulting collection of local runs violates the interleaving 
semantics requirement. To resolve this, we add stuttering steps into local 
runs whenever two or more processes move at the same time, and we 
remove global stuttering steps in $y$. Since the only difference between 
$x(A,B_1)$ and $y(A,B_1)$ are stuttering steps, $y$ and $x$ satisfy the same $
\LTLmX$-properties $h(A,B^{(1)})$. 
Since $\card{\visited_\mB(x)} \leq 
\card{B}$, we need at most $1+\card{B}+1$ copies of $B$ in \cutoffsys.
\end{proof}

\begin{proof}
Let $c = |B|+2$ and $n \geq c$. Let $x=(s_1,e_1,p_1), (s_2,e_2,p_2) \ldots$ be a run of $\largesys$ that satisfies $\pexists h(A,B_1)$. We construct a run $y$ of the cutoff system $\cutoffsys$ with $y(A, B_1) \simeq x(A, B_1)$.

Let $\visited(x)$ be the set of all visited states by B-processes in run $x$: $\visited(x) = \{ q \| \exists m \exists i: s_m(B_i) = q \}$. 

Construct the run $y$ of \cutoffsys as follows.
\li
  \-[a.] We copy runs of $A$ and $B_1$ from $x$ to $y$:
         $y(A)=x(A)$, $y(B_1)=x(B_1)$;

  \-[b.] Since $x$ is infinite, it has at least one infinitely moving process, denoted $B_\infty$. Devote one unique process $B_\infty$ in \cutoffsys that copies the behaviour of $B_\infty$ of \largesys: $y(B_\infty)=x(B_\infty)$.

  \-[c.] For every $q \in \visited$, there is a process of \largesys, denoted $B_i$, that visits $q$ first, at moment denoted $m_q$. Then devote one unique process in \cutoffsys, denoted $B_{i_q}$, that \emph{floods $q$}: set $y(B_{i_q}) = x(B_i)\slice{1}{m_q}(q)^\omega$. In words: the run $y(B_{i_q})$ repeats exactly that of $x(B_i)$ till moment $m_q$, after which the process is never scheduled.

  \-[d.] Let any other process $B_i$ of \cutoffsys not used in the previous steps (if any) \emph{mimic} the behavior of $B_1$ of \cutoffsys: $y(B_i) = y(B_1)$.
\il
The figure illustrates the construction.\ak{\init should be flooded}
On the left is $\largesys$ and on the right is $(A,B)^{(1,|B|+2)}$
(i.e., $(A,B)^{(1,5)}$, since $|B|=3$ in the figure).

\begin{figure}[hp]
\centering
\scalebox{0.7}{

\begin{tikzpicture}
\tikzstyle{every label} = [font=\normalsize];
\tikzstyle{state} = [circle, inner sep=2pt, minimum size=2mm, draw=black, node distance=0.5cm]
\tikzstyle{hiddenstate} = [state, circle,inner sep=0pt,outer sep=0pt, minimum size=0pt, draw=gray]
\tikzstyle{topstate} = [hiddenstate, node distance=0.75cm]
\tikzstyle{state21} = [state, rectangle]
\tikzstyle{state22} = [state, diamond]
\tikzstyle{state23} = [state, regular polygon,circle]


\node (t11_top) [topstate, label={[name=t11_top-label] $A_1$}] {};
\node (t11_1) [hiddenstate, below of=t11_top] {};
\node (t11_2) [hiddenstate, below of=t11_1] {};
\node (t11_3) [hiddenstate, below of=t11_2] {};
\node (t11_4) [hiddenstate, below of=t11_3] {};
\node (t11_5) [hiddenstate, below of=t11_4] {};
\node (t11_6) [hiddenstate, below of=t11_5] {};
\draw[|-] (t11_top) -- (t11_6);

\node (t21_top) [topstate, right of=t11_top, label={[name=t21_top-label] $B_1$}] {};
\node (t21_1) [hiddenstate, below of=t21_top] {};
\node (t21_2) [hiddenstate, below of=t21_1] {};
\node (t21_3) [hiddenstate, below of=t21_2] {};
\node (t21_4) [hiddenstate, below of=t21_3] {};
\node (t21_5) [hiddenstate, below of=t21_4] {};
\node (t21_6) [hiddenstate, below of=t21_5] {};
\draw[|-] (t21_top) -- (t21_6);

\node (t22_top) [topstate, right of=t21_top, label={[name=t22_top-label] $B_2$}] {};
\node (t22_1) [state21, below of=t22_top] {};
\node (t22_2) [hiddenstate, below of=t22_1] {};
\node (t22_3) [state22, below of=t22_2] {};
\node (t22_4) [hiddenstate, below of=t22_3] {};
\node (t22_5) [state23, below of=t22_4] {};
\node (t22_6) [hiddenstate, below of=t22_5] {};
\draw[|-] (t22_top) -- (t22_1);
\draw[] (t22_1) -- (t22_3);
\draw[] (t22_3) -- (t22_5);
\draw[] (t22_5) -- (t22_6);

\node (t2i_top) [topstate, right of=t22_top] {};

\node (t2m_top) [topstate, right of=t2i_top, label={[name=t2m_top-label] $B_{n-1}$}] {};
\node (t2m_1) [hiddenstate, below of=t2m_top] {};
\node (t2m_2) [state23, below of=t2m_1] {};
\node (t2m_3) [hiddenstate, below of=t2m_2] {};
\node (t2m_4) [state22, below of=t2m_3] {};
\node (t2m_5) [hiddenstate, below of=t2m_4] {};
\node (t2m_6) [hiddenstate, below of=t2m_5] {};
\draw[|-] (t2m_top) -- (t2m_2);
\draw[] (t2m_2) -- (t2m_4);
\draw[] (t2m_4) -- (t2m_6);

\node (t2n_top) [topstate, right of=t2m_top, label={[name=t2n_top-label] $B_n$}] {};
\node (t2n_1) [hiddenstate, below of=t2n_top] {};
\node (t2n_2) [hiddenstate, below of=t2n_1] {};
\node (t2n_3) [hiddenstate, below of=t2n_2] {};
\node (t2n_4) [hiddenstate, below of=t2n_3] {};
\node (t2n_5) [hiddenstate, below of=t2n_4] {};
\node (t2n_6) [hiddenstate, below of=t2n_5, label={[name=t2n_6-label]below:$\infty$}] {};
\draw[|-] (t2n_top) -- (t2n_6);


\coordinate (t11_old-top) at (t11_top);
\coordinate (t11_old-bottom) at (t11_6);
\coordinate (t21_old-top) at (t21_top);
\coordinate (t21_old-bottom) at (t21_6);
\coordinate(t22_old-bottom) at (t22_6);
\coordinate (t2n_old-top) at (t2n_top);
\coordinate(t2m_old-bottom) at (t2m_6);
\coordinate(t2n_old-bottom) at ($(t2n_6-label)-(0,0.1)$);

\coordinate (t22_center) at ($(t22_3.center)$);
\coordinate (t2m_center) at ($(t2m_3.center)$);
\node (othrerssss) at ($(t22_center)+(0.8,0)$) {\textbf{\ldots}};

\node (box_outer) [draw=black, fit=(t11_top-label.center) (t2n_6), inner sep=0.5cm] {} ;


\node (t11_top) [topstate, label={[name=t11_top-label] $A_1$}, right of=t2n_old-top, node distance=1.5cm] {};
\node (t11_1) [hiddenstate, below of=t11_top] {};
\node (t11_2) [hiddenstate, below of=t11_1] {};
\node (t11_3) [hiddenstate, below of=t11_2] {};
\node (t11_4) [hiddenstate, below of=t11_3] {};
\node (t11_5) [hiddenstate, below of=t11_4] {};
\node (t11_6) [hiddenstate, below of=t11_5] {};
\draw[|-] (t11_top) -- (t11_6);

\node (t21_top) [topstate, right of=t11_top, label={[name=t21_top-label] $B_1$}] {};
\node (t21_1) [hiddenstate, below of=t21_top] {};
\node (t21_2) [hiddenstate, below of=t21_1] {};
\node (t21_3) [hiddenstate, below of=t21_2] {};
\node (t21_4) [hiddenstate, below of=t21_3] {};
\node (t21_5) [hiddenstate, below of=t21_4] {};
\node (t21_6) [hiddenstate, below of=t21_5] {};
\draw[|-] (t21_top) -- (t21_6);

\node (t22_top) [topstate, right of=t21_top, label={[name=t22_top-label] $B_2$}] {};
\node (t22_1) [state21, below of=t22_top] {};
\node (t22_2) [state21, below of=t22_1] {};
\node (t22_3) [state21, below of=t22_2] {};
\node (t22_4) [state21, below of=t22_3] {};
\node (t22_5) [state21, below of=t22_4] {};
\node (t22_6) [hiddenstate, below of=t22_5] {};
\draw[|-] (t22_top) -- (t22_1);
\draw[dashed] (t22_1) -- (t22_2);
\draw[dashed] (t22_2) -- (t22_3);
\draw[dashed] (t22_3) -- (t22_4);
\draw[dashed] (t22_4) -- (t22_5);
\draw[dashed] (t22_5) -- (t22_6);

\node (t2i_top) [topstate, right of=t22_top] {};

\node (t23_top) [topstate, right of=t22_top, label={[name=t23_top-label] $B_3$}] {};
\node (t23_1) [state21, below of=t23_top] {};
\node (t23_2) [hiddenstate, below of=t23_1] {};
\node (t23_3) [state22, below of=t23_2] {};
\node (t23_4) [state22, below of=t23_3] {};
\node (t23_5) [state22, below of=t23_4] {};
\node (t23_6) [hiddenstate, below of=t23_5] {};
\draw[|-] (t23_top) -- (t23_1);
\draw[] (t23_1) -- (t23_3);
\draw[dashed] (t23_3) -- (t23_4);
\draw[dashed] (t23_4) -- (t23_5);
\draw[dashed] (t23_5) -- (t23_6);

\node (t24_top) [topstate, right of=t23_top, label={[name=t24_top-label] $B_4$}] {};
\node (t24_1) [hiddenstate, below of=t24_top] {};
\node (t24_2) [state23, below of=t24_1] {};
\node (t24_3) [state23, below of=t24_2] {};
\node (t24_4) [state23, below of=t24_3] {};
\node (t24_5) [state23, below of=t24_4] {};
\node (t24_6) [hiddenstate, below of=t24_5] {};
\draw[|-] (t24_top) -- (t24_2);
\draw[dashed] (t24_2) -- (t24_3);
\draw[dashed] (t24_3) -- (t24_4);
\draw[dashed] (t24_4) -- (t24_5);
\draw[dashed] (t24_5) -- (t24_6);

\node (t25_top) [topstate, right of=t24_top, label={[name=t25_top-label] $B_5$}] {};
\node (t25_1) [hiddenstate, below of=t25_top] {};
\node (t25_2) [hiddenstate, below of=t25_1] {};
\node (t25_3) [hiddenstate, below of=t25_2] {};
\node (t25_4) [hiddenstate, below of=t25_3] {};
\node (t25_5) [hiddenstate, below of=t25_4] {};
\node (t25_6) [hiddenstate, below of=t25_5, label={[name=t25_6-label]below:$\infty$}] {};
\draw[|-] (t25_top) -- (t25_6);

\coordinate (t11_new-top) at (t11_top);
\coordinate (t11_new-bottom) at (t11_6);
\coordinate (t21_new-top) at (t21_top);
\coordinate (t21_new-bottom) at (t21_6);
\coordinate (t22_new-top) at (t22_top);
\coordinate (t22_new-bottom) at (t22_6);
\coordinate(t23_new-bottom) at (t23_6);
\coordinate(t24_new-bottom) at (t24_6);
\coordinate(t25_new-bottom) at ($(t25_6-label)-(0,0.1)$);

\node (box_outer) [draw=black, fit=(t11_top-label.center) (t25_6), inner sep=0.5cm] {} ;

\draw[->, bend right, dotted,semithick] (t22_old-bottom) to (t22_new-bottom);
\draw[->, bend right, dotted,semithick] (t22_old-bottom) to (t23_new-bottom);
\draw[->, bend right, dotted,semithick] (t2m_old-bottom) to (t24_new-bottom);
\draw[->, bend left=5, dotted,semithick] (t2n_top-label.north) to (t25_top-label.north);

\end{tikzpicture}
}
\end{figure}

The correctness follows from the observation that any transition of any process at any moment $m$ of $y$ was done by some process in $x$ at moment $m$, and hence is enabled at $m$. Also note that, if $\geq 2$ processes transit simultaneously in $y$, then the guards of their transitions will be enabled even if both of them are removed from the state space\ak{vague}. Note that it is possible that in $y$:
\li
  \- more than one process transits at the same moment. Then, \emph{\interleave} the transitions of such processes, namely arbitrarily sequentialize them. \ak{why are enabled}
  \- at some moment no processes move. Then remove elements of the run $y$ -- the resulting run is denoted $\destutter(y)$.
\il
This construction uses $|\visited| + 2 \leq |B|+2$ copies of B (ignoring case (d)).
\end{proof} 

\begin{tightness}[Disj, \LTLmX, Unfair] \label{obs:disj:tight_prop}
    The cutoff in Lemma~\ref{disj:le:NonFairDisjunctiveBounding} is tight.
    I.e., for any $k$ there exist process templates $(A,B)$ with $|B| = k$ 
    and $\LTLmX$ formula $h(A,B_1)$ such that:
    $$
    (A,B)^{(1,|B|+2)} \models \pexists h(A,B_1) ~~and~~ 
    (A,B)^{(1,|B|+1)} \not\models \pexists h(A,B_1).
    $$
\end{tightness}
\begin{proof}
The idea of the proof relies on the subtleties of the definition of a run: it is infinite (thus not globally deadlocked), and in each step of a run exactly one process moves. 

Consider the templates from Figure~\ref{fig:obs:disj:tight_prop} and let $\pexists h(A,B_1) = \pexists (\eventually 3_{B_1} \land \eventually\always (2_{B_1} \land {end}_A))$. In words: there exists a run in a system where process $B_1$ visits $3_B$ and process $B_1$ with $A$ eventually always stay in $2_B$ and ${end}_A$.
\begin{figure}[tb]
\centering
\begin{subfigure}[b]{0.35\textwidth}\center
\scalebox{0.66}{
\begin{tikzpicture}[node distance=2.1cm,inner sep=1pt,minimum size=0.5mm,->,>=latex]

\node[initial below, state] (a_1) {$1_A$};
\node (dots) [right of=a_1] {$\ldots$};
\node[state] (a_all) [right of=dots] {${all}_A$};
\node[state] (a_end) [right of=a_all] {${end}_A$};

\path (a_1) edge [above] node {$\disj{1_B}$} (dots);
\path (dots) edge [above] node {$\disj{{\card{B}}_B}$} (a_all);
\path (a_all) edge [above] node {$\disj{3_B}$} (a_end);

\end{tikzpicture}}
\caption*{Template A}
\end{subfigure}
\begin{subfigure}[b]{0.62\textwidth}\center
\scalebox{0.66}{
\begin{tikzpicture}[node distance=2.1cm,inner sep=1pt,minimum size=0.5mm,->,>=latex]

\node[initial below, state] (b_1) {$1_B$};
\node[state] (b_2) [left of=b_1] {$2_B$};
\node[state] (b_3) [right of=b_1]{$3_B$};
\node (dots) [right of=b_3] {$\ldots$};
\node[state] (b_k) [right=2.3cm of dots] {${\card{B}}_B$}; 

\path (b_1) edge [above] node {$\disj{1_B}$} (b_2);
\path (b_1) edge [above] node {$\disj{1_B}$} (b_3);
\path (b_3) edge [above] node {$\disj{3_B}$} (dots);
\path (dots) edge [above] node {$\disj{{\card{B}{-}1}_B}$} (b_k);
\path (b_k) [loop above] edge [right] node {$\disj{{\card{B}}_B}$} (b_k);
\path (b_3) [bend right=60] edge [above] node {$\disj{{all}_A}$} (b_1);

\end{tikzpicture}}
\caption*{Template B}
\end{subfigure}
\caption{Templates for proving Tightness~\ref{obs:disj:tight_prop}}
\label{fig:obs:disj:tight_prop}
\end{figure}
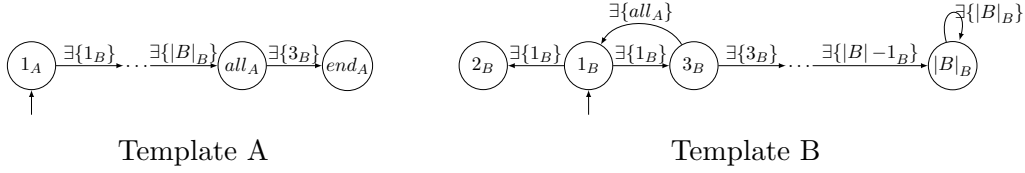

We need one process in every state of $B$ to enable the transitions of $A$ to ${all}_A$. Only when $A$ in ${all}_A$, $B_1$ can move $3_B \to 1_B$, and then at some point to $2_B$. After $B_1$ moves $3_B \to 1_B$, $A$ moves ${all}_A \to {end}_A$, which requires process $B_{i \neq 1}$ in $3_B$. Finally, to make the run infinite, there should be at least two processes in the state $k_B$.
Hence, every infinite run satisfying the formula needs at least $|B|+2$ $B$-processes.
\sj{other cases are covered in general lemma below}
\end{proof}

\subsection{\LTLmX\ Properties with Fairness: New Constructions} \label{gua:sec:ideas-disj-fair}

As for the case without fairness, proving the monotonicity lemma is simple.

\begin{lemma}[Monotonicity: Disj, \LTLmX, Fair] \label{disj:le:FairDisjunctiveMonotonicity}
For disjunctive systems:
\begin{align*}
& \forall n \geq 1: \\
&(A,B)^{(1, n)} \models \pexists_{uncond} h(A,B_1) 
\implies
(A,B)^{(1,n+1)} \models \pexists_{uncond} h(A,B_1),\\
\end{align*}
\end{lemma}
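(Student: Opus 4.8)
The plan is to adapt the simple copying argument from the unfair case (Lemma~\ref{disj:le:NonFairDisjunctiveMono}), but to replace the idle additional process---which would violate unconditional fairness---by a process that \emph{mimics} an existing one. First I would fix an unconditionally-fair run $x$ of $\largesys$ with $x \models h(A,B_1)$, so that every process (including $A$ and $B_1,\dots,B_n$) moves infinitely often. I would then build a run $y$ of $(A,B)^{(1,n+1)}$ by copying every original local run verbatim, $y(A)=x(A)$ and $y(B_i)=x(B_i)$ for $i\in[1..n]$, and by letting the new process $B_{n+1}$ follow the local run of $B_1$, i.e.\ $y(B_{n+1})=x(B_1)$.

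The central step is to realise this collection of local runs as a genuine interleaved run. Whenever $B_1$ takes its $m$-th transition in $x$, I would schedule $B_{n+1}$ to take the same transition in $y$ immediately before (or after) the corresponding move of $B_1$, inserting one stuttering step so that exactly one process moves per global step. I would then argue enabledness: the transition of $B_1$ at moment $m$ is guarded by some $g$ witnessed by a process $p'\neq B_1$ whose local state lies in $g$. Because $p'$ is one of $A,B_1,\dots,B_n$ and hence $p'\neq B_{n+1}$, and because scheduling $B_{n+1}$ adjacently to $B_1$ keeps $p'$ in its witnessing state, the same $p'$ witnesses $g$ for $B_{n+1}$; thus $B_{n+1}$'s mimicking transition is enabled. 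The dual observation---that adding the copy $B_{n+1}$ cannot \emph{disable} any original transition---is immediate for disjunctive guards, since $(s,p)\models g$ only requires the \emph{existence} of a witness and we never remove processes (the same reasoning used for monotonicity in the unfair case).

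It remains to check the two correctness conditions. For the specification, the local runs $y(A)$ and $y(B_1)$ differ from $x(A)$ and $x(B_1)$ only by the stuttering steps introduced for interleaving; since $h(A,B_1)$ is an $\LTLmX$ property and therefore stutter-insensitive, $\destutter(y)\models h(A,B_1)$ follows from $x\models h(A,B_1)$. For fairness, all original processes retain infinitely many moves (their moves are copied), and $B_{n+1}$ moves exactly as often as $B_1$, which by assumption moves infinitely often; hence $y$ is unconditionally-fair and witnesses $(A,B)^{(1,n+1)}\models \pexists_{uncond} h(A,B_1)$.

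The main obstacle I anticipate is the bookkeeping around enabledness of the mimicking process together with the interleaving requirement: I must ensure that sequentialising the simultaneous moves of $B_1$ and $B_{n+1}$ never relies on $B_1$ (or $B_{n+1}$) itself as the \emph{sole} guard witness, which is why I keep the original third-party witness $p'$ fixed across the adjacent pair of steps. Once this is secured, the remaining arguments (guard monotonicity, stutter-insensitivity, and fairness) are routine, and the same construction generalises verbatim to properties $h(A,B^{(k)})$ by mimicking $B_1$ (any of the named processes would serve equally well).
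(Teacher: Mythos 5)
Your construction---copying the run and letting the new process $B_{n+1}$ mimic an infinitely moving $B$-process---is exactly the paper's proof, which states it in two lines and leaves the enabledness and interleaving details implicit. Your additional bookkeeping (the third-party guard witness $p'$ surviving the sequentialisation of the adjacent moves, and stutter-insensitivity of $\LTLmX$) is correct and fills in what the paper omits.
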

\begin{proof}
In run $x$ of $(A,B)^{(1,n)}$ with $n \geq 1$ all processes move infinitely often. 
Hence let the run $y$ of $(A,B)^{(1,n+1)}$ copy $x$, 
and let the new process mimic an infinitely moving B process of $(A,B)^{(1,n)}$.
\end{proof}

To prove the bounding lemma, we introduce two new constructions.
We need new constructions, because the flooding construction does not preserve fairness,
and also cannot be used to construct deadlocked runs,
since it does not preserve disabledness of transitions of processes $A$ or $B_1$.

Consider the proof task of the bounding lemma for disjunctive systems with fairness:
given an unconditionally fair run $x$ of 
\largesys with 
$x \models h(A,B^{(1)})$, we want to construct an unconditionally fair run $y$ 
of \cutoffsys with $y \models h(A,B^{(1)})$. In contrast to unfair systems, we 
need to ensure that all processes move infinitely often in $y$. 
The insight is 
that after a finite time all processes will start looping 
around some set $\visited^\inf$ of states. We construct a run $y$ that
mimics this. To this end, we introduce two constructions. \emph{Flooding with
evacuation} is similar to flooding, but instead of keeping
processes in their flooding states forever it evacuates the processes into 
$\visited^\inf$. \emph{Fair extension} lets all processes move infinitely 
often without leaving $\visited^\inf$.

\myparagraph{Flooding with evacuation}
Given a subset $\mF \subseteq \mB$
and an infinite run $x=(s_1,e_1,p_1)\ldots$ of \largesys, 
define
\begin{align}
& \visInf{\mF}{x} = \{ q \|\! \exists \text{ infinitely many } ~~~~ m\!:  
s_m(B_i) = q 
\text{ for some } B_i \in \mF \} \label{disj:def_vinf_wrt} \\
& \visFin{\mF}{x} = \{ q \|\! \exists \text{ only finitely many } m\!:  
s_m(B_i) = q
\text{ for some } B_i\in \mF \} \label{disj:def_vfin_wrt}
\end{align}
Let $q \in \visFin{\mF}{x}$.
In run $x$ there is a moment $f_q$ when $q$
is reached for the first time by some process from $\mF$, denoted $B_{\first_q}$. 
Also, in run $x$ there is a moment $l_q$ such that:
$s_{l_q}(B_{\last_q})=q$ for some process $B_{\last_q} \in \mF$, 
and $s_t(B_i)\neq q$ for all $B_i \in \mF$,
$t > l_q$%
---i.e., when some process from $\mF$ is in state $q$ for the last time in $x$. 
Then, saying that process $B_{i_q}$ of \cutoffsys\ 
{\em floods $q \in \visFin{\mF}{x}$ and then evacuates into $\visInf{\mF}{x}$} 
means: 
$$
y(B_{i_q}) = x(B_{\first_q})\slice{1}{f_q} \ \cdot\ (q)^{(l_q - f_q + 1)} \cdot \ 
x(B_{\last_q})\slice{l_q}{m} \ \cdot \ (q')^\omega,
$$
where $q'$ is the state in $\visInf{\mF}{x}$ that $x(B_{\last_q})$ reaches first, 
at some moment $\time \geq l_q$.
In words, process $B_{i_q}$ mimics process $B_{\first_q}$ until it reaches $q$, 
then does nothing until process $B_{\last_q}$ starts leaving $q$, 
then it mimics $B_{\last_q}$ until it reaches $\visInf{\mF}{x}$.

The construction ensures: 
if we copy local runs of all processes not in $\mF$ from $x$ to $y$, 
then all transitions of $y$ are enabled. 
This is because, for any process $p$ of $\cutoffsys$ that takes a transition in $y$ at any moment, 
the set of states visible to process $p$ is a superset of the set of states 
visible to the original process in \largesys whose transitions process $p$ copies.

\myparagraph{Fair extension} 
\ak{explain intuition about those three sets}
\ak{adapt to dead}
Here, we consider a path $x$ that is the postfix of an unconditionally fair run $x'$ of $\largesys$, 
starting from the moment where no local states from $\visFin{\mB}{x'}$ are visited anymore. 
We construct a corresponding unconditionally-fair path $y$ of $\cutoffsys$, 
where no local states from $\visFin{\mB}{x'}$ are visited.

Formally, let $n \geq 2|B|$, and $x$ an unconditionally-fair path of $\largesys$ such that
$\visFin{\mB}{x}=\emptyset$.
Let $c \geq 2|B|$, and $s_1'$ a state of \cutoffsys
with
\li
\- $s_1'(A_1)=s_1(A_1)$, $s_1'(B_1)=s_1(B_1)$;

\- for every $q \in \visInf{B_2..B_n}{x} \smi \visInf{B_1}{x}$,
   there are two processes $B_{i_q}, B_{i_q'}$ of \cutoffsys
   that start in $q$, i.e., $s_1'(B_{i_q})=s_1'(B_{i_q'})=q$;

\- for every $q \in \visInf{B_2..B_n}{x} \cap \visInf{B_1}{x}$,
   there is one process $B_{i_q}$ of \cutoffsys
   that starts in $q$;

\- for some $\qstar \in \visInf{B_2..B_n}{x} \cap \visInf{B_1}{x}$,
   there is one additional process of \cutoffsys, 
   different from any in the above, 
   called $B_{i_\qstar'}$,
   that starts in $\qstar$;
   and

\- any other process $B_i$ of \cutoffsys 
   starts in some state of $\visInf{B_2..B_n}{x}$.
\il
Note that, if $\visInf{B_2..B_n}{x}\cap \visInf{B_1}{x} = \emptyset$, 
then the third and fourth pre-requisite are trivially satisfied.

The fair extension extends state $s_1'$ of \cutoffsys 
to an unconditionally-fair path $y=(s'_1,e'_1,p'_1)\ldots$ 
with $y(A_1,B_1) = x(A_1,B_1)$ as follows.
\li
\-[(a)] $y(A_1)=x(A_1)$, $y(B_1)=x(B_1)$.

\-[(b)] For every $q \in \visInf{B_2..B_n}{x} \smi \visInf{B_1}{x}$: 
       in run $x$ there is $B_i \in \{B_2..B_n\}$ 
       that starts in $q$ and visits it infinitely often. 
       Let $B_{i_q}$ and $B_{i'_q}$ of \cutoffsys mimic $B_i$ in turns: 
       first $B_{i_q}$ mimics $B_i$ until it reaches $q$, 
       then $B_{i'_q}$ mimics $B_i$ until it reaches $q$, and so on.

\-[(c)] Arrange the states of $\visInf{B_2..B_n}{x}\cap \visInf{B_1}{x}$ 
       in some order $(\qstar, q_1, \ldots, q_l)$.  
       The processes $B_{i_\qstar'}, B_{i_\qstar}, B_{i_{q_1}}, \ldots, B_{i_{q_l}}$ 
       behave as follows.
       Start with $B_{i_\qstar'}$: 
       when $B_1$ enters $\qstar$ in $y$, it carries%
       \footnote{``Process $B_1$ starting at moment $m$ carries process $B_i$ 
                 from $q$ to $q'$'' means: process $B_i$ mimics 
                 the transitions of $B_1$ starting at moment $m$ at $q$ 
                 until $B_1$ first reaches $q'$.}
       $B_{i_\qstar'}$             from $\qstar$ to $q_1$, 
       then carries $B_{i_{q_1}}$ from $q_1$ to $q_2$, \ldots, 
       then carries $B_{i_{q_l}}$ from $q_l$ to $\qstar$, 
       then carries $B_{i_\qstar}$ from $\qstar$ to $q_1$, 
       then carries $B_{i_\qstar'}$ from $q_1$ to $q_2$, 
       then carries $B_{i_{q_1}}$ from $q_2$ to $q_3$,
       and so on.


\-[(d)] Any other $B_i$ of \cutoffsys,
       starting in $q \in \visInf{B_2..B_n}{x}$,
       mimics $B_{i_q}$.
\il
Note that parts (b) and (c) of the construction ensure that there is always at
       least one process in every state from $\visInf{B_2..B_n}{x}$. This
       ensures that the guards of all transitions of the construction are satisfied.
Excluding processes in (d), the fair extension uses up to $2|B|$ copies of $B$.%
\footnote{A careful reader may notice that,
          if
          $|\visInf{B_1}{x}|=1$ and $|\visInf{B_2..B_n}{x}|=|B|$,
          then the construction uses $2|B|+1$ copies of $B$.
          But one can slightly modify the construction for this special case,
          and remove process $B_{i_\qstar'}$ from the pre-requisites.}

Now we are ready to prove the bounding lemma.

\begin{lemma}[Bounding: Disj, \LTLmX, Fair] \label{disj:le:FairDisjunctiveBounding}
For disjunctive systems:
\begin{align*}
&\forall n>2|B|: \\
&(A,B)^{(1,2|B|)} \models \pexists_{uncond} h(A,B_1) &
&\impliedby& &
(A,B)^{(1,n)} \models \pexists_{uncond} h(A,B_1),\\
\end{align*}
\end{lemma}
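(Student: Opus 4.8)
The plan is to lift the unfair bounding argument (Lemma~\ref{disj:le:NonFairDisjunctiveBounding}) by replacing the plain flooding construction, which freezes processes forever and so destroys fairness, with the two fairness-aware constructions introduced above: \emph{flooding with evacuation} for the transient part of the witnessing run, and \emph{fair extension} for its recurrent part. Concretely, let $x=(s_1,e_1,p_1)(s_2,e_2,p_2)\ldots$ be an unconditionally-fair run of $\largesys$ with $n>2|B|$ and $x \models h(A,B_1)$. Since $\visFin{\mB}{x}$ is finite and each of its states is visited only finitely often, there is a moment $M$ after which no process ever occupies a state of $\visFin{\mB}{x}$; hence from $M$ on every process stays inside $\visInf{\mB}{x}$. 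I would split $x$ at $M$ into a finite prefix $x\slice{1}{M}$ and an infinite tail $x\slice{M}{\infty}$, and build the cutoff run $y$ of $\sys{2|B|}$ phase by phase.

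For the prefix I would copy the local runs $x(A)$ and $x(B_1)$ into $y$, and for every $q\in\visFin{\mB}{x}$ devote one process of $\sys{2|B|}$ that floods $q$ and then evacuates into $\visInf{\mB}{x}$. Exactly as argued for the evacuation construction, this guarantees that at every moment $m\le M$ each local state occupied in $x$ is also occupied in $y$ by some process other than $A$ and $B_1$; so every disjunctive guard that enabled a move of $A$ or $B_1$ in $x\slice{1}{M}$ is still satisfied in $y$, and the copied transitions are legal. By the end of the prefix all flooders have evacuated into $\visInf{\mB}{x}$. The tail $x\slice{M}{\infty}$ is, by the choice of $M$, a postfix of the unconditionally-fair run $x$ along which $\visFin{\mB}{x\slice{M}{\infty}}=\emptyset$, which is precisely the input required by the fair-extension construction; I would apply it to continue $y$, so that from $M$ on every process of $\sys{2|B|}$ keeps moving within $\visInf{\mB}{x}$ infinitely often while maintaining $y(A,B_1)=x(A,B_1)$.

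It then remains to discharge the three standard obligations. Fairness of $y$ is delivered by the fair-extension phase, whose carrying scheme (b)--(c) forces every process to move infinitely often. Satisfaction of $h(A,B_1)$ follows because $y(A,B_1)$ and $x(A,B_1)$ differ only by the stuttering introduced when several processes move simultaneously; after \interleave-ing and \destutter-ing, the two projections are stutter-equivalent, and $\LTLmX$ formulas are stutter-invariant. The process budget is met because the flooders of the $|\visFin{\mB}{x}|$ transient states are \emph{reused} as processes that the fair extension cycles through $\visInf{\mB}{x}$, rather than being added on top of them; counting as in the fair-extension construction (including its footnote handling the special case $|\visInf{B_1}{x}|=1$, $|\visInf{B_2..B_n}{x}|=|B|$) shows that at most $2|B|$ copies of $B$ are ever needed, so the $2|B|$ processes of $\sys{2|B|}$ suffice and any left unassigned are handled by the mimicking case (d).

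I expect the main obstacle to be the seam between the two phases. Flooding with evacuation only guarantees that each flooder lands in \emph{some} state of $\visInf{\mB}{x}$, whereas the fair extension requires a specific starting configuration $s_1'$ (two processes per state of $\visInf{B_2..B_n}{x}\smi\visInf{B_1}{x}$, one per state of the intersection, plus the designated $\qstar$-processes). I would bridge this by inserting a finite marshalling segment just after $M$ in which processes are shuffled among the $\visInf{\mB}{x}$ states into the required pattern; this is always possible because, once at least one process sits in every recurrent state, all intra-$\visInf{\mB}{x}$ transitions are disjunctively enabled. Making this hand-over respect both the interleaving semantics and the $2|B|$ bound simultaneously is the delicate point of the argument.
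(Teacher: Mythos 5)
Your overall strategy is the one the paper uses: split the run at the moment $M$ after which only states in $\visInf{\mB}{x}$ are occupied, handle the prefix by flooding with evacuation, and handle the tail with the fair extension. But your phase~1 has a concrete gap: you only devote flooders to the states of $\visFin{\mB}{x}$. The flooding construction's guarantee --- ``if $x$ has a process in $q'$ at moment $m$, then so does $y$'' --- only holds if you flood \emph{all} states visited by $B$-processes, and during the prefix $x\slice{1}{M}$ the guards of the copied transitions of $A$ and $B_1$ may be witnessed solely by processes sitting in states of $\visInf{\mB}{x}$. With your construction those states need not be occupied in $y$ at the right moments, so the copied transitions of $A$ and $B_1$ need not be enabled. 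The paper's first phase therefore also floods the recurrent states: two dedicated processes per state of $\visInf{B_2..B_n}{x}\smi\visInf{B_1}{x}$ and one for a chosen $\qstar$ in $\visInf{B_2..B_n}{x}\cap\visInf{B_1}{x}$, in addition to the flood-and-evacuate processes for $\visFin{B_2..B_n}{x}$.

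This choice also dissolves the ``seam'' you flag as the delicate point: since the recurrent-state flooders park exactly in the configuration that the fair extension takes as its starting state $s_1'$, no marshalling segment is needed. Your proposed marshalling is itself shaky: shuffling processes among the recurrent states can temporarily vacate a state, and a disjunctive guard witnessed only by that state would then be disabled --- which is precisely why the fair extension keeps \emph{two} processes per recurrent state outside $\visInf{B_1}{x}$, so that one can move while the other holds the state. Your claim that occupying every recurrent state once suffices to enable all intra-$\visInf{\mB}{x}$ moves does not survive this observation. The counting still closes under the paper's phase~1 because $\visFin{}$ and $\visInf{}$ are disjoint and the evacuated flooders are absorbed by the fair extension's case (d), giving the stated bound of $2|B|$.
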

\begin{proof}
\sj{for weak or strong fairness, the same construction can be used; evacuation is not necessary, but also doesn't increase the cutoff if we use it; difficulty: show that cutoff is still tight
}
Let $c=2\card{B}$. 
Given an unconditionally-fair run $x$ of $\largesys$,
we construct an unconditionally-fair run $y$ of the cutoff system $\cutoffsys$ 
such that $y(A,B_1)$ is stuttering equivalent to $x(A,B_1)$.

Note that in $x$ there is a moment $m$ such that all local states that are visited after $m$ are in $\visInf{\mB}{x}$.

The construction has two phases. In the first phase, we apply flooding for states in $\visInf{\mB}{x}$, and flooding with evacuation for states in $\visFin{\mB}{x}$:
\li
\-[(a)] $y(A)=x(A)$, $y(B_1)=x(B_1)$;

\-[(b)] for every $q \in \visInf{B_2..B_n}{x} \smi \visInf{B_1}{x}$, 
       devote two processes of $\cutoffsys$ that flood $q$;

\-[(c)] for some $\qstar \in \visInf{B_2..B_n}{x} \cap \visInf{B_1}{x}$,
       devote one process of \cutoffsys that floods $\qstar$;

\-[(d)] for every $q \in \visFin{B_2..B_n}{x}$, 
       devote one process of $\cutoffsys$ that 
       floods $q$ and evacuates into $\visInf{B_2..B_n}{x}$;
       and

\-[(e)] let other processes (if any) mimic process $B_1$.
\il
The phase ensures that at moment $m$ in $y$, 
there are no processes in $\visFin{\mB}{x}$, 
and all the pre-requisites of the fair extension are satisfied.

The second phase applies the fair extension, 
and then establishes the interleaving semantics 
as in the bounding lemma in the non-fair case.
The overall construction uses up to $2|B|$ copies of $B$.
\end{proof}

\begin{tightness}[Disj, \LTLmX, Fair] \label{obs:disj:fair_tight_prop}
The cutoff in Lemma~\ref{disj:le:FairDisjunctiveBounding} is tight.
I.e., 
for any $k$ there exist process templates $(A,B)$ with $|B| = k$ 
and $\LTLmX$ formula $h(A,B_1)$ such that:
$$
(A,B)^{(1,2|B|)} \models \pexists h(A,B_1) ~~and~~ 
(A,B)^{(1,2|B|-1)} \not\models \pexists h(A,B_1).
$$
\ak{what happens if we bound $T_A$?}
\end{tightness}
\begin{proof}
Consider process templates $A,B$ from Figure~\ref{fig:obs:disj:fair_tight_prop}
and the property $\pexists \true$.
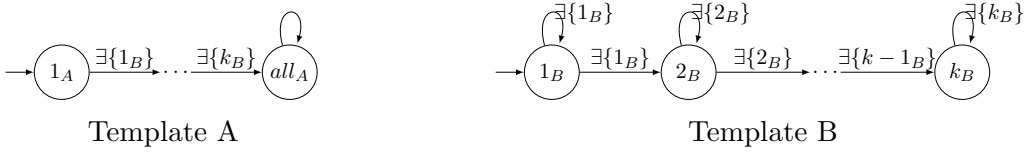
\begin{figure}[tb]
\centering
\begin{subfigure}[b]{0.35\textwidth}\center
\scalebox{0.75}{
\begin{tikzpicture}[node distance=2cm,inner sep=1pt,minimum size=0.5mm,->,>=latex]

\node[state,initial left] (a_1) {$1_A$};
\node (dots) [right of=a_1] {$\ldots$};
\node[state] (a_all) [right of=dots] {${all}_A$};

\path (a_1) edge [above] node {$\disj{1_B}$} (dots);
\path (dots) edge [above] node {$\disj{k_B}$} (a_all);
\path (a_all) [loop above] edge node {} (a_1);

\end{tikzpicture}}
\caption*{Template A}
\end{subfigure}
\hspace{1cm}
\begin{subfigure}[b]{0.55\textwidth}\center
\scalebox{0.75}{
\begin{tikzpicture}[node distance=2.4cm,inner sep=1pt,minimum size=0.5mm,->,>=latex]

\node[state,initial left] (b_1) {$1_B$};
\node[state] (b_2) [right of=b_1] {$2_B$};
\node (dots) [right of=b_2] {$\ldots$};
\node[state] (b_k) [right of=dots] {$k_B$}; 

\path (b_1) edge [above] node {$\disj{1_B}$} (b_2);
\path (b_2) edge [above] node {$\disj{2_B}$} (dots);
\path (dots) edge [above] node {$\disj{{k-1}_B}$} (b_k);

\path (b_1) [loop above] edge [right] node {$\disj{1_B}$} (b_1);
\path (b_2) [loop above] edge [right] node {$\disj{2_B}$} (b_2);
\path (b_k) [loop above] edge [right] node {$\disj{k_B}$} (b_k);

\end{tikzpicture}}
\caption*{Template B}
\end{subfigure}
\caption{Templates for proving Tightness~\ref{obs:disj:fair_tight_prop}}
\label{fig:obs:disj:fair_tight_prop}
\end{figure}
\end{proof}

\subsection{Deadlocks without Fairness: Updated Constructions} \label{gua:sec:proofs-disj-deadlock-unfair}


\begin{lemma}[Monotonicity: Disj, Deadlocks, Unfair]
\label{mono_lem_disj_deadlocks_unfair}
    For disjunctive systems:
    $$\forall n\geq |B|+1: (A,B)^{(1,n)} \textit{ has a deadlock} \ 
    \Impl\ 
    (A,B)^{(1,n+1)} \textit{ has a deadlock.}$$
\end{lemma}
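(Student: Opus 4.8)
The plan is to adapt the copy-and-add strategy used in the other Monotonicity lemmas (e.g.\ Lemma~\ref{disj:le:NonFairDisjunctiveMono}), but with extra care. In a disjunctive system, adding a process can only \emph{enable} more transitions, never disable them: this is harmless when transporting a positive behaviour $\pexists h$, but it is exactly the wrong direction for preserving a \emph{deadlock}, since the freshly added process might occupy a state that satisfies a guard of an otherwise-stuck process and thereby revive it. The whole difficulty is therefore to route the new process $B_{n+1}$ only through states that are \emph{already occupied} by other processes, so that it never changes the set of local states visible to any process we want to keep disabled. I would treat the two kinds of deadlock separately.

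For a \textbf{global deadlock}, take a finite run $x$ of $\largesys$ ending in a configuration $(s,e,\bot)$ in which every process is disabled. Since $n \geq |B|+1 > |B|$, the pigeonhole principle yields two $B$-processes $B_i, B_j$ with $s(B_i)=s(B_j)=q$. I would build $y$ over $\sys{n+1}$ by copying the local runs of $A,B_1,\dots,B_n$ and letting $B_{n+1}$ mimic $B_i$ (replaying its finite trajectory, with inputs copied from $B_i$); after $B_{n+1}$ catches up, both $B_i$ and $B_{n+1}$ sit in $q$. In the final state the set of occupied local states is unchanged (we only duplicated $q$, which was already present), so for every \emph{original} process the set of states visible to it is identical to that in $x$ and it remains disabled; and $B_{n+1}$, having the same local state, input, and --- crucially because $B_j\in\mB\setminus\{B_i\}$ already occupies $q$ --- the same visible set as $B_i$, is disabled as well. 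Every transition used in $y$ is enabled, since adding a process can only satisfy more disjunctive guards. This is where the bound $|B|+1$ is essential: without a second process in $q$, placing $B_{n+1}$ in $q$ would be a \emph{new} state visible to $B_i$ and could revive it.

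For a \textbf{local deadlock}, take an infinite run $x$ with a process $p$ disabled for all moments $m\ge m_0$. Then $p$ stays in a fixed local state and, since inputs to a non-moving process do not change, it keeps a fixed input; letting $G_p$ denote the union of the guards of $p$'s transitions enabled by that input, $p$ being disabled from $m_0$ on means that no other process ever enters a state of $G_p$ after $m_0$. I would pick any $B$-process $B_i\neq p$ (which exists since $n\ge 2$), copy $x$ into $y$, and let $B_{n+1}$ mimic $B_i$ with matching inputs; as $B_i$ avoids $G_p$ after $m_0$, so does $B_{n+1}$, whence $p$ stays disabled from $m_0$ onward in $y$. Since $y$ is infinite and $p$ is locally deadlocked, $\sys{n+1}$ has a deadlock.

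Finally, letting $B_{n+1}$ replay the moves of $B_i$ makes the two processes want to move simultaneously; I would repair this exactly as in Lemma~\ref{disj:le:NonFairDisjunctiveBounding}, by sequentializing ($\interleave$) coincident moves (ordering $B_i$ before $B_{n+1}$) and removing global stutter ($\destutter$), all inserted transitions staying enabled for the disjunctive reason above. The main obstacle --- and the only genuinely disjunctive-specific point --- is that we must preserve \emph{disabledness} rather than enabledness; everything reduces to guaranteeing that $B_{n+1}$ never introduces a visible state absent in $x$ at the relevant moments, which the pigeonhole argument (global case) and the ``mimic a process $\neq p$'' argument (local case) secure.
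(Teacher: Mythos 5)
Your proof is correct and follows essentially the same route as the paper: for a global deadlock the paper uses exactly your pigeonhole argument (since $n>|B|$, two $B$-processes are deadlocked in the same state, and $B_{n+1}$ mimics one of them), and for a local deadlock it likewise lets $B_{n+1}$ mimic an existing process so that no new local state becomes visible to the stuck process. The only cosmetic difference is that in the local case the paper has $B_{n+1}$ mimic an infinitely moving process rather than an arbitrary $B_i\neq p$; both choices work for the reason you give.
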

\begin{proof}
Given a deadlocked run $x$ of $(A,B)^{(1,n)}$,
we build a deadlocked run of $(A,B)^{(1,n+1)}$. 
If the run $x$ is locally deadlocked,
then it has at least one infinitely moving process, 
thus let the additional process mimic that process. 
If the run $x$ is globally deadlocked run, 
then due to $n>|B|$ in some state there are at least two processes deadlocked. 
Thus, let the new process mimic a process deadlocked in that state---%
the run constructed will also be globally deadlocked.
\end{proof}

\begin{lemma}[Bounding: Disj, Deadlocks, Unfair] \label{lem_disj_deadlocks_unfair}
For disjunctive systems:
\li
  \- with $c=|B|+2$ and any $n>c$:
  $$(A,B)^{(1,c)} \textit{ has a local deadlock} \ \Implied\ (A,B)^{(1,n)} \textit{ has a local deadlock;}$$
  
  \- with $c=2|B| - 1$ and any $n>c$
  $$(A,B)^{(1,c)} \textit{ has a global deadlock} \ \Implied\ (A,B)^{(1,n)} \textit{ has a global deadlock;} $$
  
  \- with $c=2|B|-1$ and any $n>c$:
  $$(A,B)^{(1,c)} \textit{ has a deadlock} \ \Implied\ (A,B)^{(1,n)} \textit{ has a deadlock.}$$
\il
\ak{seems not tight}
\end{lemma}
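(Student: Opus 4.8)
The plan is to prove the three implications separately, then assemble the general (local-or-global) case from the first two together with the Monotonicity Lemma~\ref{mono_lem_disj_deadlocks_unfair}: a local deadlock obtained in $\sys{|B|+2}$ is lifted by repeated application of monotonicity to a deadlock in $\sys{2|B|-1}$, while the global case already delivers a deadlock at $c=2|B|-1$, so in either case $\cutoffsys$ with $c=2|B|-1$ has a deadlock. Throughout I would rely on the one structural fact peculiar to disjunctive guards: \emph{deleting processes can never enable a transition}, since a disjunctive guard only asks for the \emph{existence} of some other process in a guard state. Hence disabledness is preserved downwards — if a process $p$ is disabled in a global state $s$ of \largesys, it stays disabled in the state obtained by removing any other processes, because the set of states occupied by processes different from $p$ only shrinks. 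This makes \emph{realizing} a deadlocked configuration in the smaller system essentially free; the whole difficulty is \emph{reachability}, i.e.\ constructing a valid run that reaches such a configuration using at most $c$ copies of $B$.

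For \textbf{local-deadlock bounding} ($c=|B|+2$) I would start from a locally deadlocked run $x$ of \largesys; by symmetry assume the stuck process is $B_1$, disabled from some moment $m$ on while parked in a state \qstar. Let $G$ be the union of the guards of all transitions leaving \qstar. Since $B_1$ is disabled at every moment $\geq m$, no other process ever occupies a state of $G$ after $m$, so the set $V$ of states occupied by processes other than $B_1$ at moments $\geq m$ is disjoint from $G$. I would then reuse the flooding construction of Lemma~\ref{disj:le:NonFairDisjunctiveBounding} to reproduce $x(A)$ and $x(B_1)$ in the cutoff system, flooding the $\leq|B|$ states of $\visited_\mB(x)$ to keep the copied transitions enabled, but using flooding-with-evacuation (Section~\ref{gua:sec:ideas-disj-fair}) for every state not in $V$, so that all long-lived occupancy ends inside $V$. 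Because $V\cap G=\emptyset$, $B_1$ remains disabled forever; one dedicated process keeps moving inside $V$ to keep the run infinite (its guards are witnessed by the flooded $V$-states), and \destutter/\interleave restore interleaving semantics. The copies used are $B_1$, one extra infinitely-moving process, and $\leq|B|$ flooders, i.e.\ at most $|B|+2$.

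For \textbf{global-deadlock bounding} ($c=2|B|-1$) I would start from a finite run $x$ of \largesys ending in a deadlocked state $s_T$ whose occupied $B$-states form a set $D$ with $|D|\leq|B|$, and aim for a finite run of \cutoffsys ending in a deadlocked state. The construction keeps $x(A)$ and, for each $q\in D$, one $B$-process of $x$ that ends in $q$, and then supplies flooders to enable the transitions along these retained local runs — steering each flooder by flooding-with-evacuation into a state of $D$, so that it is \emph{parked} in the final configuration rather than left in an arbitrary state. By the downward-preservation fact the retained core processes remain disabled in the final state; the count of the $\leq|B|$ core occupiers together with the flooders needed to witness their guards is what is intended to yield the $2|B|-1$ bound.

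The main obstacle is precisely this last point in the global case: unlike the infinite-run constructions, \emph{every} process — including the auxiliary flooders — must be \emph{simultaneously} disabled in the terminal state, and in a disjunctive system adding a second process to an occupied state $q$ can \emph{re-enable} its occupant whenever $q$ itself lies in one of that occupant's guards (the non-reflexive reading of guards: a lone process in $q$ tolerates $q\in g$, a second one in $q$ does not). So the delicate part is to choose, for each retained local run and each flooder, an evacuation target in $D$ that does not re-enable any process already settled there, while still keeping enough processes present at earlier moments to enable the reaching transitions. I expect the careful bookkeeping of these evacuation targets, and the proof that $2|B|-1$ copies always suffice to realize them, to be the crux of the argument, with the local and general cases following comparatively routinely from the flooding/evacuation machinery and Monotonicity.
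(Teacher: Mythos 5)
Your overall architecture is the same as the paper's: flood/evacuate to reproduce the runs of a retained core of processes, split into local and global cases, and assemble the general case by taking the maximum of the two bounds and lifting the local deadlock upward with the Monotonicity Lemma. Your local-deadlock argument is essentially the paper's: since the stuck process is disabled at every moment after $m$ with a frozen input, every state occupied by another process after $m$ lies outside all of its relevant guards, so evacuating all flooders into that set keeps it disabled forever; the count $|B|+2$ matches (the paper refines this into three sub-cases depending on whether $A$ or $B_1$ is the stuck/infinitely-moving process, getting $|B|+1$ in two of them, but the worst case is still $|B|+2$).

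The genuine gap is in the global case, and it is exactly the point you flag as "the crux" without resolving it: how to place $\leq 2|B|-1$ processes so that \emph{all} of them, flooders included, are simultaneously disabled at the end, given that putting a second process into a state $q$ can re-enable a process already parked in $q$ whenever $q$ occurs in one of its own guards. The paper's resolution is a partition of the deadlocked $B$-processes of the large run according to whether they are \emph{alone} in their final state. A process that shares its final state $q$ with another deadlocked process cannot have $q$ in any guard of its (frozen-input) outgoing transitions — otherwise the co-occupant would already enable it — so such states are safe to flood and safe to use as evacuation targets; one flooder per such state suffices and is itself deadlocked there for the same reason. A process that is alone in its final state is instead copied verbatim as a unique core process and is \emph{never} duplicated, which is what preserves its disabledness (disabledness is only downward-closed, as you note, so duplication must be avoided rather than argued away). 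The transient states of the non-core processes are then flooded-and-evacuated into the shared deadlock states, and counting the (disjoint) unique deadlock states, shared deadlock states, and transient states yields the $2|B|-1$ bound. Without this partition your construction as written can fail: evacuating a flooder into a state of $D$ whose lone occupant has an ``$\exists q$''-guarded exit would un-deadlock that occupant.
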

\begin{proof}[Proof idea]
First, consider the case of global deadlocks.
The insight is to divide deadlocked local states into two disjoint sets, 
$\dead_1$ and $\dead_2$, as follows.
Given a globally deadlocked run $x$ of \largesys, 
for every $q \in \dead_1$,
there is a process of \largesys deadlocked in $q$ with input $i$,
that has an outgoing transition guarded ``$\exists q$''%
---hence, adding one more process into $q$ would unlock the process.
In contrast, $q \in \dead_2$ if any process deadlocked in $q$
stays deadlocked after adding more processes into $q$.
Let us denote the set of $B$-processes deadlocked in $\dead_1$ by $\mD_1$.
Finally, abuse the definition in Eq.~\ref{disj:def_vfin_wrt}
and denote by $\visFin{\mB\smi\mD_1}{x}$ the set of states
that are visited by $B$-processes not in $\mD_1$ before reaching a deadlocked state.

Given a globally deadlocked run $x$ of \largesys with $n\geq 2|B|-1$, 
we construct a globally deadlocked run $y$ of \cutoffsys with $c = 2|B|-1$ as follows.
\li
\- We copy from $x$ into $y$ the local runs of processes in $\mD_1 \cup \{A\}$;
\- flood every state of $\dead_2$;
   and
\- for every $q \in \visFin{\mB\smi\mD_1}{x}$,
   flood $q$ and evacuate into $\dead_2$.
\il
The construction ensures: 
(1) for any moment and any process in $y$,
    the set of local states that are visible to the process includes all the states that were visible 
    to the corresponding process in \largesys whose transitions we copy;
(2) in $y$, there is a moment when all processes deadlock in $\dead_1 \cup \dead_2$.

For the case of local deadlocks, 
the construction is slightly more involved,
since we also need to copy the behaviour of an infinitely moving process.
\end{proof}

\begin{proof}
Given a (globally or locally) deadlocked run of $\largesys$,
we construct (globally or locally) deadlocked run of $\cutoffsys$, 
where $c$ depends on the nature of the given run. 
We do this using the construction template. 

Let $\mB=\{B_1,...,B_n\}$.
The template depends on the set $\mC \subseteq \{B_1,...,B_c\}$
and is as follows.
\li
  \-[a.] Set $y(A)=x(A)$;
  \-[b.] for every $B_i \in \mC$, set $y(B_i)=x(B_i)$;
  \-[c.] for every $q \in \visInf{\mB\smi\mC}{x}$, 
         devote one process of \cutoffsys that floods $q$;
  \-[d.] for every $q \in \visFin{\mB\smi\mC}{x}$, 
         devote one process of \cutoffsys that floods $q$ 
         and then evacuates into $\visInf{\mB\smi\mC}{x}$;
         and
  \-[e.] let other processes (if any) mimic some process from (c).
\il

\myparagraph{1) Local deadlock}
We distinguish three cases: 
\li
  \-[1a)] $A$ deadlocks, $B_1$ moves infinitely often;
  \-[1b)] $A$ moves infinitely often, $B_1$ deadlocks; and
  \-[1c)] $A$ neither deadlocks nor moves infinitely often, 
          $B_1$ deadlocks, $B_2$ moves infinitely often.
\il

\myparagraphraw{1a:} ``$A$ deadlocks, $B_1$ moves infinitely often''. 

Let $c=|B|+1$, and $\mC=\{B_1\}$.
Note that $\visInf{B_2..B_n}{x} \neq \emptyset$. 
The resulting construction uses 
$|\visFin{B_2..B_n}{x}| + |\visInf{B_2..B_n}{x}| + 1 
 \leq 
 |B| + 1$ 
copies of B.
\ak{seems tight}\ak{correctness}

\myparagraphraw{1b:} ``$A$ moves infinitely often, $B_1$ deadlocks''. 

Let $c=|B|+1$, and $\mC=\{B_1\}$.
Let $q_\bot$ be the state in which $B_1$ deadlocks.
Instantiate the construction template.

Process $B_1$ of \cutoffsys is deadlocked in $y$ starting from some moment $d$,
because any state it sees (in $\visInf{A,B_2..B_n}{x}$)
was also seen by $B_1$ in \largesys in $x$ at some moment $d' \geq d$
(note that $d'$ may be not the same moment as $d$).

\myparagraphraw{1c:} ``$A$ neither deadlocks nor moves infinitely often, 
                       $B_1$ deadlocks, $B_2$ moves infinitely often''. 

Instantiate the construction template with $c=|B|+2$ and $\mC = \{B_1,B_2\}$.
\ak{seems not tight}\ak{correctness}

\smallskip
Finally, $|B|+2$ is a (possibly not tight) cutoff for local deadlock detection problem.

\myparagraph{2) Global deadlock}
Let $x=(s_1,e_1,p_1)...(s_d,e_d,\bot)$ be a globally deadlocked run of $\largesys$ 
with $n\geq c$.

Let us abuse the definition of $\visInf{\mF}{x}$ and $\visFin{\mF}{x}$,
in Eq.~\ref{disj:def_vinf_wrt} and \ref{disj:def_vfin_wrt} resp., 
and adapt it to the case of finite runs.
To this end, given a finite run $x=(s_1,e_1,p_1)...(s_d,e_d,\bot)$, 
extend it to the infinite sequence $(s_1,e_1,p_1)...(s_d,e_d,\bot)^\omega$, 
and apply the definition of $\visInf{\mF}{x}$ and $\visFin{\mF}{x}$ to the sequence.

Let $\mD_1$ be the set of processes deadlocked in unique states:
$\forall p\in \mD_1 \nexists p' \neq p: s_d(p')=s_d(p)$.
Instantiate the construction template with $\mC = \mD_1$ and $c=2|B|-1$.
\footnote{$2|B|-1$ copies is enough, because: 
          $\visFin{\mB\smi\mC}{x} \cap \visInf{\mB\smi\mC}{x} = \emptyset$,
          $\visInf{\mB\smi\mC}{x} \cap \visInf{\mC}{x} = \emptyset$,
          and if $\visFin{\mB\smi\mC}{x} \neq \emptyset$, 
          then $\visInf{\mB\smi\mC}{x} \neq \emptyset$.}
\ak{seems not tight}


\myparagraph{3) Deadlocks}
As the cutoff for the deadlock detection problem we take the largest cutoff in (1)--(2), namely, $2|B|-1$,
but it may be not tight%
---finding the tight cutoffs for local deadlock and for deadlock detection problems is an open problem.

\ak{tried to refine but could not -- the trial is commented out}
\end{proof}

\begin{tightness}[Disj, Deadlocks, Unfair] \label{obs:disj:tight_deadlock}
The cutoff $c=2|B|-1$ for deadlock detection in disjunctive systems is \emph{asymptotically optimal but possibly not tight}.
I.e., for any $k$ there are templates $(A,B)$ with $|B|=k$ such that:
$$
(A,B)^{(1,|B|-1)} \textit{ does not have a deadlock, but } (A,B)^{(1,|B|)} \textit { does}.
$$
\end{tightness}
\begin{proof}
Figure~\ref{fig:obs:disj:tight_deadlock} illustrates templates $(A,B)$ to prove the asymptotic optimality of cutoff $2|B|-1$ for deadlock detection problem. Template $A$ is any that never deadlocks. The system has a local deadlock only when there are at least $|B|$ copies of $B$, which is a constant factor of $2|B|-1$.
\begin{figure}[tb] \centering
\makebox[0.4\textwidth][c]{
\scalebox{0.75}{\begin{tikzpicture}[node distance=2.4cm,inner sep=1pt,minimum size=0.5mm,->,>=latex]

\node[initial left, state] (b_1) {$1_B$};
\node[state] (b_2) [right of=b_1] {$2_B$};
\node (dots) [right of=b_2] {$\ldots$};
\node[state] (b_k) [right of=dots] {$k_B$}; 

\path (b_1) edge [above] node {$\disj{1_B}$} (b_2);
\path (b_2) edge [above] node {$\disj{2_B}$} (dots);
\path (dots) edge [above] node {$\disj{{k-1}_B}$} (b_k);

\path (b_1) [loop above] edge [right] node {} (b_1);
\path (b_2) [loop above] edge [right] node {} (b_2);

\path (b_1) edge [below, bend right=22] node {} (b_k);
\path (b_2) edge [below, bend right=15] node [near start] {$\disj{{k}_B}$} (b_k);

\end{tikzpicture}

%
%
%
%
}
\caption{Templates for proving Tightness~\ref{obs:disj:tight_deadlock}}
\label{fig:obs:disj:tight_deadlock}
\end{figure}
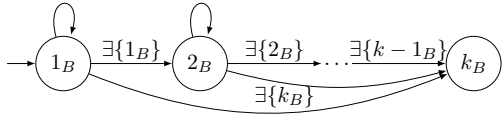
\end{proof}

%
%

\subsection{Deadlocks with Fairness: New Constructions} \label{gua:sec:proofs-disj-deadlock-fair}

\begin{lemma}[Monotonicity: Disj, Deadlocks, Fair] \label{mono_lem_disj_deadlocks_fair}
For disjunctive systems, on strong-fair or finite runs:
$$
\forall n\geq |B|+1: (A,B)^{(1,n)} \textit{ has a deadlock} 
\ \Impl\ 
(A,B)^{(1,n+1)} \textit{ has a deadlock.}
$$
\end{lemma}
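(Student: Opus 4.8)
The plan is to split along the kind of deadlock and, in each case, to reuse the \emph{mimicking} idea from the unfair monotonicity lemma (Lemma~\ref{mono_lem_disj_deadlocks_unfair}), while additionally checking that strong fairness is preserved. The enabling observation is that on a strong-fair run a process moves only finitely often if and only if it is eventually disabled, i.e.\ locally deadlocked: were it enabled infinitely often, strong fairness would force it to move infinitely often. Hence the new copy $B_{n+1}$ we add to build a run $y$ of $\sys{n+1}$ may be left to mimic a \emph{deadlocked} process without endangering strong fairness, and must mimic an \emph{infinitely moving} process only when it is meant to move forever. Throughout we keep all $n$ original local runs and add one copy, so---guards being disjunctive---no transition taken in $x$ ever becomes disabled in $y$; the only real work is (i) preventing the extra copy from re-enabling a process that was deadlocked in $x$, and (ii) meeting the interleaving requirement, which we settle with the usual \interleave/\destutter bookkeeping.

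\emph{Global deadlock.} Here $x$ is finite and fairness is vacuous, so the construction is identical to the unfair case. Since $n \geq |B|+1 > |B|$, at the terminal configuration $(s_d,e_d,\bot)$ two $B$-processes, say $B_i$ and $B_j$, occupy the same state $q^{\star}$ by pigeonhole. Let $B_{n+1}$ copy $x(B_i)$ and read input $e_d(B_i)$ at the end. The set of states visible to $B_{n+1}$ then coincides with that visible to $B_i$ (both contain $q^{\star}$, witnessed by $B_j$), so $B_{n+1}$ is disabled exactly when $B_i$ is; moreover, placing a process in the already-present state $q^{\star}$ cannot satisfy any guard of an original process, since such a guard would already have been satisfied by $B_i$ or $B_j$ in $x$, contradicting that every process is disabled at $s_d$. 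Thus every process is disabled in the new terminal configuration and $y$ is globally deadlocked.

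\emph{Local deadlock.} Now $x$ is infinite and strong-fair with some process $p^{\star}$ locally deadlocked, and we distinguish two subcases. If some $B$-process $B_\infty$ moves infinitely often, we let $B_{n+1}$ mimic $B_\infty$ (interleaving their moves): then $B_{n+1}$ moves infinitely often so $y$ stays strong-fair, and since at every moment $B_{n+1}$ sits in a state occupied by $B_\infty$---a state already visible to $p^{\star}$ in $x$ and hence, after $p^{\star}$ deadlocks, outside every guard of $p^{\star}$---the copy never re-enables $p^{\star}$, which therefore stays locally deadlocked. Otherwise every $B$-process moves finitely often; by the fairness characterization above each is locally deadlocked, ultimately frozen in some final state, while only $A$ moves infinitely often. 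Because $n \geq |B|+1$, two $B$-processes freeze in the same state $q^{\star}$, and letting $B_{n+1}$ copy one of them makes it locally deadlocked by the same ``same visible set'' argument as in the global case; $A$'s moves remain enabled, no frozen $B$-process is re-enabled, and $y$ is strong-fair because every non-moving process is eventually disabled.

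The main obstacle is the tension between the two correctness requirements: the added copy must never lift a deadlock, yet it must also not sit enabled-but-idle and so violate strong fairness. Both are resolved by the pigeonhole duplicate state $q^{\star}$---which guarantees the new copy sees exactly the same states as the process it shadows and hence inherits its disabledness---together with the strong-fairness characterization of eventual disabledness. The remaining routine points, namely sequentializing simultaneous moves via \interleave and deleting empty global steps via \destutter so that $y(A,B_1)$ stays stuttering-equivalent to $x(A,B_1)$, follow the pattern already used in the non-fair bounding proof and introduce no new difficulty.
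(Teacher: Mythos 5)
Your proof is correct and follows the same mimicking-plus-pigeonhole strategy as the paper, whose proof of this lemma simply refers back to the unfair monotonicity lemma (mimic an infinitely moving process for local deadlocks; mimic one of two processes deadlocked in the same state, which exist by pigeonhole since $n \geq |B|+1$, for global deadlocks). Your write-up is in fact more complete than the paper's: you explicitly check that strong fairness is preserved, and you treat the subcase where only $A$ moves infinitely often---where ``mimic the infinitely moving process'' is unavailable to a $B$-copy and one must instead fall back on the pigeonhole argument for frozen $B$-processes---which the paper's one-line reference glosses over.
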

\begin{proof}
See proof of Lemma~\ref{mono_lem_disj_deadlocks_unfair}.
\end{proof}

\begin{lemma}[Bounding: Disj, Deadlocks, Fair] \label{le:disj:fair_tight_deadlock}
For disjunctive systems, on strong-fair or finite runs:
\li
  \- with $c=2|B|-1$ and any $n>c$:
  $$(A,B)^{(1,c)} \textit{ has a local deadlock} \ \Implied\ (A,B)^{(1,n)} \textit{ has a local deadlock;}$$
  
  \- with $c=2|B| - 1$ and any $n>c$
  $$(A,B)^{(1,c)} \textit{ has a global deadlock} \ \Implied\ (A,B)^{(1,n)} \textit{ has a global deadlock;} $$
  
  \- with $c=2|B|-1$ and any $n>c$:
  $$(A,B)^{(1,c)} \textit{ has a deadlock} \ \Implied\ (A,B)^{(1,n)} \textit{ has a deadlock.}$$
\il
\end{lemma}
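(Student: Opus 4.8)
The plan is to recycle the construction template of the non-fair deadlock bounding lemma (Lemma~\ref{lem_disj_deadlocks_unfair}), but to replace the plain flooding of recurrently-visited states by the \emph{fair extension} introduced for Lemma~\ref{disj:le:FairDisjunctiveBounding}, so that the non-deadlocked processes keep moving infinitely often. Concretely, I would split the argument along the two regimes named in the statement: global deadlocks, which correspond to finite runs, and local deadlocks, which must be witnessed by genuinely infinite strong-fair runs. (Monotonicity is already handled by Lemma~\ref{mono_lem_disj_deadlocks_fair}, which defers to the unfair case.)

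For \emph{global deadlock}, a globally deadlocked run is finite and strong fairness places no constraint on finite runs. Hence the claim coincides with its non-fair counterpart and I would simply invoke Lemma~\ref{lem_disj_deadlocks_unfair}: partition the deadlocked local states into $\dead_1$ and $\dead_2$, copy the local runs of the processes in $\mD_1\cup\{A\}$ that are deadlocked in unique states, flood every state of $\dead_2$, and flood-and-evacuate each $q\in\visFin{\mB\smi\mD_1}{x}$ into $\dead_2$. This uses $2|B|-1$ copies and produces a finite, hence vacuously fair, deadlocked run of $\cutoffsys$.

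The genuinely new case is \emph{local deadlock under strong fairness}, and it is also the reason the local cutoff rises from $|B|+2$ (unfair) to $2|B|-1$ (fair). Given a strong-fair, locally-deadlocked run $x$ of $\largesys$, by symmetry of the $B$-processes the deadlocked process is either $A$ or $B_1$; I would fix $B_1$ (the $A$-case is symmetric). Here I would first record a simplification specific to strong fairness: every process is \emph{either} locally deadlocked \emph{or} moves infinitely often, because a process enabled infinitely often must move infinitely often, so the awkward unfair case~1c disappears. I would then copy $x(A)$ and $x(B_1)$ into $y$ (the $B_1$-copy frozen at its deadlock state $q_\bot$), flood-and-evacuate the transient states $\visFin{\mB\smi\{B_1\}}{x}$ into the recurrent set, and apply the fair extension to the recurrent states $\visInf{\mB\smi\{B_1\}}{x}$ so that every non-deadlocked copy keeps moving without ever leaving that recurrent set; stuttering is then removed by \interleave and \destutter exactly as before.

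The hard part will be the simultaneous preservation of \emph{both} the deadlock of $B_1$ \emph{and} the strong fairness of all other processes, since the evacuation and fair cycling add states to the global state and disjunctive guards become \emph{more} likely to fire as states are added. To keep $B_1$ disabled I would rely on the monotonicity of disjunctive disabledness: a process disabled under a set of present states stays disabled under any subset, so it suffices that after stabilization the states occupied by the other processes in $y$ are contained in $\visInf{\mB\smi\{B_1\}}{x}$, together with the crucial fact that none of these recurrent states lies in a guard of an outgoing transition of $q_\bot$ --- otherwise that state, being present infinitely often in $x$ after $B_1$'s deadlock, would already have enabled $B_1$ in $x$. The open-system setting lets me feed $B_1$ the same inputs as in $x$, closing the disabledness argument. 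The remaining delicacy is the census: the fair extension alone can need up to $2|B|$ copies, but since the deadlocked process is frozen it contributes only its singleton recurrent set $\{q_\bot\}$ instead of being cycled, which frees exactly one copy and yields the claimed $2|B|-1$; I would confirm this with the same counting bookkeeping as in Lemma~\ref{disj:le:FairDisjunctiveBounding}, handling the boundary case where $\{q_\bot\}$ meets $\visInf{\mB\smi\{B_1\}}{x}$ in the manner of the footnote there. Finally, the combined ``has a deadlock'' bullet follows by taking the larger of the local and global cutoffs, both of which are $2|B|-1$.
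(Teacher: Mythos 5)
Your global-deadlock case and your overall strategy (recycle the unfair construction, substitute the fair extension for plain flooding on the recurrently visited states) match the paper's proof. The genuine gap is in your local-deadlock case: you reduce to ``the deadlocked process is $B_1$'' and then treat every other process as infinitely moving. A strong-fair locally deadlocked run $x$ of $\largesys$ may have \emph{many} deadlocked processes, and their deadlock states belong to the recurrent set you hand to the fair extension (they are occupied at every moment from some point on). The fair extension needs, for each such state $q$, a process of $x$ that visits $q$ infinitely often \emph{and keeps moving}, so that two copies can relay through $q$; if $q$ is occupied only by a frozen process, no such relay exists. You then face a dilemma: leave the copies parked in $q$ (they may well be enabled there --- flooding only adds occupied states, and disjunctive guards only become \emph{more} enabled --- so strong fairness is violated), or move them out of $q$ (then guards of other copied transitions, which in $x$ were satisfied precisely by the process stuck in $q$, may fail). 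Dropping those states altogether fails for the same reason.

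The paper resolves this by partitioning the deadlocked states into $\dead_1$ (deadlocked only if no other process shares the state) and $\dead_2$ (deadlocked even with company): processes stuck in $\dead_1$-states have their local runs copied verbatim and are excluded from the flooding census, each $\dead_2$-state receives exactly one flooding process (which then deadlocks there harmlessly), and the fair extension is applied only to recurrent states \emph{outside} $\dead_2$. This partition is also what makes the count come out to $2|B|-1$ rather than $2|B|$: the final bound follows from a short case analysis on whether $\dead_2$ and the set of transiently visited states are empty, not from the single ``one freed copy for $q_\bot$'' you describe. Your disabledness argument for the chosen deadlocked process (no recurrent state can appear in a guard of its outgoing transitions, else it would already have been enabled in $x$) is sound and implicitly present in the paper, but without the $\dead_1$/$\dead_2$ bookkeeping the construction does not go through.
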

The proofs are similar to that of Lemma~\ref{lem_disj_deadlocks_unfair} (the case without fairness):
the case of global deadlocks is exactly the same,
the case of local deadlocks differ---we additionally use the fair extension to ensure the resulting run is fair.
\begin{proof}

\providecommand{\deadOne}[1]{\dead_{<2}(#1)}
\providecommand{\deadTwo}[1]{\dead_2(#1)}

If $\largesys$ has a global deadlock, 
then the fairness does not influence the cutoff, 
and the proof from Lemma~\ref{lem_disj_deadlocks_unfair}, 
case ``Global Deadlocks'', applies and gives the cutoff $2|B|-1$. 
Hence below consider only the case of local deadlocks. 

Given a strong-fair deadlocked run $x$ of $\largesys$, 
we first construct a strong-fair deadlocked run $y$ of $\cutoffsys$ 
with $c=2|B|$ and then argue that $c$ can be reduced to $2|B|-1$. 
The construction is similar to that in Lemma~\ref{lem_disj_deadlocks_unfair} 
-- the differences originate from the need to infinitely move 
non deadlocked processes.

Let $\deadOne{x}$ be the set of deadlocked states in the run $x$ 
that are only deadlocked if there is no other process in the same state, 
and let $\mD_1$ be the set of processes deadlocked 
in the run $x$ in $\deadOne{x}$. 
Let $\deadTwo{x}$ be the set of states that are deadlocked 
in the run $x$ even if there is another process in the same state. 

We note the following:
\li
\- $|\mD_1| = |\deadOne{x}| \leq |B|$;

\- $\deadOne{x} \cap \deadTwo{x} = \emptyset$;

\- $\visFin{\mB\smi \mD_1}{x} \cap \deadOne{x} \neq \emptyset$
   is possible, because a state from $\visFin{\mB\smi \mD_1}{x}$ 
   can first be visited by a process in $\mB \smi \mD_1$, 
   and later be deadlocked because of the process in $\mD_1$;

\- $\deadTwo{x} \subseteq \visInf{\mB\smi \mD^1}{x}$,
   and hence $\visFin{\mB \smi \mD_1}{x} \cap \deadTwo{x} = \emptyset$.
\il

The construction has two phases. 
The first phase is as follows.
\li
\-[a.] For every $p \in \{A\} \cup \mD_1$, set $y(p)=x(p)$;

\-[b.] for every $q \in \deadTwo{x}$, 
       devote one process of $\cutoffsys$ that floods it;

\-[c.] for every $q \in \visInf{\mB \smi \mD_1}{x} \smi \deadTwo{x}$,
       devote two processes of $\cutoffsys$ that flood it;

\-[d.] for every $q \in \visFin{\mB \smi \mD_1}{x}$, 
       devote one process of $\cutoffsys$ that floods it
       and then evacuates into $\visInf{\mB \smi \mD_1}{x}$;
       and

\-[e.] let other processes (if any) mimic some process from (c).
\il
After this phase all $B$ processes will be in 
$\visInf{\mB \smi \mD_1}{x} \cup \deadOne{x}$. 

The second phase applies to processes in 
$\visInf{\mB \smi \mD_1}{x} \smi \deadTwo{x}$ the fair extension%
\footnote{The fair extension requires the run $x$ to be unconditionally-fair, 
          but here we have a run in which all processes that are not deadlocked
          move infinitely often.
          To adapt the construction to this case:
          copy local runs of processes $\{A\} \cup \mD_1$,
          and do not extend local runs of processes that are in a
          state in $\dead_2$.}.


How many processes does the construction use? 
Note that the sets 
$\deadOne{x} \cup \visFin{\mB \smi \mD_1}{x}$, 
$\deadTwo{x}$, 
$\visInf{\mB \smi \mD_1}{x} \smi \deadTwo{x}$ 
are disjoint, thus:
\begin{align}
& | \visFin{\mB \smi \mD_1}{x} | + |\deadOne{x}| + |\deadTwo{x}| + 2| \visInf{\mB \smi \mD_1}{x} \smi \deadTwo{x} | \leq \label{disj:eq:1} \\
& 2|\visFin{\mB \smi \mD_1}{x} \cup \deadOne{x}| + |\deadTwo{x}| + 2| \visInf{\mB \smi \mD_1}{x} \smi \deadTwo{x} | \leq \label{disj:eq:2} \\
& |B| + |\visFin{\mB \smi \mD_1}{x} \cup \deadOne{x}| + | \visInf{\mB \smi \mD_1}{x} \smi \deadTwo{x} | \leq 2|B|  \nonumber
\end{align}
Let us reduce the estimate to $\leq 2|B|-1$:
\li
  \- assume that $\deadTwo{x} = \emptyset$ 
     (otherwise, Eq.\ref{disj:eq:1} and the sets disjointness give $2|B|-1$);
     and

  \- assume that $ \visFin{\mB \smi \mD_1}{x} \neq \emptyset$ 
     (the other case together with eq.\ref{disj:eq:2}, 
      the sets disjointness, and the first item gives $2|B|-1$);

  \- hence, the construction in step (d) evacuates the process in 
     $q \in \visFin{\mB \smi \mD_1}{x}$ 
     into 
     $ \visInf{\mB \smi \mD_1}{x} \smi \deadTwo{x}$. 
     Hence modify step (c) of the construction 
     and for $q$ devote a single process of $\cutoffsys$ that floods it. 
     This will give $\leq 2|B|-1$.
\il
This concludes the proof.

\end{proof}

\begin{tightness}[Disj, Deadlocks, Fair]
\label{obs:disj:fair_tight_deadlock}
The cutoff $c=2|B|-1$ for deadlock detection in disjunctive systems on strong-fair or finite runs is tight.
I.e., for any $k$ there are templates $(A,B)$ with $|B|=k$ such that:
$$
(A,B)^{(1,2|B|-2)} \textit{ does not have a deadlock, but } (A,B)^{(1,2|B|-1)} \textit { does}.
$$
\end{tightness}
\begin{proof}
Figure~\ref{gua:fig:tight_disj_dead_fair}
shows process templates $(A,B)$ such that any system $\largesys$ with $n\leq 2|B|-2$ does not deadlock on strong-fair runs, but larger systems do.
\begin{figure}[htpb]
\centering
\begin{subfigure}[b]{0.45\textwidth}\center
\scalebox{0.75}{
\begin{tikzpicture}[node distance=2.5cm,inner sep=1pt,minimum size=0.5mm,->,>=latex]

\node[initial left, state] (a_1) {};
\node (dots) [right of=a_1] {$\ldots$};
\node[state] (a_k) [right of=dots] {}; 
\node[state] (r) [below=0.8cm of dots] {$r_A$};

\path (a_1) edge [above] node {$\disj{1_B}$} (dots);
\path (dots) edge [above] node {$\disj{{k-1}_B}$} (a_k);
\path (a_1) edge [bend right=15] node {} (r);
\path (a_k) edge [bend left=15] node {} (r);
\path (dots) edge [above, dotted] node {} (r);
\path (a_k) edge [bend right=35,above] node {$\disj{k_B}$} (a_1);

\path (r) [loop below] edge [right] node {} (r);

\end{tikzpicture}}
\label{fig:disj:tight_fair_deadlock_tmplA}
\caption*{Template A}
\end{subfigure}
\begin{subfigure}[b]{0.45\textwidth}\center
\scalebox{0.75}{
\begin{tikzpicture}[node distance=2.5cm,inner sep=1pt,minimum size=0.5mm,->,>=latex]

\node[state,initial left] (b_1) {$1_B$};
\node (dots) [right of=b_1] {$\ldots$};
\node[state] (b_k) [right of=dots] {$k_B$}; 
\node[state] (r) [below=0.8cm of dots] {};

\path (b_1) edge [above] node {$\disj{1_B}$} (dots);
\path (dots) edge [above] node {$\disj{{k-1}_B}$} (b_k);
\path (b_1) edge [bend right=15] node {} (r);
\path (b_k) edge [bend left=15] node {} (r);
\path (dots) edge [above, dotted] node {} (r);

\path (b_1) [loop above] edge [right] node {$\disj{1_B}$} (b_1);
\path (b_k) [loop above] edge [right] node {$\disj{k_B}$} (b_k);
\path (r) [loop below] edge [right] node {$\disj{r_A}$} (r);
\path (dots) [loop above,dotted,in=60,out=120,distance=12mm] edge [right] node {} (dots);

\end{tikzpicture}}
\caption*{Template B}
\end{subfigure}
\caption{Templates $(A,B)$ used in Tightness~\ref{obs:disj:fair_tight_deadlock}.}
\label{gua:fig:tight_disj_dead_fair}
\end{figure}
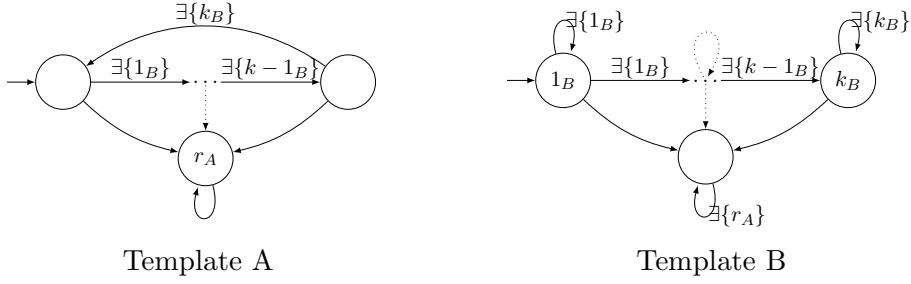
\end{proof}

\section{Proof Techniques for Conjunctive Systems} \label{gua:sec:proofs-conj}

\subsection{\LTLmX\ Properties Without Fairness: Existing Constructions} \label{gua:sec:ideas-conj-nofair}

The Monotonicity Lemma is proven~\cite{Emerson00} by keeping the additional process in the initial state.

\begin{lemma}[Monotonicity: Conj, \LTLmX, Unfair] \label{le:ConjMonotonicityLemma}
For conjunctive systems,
\begin{align*}
\forall n \geq 1:\ 
(A,B)^{(1,n)} \models \pexists h(A,B_1)
\ \ \Impl \ \ 
(A,B)^{(1,n+1)}\models \pexists h(A,B_1).
\end{align*}
\end{lemma}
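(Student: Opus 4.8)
The plan is to prove this by the same idling-process construction used for the disjunctive case (Lemma~\ref{disj:le:NonFairDisjunctiveMono}), with the one twist that makes it work for conjunctive guards. Concretely, given a run $x = (s_1,e_1,p_1),(s_2,e_2,p_2),\ldots$ of $(A,B)^{(1,n)}$ with $x \models \pexists h(A,B_1)$, I would build a run $y$ of $(A,B)^{(1,n+1)}$ that replays every move of $x$ verbatim on the processes $A,B_1,\ldots,B_n$ and keeps the additional process $B_{n+1}$ parked in its initial state $\init_B$ at every moment, never scheduling it. The point to emphasize is that, unlike in disjunctive systems where the new process may sit in any state, here it \emph{must} rest in $\init_B$; this is exactly where the standing assumption that $\init_A$ and $\init_B$ are contained in all conjunctive guards will be used.

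The first step is to check that $y$ is a legal path. Each transition of $y$ is taken by some $p \in \{A,B_1,\ldots,B_n\}$ at the corresponding moment of $x$, so its guard $g$ was satisfied conjunctively in $x$, i.e.\ every process other than $p$ was in $g$. In $y$ the global state differs from the one in $x$ only by the presence of $B_{n+1}$ in $\init_B$, and since $\init_B \in g$ by the neutral-state assumption, the guard remains satisfied and the transition stays enabled. Because $x$ schedules exactly one process per step, so does $y$, and the interleaving requirement holds trivially---no desynchronization, \interleave, or \destutter\ step is needed, in contrast to the bounding constructions. The property itself survives immediately: since $B_{n+1}$ never moves and the rest of the processes mirror $x$, we have $y(A,B_1) = x(A,B_1)$, hence $y$ and $x$ induce the same local behaviour for $A$ and $B_1$ and $y \models h(A,B_1)$ follows directly from $x \models h(A,B_1)$. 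The generalization to $h(A,B^{(k)})$ is immediate: copy $B_1,\ldots,B_k$ and argue identically.

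The only place that needs genuine care---and the main (if modest) obstacle---is maximality, i.e.\ showing that $y$ is a \emph{run} rather than merely a path, in the case where $x$ is finite and ends in a globally deadlocked configuration $(s_d,e_d,\bot)$. Here I would use that conjunctive guards are monotone in the opposite direction from disjunctive ones: inserting $B_{n+1}$ into $\init_B$ can only make a conjunctive guard \emph{harder} to satisfy, so every original process disabled in $s_d$ stays disabled in $y$. It then remains to rule out the single process $B_{n+1}$ being spuriously enabled in $\init_B$; if it is disabled, $y$ is a finite deadlocked run and we are done, and if it is enabled, letting it take its further steps while $A$ and $B_1$ stutter preserves $y(A,B_1)=x(A,B_1)$ up to stuttering, so stutter-invariance of $\LTLmX$ keeps $h(A,B_1)$ intact. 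For infinite $x$ the path $y$ is infinite and hence automatically a run, so this subtlety does not arise at all. The conceptual crux throughout is simply the correct invocation of the neutral-state assumption; everything else is bookkeeping.
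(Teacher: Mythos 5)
Your proposal is correct and follows exactly the paper's approach: the paper's proof is the one-liner ``let the new process stutter in $\init$,'' relying on the same neutral-state assumption ($\init_B$ contained in all conjunctive guards) that you invoke. Your additional care about interleaving semantics and maximality of finite (deadlocked) runs is sound bookkeeping that the paper omits but does not contradict.
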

\begin{proof}
Let the new process stutter in $\init$ state.
\end{proof}

To prove the Bounding Lemma,
Emerson and Kahlon \cite{Emerson00} suggest
to simply copy the local runs $x(A)$ and $x(B_1)$ into $y$. 
In addition,
we may need one more process that moves infinitely often
to ensure that an infinite run of \largesys will result in an infinite run of \cutoffsys.
All transitions of copied processes will be enabled
because removing processes from a conjunctive system
cannot disable a transition that was enabled before.

\begin{lemma}[Bounding: Conj, \LTLmX, Unfair] \label{le:ConjunctiveBoundingLemma}
For conjunctive systems,
\begin{align*}
\forall n \geq 2:\ 
(A,B)^{(1,2)} \models \pexists h(A,B_1)
\ \ \Implied \ \ 
\largesys \models \pexists h(A,B_1).
\end{align*}
\end{lemma}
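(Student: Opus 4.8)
The plan is to prove the bounding lemma by the direct copying construction sketched just before the statement, exploiting the defining property of conjunctive guards: removing processes can never disable a transition that was enabled. Concretely, suppose $\largesys \models \pexists h(A,B_1)$ with $n \geq 2$, and let $x = (s_1,e_1,p_1)(s_2,e_2,p_2)\ldots$ be a run of $\largesys$ witnessing $h(A,B_1)$. I would construct a run $y$ of $\sys{2}$ whose projection $y(A,B_1)$ is stuttering equivalent to $x(A,B_1)$. Since $h$ is an $\LTLmX$ formula and $\LTLmX$ is insensitive to stuttering, this immediately gives $y \models h(A,B_1)$, hence $\sys{2} \models \pexists h(A,B_1)$.

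For the construction I would first copy the two relevant local runs: set $y(A)=x(A)$ and $y(B_1)=x(B_1)$. To guarantee that $y$ stays an \emph{infinite} run — recall runs are maximal, so a finite $y$ would be a global deadlock and could fail $h$ — I would pick from $x$ some process $B_\infty$ that moves infinitely often. Such a process exists because $x$ is infinite, only finitely many processes participate, and exactly one process moves at each step. I would then set $y(B_2)=x(B_\infty)$. (If the infinitely moving process is $A$ or $B_1$ itself, then $B_2$ may instead be kept in $\init_B$ forever, which by the neutrality of $\init_B$ in all conjunctive guards never blocks any transition.) In either case only two $B$-copies are used, matching the claimed cutoff $c=2$.

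Next I would verify that this collection of local runs forms a genuine run of $\sys{2}$. The initial state is correct, as $x(A)$, $x(B_1)$, and $x(B_\infty)$ all begin in their initial states. For transition validity I would invoke conjunctive monotonicity: whenever one of the copied processes $p \in \{A,B_1,B_2\}$ takes a transition in $y$ at a moment matching its move in $x$, the guard of that transition was a conjunctive condition satisfied by \emph{all} other processes of $\largesys$; the remaining processes of $\sys{2}$ form a subset of those, occupying the same local states, so the conjunctive guard remains satisfied. Because $x$ is interleaved, at most one of $A,B_1,B_\infty$ moves at any step of $x$, so no two of the three copied processes ever move simultaneously in $y$; the only mismatch with interleaving semantics is global stuttering at moments when the mover in $x$ is none of the three, and these steps are removed by \destutter.

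I expect the main obstacle to lie not in the guard argument — which is immediate from conjunctivity — but in the bookkeeping around infiniteness and stuttering. Specifically, one must argue that applying \destutter to $y$ preserves stuttering equivalence of $y(A,B_1)$ with $x(A,B_1)$, and that $y$ remains infinite after destuttering, the latter relying on $B_\infty$ contributing infinitely many non-stuttering moves. Combined with the monotonicity lemma (Lemma~\ref{le:ConjMonotonicityLemma}), this bounding lemma yields the cutoff $2$ for $\LTLmX$ properties of conjunctive systems without fairness.
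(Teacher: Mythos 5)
Your proposal is correct and follows essentially the same route as the paper: copy $x(A)$ and $x(B_1)$, add one more local run (an infinitely moving process, or the neutral $\init_B$) to keep the run infinite, justify every transition by the monotonicity of conjunctive guards under removal of processes, and destutter to restore interleaving semantics while preserving $\LTLmX$ satisfaction. The only cosmetic difference is your case split (park $B_2$ in $\init_B$ when $A$ or $B_1$ already moves infinitely often, versus the paper's $y(B_2)=x(B_2)$), which is immaterial.
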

The proof is inspired by the first part of the proof of \cite[Lemma 5.2]{Emerson00}.
\begin{proof}
Let $x=(s_1,e_1,p_1) (s_2,e_2,p_2) \ldots$ be a run of $\largesys$. 
Note that, by the semantics of conjunctive guards, 
the transitions along any local run of $x$ will also be enabled 
in any system $\cutoffsys$ with $c \leq n$, 
where the processes exhibit a subset of the local runs of $x$. 
Thus, we obtain a run of $\cutoffsys$ by copying a subset of the local runs of $x$, 
and removing elements of the new global run where all processes stutter.
\sj{should we put this as a general lemma somewhere?}

Then, based on an infinite run $x$ of the original system, 
we construct an infinite run $y$ of the cutoff system. 
Let $y(A)=x(A)$ and $y(B_1)=x(B_1)$. 
The second copy of template $B$ in $(A,B)^{(1,2)}$ is needed to ensure that 
the run $y$ is infinite, i.e., at least one process moves infinitely often. 
If both $x(A)$ and $x(B_1)$ eventually deadlock, 
then there exists a process $B_i$ of $\largesys$ that makes infinitely many moves, 
and we set $y(B_2) = x(B_i)$. 
Otherwise, we set $y(B_2) = x(B_2)$.
\end{proof}

\begin{tightness}[Conj, \LTLmX, Unfair] \label{obs:conj:tight_prop}
The cutoff $c=2$ is tight for parameterized model checking of properties $\pexists h(A,B_1)$ in the 1-conjunctive systems, i.e., there is a system type $(A,B)$ and property $Eh(A,B_1)$ which is not satisfied by $(A,B)^{(1,1)}$ but is by $(A,B)^{(1,2)}$.
\end{tightness}
\begin{proof}
Figure~\ref{fig:obs:conj:tight_prop} shows templates $(A,B)$, $\pexists h(A,B_1) = \pexists \eventually b$. An infinite run that satisfies the formula needs one copy of $B$ that stays in the initial state, and one that moves into $b$.
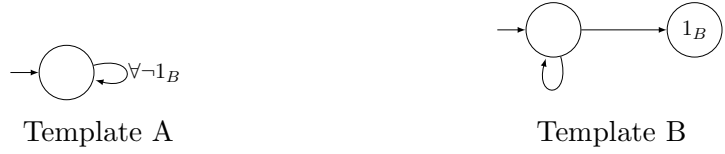
\begin{figure}[tb]
\centering
\begin{subfigure}[b]{0.45\textwidth}\center
\scalebox{0.75}{
\begin{tikzpicture}[node distance=1.5cm,inner sep=1pt,minimum size=0.5mm,->,>=latex]

\node[state,initial left] (init) {};
\path (init) [loop right] edge [right] node {$\forall \neg 1_B$} (init);

\end{tikzpicture}
  }
\caption*{Template A}
\end{subfigure}
\begin{subfigure}[b]{0.45\textwidth}\center
\centering
\scalebox{0.75}{
\begin{tikzpicture}[node distance=1.5cm,inner sep=1pt,minimum size=0.5mm,->,>=latex]

\node[state,initial left] (init) {};
\node[state] (b_1) [right= of init] {$1_B$};

\path (init) edge [above] node {} (b_1);
\path (init) [loop below] edge [right] node {} (init);

\end{tikzpicture}
  }
\caption*{Template B}
\end{subfigure}
\caption{Templates used to prove Tightness~\ref{obs:conj:tight_prop}}
\label{fig:obs:conj:tight_prop}
\end{figure}
\end{proof}

\subsection{\LTLmX\ Properties with Fairness: New Constructions} \label{gua:sec:ideas-conj-fair}

In this section, subscript $i$ in path quantifiers, $\pexists_i$ and $\pforall_i$, 
denotes the quantification over initializing runs.

The proof of the Bounding Lemma is the same as in the non-fair case,
noting that, if the original run is unconditional-fair,
then so will be the resulting run.

\begin{lemma}[Bounding: Conj, \LTLmX, Fair] \label{le:FairConjunctiveBounding Lemma}
For unconditionally-fair initializing runs of conjunctive systems:
\begin{align*}
&\forall n \geq 1:\\
& (A,B)^{(1,1)} \models \pexists_{uncond} h(A,B_1)
\ \Implied \
(A,B)^{(1,n)} \models \pexists_{uncond} h(A,B_1).
\end{align*}
\end{lemma}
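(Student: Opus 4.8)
The plan is to follow the hint that the fair case recycles the non-fair construction of Lemma~\ref{le:ConjunctiveBoundingLemma}, the only new observation being that unconditional fairness renders the auxiliary second copy of $B$ unnecessary, which is why the cutoff drops to $1$. Concretely, given an unconditionally-fair initializing run $x = (s_1,e_1,p_1)(s_2,e_2,p_2)\ldots$ of $(A,B)^{(1,n)}$ with $x \models h(A,B_1)$, I would build a run $y$ of the cutoff system $(A,B)^{(1,1)}$ by copying only the two local runs that the property mentions: set $y(A) = x(A)$ and $y(B_1) = x(B_1)$, \interleave their moves in the order inherited from $x$, and then \destutter to delete the global steps at which some process other than $A$ or $B_1$ moved in $x$.

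The first thing to verify is that $y$ is a legal run of $(A,B)^{(1,1)}$, i.e.\ that every copied transition is enabled. This is exactly the monotone-removal argument specific to conjunctive guards: a conjunctive guard is universally quantified over the \emph{other} processes, so deleting processes can only make a guard easier to satisfy. More precisely, when $A$ (resp.\ $B_1$) fires in $y$ the transition it took at some moment $m$ in $x$, the only other process present in $(A,B)^{(1,1)}$ is $B_1$ (resp.\ $A$), whose local state at the corresponding step of $y$ equals its state $s_m(B_1)$ (resp.\ $s_m(A)$) in $x$; since that state lay inside the guard at moment $m$ in $x$, the transition remains enabled in $y$. For open systems I would reuse the inputs of $x$: because $x$ keeps the input to each process constant across the block of steps between two of its moves (including the deleted ones), the surviving steps inherit a constant input, so the input-stability requirement on runs is preserved, in line with the remark of Section~\ref{gua:sec:proof-structure} that a simulating run may choose whatever inputs it needs.

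Next come the three properties that distinguish this from plain reachability. Because $x$ is unconditionally-fair, both $A$ and $B_1$ move infinitely often in $x$, hence also in the destuttered $y$ (destuttering removes only steps where neither of them moved); thus $y$ is unconditionally-fair and, in particular, infinite---this is precisely where the second copy of $B$ of Lemma~\ref{le:ConjunctiveBoundingLemma} becomes superfluous, since fairness already supplies a perpetually moving process. Because $x$ is initializing and $y(A)$, $y(B_1)$ reproduce the exact local state sequences of $x(A)$, $x(B_1)$, both processes visit $\init$ infinitely often in $y$, so $y$ is initializing as well. Finally, $y(A,B_1)$ is stuttering-equivalent to $x(A,B_1)$ by construction, and $h$ is an $\LTLmX$ formula, which is invariant under stuttering; hence $y \models h(A,B_1)$ iff $x \models h(A,B_1)$, giving $y \models h(A,B_1)$.

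I expect no step here to be a genuine obstacle, since this lemma is the deliberately easy half of the conjunctive-fair cutoff. The one point deserving care is the bookkeeping showing that the destuttered interleaving keeps the two local runs synchronized step-for-step, so that the enabledness check above lines up with the moments of $x$; but this is the same \destutter/\interleave reconciliation already used in the disjunctive proofs, and it does not interact with the guards in the conjunctive case. The conceptual content reduces to the single observation that unconditional fairness furnishes, for free, the infinite mover that the non-fair proof had to manufacture with an extra process, collapsing the bounding cutoff from $2$ to $1$.
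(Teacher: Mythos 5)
Your proposal is correct and takes essentially the same route as the paper, whose proof is a one-line instruction to copy the local runs of $A$ and $B_1$ and observe that unconditional fairness (and the initializing property) carries over; your elaboration of the guard-monotonicity, destuttering, and fairness/initializing preservation steps matches the intended argument, including the key observation that fairness makes the extra infinitely-moving $B$-process of Lemma~\ref{le:ConjunctiveBoundingLemma} unnecessary.
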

\begin{proof}
Given an unconditionally-fair [initializing] run $x$ of $\largesys$ with $n>c$ construct an unconditionally-fair [initializing] run $y$ in the cutoff system $(A,B)^{(1,1)}$: copy the local runs of processes $A$, $B_1$.
\end{proof}

Proving the Monotonicity Lemma is more difficult,
since the fair extension construction from disjunctive 
systems does not work for conjunctive systems%
---if an additional process mimics the transitions of an existing process
   then it disables transitions of the form 
   $\transition{q}{q'}{\textit{``\,}\forall\neg q\textit{\!''}}$ or
   $\transition{q}{q'}{\textit{``\,}\forall\neg q'\textit{\!''}}$.
Hence, we add the restriction of initializing runs, 
which allows us to construct a fair run as follows.
The additional process $B_{n+1}$ ``shares'' a local run $x(B_i)$ 
with an existing process $B_i$ of $(A,B)^{(1,n+1)}$: 
one process stutters in $\init_B$ while the other makes transitions from $x(B_i)$, 
and whenever $x(B_i)$ enters $\init_B$
(this happens infinitely often),
the roles are reversed. 
Since this changes the behavior of $B_i$, 
$B_i$ should not be mentioned in the formula, 
i.e., we need $n\geq 2$ for a formula $h(A,B^{(1)})$. 

\begin{lemma}[Monotonicity: Conj, \LTLmX, Fair] \label{le:ConjMonFair}
For unconditionally-fair initializing runs of conjunctive systems:\sj{generalization is obvious; $n \ge k+1$ in general case}
\begin{align*}
& \forall n \ge 2:\\
& (A,B)^{(1,n)} \models \pexists_{uncond,i} h(A,B_1)
\ \Impl \
(A,B)^{(1,n+1)} \models \pexists_{uncond,i} h(A,B_1).
\end{align*}
\end{lemma}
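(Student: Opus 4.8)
The plan is to adapt the idea sketched just before the lemma: since the formula $h(A,B_1)$ constrains only $A$ and $B_1$, and since $n \geq 2$, I can take one ``spare'' process, say $B_2$, whose local run may be disturbed freely, and let it \emph{share} its local run with the new process $B_{n+1}$ of $(A,B)^{(1,n+1)}$, arranging that at every moment exactly one of $\{B_2,B_{n+1}\}$ is active (reproducing $x(B_2)$) while the other idles in $\init$. This sidesteps the obstruction noted in the text, where a naive extra copy sitting in a non-initial state $q$ would disable transitions guarded by $\forall\neg q$.

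First I would fix an unconditionally-fair initializing witness run $x$ of $\largesys$ with $x \models h(A,B_1)$. Since $x$ is unconditionally fair, $B_2$ moves infinitely often, and since $x$ is initializing, $B_2$ visits $\init$ at infinitely many moments. Starting from $t_0 = 1$ (the initial global state has every process in $\init$), I would pick an increasing sequence of handover moments $t_0 < t_1 < t_2 < \dots$, each a moment at which $x(B_2)$ is in $\init$, chosen so that $B_2$ makes at least one move strictly inside each interval $[t_k,t_{k+1}]$; such a sequence exists because $B_2$ makes infinitely many moves and infinitely many visits to $\init$. I would then build $y$ by processing $x$ step by step: set $y(A)=x(A)$, $y(B_1)=x(B_1)$, and $y(B_j)=x(B_j)$ for $3 \leq j \leq n$; for each move of $B_2$ in $x$ lying in $[t_k,t_{k+1}]$ I would schedule the \emph{active} member of $\{B_2,B_{n+1}\}$, alternating the active role with the parity of $k$, and leave the \emph{passive} member stuttering in $\init$. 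The handover is consistent because at each $t_k$ the active member has just reached $\init$ while the passive member has been idling there, so the newly active process resumes $x(B_2)$ from $\init$.

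The key invariant to establish is that the global state of $y$ at every moment equals the global state of $x$ together with exactly \emph{one additional process sitting in} $\init$ (the current passive member of $\{B_2,B_{n+1}\}$). Granting this, enabledness of every move of $y$ follows from the conjunctive semantics and the standing assumption that $\init$ lies in every guard: for any mover $p$, the other processes in $y$ are precisely those in $x$ (in identical states, once the active member is identified with the original $B_2$) plus one extra process in $\init$, so any conjunctive guard satisfied in $x$ remains satisfied in $y$. Unconditional fairness and the initializing property of $y$ then follow from the choice of handover moments: each of $B_2$ and $B_{n+1}$ is active in infinitely many intervals, makes at least one move per active interval, and lands in $\init$ at the end of each of its active intervals, while all other processes inherit both properties from $x$. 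Finally, no two processes move simultaneously in this construction and no global stutter step is introduced (each step of $x$ maps to exactly one move of $y$), so $y(A,B_1)$ is stuttering equivalent to $x(A,B_1)$; as $h$ is an $\LTLmX$ formula over $A$ and $B_1$ it is stutter-invariant, whence $y \models h(A,B_1)$ and $y$ witnesses $(A,B)^{(1,n+1)} \models \pexists_{uncond,i} h(A,B_1)$.

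I expect the main obstacle to be making the ``one extra idle process'' invariant precise across the handover moments, in particular verifying that the passive member genuinely never leaves $\init$ during its passive interval and that the active/passive swap never demands a transition absent from $x(B_2)$. The generalization from $h(A,B_1)$ to $h(A,B^{(k)})$ requires $n \geq k+1$, so that a spare, unmentioned process $B_{k+1}$ is available to share with $B_{n+1}$, but is otherwise identical.
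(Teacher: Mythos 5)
Your proposal is correct and follows essentially the same route as the paper's proof: the new process shares the local run of an unmentioned $B$-process, with the two of them alternating roles at visits to $\init_B$ while the passive one stutters in the neutral initial state, which preserves enabledness of conjunctive guards, unconditional fairness, the initializing property, and $y(A,B_1)=x(A,B_1)$. Your write-up is in fact more explicit than the paper's sketch about the ``one extra idle process'' invariant and the choice of handover moments.
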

\begin{proof}
Given a unconditionally-fair initializing run $x$ of $\largesys$, we construct a unconditionally-fair initializing run $y$ in $(A,B)^{(1,n+1)}$, with one additional process $p$. 
First, copy all local runs of all processes of $(A,B)^{(1,n)}$ from the run $x$ into $y$.
Then, let process $p'$ stutter in $\init$ until some other process $p \neq B_1$ enters $\initstate$. 
Then, exchange the roles of processes $p'$ and $p$: let $p$ stutter in $\initstate$, while $p'$ takes the transitions of $p$ from the original run, until it enters $\initstate$. And so on.
In this way, we continue to interleave the run between $p'$ and $p$, and obtain a unconditionally-fair initializing run for all processes, with $y(A,B_1)=x(A,B_1)$. 
Thus, if $\largesys \models \pexists h(A,B_1)$, then $(A,B)^{(1,n+1)} \models \pexists h(A,B_1)$.
\end{proof}

\begin{tightness}[1-Conj, \LTLmX, Fair] \label{obs:conj:tight_prop_fair}
The cutoff $c=2$ is tight for parameterized model checking of $\pexists h(A,B_1)$ 
on unconditionally-fair initializing runs in 1-conjunctive systems, 
i.e., 
there is a system type $(A,B)$ and property $\pexists h(A,B_1)$ 
which is satisfied by $(A,B)^{(1,1)}$ but not by $(A,B)^{(1,2)}$.
\end{tightness}
\begin{proof}
Figure~\ref{fig:obs:conj:tight_prop_fair} shows templates $(A,B)$;
$\pexists h(A,B_1) = \pexists \FG (b_{init} \impl a_1)$.
\begin{figure}[tb]
\centering
\begin{subfigure}[b]{0.45\textwidth}\center
\scalebox{0.75}{
\begin{tikzpicture}[node distance=1.5cm,inner sep=1pt,minimum size=0.5mm,->,>=latex]

\node[state,initial left] (init) {${init}_A$};
\node[state] (a_1) [right= of init] {$1_A$};

\path (init) edge [above] node {} (a_1);
\path (a_1)  [bend left=20] edge [below] node {} (init);

\end{tikzpicture}
  }
\caption*{Template A}
\end{subfigure}
\begin{subfigure}[b]{0.45\textwidth}\center
\scalebox{0.75}{
\begin{tikzpicture}[node distance=1.5cm,inner sep=1pt,minimum size=0.5mm,->,>=latex]

\node[state,initial left] (init) {${init}_B$};
\node[state] (b_1) [right= of init] {$1_B$};
\node[state] (b_2) [right= of b_1] {$2_B$}; 

\path (init) edge [above] node {$\forall\neg 1_B$} (b_1);
\path (b_1)  edge [above] node {$\forall\neg 1_A$} (b_2);
\path (b_2)  [bend left=20] edge [below] node {$\forall\neg 2_B$} (init);

\end{tikzpicture}
  }
\caption*{Template B}
\end{subfigure}
\caption{Templates used to prove Tightness~\ref{obs:conj:tight_prop_fair}}
\label{fig:obs:conj:tight_prop_fair}
\end{figure}
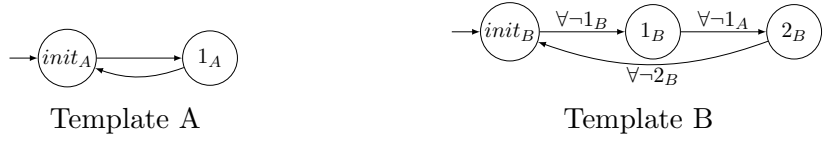
\end{proof}

\subsection{Deadlocks Without Fairness: Updated Constructions}\label{gua:sec:proofs-conj-deadlock-unfair}

\begin{lemma}[Monotonicity: Conj, Deadlocks, Unfair] \label{le:ConjunctiveMonotonicityLemmaDeadlocks}
For conjunctive systems:
$$
\forall n\geq 1: (A,B)^{(1,n)} \textit{ has a deadlock} 
\ \Impl\ 
(A,B)^{(1,n+1)} \textit{ has a deadlock.}
$$
\end{lemma}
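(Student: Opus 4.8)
The plan is to lift a deadlocked run of $\largesys$ to a deadlocked run of $(A,B)^{(1,n+1)}$ by adding one extra process $B_{n+1}$ and keeping it in $\init_B$, exactly in the spirit of the model-checking monotonicity argument of Lemma~\ref{le:ConjMonotonicityLemma}. Two structural properties of conjunctive systems make this work. First, \emph{disabledness is monotone under adding processes}: if a process $p$ is disabled in a global state $s$, then $p$ stays disabled after inserting any additional process, since each transition of $p$ is blocked by some witness $p' \neq p$ whose state is unchanged. Second, \emph{a process sitting in $\init_B$ never blocks a guard}, because by the standing assumption $\init_B$ belongs to every guard, so inserting $B_{n+1}$ in $\init_B$ cannot disable any transition that was enabled before.

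First I would treat the local-deadlock case. Given a locally deadlocked (hence infinite) run $x$ of $\largesys$ with some process $p$ disabled from a moment $\tau$ on, I copy $x$ into a run $y$ of $(A,B)^{(1,n+1)}$ and freeze $B_{n+1}$ in $\init_B$ forever, keeping its input constant. By the second property every transition taken along $x$ remains enabled in $y$, so $y$ is a valid path; it is infinite, hence a run. By the first property the added process cannot re-enable $p$, so $p$ stays disabled from $\tau$ on and $y$ is locally deadlocked.

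Next the global-deadlock case, which is the only place needing care. Starting from a globally deadlocked run $x$ of $\largesys$ ending in $(s_d,e_d,\bot)$, I again copy $x$ and freeze $B_{n+1}$ in $\init_B$, reaching the state $s^\star$ that extends $s_d$ by $B_{n+1}\mapsto\init_B$. The crucial observation is that the original processes $A,B_1,\dots,B_n$ are now \emph{permanently} disabled: they are disabled at $s^\star$ by the first property, and if I continue by scheduling only $B_{n+1}$, then any move of $B_{n+1}$ leaves the states of the original processes, and therefore all of their blocking witnesses, untouched, so they remain disabled forever. I then follow a maximal continuation that schedules only $B_{n+1}$. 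Either $B_{n+1}$ eventually becomes disabled, in which case all $n+1$ $B$-processes together with $A$ are disabled and the finite run ends in a configuration $(\cdot,\cdot,\bot)$, giving a \emph{global} deadlock; or $B_{n+1}$ moves infinitely often, in which case the run is infinite and $A$ is disabled from $s^\star$ on, giving a \emph{local} deadlock. In both cases $(A,B)^{(1,n+1)}$ has a deadlock.

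The main obstacle is precisely this last dichotomy: unlike the disjunctive case, freezing the new process in $\init_B$ does not by itself guarantee a global deadlock, because $\init_B$ may still be enabled at $s^\star$. The resolution rests on the neutrality of $\init_B$, which ensures the original deadlock is self-sustaining among the original processes, so the fate of $B_{n+1}$ is irrelevant to whether \emph{some} deadlock occurs. I note that, in contrast to Lemma~\ref{mono_lem_disj_deadlocks_unfair}, no pigeonhole argument or lower bound on $n$ is needed, so the claim holds for all $n\ge 1$; and the argument uses only the standing assumption that $\init_A,\init_B$ lie in every guard, hence applies to general conjunctive systems, not only $1$-conjunctive ones.
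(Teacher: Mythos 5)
Your proposal is correct and follows essentially the same route as the paper's proof: copy the run, park the new process in $\init_B$ (which cannot disable anything since $\init_B$ lies in every guard, and cannot enable anything since adding a process preserves every blocking witness), and in the global-deadlock case accept that scheduling the new process afterwards may downgrade the result to a local deadlock, which still suffices. Your write-up merely makes explicit the dichotomy that the paper states in one sentence.
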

\begin{proof}
Given a deadlocked run $x$ of $(A,B)^{(1,n)}$, we construct a deadlocked run of $(A,B)^{(1,n+1)}$.
Let $y$ copy run $x$, and keep the new process in $\init$.
If $x$ is globally deadlocked and $d$ is the moment when the deadlock happens in $x$,
then schedule the new process arbitrarily after moment $d$.
Thus, it is possible that the newly constructed system run is only locally deadlocked,
while the original run is globally deadlocked.
\end{proof}

As for the Bounding Lemma,
in the case of global deadlock detection, 
Emerson and Kahlon~\cite{Emerson00} suggest to copy a subset of the original local runs.
For every local state $q$ that is present in the final state of the run, 
we need at most two local runs that end in this state. 
In the case of local deadlocks, 
our construction uses the fact that systems are 1-conjunctive.
In 1-conjunctive systems, if a process is deadlocked, 
then there is a set of states $DeadGuards$ that all need to be populated by other processes
in order to disable all transitions of the deadlocked process. 
Thus, the construction copies: 
(i) the local run of a deadlocked process, 
(ii) for each $q \in DeadGuards$, the local run of a process 
     that is in $q$ at the moment of the deadlock, and
(iii) the local run of an infinitely moving process.

\begin{lemma}[Bounding: 1-Conj, Deadlocks, Unfair] \label{le:ConjunctiveBoundingLemmaDeadlocks}
For 1-conjunctive systems:
\li
  \- with $c=2|Q_B\smi \{ \init \}|$ and any $n>c$ \footnote{This statement also applies to systems without restriction to $1$-conjunctive guards.}
  $$(A,B)^{(1,c)} \textit{ has a global deadlock} \ \Implied\ (A,B)^{(1,n)} \textit{ has a global deadlock;} $$
  
  \- with $c=|Q_B\smi \{ \init \}|+2$ and any $n>c$:
  $$(A,B)^{(1,c)} \textit{ has a local deadlock} \ \Implied\ (A,B)^{(1,n)} \textit{ has a local deadlock;}$$
  
  \- with $c=2|Q_B\smi \{ \init \}|$ and any $n>c$:
  $$(A,B)^{(1,c)} \textit{ has a deadlock} \ \Implied\ (A,B)^{(1,n)} \textit{ has a deadlock.}$$
\il
\end{lemma}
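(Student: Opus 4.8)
This is the Bounding Lemma for 1-conjunctive systems with deadlock detection (no fairness). The three claims give cutoffs for global deadlock ($c = 2|Q_B \setminus \{\init\}|$), local deadlock ($c = |Q_B \setminus \{\init\}| + 2$), and general deadlock (the max, $2|Q_B \setminus \{\init\}|$). The direction is always $\Implied$: a deadlock in the large system yields a deadlock in the cutoff system.

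Let me think about the proof strategy carefully.

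The key facts about conjunctive systems I can rely on:
1. In conjunctive systems, removing processes from a run cannot disable a transition that was enabled (because guards check that *all* other processes are in the guard set — fewer processes makes this *easier* to satisfy).
2. $\init_A, \init_B$ are contained in all guards (neutral states).
3. In 1-conjunctive systems, every guard has the form $(Q_A \cupdot Q_B) \setminus \{q\}$ for a single state $q$.

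The crucial insight for 1-conjunctive systems: a process transition $\transition{q}{q'}{\forall \neg q_0}$ (guard $(Q_A \cupdot Q_B)\setminus\{q_0\}$) is *disabled* exactly when some other process is in state $q_0$. So a process is deadlocked iff for *each* of its outgoing transitions, some other process occupies the single "forbidden" state of that transition's guard.

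Let me now plan the proof.

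=== BEGIN PROOF PROPOSAL ===

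\begin{proof}[Proof plan]
The plan is to prove each of the three items by constructing, from a deadlocked run $x$ of $\largesys$ with $n > c$, a deadlocked run $y$ of $\cutoffsys$ in which the same kind of deadlock is preserved. Throughout I exploit the defining property of conjunctive systems (used already in Lemma~\ref{le:ConjunctiveBoundingLemma}): \emph{removing processes from a run never disables a transition that was enabled}, since a conjunctive guard requires \emph{all} other processes to lie in the guard set. Thus copying any subset of local runs of $x$ yields legal transitions; the only work is to guarantee that the target processes stay \emph{deadlocked} rather than becoming enabled, and—for local deadlocks—that some process still moves infinitely often so the run is infinite. Because the system is $1$-conjunctive, each guard forbids exactly one state, so a process is disabled precisely when, for every outgoing transition $\transition{q}{q'}{(Q_A\cupdot Q_B)\setminus\{q_0\}}$, some \emph{other} process currently occupies the single forbidden state $q_0$.

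First I would treat \textbf{global deadlock} ($c = 2|Q_B\setminus\{\init\}|$). Let $x$ be globally deadlocked, ending in configuration $(s_d,e_d,\bot)$. The idea is to retain, for each non-initial local state $q$ occupied at $s_d$, at most \emph{two} witnessing processes that finish in $q$: one to serve as the deadlocked representative of $q$ and one spare, so that when we later argue disabledness of some process whose forbidden state is $q$, the forbidden state is still populated even after we discard the other copies. Concretely, I copy $x(A)$, and for each $q\in Q_B\setminus\{\init\}$ present in $s_d$ keep two local runs of $B$-processes ending in $q$ (or one, if only one such process exists). Since $\init_B$ is neutral (in every guard), processes sitting in $\init$ are irrelevant to disabledness and need not be witnessed, which is why the count ranges over $Q_B\setminus\{\init\}$ and yields at most $2|Q_B\setminus\{\init\}|$ copies of $B$. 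After \destutter and \interleave (as in Lemma~\ref{disj:le:NonFairDisjunctiveBounding}) I obtain a run $y$ reaching a global state where every retained process is in the same local state as in $s_d$; by the $1$-conjunctive disabledness criterion and the presence of two copies per forbidden state, every retained process is still disabled, so $y$ is globally deadlocked.

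Next I would treat \textbf{local deadlock} ($c = |Q_B\setminus\{\init\}|+2$). Here the run is infinite and some process $p_\bot$ (say $B_1$, WLOG by symmetry) is eventually permanently disabled. Using the $1$-conjunctive structure, collect the finite set $DeadGuards\subseteq Q_A\cupdot Q_B$ of forbidden states that keep $B_1$ disabled past moment $m_\bot$; for each outgoing transition of $B_1$ its single forbidden state lies in $DeadGuards$, and each such state is occupied by some other process at every moment $\ge m_\bot$. The construction copies: (i) the local run $x(B_1)$ of the deadlocked process; (ii) for each $q\in DeadGuards\cap Q_B$, one local run of a $B$-process that sits in $q$ from $m_\bot$ onward; and (iii) one local run of an infinitely-moving $B$-process to keep $y$ infinite. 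Since $|DeadGuards\cap Q_B|\le |Q_B\setminus\{\init\}|$ (again $\init$ is neutral and never a useful forbidden state), plus one deadlocked process and one mover, this uses at most $|Q_B\setminus\{\init\}|+2$ copies of $B$. The conjunctive-removal property keeps all copied transitions enabled, and the retained occupants of $DeadGuards$ keep $B_1$ disabled in $y$.

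Finally, the \textbf{general deadlock} item takes $c$ to be the larger of the two cutoffs, namely $2|Q_B\setminus\{\init\}|$: a deadlocked run of $\largesys$ is either globally or locally deadlocked, so one of the two previous constructions applies and fits within $2|Q_B\setminus\{\init\}|$ copies. The main obstacle I anticipate is the local-deadlock case: I must argue that the \emph{single} occupant kept for each state in $DeadGuards$ genuinely remains there for all moments $\ge m_\bot$ in $y$ (not merely in $x$), and that scheduling the infinitely-moving process (iii) together with the frozen guard-occupants does not accidentally re-enable $B_1$—this is exactly where $1$-conjunctiveness is essential, since a guard forbidding more than one state could be re-satisfied in unintended ways, and it is the reason the deadlock cutoffs are restricted to $1$-conjunctive systems.
\end{proof}
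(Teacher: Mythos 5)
Your proposal is correct and follows essentially the same route as the paper's proof: for global deadlocks, retain up to two witness processes per non-initial local state occupied at the final configuration; for local deadlocks, copy the deadlocked process, freeze one occupant in each state of $DeadGuards$ (using $1$-conjunctiveness and the neutrality of $\init$ to bound this by $|Q_B\smi\{\init\}|$), and keep one infinitely-moving process; then take the maximum for general deadlocks. The only cosmetic difference is that the paper makes explicit the three-way case split on whether $A$ or a $B$-process is the deadlocked/infinitely-moving one, which your "WLOG" glosses over but which does not change the worst-case count.
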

\begin{proof}
The proof is inspired by the second part of the proof of \cite[Lemma 5.2]{Emerson00}, 
but in addition to global we consider local deadlocks. 

\myparagraph{Global deadlocks} 
Let $c=2|Q_B\smi\{ \init \}|)$. 
Let run $x = (s_1,e_1,p_1)\ldots(s_d,e_d,\bot)$ of \largesys 
with $n>c$ be globally deadlocked. 
We construct a globally deadlocked run $y$ in $\cutoffsys$ as follows.
\li
  \-[a.] For every $q \in s_d \setminus \{\init\}$:
  \li
    \- if $s_d$ has two processes in state $q$, 
       then devote two processes of \cutoffsys that mimic the behaviour 
       of the two of \largesys correspondingly;

    \- otherwise, $s_d$ has only one process in state $q$, 
       then devote one process of \cutoffsys that mimics the process of \largesys;
  \il
  \-[b.] for every process of \cutoffsys not used in the construction (if any): 
         let it mimic an arbitrary $B$-process of \largesys
         that was not yet used in the construction in item (a) nor (b).
\il
The construction uses $\leq 2|Q_B\smi \{ \init \}|$ processes $B$.
Note that the proof does not assume that the system is 1-conjunctive.

\myparagraph{Local deadlocks} 
Let $c = |Q_B\smi \{ \init \}|+2$. 
Let run $x = (s_1,e_1,p_1)\ldots$ of \largesys with $n>c$ be locally deadlocked. 
We will construct a run $y$ of \cutoffsys 
where at least one process deadlocks and exactly one process moves infinitely often.

Wlog. we distinguish three cases:
\li
\-[1.] $A$ moves infinitely often in $x$, and $B_1$ deadlocks;
\-[2.] $A$ deadlocks, and $B_1$ moves infinitely often; and
\-[3.] $A$ neither deadlocks nor moves infinitely often, $B_1$ deadlocks, 
       $B_2$ moves infinitely often.
\il

\myparagraph{1} ``$A$ moves infinitely often in $x$, and $B_1$ deadlocks''.

Let $q_\bot, e_\bot$ be the deadlocked state and input of $B_1$ in $x$, 
and let $d$ be the moment from which $B_1$ is deadlocked.

Let $DeadGuards=\{q_1,\ldots,q_k\}$ be the set of states
such that for every $q_i \in DeadGuards$ there is an outgoing transitions 
from $q_\bot$ with $e_\bot$ guarded ``$\forall \neg q_i$'',
and assume $DeadGuards \neq \emptyset$
(if it is empty, then we keep every process in $\init$ 
 until someone reaches $q_\bot$ and then schedule the rest arbitrarily). 
(Recall that $q_i \in Q_B \cupdot Q_A$.)

The construction is as follows.
\li
  \-[a.] $y(A)=x(A)$, $y(B_1)=x(B_1)$.
  \-[b.] For each $q \in DeadGuards$, at moment $d$ in $x$
         there is a process $p_q$ in state $q$. 
         If $p_q \in \{B_1,...,B_n\}$, 
         then let one process of \cutoffsys mimic it till moment $d$, 
         and then stutter in $q$.
  \-[c.] Let other processes of \cutoffsys (if any) stay in $\init$.
\il
The construction uses (if ignore (c)) $\leq |Q_B\smi \{ \init \}|+1$ processes $B$.

Note: 
the assumption of 1-conjunctive systems implies that,
in order to deadlock $B_1$,
we need a process in each state in $BlockGuards$.
This implies that having a process in each state of $BlockGuards$ does not disable 
any $A$'s transition after moment $d$.

\myparagraph{2} ``$A$ deadlocks, and $B_1$ moves infinitely often'': 
use the construction from (1).

\myparagraph{3} 
``$A$ neither deadlocks nor moves infinitely often, 
  $B_1$ deadlocks, $B_2$ moves infinitely often''. 
Use the construction from (1), and additionally: $y(B_2)=x(B_2)$. 
Thus, the construction uses (if ignore (c)) $\leq |Q_B \smi \{ \init \}|+2$ 
processes $B$.

\myparagraph{Deadlocks}
Take the higher value among the cases considered above $c=2|Q_B\smi \{ \init \}|$: 
if $x$ is locally deadlocked then the Monotonicity Lemma ensures 
that there is a deadlocked run in \cutoffsys.
\end{proof}

\begin{tightness}[1-Conj, Deadlocks, Unfair] \label{obs:conj:tight_deadlock}
The cutoff $c=2|B|-2$ is tight for parameterized deadlock detection in the 1-conjunctive systems, i.e., for any $k$ there is a system type $(A,B)$ with $|B|=k$ such that there is a deadlock in $(A,B)^{(1,2|B|-2)}$, but not in $(A,B)^{(1,2|B|-3)}$. 
\end{tightness}
\begin{proof} 
Figure~\ref{fig:obs:conj:tight_deadlock} provides templates $(A,B)$ that proves the observation. In the figure the edge with $\forall{\neg b_1},\ldots,\forall{\neg b_k}$ denotes edges with guards $\forall{\neg b_1},\ldots,\forall{\neg b_k}$. To get the global deadlock we need at least two processes in each $b_i \in \{b_1,\ldots,b_k\}$. Note that the system does not have local deadlocks.\ak{show that cutoffs for local deadlocks are also tight}
\begin{figure}[h]
\vspace{-10pt}
\centering
\begin{subfigure}[b]{0.45\textwidth}\center
\scalebox{0.75}{
\begin{tikzpicture}[node distance=2.3cm,inner sep=1pt,minimum size=0.5mm,->,>=latex]

\node[initial left, state] (init) {};

\path (init) [loop right] edge [right] node {$\forall\neg 1_B$} (init);
\path (init) [loop right,dotted,distance=26mm] edge [right] node {...} (init);
\path (init) [loop right,distance=38mm] edge [right] node {$\forall\neg k_B$} (init);

\end{tikzpicture}
  }
\caption*{Template A}
\end{subfigure}
\hspace{1cm}
\begin{subfigure}[b]{0.45\textwidth}\center
\scalebox{0.75}{
\begin{tikzpicture}[node distance=2.4cm,inner sep=1pt,minimum size=0.5mm,->,>=latex]

\node[initial above, state] (init) {$init$};
\node[state] (b_1) [left=2.7cm of init] {$1_B$};
\node (dots) [below=1cm of init] {$\ldots$};
\node[state] (b_k) [right=2.7cm of init] {$k_B$}; 

\path (b_1.20) edge [above] node {$\forall{\neg 1_B},...,\forall{\neg k_B}$} (init.160);
\path (init.200) edge [above] node {} (b_1.340);

\path (b_k.160) edge [above] node {$\forall{\neg 1_B},...,\forall{\neg k_B}$} (init.20); 
\path (init.340) edge [above] node {} (b_k.200); 

\path (init.282) edge [left, dotted] node {} ($(dots)+(0.1,0.1)$);
\path ($(dots)-(0.1,-0.1)$) edge [left, dotted] node {$\forall{\neg 1_B},...,\forall{\neg k_B}$} (init.257);



\end{tikzpicture}
  }
\caption*{Template B}
\end{subfigure}
\caption{Templates used to prove Tightness~\ref{obs:conj:tight_deadlock}}
\label{fig:obs:conj:tight_deadlock}
\end{figure}
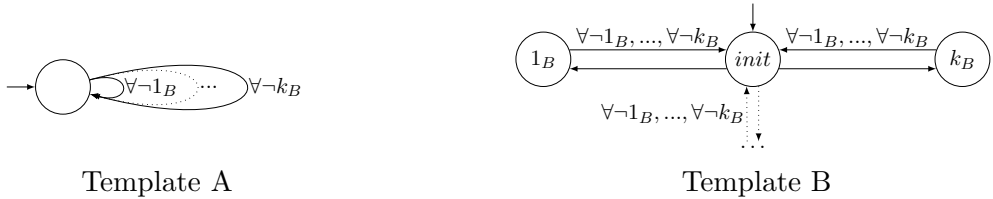
\end{proof}

\subsection{Deadlocks with Fairness: New Constructions} \label{gua:sec:proofs-conj-deadlock-fair}

The Monotonicity Lemma is proven by keeping process $B_{n+1}$ in the initial state, 
and copying the runs of deadlocked processes.
If the run of \largesys is globally deadlocked,
then process $B_{n+1}$ may keep moving in the constructed run,
i.e., the run may be only locally deadlocked. 
In the case of a local deadlock in \largesys, we distinguish two cases: 
there is an infinitely moving $B$-process, or all $B$-processes are deadlocked 
(and thus $A$ moves infinitely often).
In the latter case, we use the same construction as in the global deadlock case
(the correctness argument uses the fact that systems are 1-conjunctive, 
 runs are initializing, and there is only one process of type $A$).
In the former case, we copy the original run, and let $B_{n+1}$ share
a local run with an infinitely moving $B$-process.

\begin{lemma}[Monotonicity: Conj, Deadlocks, Fair] \label{le:FairConjunctiveMonotonicityLemmaDeadlocks}
For 1-conjunctive systems on strong fair initializing or finite runs:
$$
\forall n\geq 1: (A,B)^{(1,n)} \textit{ has a deadlock}
\ \Impl\ 
(A,B)^{(1,n+1)} \textit{ has a deadlock.}
$$
\end{lemma}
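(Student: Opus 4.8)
The plan is to reuse the constructions already developed for the unfair deadlock case (Lemma~\ref{le:ConjunctiveMonotonicityLemmaDeadlocks}) and for the fair \LTLmX\ case (Lemma~\ref{le:ConjMonFair}), gluing them together according to the shape of the given deadlocked run. Fix a deadlocked run $x$ of $\largesys$ that is either finite (globally deadlocked) or infinite, strong-fair and initializing (locally deadlocked). I would construct a deadlocked run $y$ of $(A,B)^{(1,n+1)}$ by copying all local runs of $x$ and giving the extra process $B_{n+1}$ a behaviour that (i) preserves the deadlock of the processes stuck in $x$, and (ii) keeps $y$ either finite or strong-fair and initializing.

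First I would record the one structural fact driving every case: in a conjunctive system, adding a process or moving it \emph{into} a state can only ever \emph{disable} transitions of other processes, never enable them; the only way to enable a transition guarded $\forall\neg q$ is to \emph{vacate} $q$ entirely. Since $\init_B$ lies in every guard, placing $B_{n+1}$ in $\init_B$ disables nothing, and since the processes stuck in $x$ never move, the process witnessing each of their disabled guards stays put. Hence, whatever $B_{n+1}$ does, every process deadlocked in $x$ remains deadlocked in $y$.

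Then I would split on the form of $x$. \textbf{Global deadlock, or all $B$-processes deadlocked.} Here $x$ reaches a configuration in which every $B$-process is stuck. I copy $x$ and route $B_{n+1}$ so that it too becomes stuck: because guards are $1$-conjunctive and read non-reflexively, if an original $B_i$ is deadlocked at $q_\bot$ with an outgoing transition guarded $\forall\neg q$ blocked by some process $p^\star\neq B_{n+1}$ sitting in $q$, then a fresh copy placed at $q_\bot$ is blocked by the \emph{same} $p^\star$. Routing $B_{n+1}$ into such a $q_\bot$ (and, when $A$ is the perpetually moving process, leaving $A$'s infinite initializing local run untouched, using that there is a single $A$) yields a valid deadlocked witness $y$. \textbf{Local deadlock with an infinitely moving $B$-process.} Here I copy $x$ and apply the turn-taking construction of Lemma~\ref{le:ConjMonFair}: $B_{n+1}$ alternates with an infinitely moving $B$-process at its visits to $\init_B$, which keeps the run strong-fair and initializing while leaving $x(A,B_1)$ unchanged and preserving the local deadlock.

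The main obstacle is entirely in point (ii): simultaneously preserving the deadlock, strong fairness, \emph{and} the initializing restriction when $B_{n+1}$ must keep moving. Freezing $B_{n+1}$ at $\init_B$ is only legitimate if it is not enabled infinitely often there; when $A$ is the lone perpetually moving process this must be argued from the $1$-conjunctive guard structure and the single $A$, so that vacating $A$'s states cannot open $B_{n+1}$'s transitions in a way that forces infinitely many moves without returns to $\init_B$. The delicate sub-point is that routing $B_{n+1}$ into an existing deadlocked state requires retracing a local run of an original process, and in a conjunctive system a retrace may hit a transition that an extra in-flight copy has just disabled; I would resolve this by sequencing $B_{n+1}$'s retrace against the frozen deadlock configuration rather than against the live run, exactly in the spirit of the bounding construction of Lemma~\ref{le:ConjunctiveBoundingLemmaDeadlocks}.
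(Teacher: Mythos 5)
Your overall case split, and your use of the turn-taking construction of Lemma~\ref{le:ConjMonFair} for the sub-case in which some $B$-process moves infinitely often, coincide with the paper's proof. The divergence---and the problem---is in the finite-run case and the case where every $B$-process is deadlocked: you insist on making $B_{n+1}$ itself deadlock by retracing the local run of some deadlocked $B_i$ into its deadlock state $q_\bot$. That retrace can fail outright in a conjunctive system: the transitions along $x(B_i)$ were enabled against the configurations of the \emph{live} run, and replaying them against the frozen deadlock configuration is exactly the wrong thing to do, since that configuration is what is blocking everybody. Concretely, the last transition of $x(B_i)$ into $q_\bot$ may be guarded ``$\forall\neg q_\bot$'', and in the frozen configuration $B_i$ itself occupies $q_\bot$ (guards are read non-reflexively), so $B_{n+1}$ can never complete the retrace. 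Your proposed fix---``sequencing the retrace against the frozen deadlock configuration rather than against the live run''---therefore does not close this gap.

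The way out is to notice that deadlocking $B_{n+1}$ is not required: the lemma only asks for \emph{some} local or global deadlock, and the paper explicitly accepts that the constructed run may be only locally deadlocked even when $x$ was globally deadlocked. Accordingly, the paper keeps $B_{n+1}$ in $\init_B$ (which is in every guard, hence harmless) until the original processes are deadlocked, and then schedules it arbitrarily. The one fact needed to see that the original deadlock survives any continuation of $A$ and $B_{n+1}$ is the observation you gesture at but do not pin down: in the all-$B$-deadlocked case no state of $Q_A$ can occur among the blocking guards of a deadlocked process, i.e.\ $Q_A \cap DeadGuards = \emptyset$. Indeed, on an initializing run $A$ returns to $\init_A$ infinitely often, and since the system is $1$-conjunctive a guard excludes a single non-initial state, so an $A$-state witness would vacate that state infinitely often and re-enable the supposedly deadlocked transition. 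Hence every blocking witness is a frozen deadlocked $B$-process, and the deadlock is stable no matter how $B_{n+1}$ is scheduled afterwards.
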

\begin{proof}\ak{check the minimal value of $n$ (1 or 2?)}
Let $x$ be a globally deadlocked or locally deadlocked strong-fair initializing run of $(A,B)^{(1,n)}$.
We will build a globally deadlocked or locally deadlocked strong-fair initializing run 
of $(A,B)^{(1,n+1)}$.

If $x$ is finite, then $y$ is the copy of $x$, and the new process stays in $\init_B$
until every process becomes deadlocked, and then is scheduled arbitrarily.
Note that $y$ constructed this way may be locally deadlocked 
rather than globally deadlocked as $x$ is.

Now consider the case when $x$ is locally deadlocked strong-fair initializing.

Let $\mD$ be the set of deadlocked $B$-processes in $x$, and $d$ be the moment 
when the processes become deadlocked.

Consider the case $\visInf{\mB\smi\mD}{x} \neq \emptyset$:
copy $x$ into $y$, and let the new process $B_{n+1}$ wait in $\init_B$ 
and interleave the roles with a process $B$ that moves infinitely often in $x$, 
as described in the proof of Lemma~\ref{le:ConjMonFair}.

Consider the case $\visInf{\mB\smi\mD}{x} = \emptyset$:
every $B$ process of $(A,B)^{(1,n)}$ is deadlocked and thus $\mD = \mB$.
Define 
$$
DeadGuards\! =\! \big\{q \| \exists B_i \in \mD
                      \textit{ with a transition guarded ``\,}
                      {\forall \neg q} 
                      \textit{\!'' in } (s_d(B_i),e_d(B_i))\big\}.
$$
Note that $Q_A \cap DeadGuards = \emptyset$, because $A$ visits infinitely often $\init_A$
and we consider 1-conjunctive systems.
Hence, copy $x$ into $y$, and let the new process $B_{n+1}$ wait in $\init_B$ 
until every process $B_1,...,B_n$ become deadlocked, and then schedule $B_{n+1}$ arbitrarily.
%
\end{proof}

As for the Bounding Lemma,
we use a construction that is similar to that of properties under fairness for disjunctive systems (Sect.~\ref{gua:sec:ideas-disj-fair}):
  in the setup phase, 
  we populate some ``safe'' set of states with processes,
  and then we extend the runs of non-deadlocked processes 
  to satisfy strong fairness, 
  while ensuring that deadlocked processes never get enabled.

\begin{lemma}[Bounding: 1-Conj, Deadlocks, Fair] \label{le:FairConjunctiveBoundingLemmaDeadlocks}
For 1-conjunctive systems on strong-fair initializing or finite runs:
\ak{no real need for initializing -- but easier to explain}
\li
  \- with $c=2|Q_B\smi \{ \init \}|$ and any $n>c$:
  $$
  \cutoffsys \textit{ has a global deadlock} 
  \ \Implied\ 
  \largesys \textit{ has a global deadlock;}
  $$

  \- with $c=2|Q_B\smi \{ \init \}|+1$ and any $n>c$ (when $|Q_B|>2$):
  $$
  \cutoffsys \textit{ has a local deadlock} 
  \ \Implied\ 
  \largesys \textit{ has a local deadlock;}
  $$

  \- with $c=2|Q_B\smi \{ \init \}|$ and any $n>c$:
  $$
  \cutoffsys \textit{ has a deadlock} 
  \ \Implied\ 
  \largesys \textit{ has a deadlock.}
  $$
\il
\end{lemma}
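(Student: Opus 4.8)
The statement to prove is the Bounding Lemma for 1-conjunctive systems with deadlocks under strong fairness. I need to establish three bounding directions: global deadlocks ($c=2|Q_B\smi\{\init\}|$), local deadlocks ($c=2|Q_B\smi\{\init\}|+1$), and combined deadlocks ($c=2|Q_B\smi\{\init\}|$). The plan is to follow the same two-construction strategy used in the disjunctive fair case (Lemma~\ref{le:disj:fair_tight_deadlock}) and the conjunctive unfair case (Lemma~\ref{le:ConjunctiveBoundingLemmaDeadlocks}): a \emph{setup phase} that populates a safe set of states with a bounded number of processes, followed by a \emph{fair extension phase} that keeps non-deadlocked processes moving infinitely often while ensuring deadlocked processes stay disabled.

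\textbf{Global deadlocks.}
First I would handle the global deadlock case, which should be the easiest. Given a globally deadlocked run $x$ of $\largesys$ with $n>c$, the run is finite, ending in a configuration $(s_d,e_d,\bot)$. Since strong fairness is trivially satisfied on finite runs, the fairness assumption adds no constraint here, and the construction from Lemma~\ref{le:ConjunctiveBoundingLemmaDeadlocks} (global deadlock case) applies verbatim: for each $q \in s_d \smi \{\init\}$ copy at most two processes that end in $q$, and let any remaining processes mimic an arbitrary already-used $B$-process. Because removing processes from a conjunctive system cannot disable a transition that was enabled, and because having at most two processes per non-initial state suffices to preserve the disabledness of all transitions at the deadlocked state, the resulting run $y$ of \cutoffsys with $c=2|Q_B\smi\{\init\}|$ is globally deadlocked. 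I would note explicitly that this direction does not even require 1-conjunctiveness.

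\textbf{Local deadlocks under fairness.}
This is where the work lies and where I expect the main obstacle. Given a strong-fair initializing run $x$ with a locally deadlocked process, I would split into the cases mirroring Lemma~\ref{le:ConjunctiveBoundingLemmaDeadlocks}: (1) $A$ moves infinitely often, $B_1$ deadlocks; (2) $A$ deadlocks, $B_1$ moves infinitely often; (3) $A$ does neither, $B_1$ deadlocks, $B_2$ moves infinitely often. For the deadlocked process $B_1$ I would copy its local run and, using 1-conjunctiveness, identify the set $DeadGuards = \{q_1,\ldots,q_m\}$ of states that must each be populated to disable every transition out of $B_1$'s deadlock state; I copy one process ending in each $q_i$. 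The new difficulty versus the unfair case is that these blocking processes, and any remaining processes, must now move infinitely often to satisfy strong fairness \emph{without} ever leaving $DeadGuards$ unpopulated (which would re-enable $B_1$). Here I would invoke the initializing-run assumption together with a role-sharing argument as in Lemma~\ref{le:ConjMonFair}: for each blocking state $q_i$ I devote \emph{two} processes that take turns, one stuttering in $q_i$ (keeping the guard satisfied) while the other completes a full cycle through $\init_B$ and back; since runs are initializing, every moving process returns to $\init_B$ infinitely often, so the hand-off is always possible. This is exactly why the local cutoff bumps to $2|Q_B\smi\{\init\}|+1$ (roughly two processes per blocking state, plus the infinitely-moving witness), and why the restriction $|Q_B|>2$ appears. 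The correctness argument must verify two invariants simultaneously: that $B_1$ remains disabled at every moment (guards stay blocked), and that every non-deadlocked process is scheduled infinitely often while respecting strong fairness. Establishing that the two-process turn-taking never momentarily vacates a guard state is the delicate point, and I would argue it by observing that a guard state is vacated only at the instant a process leaves it, at which moment the partner process is already resident.

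\textbf{Combined deadlocks and wrap-up.}
Finally, for the combined deadlock detection problem I would take the larger of the two cutoffs, $c=2|Q_B\smi\{\init\}|$; if $x$ is locally deadlocked one applies the Monotonicity Lemma (Lemma~\ref{le:FairConjunctiveMonotonicityLemmaDeadlocks}) to lift a deadlocked cutoff run up to size $n$, and the local-deadlock construction fits within this bound once the extra witness process is absorbed. The main obstacle throughout is the interaction between \emph{preserving disabledness} (a conjunctive-system phenomenon requiring guard states to stay populated) and \emph{enforcing strong fairness} (requiring those same populating processes to keep moving); resolving this tension is precisely what the initializing assumption and the 1-conjunctive restriction buy us, and the tightness claim (Tightness~\ref{obs:conj:tight_deadlock}) confirms these bounds cannot be improved with respect to $|B|$.
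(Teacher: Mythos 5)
Your overall architecture matches the paper's: reuse the unfair construction verbatim for global deadlocks (fairness is vacuous on finite runs), and for local deadlocks combine a setup phase that populates the blocking states with a fair looping phase in which a partner process arrives at a guard state before its resident leaves. Your observation that the hand-off never momentarily vacates a guard state is exactly the paper's key trick. Still, there are two concrete gaps. First, you apply the two-process turn-taking uniformly to every state of $DeadGuards$, but this fails for a blocking state $q$ that is itself a \emph{deadlocked} state and is not visited infinitely often by any non-deadlocked process of the original run: an occupant of such a $q$ has no original behaviour to copy for a cycle $q \to \ldots \to \init_B \to \ldots \to q$ (after the deadlock moment the only processes ever in $q$ are disabled), so it can neither move nor be obliged to by strong fairness. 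The paper's proof introduces the partition of the deadlocked states into $\dead_1$ and $\dead_2$ and treats $DeadGuards \cap \dead$ and $DeadGuards \smi \dead$ differently for precisely this reason: deadlocked blocking states are filled with \emph{deadlocked} occupants (a state in $\dead_2$, i.e.\ one whose occupant has a transition guarded ``$\forall \neg q$'' from $q$ itself, needs a second occupant, obtained either from a second deadlocked process or by artificially deadlocking an infinitely-visiting one with the adversarial input), and only the live guard states in $DeadGuards \smi \dead$ enter the rotation, which then costs $|DeadGuards \smi \dead|+1$ processes rather than two per state. Without this distinction your construction either re-enables the deadlocked process or violates strong fairness.

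Second, your argument for the combined bullet does not go through. The larger of your two cutoffs is $2|Q_B\smi\{\init\}|+1$, not $2|Q_B\smi\{\init\}|$, and neither the Monotonicity Lemma (which only lifts a deadlock from $n$ to $n+1$ processes) nor ``absorbing the witness'' explains how to save a process. The paper's actual argument is that for general deadlock detection the constructed run need not be kept infinite, so the process reserved in $\init_B$ to drive the looping phase can be dropped; the construction may then produce a globally rather than locally deadlocked run, which is acceptable because either one witnesses ``a deadlock''. You should state this explicitly instead of appealing to monotonicity.
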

\begin{proof}
\providecommand{\deadOne}{\dead_1}
\providecommand{\deadTwo}{\dead_2}

\myparagraph{Global deadlocks}
$c=2|Q_B \smi \{\init_B\}|$, 
see Lemma~\ref{le:ConjunctiveBoundingLemmaDeadlocks}, 
the fairness does not matter on finite runs.

\myparagraph{Local deadlocks}
Let $c=2|Q_B\smi \{ \init_B \}|$. 
Let $x= (s_1,e_1,p_1)\ldots$ be a locally deadlocked strong-fair intitializing run 
of $\largesys$ with $n>c$. 
We construct a locally deadlocked strong-fair initializing run $y$ of $\cutoffsys$.

Let $\mD$ be the set of deadlocked processes in $x$. 
Let $d$ be the moment in $x$ starting from which every process in $\mD$ is deadlocked.

Let $\dead(x)$ be the set of states in which processes $\mD$ of \largesys
are deadlocked.

Let $\deadTwo(x) \subseteq \dead(x)$ be the set of deadlocked states such that: 
for every $q \in \deadTwo(x)$, 
there is a process $P \in \mD$ with $s_d(P) = q$ 
and that for input $e_{\geq d}(P)$ has a transition guarded with ``$\forall \neg q$''.
Thus, a process in $q$ is deadlocked with $e_d(P)$
only if there is another process in $q$ in every moment $\geq d$.

Let $\deadOne(x) = \dead(x)\smi\deadTwo(x)$.
I.e., 
for any $q \in \deadOne(x)$, there is a process $P$ of \largesys 
which is deadlocked in $s_d(P) = q$ with input $e_d(P)$,
and no transitions from $q$ with input $e_d(P)$ are guarded with ``$\forall \neg q$''.

Define
$$
DeadGuards\!=\!
\big\{ q \| \exists B_i \in \mD
         \textit{ with a transition guarded ``\,}
         {\forall \neg q} 
         \textit{\!'' in } (s_d(B_i),e_d(B_i)) \big\}.
$$
Figure~\ref{fig:conj-deadlocks-venn} illustrates properties of sets 
$DeadGuards$, $\deadOne$, $\deadTwo$, $\visInf{\mB\smi\mD}{x}$.
\ak{check how $A$'s states affect all those sets, currently i assumed that they are all subsets of $Q_B$}

\begin{figure}[hptb]
\begin{mdframed}
\centering
\includegraphics[width=0.7\textwidth]{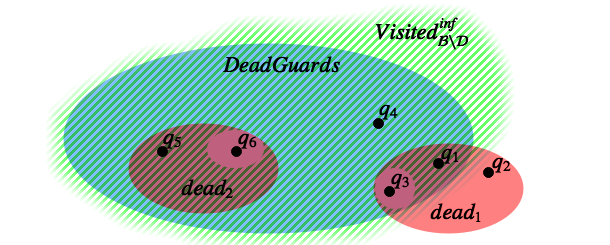}
\captionsetup{singlelinecheck=off}
\caption[fig:conj-deadlocks-venn]{%
Venn diagram for sets $DeadGuards$, $\deadOne$, $\deadTwo$, $\visInf{\mB\smi\mD}{x}$:
\begin{itemize}
\item[($q_1$)] $\deadOne \cap DeadGuards \cap \visInf{\mB\smi\mD}{x} \neq \emptyset$ is possible:
               in $x$, 
               there is a process deadlocked in state $q_1$,
               there is a non-deadlocked process that visits $q_1$ infinitely often,
               and there is a process deadlocked in a state $q \neq q_1$ 
               with a transition guarded ``$\forall \neg q_1$'' 

\item[($q_3$)] $\deadOne \cap DeadGuards \smi \visInf{\mB\smi\mD}{x} \neq \emptyset$ is possible:
               similarly to $q_1$, 
               except that no non-deadlocked processes visit $q_3$ infinitely often

\item[($q_2$)] $\deadOne \smi (\visInf{\mB\smi\mD}{x} \cup DeadGuards) \neq \emptyset$ is possible:
               in $x$, 
               there is a process deadlocked in state $q_2$,
               no other processes visit $q_2$ infinitely often,
               and no processes are deadlocked with a transition guarded ``$\forall \neg q_2$''

\item[($q_4$)] $DeadGuards \smi \dead \neq \emptyset$ is possible:
               there is a process deadlocked in a state $q \neq q_4$ 
               with a transition guarded  ``$\forall \neg q_4$''

\item[($q_5$)] $\deadTwo \cap \visInf{\mB\smi\mD}{x} \cap DeadGuards \neq \emptyset$ is possible:
               there is at least one process deadlocked in $q_5$ with a transition guarded ``$\forall \neg q_5$'',
               and some non-deadlocked process visits $q_5$ infinitely often
               (this process does not deadlock in $q_5$, 
                because in $q_5$ it receives an input different from that of the deadlocked processes)

\item[($q_6$)] $\deadTwo \cap DeadGuards \smi \visInf{\mB\smi\mD}{x} \neq \emptyset$ is possible:
               similarly to $q_5$, except no non-deadlocked processes visit $q_6$ infinitely often
\end{itemize}
}
\label{fig:conj-deadlocks-venn}
\end{mdframed}
\end{figure}

Let us assume $DeadGuards \neq \emptyset$---the other case is straightforward.\ak{check}

The construction has two phases, the setup and the looping phase.

In the {\bf setup phase}, we copy from $x$ into $y$:
\li
\-[a.] $y(A) = x(A)$;

\-[b.] for every $q \in \deadOne$: 
   devote one process of \cutoffsys that copies 
   a process of \largesys deadlocked in $q$;

\-[c.] for every $q \in \deadTwo \setminus \visInf{\mB\smi\mD}{x}$: 
   devote two processes of \cutoffsys that copy 
   the behaviour of two processes of \largesys that deadlock in $q$;

\-[d.] for every $q \in \deadTwo \cap \visInf{\mB\smi\mD}{x}$:
   in $x$, 
   there is a process, $B_q^\inf \in \mB\smi\mD$, that visits $q$ infinitely often,
   and there is a process, $B_q^\bot \in \deadTwo$, deadlocked in $q$.
   Then:
\li
   \-[1.] devote one process of \cutoffsys that copies the behaviour of $B_q^\bot$, and
   \-[2.] devote one process of \cutoffsys that copies the behaviour of $B_q^\inf$ 
          until it reaches $q$ at a moment after $d$,
          and then provide the same input as $B_q^\bot$ receives at moment $d$.
          This will deadlock the process;
\il

\-[e.] for every $q \in DeadGuards \setminus \dead$:
       note that $q \in \visInf{\mB\smi\mD}{x}$ and, thus, there is a process, 
       $B_q^\inf \in \mB\smi\mD$, 
       that visits $q$ infinitely often.
       Devote one process of \cutoffsys that copies the behaviour of $B_q^\inf$ 
       until it reaches $q$ at a moment after $d$;

\-[f.] if $DeadGuards \setminus \dead \neq \emptyset$ 
       or $A \in \mD$,
       then devote one process that stays in $\init_B$.
       The process will be used in the looping phase to ensure that the run $y$ is infinite,
       and that every process of \cutoffsys used in (e) 
       moves infinitely often (and thus $y$ is strong-fair);
       and

\-[g.] let any other process of \cutoffsys (if any) 
       copy behaviour of a process of \largesys 
       that was not used in the construction so far (including this step).
\il
\ak{go through every item, and prove it is necessary (by giving an example)}
The setup phase ensures: 
in every state $q \in \dead$,
there is at least one process deadlocked in $q$ at moment $d$ in $y$. 
Now we need to ensure that the non-deadlocked processes described 
in steps (e) and (f) move infinitely often,
which is done using the looping extension described bellow.

The looping phase is applied to processes in (e) and (f) only\footnote%
{%
  If there are no such processes, then the setup phase produces the sought run $y$.
}.

Order arbitrarily 
$DeadGuards \smi \dead = (q_1,\ldots,q_k) \subseteq \visInf{\mB\smi\mD}{x}$.
Note that $\init_B \not\in (q_1,...,q_k)$.
Let $\mP$ be the set of processes of \cutoffsys used in steps (e) or (f).
Note that $|\mP| = |(q_1,...,q_k)| + 1$.

The \textbf{looping phase} is:
set $i=1$, and repeat infinitely the following.
\li
  \- Let $P_\init \in \mP$ be the process that is currently in $\init_B$, 
     and $P_{q_i} \in \mP$ -- in $q_i$.
     
  \- Let $B_{q_i} \in \visInf{\mB\smi\mD}{x}$ be a process of \largesys 
     that visits $q_i$ and $\init_B$ infinitely often.
     Let $P_\init$ of \cutoffsys copy transitions of $B_{q_i}$
     on some path $\init_B \to \ldots \to g_i$,
     then let $P_{g_i}$ copy transitions of $B_{q_i}$ on some path 
     $g_i \to \ldots \to \init_B$. 
     For copying we consider only the paths of $B_{q_i}$ that happen after moment $d$.

  \- $i=i \oplus 1$.
\il

The number of copies of $B$ that the construction uses in the worst case is 
(i.e., the item (g) is not used, and we assume $Q_B>2$, $DeadGuards \smi \dead = \emptyset$, and $A \in \mD$):
$$
1_{(f)} + 2|\deadTwo|_{(c),(d)} + |\deadOne|_{(b)} 
 \leq 
2|Q_B \smi \{\init_B\}| + 1.
$$

\myparagraph{Deadlocks}
The largest value of $c$ among those for ``Local Deadlocks'' 
and for ``Global Deadlocks'' can be used as the sought value of $c$ 
for the case of general deadlocks.
But it will not be the smallest one.
In the proof of the case ``Local Deadlocks'', in the setup phase, 
item (e) can be modified for the case when $A \in \mD$:
since we do not need to ensure that $y$ is infinite, 
we avoid allocating a process in state $\init_B$.
For a given locally deadlocked strong-fair run, the setup phase may produce
the globally deadlocked run, but that is allright for the case of general deadlocks.
With this note, for the general case $c = 2|Q_B \smi \{\init_B\}|$.
\end{proof}

\begin{tightness}[1-Conj, Deadlocks, Fair] \label{obs:conj:tight_deadlock_fair}
The cutoff $c=2|B|-2$ is tight for deadlock detection on strong-fair initializing
or finite runs in the 1-conjunctive systems, 
i.e., 
for any $k>2$ there is a system type $(A,B)$ with $|B|=k$ such that 
there is a strong-fair initializing deadlocked run in $(A,B)^{(1,2|B|-2)}$, 
but not in $(A,B)^{(1,2|B|-3)}$.
\end{tightness} 
\begin{proof} 
Consider the same templates as in Tightness~\ref{obs:conj:tight_deadlock}.
%
\end{proof}

\section{Conclusion} \label{gua:sec:concl}

We have extended the cutoffs for guarded protocols of Emerson and 
Kahlon~\cite{Emerson00} to support local deadlock detection, fairness 
assumptions, and open systems.
In particular, our results imply the decidability of the parameterized model checking problem for this class of systems and specifications,
which to the best of our knowledge was unknown before. 
Furthermore, the cutoff results can easily be integrated into 
the parameterized synthesis approach~\cite{JB14}.


Since conjunctive guards can model atomic sections and read-write locks, 
and disjunctive guards can model pairwise rendezvous 
(for some classes of specifications, see~\cite{EmersonK03}), 
our results apply to a wide spectrum of systems models.
But the expressive power of the model 
comes at a high cost: cutoffs are linear in the size of a process, and 
are shown to be tight (with respect to this parameter).
For conjunctive systems, our new results are restricted to systems with
1-conjunctive guards, effectively only allowing to model a single shared
resource. 
We conjecture that our proof methods can be extended to systems with
more general conjunctive guards, at the price of bigger cutoffs.
We leave this extension and the question of finding cutoffs that are independent of the size of processes for future research.

We did preliminary experiments~\cite{SimonThesis} by implementing the synthesizer inside our parameterized synthesizer PARTY~\cite{party}.
It is a possible future work to find and apply it to real-world applications.
%
%
\ak{note that EK have better complexities for 'for all paths' properties. 
As a future work, one can look if our cutoffs can be improved.}

\chapter{Parameterized Token Rings} \label{chap:token-systems}

\newcommand\VarNames{\textsf{Vars}}
\newcommand{\pring}{\ensuremath{\mathcal {R}}}
\renewcommand{\trans}[3]{#1 \stackrel{{#3}}{\rightarrow} #2}
\newcommand{\trcv}{\mathsf{rcv}}
\newcommand{\tsnd}{\mathsf{snd}}
\newcommand{\sch}{\mathsf{sch}}
\newcommand{\token}{\mathsf{tok}}
\newcommand{\powerset}[1]{2^{#1}}
\newcommand{\Locals}{Q}
\newcommand{\LocalsI}{\Locals_0}
\newcommand{\ActionsProc}{\Sigma_{\mathsf{pr}}}

\newcommand{\IndSet}{\bbN}

\newcommand{\Opr}{{\mathrm{O}_{\mathrm{pr}}}}
\newcommand{\Osys}{{\mathrm{O}_{\mathrm{sys}}}}

\newcommand{\IprAll}{{\mathrm{I}_{\mathrm{pr}}}}
\newcommand{\Iloc}{\mathrm{I}_{loc}}
\newcommand{\Iglob}{\mathrm{I}_{glob}}
\newcommand{\Ipr}{\IprAll}
\newcommand{\Isys}{{\mathrm{I}_{\mathrm{sys}}}}

\newcommand{\ambasignal}[1]{\textsc{#1}}
\newcommand{\ambasignali}[2][i]{\textsc{#2}[\textrm{#1}]}
\newcommand{\hgrant}{\ambasignal{\textcolor{blue}{hgrant}}}
\newcommand{\hgranti}[1][i]{\ambasignali[#1]{\textcolor{blue}{hgrant}}}
\newcommand{\hbusreq}{\ambasignal{\textcolor{red}{hbusreq}}}
\newcommand{\hbusreqi}[1][i]{\ambasignali[#1]{\textcolor{red}{hbusreq}}}
\newcommand{\hready}{\ambasignal{\textcolor{red}{hready}}}
\newcommand{\hreadyi}[1][i]{\ambasignali[#1]{\textcolor{red}{ready}}}
\newcommand{\hlock}{\ambasignal{\textcolor{red}{hlock}}}
\newcommand{\hlocki}{\ambasignali[i]{\textcolor{red}{hlock}}}
\newcommand{\hmastlock}{\ambasignal{\textcolor{blue}{hmastlock}}}
\newcommand{\hmastlocki}[1][i]{\ambasignali[#1]{\textcolor{blue}{hmastlock}}}
\newcommand{\hburst}{\ambasignal{\textcolor{red}{hburst}}}
\newcommand{\hbursti}{\ambasignali[i]{\textcolor{red}{hburst}}}
\newcommand{\hmaster}{\ambasignal{\textcolor{blue}{hmaster}}}
\newcommand{\hmasteri}[1][i]{\ambasignali[#1]{\textcolor{blue}{hmaster}}}
\newcommand{\hstart}{\ambasignal{\textcolor{blue}{start}}}
\newcommand{\hstarti}[1][i]{\ambasignali[#1]{\textcolor{blue}{start}}}
\newcommand{\hlocked}{\ambasignal{\textcolor{blue}{locked}}}
\newcommand{\hlockedi}[1][i]{\ambasignali[#1]{\textcolor{blue}{locked}}}
\newcommand{\hdecide}{\ambasignal{\textcolor{blue}{decide}}}
\newcommand{\hdecidei}{\ambasignali[i]{\textcolor{blue}{decide}}}
\newcommand{\tok}{\ambasignal{\textcolor{blue}{tok}}}
\newcommand{\toki}{\ambasignali[i]{\textcolor{blue}{tok}}}
\newcommand{\hincr}{\ambasignal{incr}}
\newcommand{\hburstfour}{\ambasignal{burst4}}
\newcommand{\hburstthree}{\ambasignal{burst3}}
\newcommand{\norequestsi}{\ambasignali{\textcolor{red}{no\_req}}}
\newcommand{\norequests}{\ambasignal{\textcolor{red}{no\_req}}}
\newcommand{\send}{\ambasignal{send}}
\newcommand{\sendi}{\ambasignali[i]{send}}

\hfill {\footnotesize\textit{This chapter is based on joint work with R.Bloem and S.Jacobs~\cite{Khalimov13,party,BJK14}}~~~~~~~~}

\begin{quotation}
\noindent\textbf{Abstract.}
Parameterized synthesis was recently proposed as a way
to circumvent the poor scalability of current synthesis tools.
The method uses cutoff results in token rings to
reduce the problem to bounded distributed synthesis,
and ultimately to a sequence of SMT problems. 
But experiments show that the size of the specification is a major issue. 
In this chapter we
 (1) propose several optimizations of the approach, and
 (2) perform a parameterized synthesis case study on the industrial arbiter protocol AMBA.

In the first part of this chapter,
we optimize the reduction of the parameterized to distributed synthesis.
To this end,
we refine the cutoff reduction using modularity and abstraction.
The evaluation, using our specially developed parameterized synthesizer PARTY,
shows that the optimizations lead to several orders of magnitude speed-ups.

In the second part, we perform parameterized synthesis case study
on the industrial arbiter protocol AMBA.
The AMBA protocol has been used as a benchmark for many reactive synthesis tools,
because it is hard to synthesize an implementation that can serve a large number of clients.
We show how to use parameterized synthesis to obtain a component that serves a single master,
and can be arranged in a ring of arbitrarily many components.
We describe new tricks---a cutoff extension tailored for AMBA and decompositional synthesis---%
that together with the previously described optimizations allowed us
to synthesize a component with 14 states in about 1 hour.
\end{quotation}

\section{Introduction}

By automatically generating correct implementations from a temporal logic 
specification, reactive synthesis tools can relieve system designers from 
tedious and error-prone tasks like low-level manual implementation and 
debugging. This great benefit comes at the cost of high computational complexity 
of synthesis, which makes synthesis of large systems an ambitious goal. 
For instance, Bloem et al.~\cite{Bloem12}
synthesize an arbiter for the ARM AMBA Advanced High
Performance Bus (AHB)~\cite{AMBAspec}. The results, obtained using 
RATSY~\cite{Bloem10c},
show that both the size of the implementation and the time for synthesis
increase steeply with the number of masters that the arbiter can
handle. This is unexpected, since an arbiter for $n+1$ masters is very 
similar to an arbiter for $n$ masters, and manual implementations grow only 
slightly with the number of masters. While recent results show that 
synthesis time and implementation size can be improved in standard LTL 
synthesis tools~\cite{BS,GodhalCH13}, the fundamental problem of increasing 
complexity with the number of masters can only be solved by adapting the 
synthesis approach itself.

To this end, Jacobs and Bloem~\cite{JB14} introduced the 
\emph{parameterized synthesis} approach.
A simple example of a parameterized specification is the following LTL specification of a simple arbiter:
\[ \begin{array}{ll}
  \forall i \neq j.~ & \G \neg ( g_i \land g_j ) \land \\
  \forall i.~ & \G (r_i \impl \F g_i).
  \end{array}
\]
In parameterized synthesis, we synthesize a building block that can be cloned to form a system that satisfies such a specification, for any number of components.

Jacobs and Bloem~\cite{JB14} showed that parameterized synthesis is undecidable in general, but semi-decision procedures can be found for classes of systems with cutoffs, i.e., where parameterized verification can be reduced to verification of a system with a bounded number of components. They presented a semi-decision procedure for token-ring networks, building on results by Emerson and Namjoshi~\cite{Emerso03}, which show that for the verification of parameterized token rings, a cutoff of $5$ is sufficient for a certain class of specifications. Following these results, parameterized synthesis reduces to distributed synthesis in token rings of (up to) $5$ identical processes. To solve the resulting problem, a modification of the SMT encoding of the distributed bounded synthesis problem by Finkbeiner and Schewe~\cite{BS} was used.

Experiments with the parameterized synthesis method~\cite{JB14} revealed that only very small specifications could be handled with this encoding. For example, the simple arbiter presented before can be synthesized in a few seconds for a ring of size $4$, which is the sufficient cutoff for this specification. However, synthesis does not terminate within $2$ hours for a specification that also excludes spurious grants, in a ring of the same size. Furthermore, the previously proposed method uses cutoff results of Emerson and Namjoshi~\cite{Emerso03} and therefore inherits a restricted language support and cannot handle specifications in assume-guarantee style~\cite{Bloem12}.
This precludes the approach from being applied to the AMBA protocol.

In this chapter we address both issues.

In the first part of the chapter (Section~\ref{tok_rings:sec:bs-and-optimizations}),
we optimize the reduction of the parameterized to distributed synthesis.
We use the fact that
(a) token-ring systems consist of isomorphic processes,
(b) different properties may require different cutoffs, and
(c) when model checking the behaviours of some fixed processes,
    the behaviours of the others can be abstracted.
The evaluation,
using our specially developed parameterized synthesizer PARTY,
show that the optimizations lead to several orders of magnitude speed-ups.
 
In the second part of the chapter (Section~\ref{amba:sec}),
we perform parameterized synthesis case study on the industrial arbiter protocol AMBA.
The AMBA protocol has been used as a benchmark for many reactive synthesis tools,
because it is hard to synthesize an implementation that can serve
a large number of clients. We show how to use parameterized synthesis
to obtain a component that serves a single master, and can be arranged
in a ring of arbitrarily many components.
We describe new tricks%
---a cutoff extension tailored for AMBA and decompositional synthesis---%
that together with the previously described optimizations allowed us
to synthesize a component with 14 states in about 1 hour.

The chapter starts with definitions in Section~\ref{tok_rings:defs},
where we introduce token-ring systems, parameterized specifications and problems.
Then we state known cutoff results and a slight generalization.
Section~\ref{tok_rings:sec:bs-and-optimizations} describes
the SMT encoding of the bounded synthesis for token-ring systems,
followed by optimizations and experiments.
Then we proceed to the AMBA case study (Section~\ref{amba:sec}).
We describe the protocol and its parameterized specification.
Section~\ref{amba:sec:handling-amba} contains the main contribution:
(1) we rewrite the specification into the form feasible to parameterized synthesis and
(2) we extend the known cutoffs to handle the resulting AMBA specification.
In Section~\ref{amba:sec:experiments} on experiments,
we describe the crucial optimization ``decompositional synthesis''
and report synthesis timings.

\section{Definitions} \label{tok_rings:defs}

\subsection{Token-ring Systems} \label{tok_rings:defs:system}
In this section we define token ring systems---%
the LTS that consists of replicated copies of a process connected in a uni-directional ring.
Transitions in a token ring system are either internal or synchronized
(in which one process sends the token to the next process along the ring).
The token starts in a non-deterministically chosen process.

We start by recalling a (non-deterministic) labeled transition system.
A \emph{labeled transition system (LTS)}
is a tuple
$(I,O,Q,Q_0,\delta,out)$
where
 $I$ is the set of {\em inputs},
 $O$ is the set of {\em outputs} disjoint from $I$,
 $Q$ is the set of {\em states},
 $Q_0 \subseteq Q$ is the set of {\em initial states},
 $\delta \subseteq Q \times 2^I \times Q$ is the {\em transition relation},
 and $out:Q \to 2^O$ is the \emph{output function} (also called {\em state-labeling} function).

Fix two disjoint sets:
a set $\Opr$ of process template \emph{output variables} that contains two distinguished output variable,
$\tsnd$ and $\token$,
and a set $\Ipr$ of process template \emph{input variables} that contains a distinguished input variable $\trcv$.
We always assume that $\Ipr$ and $\Opr$ are disjoint.

\parbf{Process template} \label{page:tok_rings:defs:process_template}
A {\em process template} $P$ is an LTS
$(\Ipr, \Opr, Q, Q_0, \delta, out, A_{loc})$:
\begin{enumerate}[label*=\roman*)]
\item 
The state set $Q$ is finite and can be partitioned into two non-empty disjoint sets: $Q = T \cupdot NT$. 
    States in $T$ are said to {\em have the token}.

\item
The initial state set is $Q_0  = \{\iota_t, \iota_n\}$ for some $\iota_t \in T, \iota_n \in NT$.

\item
The output function is $out: Q \rightarrow 2^{\Opr}$ and it satisfies:
\li
\- for every $t \in NT$: $\token \not\in out(t)$ and for every $t \in T$: $\token \in out(t)$,
\- for every $t \in Q$: $\tsnd \in out(t) \impl t \in T$.
\il

\item
Let $\ActionsProc = 2^\Ipr$.
Let $\ActionsProc^\trcv = \{ i \in \ActionsProc \| \trcv \in i \}$, 
$\ActionsProc^{\neg \trcv} = \ActionsProc \setminus \Sigma^\trcv$,
$T^\tsnd = \{ q \in \Locals \| \tsnd \in out(q) \}$,
$T^{\neg \tsnd} = T \setminus T^\tsnd$.
Then the transition function: 
\begin{equation}\label{tok_rings:eq:process-trans}
\delta \subseteq 
T^\tsnd \times \ActionsProc^{\neg \trcv}\times NT  ~~\cup~~
NT\times\ActionsProc^\trcv\times T   ~~\cup~~
NT\times\ActionsProc^{\neg \trcv}\times NT   ~~\cup~~
T^{\neg \tsnd}\times\ActionsProc^{\neg \trcv}\times T.
\end{equation}

Also, $\delta$ is non-terminating: 
for every $q \in NT$ and every $i \in \ActionsProc$
  there exists $\trans{q}{q'}{i}$; 
and for every $q \in T$ and every $i \in \ActionsProc^{\neg \trcv}$
  there exists $\trans{q}{q'}{i}$.

\item [$\dagger$)]
$A_{loc}$ is a \emph{fairness} condition over $\Ipr \cup \Opr$.
We require that on every infinite path from an initial state and satisfying $A_{loc}$,
from any state with the token, $q \in T$, the process reaches a state $q'$ where it sends the token.
(In LTL this can be written as $A_{loc} \impl \G(\token \impl \F \tsnd)$.)
We call this requirement ($\dagger$).
We omit $A_{loc}$ in the LTS tuple when it is not important.
\end{enumerate}

\parbf{Ring topology $R$}
A {\em ring} is a directed graph $R = (V,E)$,
where the set of vertices is $V = \{ 1,\ldots,k\}$ for some $k \in \IndSet$,
and the set of edges is $E = \{(i,i_{mod |V|}+1) \mid i\in V\}$.
We will skip ``$mod |V|$'' and write $i+1$.
Vertices are called {\em process indices}.

\parbf{Token-ring system $P^R$}
Fix a ring topology $R=(V,E)$.

Let $\Isys = (\Iloc \times V) \cupdot \Iglob$
be the \emph{system input variables},
where local inputs $\Iloc$ and global inputs $\Iglob$ are
such that $\Ipr = \Iloc \cupdot \Iglob$.
For system input ${\sf in} \in 2^\Isys$,
let ${\sf in}(v) = \{ i \in {\sf in} \| i\in \Iloc\times\{v\} \cup \Iglob \}$
denote the input to process $v$ (including global inputs).

Let $\Osys = \Opr \times V$ be the \emph{system output variables}. 
For $(p,i)$ in $\Osys$ or in $\Isys\setminus \Iglob$ we write $p_i$. 

Given a process template $P =  (\Ipr, \Opr, Q, Q_0, \delta, out)$
and a token ring topology $R = (V,E)$,
the {\em token-ring system} $P^R$ is the LTS $(\Isys,\Osys,S,S_0,\Delta,Out)$:

\li
\- The set $S$ of \emph{global states} is $Q^V$,
   i.e., all functions from $V$ to $Q$.
   If $s \in Q^V$ is a global state
   then $s(i)$ denotes the local state of the process with index $i$.

\- The set of \emph{global initial states} $S_0$ contains all $s_0 \in Q_0^V$
   in which exactly one of the processes has the token.

\- The labeling $Out(s): S \to 2^\Osys$ is:
   for every $s \in S$: $p_i \in Out(s)$ iff $p \in out(s(i))$, for $p \in \Opr$ and $i \in V$.
\il

Finally, we define the {\em global transition relation} $\Delta$. 
In a {\em fully asynchronous token ring},
a subset of the processes can make a transition in each step of the system.
Thus, $\Delta$ consists of the following set of transitions:

 \begin{itemize}
  \item 
  An {\em internal transition} is an element $(s,{\sf in},s')$ of $S \times 2^\Isys \times S$, for which there are process indices $M \subseteq V$ such that 
  \begin{enumerate}[label*=\roman*)]
    \item
    for all $v \in M$: $\tsnd \not\in out(s(v))$ and $\trcv \not \in{\sf in}(v)$,

    \item 
    for all $v\in M$: $\trans{s(v)}{s'(v)}{in(v)}$ is a transition of $P$, and

    \item 
    for all $u \in V \setminus M$: $s(u) = s'(u)$.
  \end{enumerate}

  \item
  A {\em token-passing transition}  is an element $(s,{\sf in},s')$ of $S \times 2^\Isys \times S$
  for which there are two process indices $v$ and $w=v+1$ and process indices $M \subset V$
  with $\{v,w\} \subseteq M$ such that

  \begin{enumerate}[label*=\roman*)]
    \item
    $\tsnd \in out(s(v))$, 
    and $\forall{u \in M\setminus \{v\}} : \tsnd \not\in out(s(u))$%
    ---i.e., only process $v$ sends the token,

    \item
    $\trcv \in {\sf in}(w)$ and
    for all $u \in M \setminus \{w\}$: $\trcv \not \in{\sf in}(u)$%
    ---i.e., only process $w$ receives the token,

    \item
    for every $u\in M$: $\trans{s(u)}{s'(u)}{in(u)}$ is a transition of $P$, and

    \item
    for every $u \in V \setminus M$: $s'(u) = s(u)$.
  \end{enumerate}

 \end{itemize}

Special cases of the fully asynchronous token ring are the \emph{synchronous token ring}
and the \emph{interleaving token ring}.
In a synchronous token ring, $M=V$ for internal and token-passing transitions,
i.e., at each step all the processes simultaneously make a transition.
In an interleaving token ring,
$M = \{v\}$ for some $v\in V$ for internal transitions,
and $M=\{v,w\}$ for $(v,w)\in E$ for token-passing transitions,
i.e., at each moment either \emph{exactly one} process makes an internal transition,
or one process sends a token to the next process.

An example of processes arranged in a token ring is in Figure~\ref{fig:ring-architecture}.

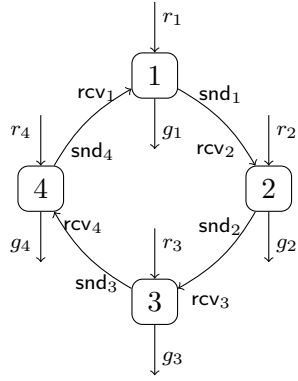
\begin{figure}[tb]\center
\begin{tikzpicture}
	\begin{pgfonlayer}{nodelayer}
		\node [style=text ellipse] (0) at (0, 1.5) {1};
		\node [style=text ellipse] (1) at (1.5, -0) {2};
		\node [style=text ellipse] (2) at (-1.5, -0) {4};
		\node [style=text ellipse] (3) at (0, -1.5) {3};
		\node [style=invisible] (4) at (0, 2.5) {};
		\node [style=invisible] (5) at (-1.5, 1) {};
		\node [style=invisible] (6) at (-1.5, -1) {};
		\node [style=invisible] (7) at (0, -0.5) {};
		\node [style=invisible] (8) at (0, -2.5) {};
		\node [style=invisible] (9) at (1.5, 1) {};
		\node [style=invisible] (10) at (1.5, -1) {};
		\node [style=invisible] (11) at (0, 0.5) {};
		\node [style=textual] (12) at (0.75, 1.25) {$~~_{\tsnd_1}$};
		\node [style=textual] (13) at (0.25, 2.25) {$_{r_1}$};
		\node [style=textual] (14) at (-0.75, 1.25) {$_{\trcv_1}$};
		\node [style=textual] (15) at (1.75, 0.75) {$_{r_2}$};
		\node [style=textual] (16) at (1.75, -0.75) {$_{g_2}$};
		\node [style=textual] (17) at (1, -0.5) {$_{\tsnd_2}~~$};
		\node [style=textual] (18) at (0.75, -1.5) {$_{\trcv_3}$};
		\node [style=textual] (19) at (0.75, 0.5) {$~_{\trcv_2}$};
		\node [style=textual] (20) at (-1, 0.5) {$~~~_{\tsnd_4}$};
		\node [style=textual] (21) at (-1, -0.5) {$~_{\trcv_4}$};
		\node [style=textual] (22) at (-0.75, -1.25) {$_{\tsnd_3}$};
		\node [style=textual] (23) at (-1.75, 0.75) {$_{r_4}$};
		\node [style=textual] (24) at (-1.75, -0.75) {$_{g_4}$};
		\node [style=textual] (25) at (0.25, -2.25) {$_{g_3}$};
		\node [style=textual] (26) at (0.25, -0.75) {$_{r_3}$};
		\node [style=textual] (27) at (0.25, 0.75) {$_{g_1}$};
	\end{pgfonlayer}
	\begin{pgfonlayer}{edgelayer}
		\draw [style=arrow, bend left=15, looseness=1.00] (0) to (1);
		\draw [style=arrow, bend left=15, looseness=1.00] (1) to (3);
		\draw [style=arrow, bend left=15, looseness=1.00] (3) to (2);
		\draw [style=arrow, bend left=15, looseness=1.00] (2) to (0);
		\draw [style=arrow, label=df] (4) to (0);
		\draw [style=arrow] (5) to (2);
		\draw [style=arrow] (2) to (6);
		\draw [style=arrow] (9) to (1);
		\draw [style=arrow] (1) to (10);
		\draw [style=arrow] (7) to (3);
		\draw [style=arrow] (3) to (8);
		\draw [style=arrow] (0) to (11);
	\end{pgfonlayer}
\end{tikzpicture}
\caption{Token ring system with 4 processes.
  Every process has input $r$ and output $g$.
  Additionally, every process has input $\trcv$ and output $\tsnd$ that are used for passing the token.
  Thus, $\Ipr=\{r,\trcv\}$ and $\Opr = \{g,\tsnd\}$.
  In this example, $\Iglob$ is empty.}
\label{fig:ring-architecture}
\end{figure}

\parbf{System runs}
Fix a ring topology $R = (V,E)$ and a process template $P$.
A \emph{run of a token ring system}
$P^{R}=(\Isys,\Osys,S,S_0,\Delta,Out)$ is a maximal-finite or infinite sequence
$x=(s_1,{\sf in}_1,M_1)(s_2,{\sf in}_2,M_2)\ldots$,
where:
\li
\- $s_1 \in S_0$, $s_k \in S$ and ${\sf in}_k \in 2^\Isys$ for any $k \le |x|$,
\- for all $k < |x|: (s_k,{\sf in}_k,s_{k+1}) \in \Delta$,
\- for all $k < |x|$: $M_k$ is the set of processes transiting in $(s_k,{\sf in}_k,s_{k+1})$
   (see $M$ in the definition of $\Delta$).
\il

\subsection{Parameterized Systems}

The \emph{parameterized ring} is the function $\pring: n \mapsto \pring(n)$,
where $n \in \bbN$ and $\pring(n)$ is the ring with $n$ vertices.
A \emph{parameterized token-ring system} is a function $P^\pring: n \mapsto P^{\pring(n)}$,
where $n\in\bbN$ and $P$ is a given process template.
To disambiguate, we explicitly write
``parameterized [fully asynchronous][interleaving][synchronous] token-ring system''.

\subsection{Parameterized Specifications}\label{tok_rings:defs:indexed-ltl}

\emph{Parameterized specification} is a tuple $\tpl{\Ipr,\Iglob,\Opr,\Phi}$,
where $\Ipr$ is a set of process template inputs (global and local),
$\Iglob$ is a set of global inputs,
$\Opr$ is a set of process template outputs,
and $\Phi$ is an indexed LTL formula over $\Ipr$ and $\Opr$.
Intuitively, an indexed LTL formula is an LTL formula with indexed variables and quantification over indices.
Below we define indexed LTL and its sublogic, prenex-indexed LTL.

\subsection*{Indexed LTL}

\parbf{Syntax}
Let $\VarNames$ denote the set of variable names (that will be used as process indices).
Let $cond$ be a Boolean formula over atoms of the form $x = y$ or $x = y+1$,
for arbitrary $x,y$ from $\VarNames$.
Then an indexed LTL formula $\Phi$ over $\Ipr$, $\Iglob$, and $\Opr$ has the grammar:
\begin{align*}
\Phi ~=~ & \forall v. (cond \impl \Phi) \| \exists v. (cond \land \Phi) \| \\
         & \Phi\land\Phi \| \neg \Phi \| \\
         & e \| i_v \| o_v \| \Phi \U \Phi \| \X_v \Phi
\end{align*}
where $v \in \VarNames$, $i \in \Iloc$, $o \in \Opr$, $e\in \Iglob$.
We will write $\forall x \neq y:\Phi$ instead of $\forall x\forall y: (x \neq y) \impl \Phi$,
and $\exists x \neq y:\Phi$ instead of $\exists x\exists y: x \neq y \land \Phi$.

\parbf{Semantics}
We define the semantics for sentence formulas only:
a formula $\Phi$ is a \emph{sentence} iff every variable $v$ mentioned in the formula
is in the scope of a quantifier over that variable.
E.g., $r_x$ is not a sentence, while $\forall x: r_x$ is.

Let $\Phi$ be a sentence.
Let $P^R$ be a token-ring system with $R=(V,E)$ and $\pi$ be an infinite run of the system.
Define $\pi\models \Phi$ iff $\pi \models \Phi_V$ (this satisfaction is defined later),
where $\Phi_V$ is constructed from $\Phi$ as follows.
\li
\-[1.]
Replace every single-quantified subformula $\forall v.\phi$ of $\Phi$
with $\bigwedge_{i \in V} \phi[v \mapsto i]$;
replace every single-quantified subformula $\exists v.\phi$
with $\bigvee_{i \in V} \phi[v \mapsto i]$.
Here $\phi[v \mapsto i]$ denotes the formula $\phi$
in which $v$ is substituted by $i$.
E.g., $r_x[x \mapsto 5]$ is $r_5$.

\-[2.]
Repeat step (1) until all quantifiers disappear.
The resulting formula is $\Phi_V$.
Note that conditions $cond$ like $x\neq y$ get simplified into $\true$ or $\false$.
\il
E.g., $\exists x\exists y.x\neq y \land g_x \land g_y$ becomes
$\bigvee_{(x,y) \in V\times V}.x\neq y \land g_x\land g_y$.

\parbf{Definition of ``system satisfies $\Phi$''}
Fix a $P=(\Ipr,\Opr,Q,Q_0,\delta,out)$,
global inputs $\Iglob$, and a token ring $R=(V,E)$.
Let $\Phi$ be an indexed LTL over $\Ipr$, $\Opr$, and $\Iglob$.
Then $P^R \models \Phi$ iff for every infinite system run $\pi$: $\pi\models\Phi$.
An infinite system run $\pi=(s_1,in_1,M_1) (s_2, in_2, M_2) ... \in (S\times 2^{\Isys}\times 2^V)^\omega$
satisfies $\Phi$
iff
$(Out(s_1),in_1) (Out(s_2), in_2)... \models \Phi_V$.
The latter satisfaction is standard except for the operator $\X$.
Given a $v \in V$ and the original run,
$(Out(s_1),in_1) (Out(s_2), in_2)... \models \X_v \varphi$
iff
$(Out(s_i),in_i) (Out(s_{i+1}), in_{i+1})... \models \varphi$
where $i$ is the second\footnote{Why ``second'', not the first one?
  This is the consequence of the fact that we group the input \emph{to be read} with the current output.
  E.g., $\X_v r_v$ should refer to $r_v$ read when transiting \emph{from} the next state
  rather than referring to $r_v$ read when transiting \emph{into} the next state.}
  smallest $i$ such that $v \in M_i$.
Intuitively, $\X_v \varphi$ requires $\varphi$ to hold on the suffix run
that skips one transition of the process $v$ and that starts with $v$ transiting.
In formulas of the form $\forall i.(...\X_i...)$,
we usually skip the subscript in $\X_i$ and write $\X$.
(The next operator $\X_i$ presented here is inspired by the action-based semantics from~\cite{Emerso03}.)

\subsection*{Prenex-indexed LTL}

Let us abbreviate by $\forall x_{cond}.\phi$ the formula $\forall x.cond \impl \phi$,
and by $\exists x_{cond}.\phi$ the formula $\exists x. cond \land \phi$.
When the quantifier is not important, we write $Q x_{cond}.\phi$.

An indexed LTL formula $\Phi$ is \emph{prenex-indexed} iff it is of the form
$$
Q {v^1}_{cond_{v^1}}...Q {v^k}_{cond_{v^k}}: \phi.
$$
We call $\Phi$ \emph{$k$-indexed}, because it has $k$ quantifiers.
Let $\LTLmX$ refer to LTL formulas that do not use $\X$.

Note that prenex-indexed \LTL is not as expressive as (non-prenex) indexed LTL.
For example, formula $\F\forall x. p_x$ does not have an equivalent prenex-indexed form.

Most of existing and our cutoff results are restricted to prenex-indexed LTL formulas
with the empty set of global inputs.

\begin{remark}[$\forall i.A_i \impl \forall j.G_j$ is not prenex-indexed]\label{tok_rings:rem:gr1-not-prenex}
In the previous section we defined ``a system satisfies an indexed LTL formula''.
If we use the path quantifier $\A$ explicitly,
then, as usually, a system satisfies an LTL formula $\varphi$,
$sys\models \varphi$, is equivalent to $sys \models \A\varphi$,
where $\varphi$ is treated as a path formula of \CTLstar.
Now consider $\forall i.A_i \impl \forall j.G_j$.
If rewritten with the path quantifier $\A$, it is $\A(\forall i.A_i \impl \forall j.G_j)$.
There is no way to turn it into the form $Q {v^1}... Q {v^k} \A \phi$
and this formula is not prenex-indexed.
\end{remark}

\subsection{Parameterized Synthesis Problem}
The \emph{parameterized synthesis problem (for token rings)} is:

\smallskip\noindent
\emph{Given}: parameterized specification $\tpl{\Ipr,\Iglob,\Opr,\Phi}$ \\
\emph{Return}: process template $P=(\Ipr,\Opr,Q,q_0,\delta,out)$ such that
          for every $n$: $P^{\pring(n)} \models \Phi$,
          or ``unrealizable'' if no such template exists.
\smallskip

We can similarly define the parameterized \emph{model checking} problem,
in which the process template is given as input.

Furthermore, we will use the variants of these problems,
which ask whether all systems \emph{larger than a given $n_0$}
satisfy the formula.
We call such problems {\em parameterized$_{>n_0}$}.

The parameterized synthesis for token rings is undecidable~\cite{JB14},
even for prenex 2-indexed specifications without global inputs:
\begin{theorem}[\cite{JB14}, Theorem 3.5]\label{tok_rings:thm:param-synth-is-undec}
The parameterized synthesis problem of interleaving token rings,
without global inputs,
formulas $\forall i \neq j. \varphi(i,j)$,
is undecidable,
where $\varphi(i,j)$ is an $\LTLmX$ formula over processes $i,j$.
\end{theorem}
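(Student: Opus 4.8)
The claim is that parameterized synthesis for interleaving token rings, with no global inputs and specifications of the form $\forall i \neq j.\, \varphi(i,j)$ where $\varphi$ is an $\LTLmX$ formula, is undecidable. This is Theorem 3.5 of the cited reference, so the proof is a reduction from a known undecidable problem. The plan is to reduce from the (non)halting problem for deterministic Turing machines, or equivalently from a problem about the existence of an infinite computation avoiding a designated halting configuration. The essence of such a reduction is that a token ring, where a token circulates through an a~priori unbounded number of identical processes, can be made to simulate a Turing tape: each process holds one tape cell's content in its local state, and one full circulation of the token corresponds to one sweep of the read/write head across the tape.

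**Carrying out the reduction**

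First I would fix an undecidable source problem. A convenient choice is the nonhalting problem: given a deterministic TM $\mathcal M$, does it run forever on the empty input? This is undecidable, and we reduce it to the existence of a process template $P$ such that $P^{\pring(n)} \models \forall i \neq j.\, \varphi(i,j)$ for all $n$. The construction encodes the TM alphabet and head-state information into the local state space $Q$, using the token to mark which process currently holds (or is adjacent to) the simulated head. The two-index formula $\varphi(i,j)$ is used to enforce \emph{local consistency} between adjacent processes: because the only genuinely ``relational'' information in an interleaving token ring is passed along the ring by the token, a pairwise formula relating a process and its token-successor suffices to constrain how tape contents and head-state propagate from one cell to the next during a token pass. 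The $\LTLmX$ restriction is respected by phrasing all of these transitions in terms of $\U$, $\F$, $\G$ rather than $\X$; the token-passing events themselves serve as the observable ``clock'' that lets us talk about successive configurations without needing a next-time operator. I would then argue both directions: if $\mathcal M$ does not halt, a correct template exists (it faithfully simulates the tape, circulating forever); if $\mathcal M$ halts, any candidate template must, in some sufficiently large ring, exhibit a run reaching the forbidden halting pattern, violating $\varphi$.

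**The main obstacle**

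The hard part will be two intertwined difficulties. The first is engineering the specification so that it is genuinely \emph{two-indexed and prenex}, $\forall i \neq j.\,\varphi(i,j)$, yet still forces a \emph{global} simulation: a pairwise formula can only directly relate two processes, so I must exploit the circulating token to chain local constraints into a global invariant, and prove that no ``cheating'' template can satisfy all pairwise obligations while avoiding a faithful simulation. The second obstacle is the absence of $\X$: I must ensure that the encoding of ``the configuration after one step'' is expressible using only $\U$/$\F$/$\G$ over the token-pass-delimited stutter-invariant structure of runs, and that stutter steps (moments when the relevant processes do not move) cannot be abused to break the encoding. Since $\LTLmX$ is stutter-invariant, I would align the reduction so that the meaningful content of a run is carried entirely by the sequence of token passes, which are themselves well-defined events in the interleaving semantics, and verify that the quantifier $\forall i \neq j$ together with ring symmetry (Section~\ref{tok_rings:defs}) makes it enough to constrain one representative adjacent pair. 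Because the statement is quoted verbatim from \cite{JB14}, I would ultimately cite that construction for the routine details and focus the argument on these two points, which are exactly where the interleaving, no-global-input, $\LTLmX$, two-index hypotheses do the work.
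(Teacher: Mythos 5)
Your reduction goes in the wrong direction and, as set up, cannot work. The paper (following \cite{JB14}) obtains this theorem as an immediate consequence of the undecidability of \emph{distributed} synthesis for two processes with independent inputs \cite{DBLP:conf/focs/PnueliR90}: in a token ring the token carries no data, so two processes with disjoint local inputs cannot exchange any information beyond the bare fact of token possession; a specification $\forall i \neq j.\,\varphi(i,j)$ therefore forces the synthesizer to solve an informationless two-process distributed synthesis problem, which is already undecidable for a \emph{fixed} ring size. The unboundedness of the ring plays no role in the undecidability argument at all.

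Your Turing-tape encoding founders on two concrete points. First, there is no mechanism by which a process can communicate its local state (a tape cell's content) to its neighbour: the only inter-process interaction is passing a valueless token, and the specification $\varphi(i,j)$ constrains the joint I/O behaviour the synthesizer must achieve, it does not create a communication channel. Second, and decisively, Theorem~\ref{tok_rings:thm:cutoffs} gives a cutoff of $4$ for exactly the class $\forall i\neq j.\,\varphi(i,j)$ with $\varphi \in \LTLmX$ and no global inputs: a template satisfies the specification in all rings iff it does so in a ring of size $4$. Hence no specification of this form can "chain pairwise obligations into a global invariant" over an unboundedly long tape --- any such chaining collapses to a property of four processes. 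A candidate template can always be checked against the halting pattern in the cutoff ring, so your reduction from non-halting would in fact yield a decision procedure, contradicting the very statement you are trying to prove. The correct source of hardness is incomplete information in distributed synthesis, not unbounded configurations.
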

\noindent This result follows from the undecidability of synthesis of
distributed systems with two processes~\cite{DBLP:conf/focs/PnueliR90}.
The problem \emph{is} decidable for prenex 1-indexed specifications.

\section{Reduction by Cutoffs}\label{tok_rings:sec:reduction}

The definition of a cutoff is the same as in Section~\ref{gua:sec:paramsynt} on page~\pageref{gua:sec:paramsynt},
we repeat it here for completeness.
A \emph{cutoff} for parameterized specification $\tpl{\Ipr,\Iglob,\Opr,\Phi}$ 
and process template $P=(\Ipr,\Opr,Q,Q_0,\delta,out)$ is a number $c \in \bbN$ such that
$$
\forall n \geq c. \big(P^{\pring(c)} \models \Phi ~\Iff~ P^{\pring(n)} \models \Phi\big).
$$

Cutoffs reduce the parameterized synthesis and model checking problems to
their non-parameterized variants.
E.g., if the cutoff is $2$
then the answer to the parameterized$_{> 2}$ model checking problem
``$\forall n>2: P^{\pring(n)} \models \Phi$''
is the same as the answer to the non-parameterized model checking problem
``$P^{\pring(2)} \models \Phi$''.

\subsection*{Known Cutoffs}

  In a seminal paper~\cite{Emerso95b,Emerso03} Emerson and Namjoshi proved the following cutoff results.
  \begin{theorem}[\cite{Emerso03}] \label{tok_rings:thm:cutoffs}
    Let $P=(\Ipr,\Opr,Q,Q_0,\delta,out,A_{loc})$ be a process template,
    $\Iglob=\emptyset$ (no global inputs),
    $\tpl{\Ipr,\Opr,\Phi}$ a parameterized specification.
    Assume that the scheduler is interleaving.
    Then $c$ is a cutoff depending on $\Phi$:
    \li
    \- $c=2$ for $\forall i.~ \phi(i)$,
    \- $c=3$ for $\forall i.\forall j_{j=i+1}.~ \phi(i,i+1)$,
    \- $c=4$ for $\forall i.\forall j_{i\neq j}.~ \phi(i,j)$,
    \- $c=5$ for $\forall i.\forall j_{i\neq j}.\forall k_{k=i+1}.~ \phi(i,i+1,j)$.
    \il
  \end{theorem}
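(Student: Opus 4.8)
The plan is to establish, for each specification shape, both directions of the cutoff equivalence by exhibiting behaviour-preserving correspondences between runs of $P^{\pring(n)}$ and runs of $P^{\pring(c)}$ for all $n \geq c$. Concretely, I would prove a \emph{truncation} lemma (from a run of the large ring build a run of the cutoff ring agreeing on the distinguished processes named by the quantifier prefix) together with an \emph{extension} lemma (the converse). Since, after instantiating the prenex quantifiers as in the semantics of Section~\ref{tok_rings:defs:indexed-ltl}, satisfaction of $\Phi$ reduces to a conjunction or disjunction of the path property $\phi$ evaluated on tuples of distinguished processes, and since the ring is rotationally symmetric, it suffices to fix one canonical placement of the distinguished indices and to preserve the \emph{joint} observable behaviour of exactly those processes.

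The heart of the argument is a \emph{relay abstraction}. The distinguished processes cut the ring into maximal arcs of undistinguished processes, and the only coupling between an arc and the rest of the system is token passing: an arc receives the token at one endpoint via $\trcv$, and by requirement $(\dagger)$ every process holding the token eventually emits $\tsnd$, so the token is guaranteed to traverse the arc and exit at the opposite endpoint. I would show that such an arc---independently of its length---can be simulated by a \emph{single} relay process that, upon receiving the token, performs the appropriate internal moves and then forwards it, and conversely that one relay process can be ``stretched'' into an arc of any length $\geq 1$ in a larger ring. Correctness rests on the observation that the specification is insensitive both to internal transitions of undistinguished processes and to their number: the operator $\X_v$ advances only along the transitions of the distinguished process $v$, so inserting or deleting transitions of other processes (and whole arcs) leaves the projected trace on the distinguished processes unchanged.

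The cutoff values then fall out of counting arcs, with the cutoff system retaining the distinguished processes verbatim and one relay per arc. For $\forall i.\,\phi(i)$ there is one distinguished process and a single arc forming the remainder of the ring, giving $c=2$; for $\forall i.\forall j_{j=i+1}.\,\phi(i,i+1)$ the two distinguished processes are adjacent, leaving one arc, so $c=3$; for $\forall i.\forall j_{i\neq j}.\,\phi(i,j)$ the two distinguished processes split the ring into two arcs, so $c=4$; and for $\forall i.\forall j_{i\neq j}.\forall k_{k=i+1}.\,\phi(i,i+1,j)$ the three distinguished processes (with $i,i+1$ adjacent) leave two arcs, so $c=5$.

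The main obstacle I anticipate is making the relay simulation rigorous in \emph{both} directions while respecting the interleaving semantics and the token-ring transition discipline of Eq.~\ref{tok_rings:eq:process-trans}. The extension direction is the delicate one: given a run of the cutoff ring, I must schedule the freshly inserted arc processes so that (i) the global initial-state constraint of exactly one token holder is maintained, (ii) $(\dagger)$ continues to hold for every process, so that all runs stay infinite and fair, and (iii) the projected behaviour of the distinguished processes is literally preserved, not merely preserved in a way that could perturb an $\X_v$ subformula. Handling the non-deterministic starting position of the token, and verifying that expanding a relay introduces no spurious deadlock, are the technical points requiring the most care; rotational symmetry (which licenses the canonical index placement) and the guarantee from $(\dagger)$ that the token never gets stuck are the two facts I would lean on most heavily.
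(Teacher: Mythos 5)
The paper does not actually prove this theorem---it is imported wholesale from Emerson and Namjoshi~\cite{Emerso03}---but your truncation/extension simulation with token-relay collapsing and stutter insertion is precisely the standard argument, and it is also the technique the thesis itself deploys later (Figures~\ref{amba:fig:simulation_proof_1} and~\ref{amba:fig:simulation_proof_2}) to prove its own $(1,1)$-cutoff extension for the AMBA setting. The arc-counting heuristic for the values $2,3,4,5$ is the right intuition, and your identification of the delicate points (interleaving discipline, preservation of $\X_v$, requirement $(\dagger)$ keeping the token live) is accurate.

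There is, however, one step that would fail as written: \emph{``it suffices to fix one canonical placement of the distinguished indices.''} Rotational symmetry collapses the quantifier $\forall i$ to $i=1$, but for $\forall i\,\forall j_{i\neq j}.\,\phi(i,j)$ it does \emph{not} collapse the choice of $j$: the adjacent configuration ($j=i+1$, one arc empty) and the non-adjacent one (both arcs non-empty) are genuinely distinguishable, e.g.\ by a formula relating $\tsnd_i$ and $\trcv_j$, since in the adjacent case the token hop is direct. The cutoff ring of size $4$ works precisely because its instantiated conjunction contains \emph{both} an adjacent pair and a non-adjacent pair, and the correspondence must send each large-ring pair to the cutoff-ring pair in the same distance class---so an arc does not always collapse to exactly one relay; it collapses to whatever length ($0$, $1$, or $2$) the matching cutoff configuration prescribes. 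The same issue recurs for the $5$-process case. A second, smaller repair: the ``relay'' in the cutoff ring cannot be a freely designed process---it must run the same template $P$---so its local run has to be obtained by copying (a stuttered compression of) the local run of some actual arc process from the large ring, not by inventing ``appropriate internal moves.'' Both fixes are routine and the cutoff values already accommodate them, but the proof as proposed would not go through without them.
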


  The above cutoff results are restricted to token-ring architectures
  and do not allow for specifications of the more general $k$-indexed form.
  Later in~\cite{AJKR14} we extended the results to more general networks (directed graphs),
  where the processes can control the directions in which to send and receive the token,
  and systems can pass more than one token.
  The paper also studied $k$-indexed \CTLstar properties,
  also with a bounded alternation depth of path quantifiers.

\section{Bounded Synthesis of Parameterized Token Rings}\label{tok_rings:sec:bs-and-optimizations}

\subsection{SMT Encoding}

We encourage the reader to revisit Chapter~\ref{defs:bounded_synthesis} on page~\pageref{page:defs:bounded_synthesis}
to recall how bounded synthesis works in the case of non-distributed systems.
We adapt the encoding to the case of (distributed) token ring systems as follows.

Let us start with SMT constraint about a process template.

\parbf{SMT constraints for a process template}
Let us encode the definition of process template from
Section~\ref{tok_rings:defs:system} on page~\pageref{tok_rings:eq:process-trans}:
%
\li
\- Introduce a special output $\token: T \to \bbB$
   such that $\token(t)$ holds iff $t \in T$
   (recall that we divide the states $Q = T\cupdot NT$).
   Let us encode Eq.~\ref{tok_rings:eq:process-trans} (on page~\pageref{tok_rings:eq:process-trans}),
   which specifies:
   a process template can send the token only if it has the token;
   sending the token means a process template loses the token;
   if a process template receives the token and currently does not have it,
   then it has the token after the transition.
   \begin{equation} \label{tok_rings:eq:snd_rcv_tok}
   \forall_i. \G\left[
   \begin{array}{l}
     \tsnd_i \impl \token_i \\
     \token_i \impl (\tsnd_i \iff \X\neg\token_i)\\
     \neg\token_i \impl (\trcv_i \iff \X\token_i)
   \end{array}
   \right]
   \end{equation}
\- What is left is the condition $(\dagger)$ from the process template definition.
   We introduce the following LTL formula:
   \begin{equation}\label{tok_rings:eq:tok_release}
   \forall i.~(A_{loc})_i \impl \G(\token_i \impl \F \tsnd_i),
   \end{equation}
   i.e., a process does not lock the token if the fairness condition $A_{loc}$ is satisfied.

\il

\parbf{SMT constraints for a system}
Now let us encode particularities of (distributed) token-ring systems.

\li

\- We compose the system transition function out of process transition functions.
   Note that all processes share the \emph{same} transition function;
   the input arguments to the function reflect for what process it is used.
   To account for scheduling,
   we introduce additional system inputs $\sch_1, ..., \sch_k$
   (where $k$ is the number of processes in a ring),
   and require that a process $i \in \{1,...,k\}$ can transit only when $\sch_i$ is true
   (and hence a process does not see its inputs when it is not scheduled).

\- The scheduler model (asynchronous/synchronous/interleaving) defines the constraints
   on the scheduling variables $\sch_1,...,\sch_k$.
   For synchronous token rings, all scheduling variables are set to true.
   For interleaving scheduling, exactly one of the scheduling variables is set to true,
   except for the token-passing transitions where the two processes transit simultaneously.
   For asynchronous scheduling, any number (including zero) of the scheduling variables can be true.
   To specify fair scheduling (for the interleaving or asynchronous cases),
   we use the constraint $\bigwedge_i \GF sch_i$.
   This constraint is added as the assumption to the original formula,
   when we translate the formula into an automaton.
   \ak{why it does not break prenex-indexed formulas?}

\- To ensure that the topology is the token ring
   (where every process sends the token to its single neighbor),
   we manipulate process input $\tsnd$ and output $\trcv$ in the natural way.
   For example,
   if process $i$ is ready to send the token, i.e., it is in a state $t$ and $\tsnd(t)$ holds,
   then once it is scheduled we set $\trcv_{i+1}$ to true.
   I.e., $\tsnd_{i}$ is connected to $\trcv_{i+1}$:
   \begin{equation*}
   \G [\tsnd_i \iff \trcv_{i+1}]
   \end{equation*}
\il

Thus, given an LTL formula $\varphi$,
we want to synthesize a token-ring system that satisfies:
\begin{equation}\label{tok_rings:eq:full_formula}
\boxed{
\forall i. (\GF\sch_i \land \G(\tsnd_i \iff \trcv_{i+1}))
~\impl~
  \varphi \land \forall_i\left(
   \begin{array}{l}
     \G [\tsnd_i \impl \token_i] \\
     \G [\token_i \impl (\tsnd_i \iff \X\neg\token_i)] \\
     \G [\neg\token_i \impl (\trcv_i \iff \X\token_i)] \\
     (A_{loc})_i \impl \G(\token_i \impl \F \tsnd_i)
   \end{array}
   \right)
}
\end{equation}

\begin{example}\label{tok_rings:ex:simple_arb}
Consider a specification of a simple arbiter.
A process template has inputs $I=\{r,\trcv\}$, outputs $O=\{g,\tsnd\}$,
the original parameterized LTL formula specifying the arbiter is:
$$
\begin{array}{rl}
  \forall i \neq j. & \G \neg ( g_i \land g_j ) \\
         \forall i. & \G (r_i \impl \F g_i) \land \neg g_i.
  \end{array}
$$
By Theorem~\ref{tok_rings:thm:cutoffs}, the cutoff is 4.
We set $A_{loc}=\true$, instantiate the above formula, and synthesise a token-ring system.
The process synthesised using our tool PARTY~\cite{party} is in Figure~\ref{tok_rings:fig:simple_arb}.

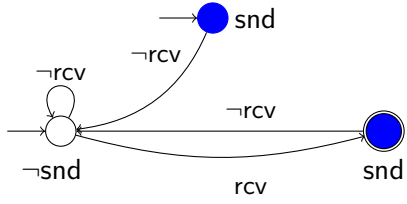
\begin{figure}[tb]\center
\begin{tikzpicture}
	\begin{pgfonlayer}{nodelayer}
		\node [style=wn, fill=blue, double, minimum size=0.5cm] (0) at (3.5, -0) {};
		\node [initial, style=wn] (1) at (-0.75, -0) {};
		\node [initial, style=wn, color=blue] (2) at (1.25, 1.5) {};
		\node [style=textual] (3) at (1.75, 1.5) {$~\tsnd$};
		\node [style=textual] (4) at (3.5, -0.5) {$\tsnd$};
		\node [style=textual] (5) at (-1, -0.5) {$~~\neg\tsnd$};
		\node [style=textual] (6) at (-0.75, 0.75) {\footnotesize$\neg\trcv$};
		\node [style=textual] (7) at (1.75, 0.25) {\footnotesize$\neg\trcv$};
		\node [style=textual] (8) at (1.75, -0.75) {\footnotesize$\trcv$};
		\node [style=textual] (9) at (0.5, 1) {\footnotesize$\neg\trcv$};
	\end{pgfonlayer}
	\begin{pgfonlayer}{edgelayer}
		\draw [style=arrow, in=120, out=60, loop] (1) to ();
		\draw [style=arrow, bend right=15, looseness=1.00] (1) to (0);
		\draw [style=arrow] (0) to (1);
		\draw [style=arrow, bend left, looseness=1.00] (2) to (1);
	\end{pgfonlayer}
\end{tikzpicture}
\caption{Process template synthesized from the specification of a simple arbiter
 (Example~\ref{tok_rings:ex:simple_arb}).
 There are two initial states, with and without the token.
 The blue-filled states have the token,
 the double state has $g$, the others have $\neg g$.
 The process template grants whenever it has the token (except for the initial state),
 ignoring the request.
 The exclusivity of the token ensures the mutual exclusion of the grants.}
\label{tok_rings:fig:simple_arb}
\end{figure}
\end{example}
\ak{if you call it SMT encoding, then you need to provide an example of the constraints}

\subsection{Optimizations} \label{tok_rings:sec:optimizations}

In this section we describe high-level optimizations that are not specific to the SMT encoding.
The first two optimizations, \emph{incremental solving} and \emph{modular generation of constraints},
are sound and complete.
The third, \emph{specification strengthening},
is based on automatic rewriting of the specification and introduces incompleteness.
The last optimization, \emph{hub-abstraction} is sound and complete.

\subsubsection{Incremental Solving}

Theorem~\ref{tok_rings:thm:cutoffs} states that it is sufficient to synthesize a token ring of cutoff size $c$.
However, a solution for a smaller number of processes can still be correct in bigger rings.
We propose to proceed incrementally, synthesizing first a ring of size 1, then 2, ..., up to $c$.
After synthesizing a process that works in a ring of size $n$,
we check whether it satisfies the specification also in a ring of size $n+1$.
Only if the result is negative,
we start to synthesize a ring of size $n+1$.

\subsubsection{Modular Constraints for Conjunctive Properties}

A useful property of the SMT encoding for parameterized synthesis is that
we can separate conjunctive specifications into their parts,
generate constraints for the parts separately,
and then search for a solution that satisfies the conjunction of all constraints.
In the following,
for a parameterized specification $\varphi$ and a number of processes $k$,
let $C(\varphi,k)$ be the set of SMT constraints generated by the bounded synthesis procedure.
Note that $C(\varphi,k)$ is of the form $\exists P (...)$.
When a process template $P$ is given,
let ``$P \models C(\varphi,k)$'' mean that the constraints $C(\varphi,k)$ are satisfied
when instantiated with the process $P$.
\begin{theorem}
Let $\varphi_1$ and $\varphi_2$ be prenex-indexed formulas
such that $n_1$ is a cutoff for $\varphi_1$ and $n_2$ is a cutoff for $\varphi_2$.
Then:
$$
P \models C(\varphi_1,n_1) \land C(\varphi_2,n_2) ~~\Impl~~
P^{\pring(k)} \models \varphi_1 \land \varphi_2 \textit{~~for every~} k \geq max(n_1,n_2).
$$
\end{theorem}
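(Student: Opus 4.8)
The plan is to chain together three facts: the modularity of the SMT constraints, the soundness of the token-ring bounded-synthesis encoding, and the defining property of cutoffs. The starting observation is that $C(\varphi_1,n_1)$ and $C(\varphi_2,n_2)$ constrain the \emph{same} uninterpreted process functions $\delta$ and $out$ describing the template $P$, while their reachability and ranking witnesses ($\reach$ and $\rank$ of the respective products) are disjoint, existentially-quantified auxiliary symbols. Consequently a single model $P$ satisfies the conjunction $C(\varphi_1,n_1) \land C(\varphi_2,n_2)$ if and only if it satisfies each conjunct in isolation, that is $P \models C(\varphi_1,n_1)$ and $P \models C(\varphi_2,n_2)$.

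First I would invoke the correctness of the bounded-synthesis encoding for token rings (cf.\ Section~\ref{defs:bounded_synthesis} together with the ranking Lemmas~\ref{le:nondet-eq} and \ref{le:universal-eq}, applied to the full token-ring formula of Eq.~\ref{tok_rings:eq:full_formula}): for each $i \in \{1,2\}$, $P \models C(\varphi_i,n_i)$ holds exactly when the ring of $n_i$ copies of $P$ satisfies $\varphi_i$, i.e.\ $P^{\pring(n_i)} \models \varphi_i$. Hence from the hypothesis I obtain $P^{\pring(n_1)} \models \varphi_1$ and $P^{\pring(n_2)} \models \varphi_2$. Next, fix any $k \geq \max(n_1,n_2)$. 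Since $k \geq n_1$ and $n_1$ is a cutoff for $\varphi_1$, the definition of cutoff gives $P^{\pring(n_1)} \models \varphi_1 \Iff P^{\pring(k)} \models \varphi_1$, so $P^{\pring(k)} \models \varphi_1$; symmetrically $k \geq n_2$ yields $P^{\pring(k)} \models \varphi_2$. Finally, by the semantics of conjunction for indexed LTL, a run satisfies $\varphi_1 \land \varphi_2$ iff it satisfies both, hence $P^{\pring(k)} \models \varphi_1 \land \varphi_2$, which is the claim.

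The main obstacle is the very first step: justifying that one process template $P$ may simultaneously witness $C(\varphi_1,n_1)$ and $C(\varphi_2,n_2)$ even though these constraint sets are generated for \emph{different} ring sizes $n_1 \neq n_2$. This is sound precisely because in a parameterized token ring the template is independent of the number of processes, so the same $\delta$ and $out$ can be replicated into rings of any size; the only symbols shared between the two constraint sets are these template functions, while all product-specific auxiliaries are kept separate. Making this rigorous requires spelling out that the constraints for $\varphi_i$ depend on $P$ only through $\delta$ and $out$ (and on $n_i$ only through the instantiated ring topology and the index substitution in $(\varphi_i)_V$), so that taking the conjunction creates no conflicting obligation on $P$. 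The cutoffs then do the remaining work of transporting both satisfactions from their respective sizes $n_1,n_2$ up to the common size $k$.
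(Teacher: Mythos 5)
Your proof is correct. The paper states this theorem without any proof, and your argument is exactly the intended one: the two constraint sets share only the template functions $\delta$ and $out$ while keeping their $\reach$/$\rank$ auxiliaries separate, so the conjunction decomposes; soundness of the bounded-synthesis encoding then gives $P^{\pring(n_i)} \models \varphi_i$ for each $i$, and the cutoff definition (which in this paper is the ``tight'' form requiring $P^{\pring(c)} \models \Phi \Iff P^{\pring(n)} \models \Phi$ for all $n \geq c$) transports both satisfactions up to any common $k \geq \max(n_1,n_2)$. The only cosmetic remark is that you invoke the encoding's correctness as an ``exactly when''; for this implication only the soundness direction ($P \models C(\varphi_i,n_i) \Impl P^{\pring(n_i)} \models \varphi_i$) is needed.
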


The theorem allows us to use different cutoffs for sub-parts of a formula.
By conjoining the resulting constraints of all parts,
we obtain an SMT problem such that every solution satisfies the complete formula.
For example, for a formula
$$
\begin{array}{ll}%
\forall i \neq j.~ & \G \neg ( g_i \land g_j ) \\
\forall i.~ & \G (r_i \impl \F g_i),
\end{array}
$$
we generate constraints for a ring of size $4$ for the first conjunct,
and we generated constraints for a ring of size $2$ for the second conjunct.
This is useful for formulas where the local (1-indexed) part is more complex than the global part,
like our more complex arbiter examples.

\subsubsection{Specification Strengthening and Handling Assumptions} \label{tok_rings:sec:localising}

To handle specifications in assume-guarantee style,
we strengthen them in two rewriting steps,
which are sound but incomplete.
This turns them into the prenex-indexed form
(the only form, for which we know how to do parameterized synthesis).

Consider a formula in assume-guarantee style $A_L \land A_S \impl G_L \land G_S$,
where each of the conjuncts is in the prenex-indexed form,
and $L$ and $S$ denote respectively liveness and safety.
Notice that this formula, as a whole,
  is \emph{not} in prenex-indexed form,
  since it contains process quantifiers inside the path quantifier $\A$
  (if written explicitly, it says $\pforall (\forall i... \impl \forall j...)$,
  which is 2-indexed but not \emph{prenex}-indexed).

\parbf{Safety-liveness assumptions}
Our first strengthening is based on the intuition that
often $A_L$ is not needed to obtain $G_S$,
so we strengthen the formula to $(A_S \impl G_S) \land (A_L \land A_S \impl G_L)$.
This step is incomplete for specifications where the system can
falsify liveness assumptions $A_L$ and therefore ignore guarantees,
or if the assumptions $A_S \land A_L$ are unrealizable but $A_S$ is realizable.
Both of the cases often hint at the problems with the specification%
\footnote{The well known class of GR1 specifications~\cite{Bloem12},
  which can be used to describe industrial systems,
  does not use liveness assumptions for safety guarantees.
  Furthermore, for GR1 specifications Klein and Pnueli~\cite{Klein10}
  describe a similar separation of safety guarantees from liveness assumptions.
  They introduce ``well separated'' assumptions,
  which are such that the system cannot falsify them at any state,
  and show that ``well separation'' of assumptions is sufficient for the rewriting to be sound.
  Incomplete cases represent specifications where the system can falsify assumptions and ignore guarantees.%
  }.

\parbf{Localizing assumptions}
Consider a $2$-indexed formula in assume-guarantee style,
$\forall_i A_i \impl \forall_j G_j$,
where $A_i$ and $G_j$ refer to process $i$ and $j$ respectively.
Originally,
we want to plug this formula into Eq.~\ref{tok_rings:eq:full_formula}
and synthesize for the resulting formula.
Instead,
we localize it---turn $(\forall i...) \impl (\forall j...)$ into $\forall i (... \impl ...)$---and get:
\begin{equation}\label{tok_rings:eq:localised}
\forall i:~ \big(\!\GF\sch_i \land \G(\tsnd_i \iff \trcv_{i+1})\big)
\impl
  \left(
   \begin{array}{l}
     \G [\tsnd_i \impl \token_i] \\
     \G [\token_i \impl (\tsnd_i \iff \X\neg\token_i)] \\
     \G [\neg\token_i \impl (\trcv_i \iff \X\token_i)] \\
     A_i \impl \G(\token_i \impl \F \tsnd_i) \\
     A_i \land \GF \token_i \impl G_i
   \end{array}
   \right)
\end{equation}
A few notes:
\li

\- This formula implies the original formula Eq.\ref{tok_rings:eq:full_formula}
   where we set $\varphi = \forall_i A_i \impl \forall_i G_i$ and $A_{loc}=A_i$.
   Setting $A_{loc} = A_i$---%
   requiring $\G(\token_i \impl \F \tsnd_i)$ to hold under the assumption $A_i$---%
   is reasonable:
   it says that if the environment violates the assumption $A_i$,
   then we are not required to release the token.
   Note that if token releasing is required despite $A_i$,
   then the rewriting is unsound
   (it may result in incorrect solutions wrt.\ Eq.\ref{tok_rings:eq:full_formula}).

\- In this formula,
   the non-technical part (where the technical part encodes the token-ring properties)
   is in the \emph{prenex}-indexed fragment.
   Indeed, the non-technical part corresponds to $\forall i.~(A_i \land \GF\token_i \impl G_i)$,
   which is prenex 1-indexed LTL formula.
   In contrast, the original formula $\forall i.A_i \impl \forall j.G_j$ is \emph{not} prenex-indexed,
   because it corresponds to $\A(\forall i.A_i \impl \forall j.G_j)$,
   if we explicitly write the path quantifier $\A$.
   Hence for the new formula we can use the cutoff results of Theorem~\ref{tok_rings:thm:cutoffs},
   but we could not for the original one.

\- Adding $\forall_i \GF\token_i$ to the first constraint is crucial.
   Otherwise,
   the final formula becomes too restrictive and we may miss solutions.
   The reason why $\G\F\token_i$ may prevent this is that
   $\G\F\token_i$ may work as a local trigger of a violation of an assumption.
   This is confirmed in the ``Pnueli'' arbiter experiment,
   where a violation of one of the assumptions $A_i$ prevents fair token passing in the ring,
   falsifying $\G\F\token_j$ for all $j\neq i$.

\- Filiot et al.~\cite{Filiot11} describe a similar rewriting heuristic,
   in the context of monolithic synthesis.
   Our version differs in that we add $\GF\token_i$ assumptions
   before localization to prevent missing the solutions.

\il

\subsubsection{Hub-abstraction}

  Inspired by the work~\cite{Clarke04c}, we introduce the hub abstraction optimization.
  Recall that for 1-indexed properties $\forall i.\varphi(i)$, a cutoff is 2,
  meaning that it is enough to consider a token ring system with two processes.
  Furthermore, by symmetry of the processes,
  it holds that
  $P^{\pring(2)}\models \forall i. \varphi(i) \Iff P^{\pring(2)} \models \varphi(1)$
  (for details, see~\cite{Emerso03}).
  The hub abstraction suggests to replace the process $P_2$ of a system with the hub process
  whose whole purpose is to pass the token.
  This may reduce the state space,
  because we replace the original process $P_2$ by the small hub process.
  We emulate the hub process using the environment assumptions,
  thus considering only one real process.
  The assumptions are:
  \li
  \-[(1)]
     if the process does not have the token,
     then the environment eventually sends the token (raises the input $\trcv$):
     $\G ( \neg \token \impl \F \trcv)$,

  \-[(2)]
     if the process has the token, then the environment does not send the token:
     $\G ( \token \impl \neg\trcv )$.
  \il
  The final formula to synthesize is:
  \begin{equation}\label{tok_rings:eq:hub-abstraction}
  \big(\!\GF\sch \land \G(\neg\token \impl \F\trcv) \land \G(\token \impl \neg\trcv)\big)
  ~\impl~
    \varphi \land \left(
     \begin{array}{l}
       \G [\tsnd \impl \token] \\
       \G [\token \impl (\tsnd \iff \X\neg\token)] \\
       \G [\neg\token \impl (\trcv \iff \X\token)] \\
       A_{loc} \impl \G(\token \impl \F \tsnd)
     \end{array}
     \right)
  \end{equation}

  Note that the assumption (1) states that the token cannot get stuck
  in the hub process (and thus in the original process that the hub abstracts).
  This does not always hold,
  because we only require to pass the token if $A_{loc}$ holds.
  This means that the hub-abstraction is not sound wrt.\ Eq.\ref{tok_rings:eq:full_formula},
  i.e.,
  there is a process $P$ (and $A_{loc}$) and $\forall i.\varphi(i)$ such that
  $P \models \text{Eq.\ref{tok_rings:eq:hub-abstraction}}$
  but $P^{\pring(2)} \not\models \text{Eq.\ref{tok_rings:eq:full_formula}}$.

  However, we will use the hub abstraction in the context of assume-guarantee 1-indexed specifications
  in the form of Eq.\ref{tok_rings:eq:localised} (``$\forall i.A_i \land \GF\token_i \impl G_i$'').
  Since Eq.\ref{tok_rings:eq:localised} only requires the guarantee to hold on paths
  where the token is passed infinitely often,
  the following result holds.

  \begin{theorem}
    For every process template $P$
    and LTL formula $\varphi$ of the form $A \impl G$:
    $$
    P \models \text{Eq.\ref{tok_rings:eq:hub-abstraction} (where $A_{loc}=A$)}
    ~~\Impl~~
    P^{\pring(2)} \models \text{Eq.\ref{tok_rings:eq:localised}}.
    $$
  \end{theorem}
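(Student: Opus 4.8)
The plan is to prove the consequent clause by clause, reducing everything to the behaviour of a single copy of $P$ seen through the lens of the hub assumptions. Since $\varphi = A \impl G$ is $1$-indexed, by symmetry of token rings~\cite{Emerso03} it suffices to establish the body of Eq.\ref{tok_rings:eq:localised} for one fixed process; I fix process $1$ and its neighbour (process $2$) in $P^{\pring(2)}$. First I would take an arbitrary infinite run $\pi = (s_1,in_1,M_1)(s_2,in_2,M_2)\ldots$ of $P^{\pring(2)}$ whose projection $\pi(1)$ satisfies the antecedent $\GF\sch_1 \land \G(\tsnd_1 \iff \trcv_2)$, and view $\pi(1)$ (with process $1$ stuttering when $\sch_1$ is false) as a run of the single process $P$ with scheduling variable $\sch$. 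The whole argument then consists of checking, on suitable suffixes, that $\pi(1)$ satisfies the hub antecedent $\GF\sch \land \G(\neg\token \impl \F\trcv) \land \G(\token \impl \neg\trcv)$, so that the hypothesis $P \models \text{Eq.\ref{tok_rings:eq:hub-abstraction}}$ lets me transfer the hub consequent to $\pi(1)$.

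For the guarantee clause $A_1 \land \GF\token_1 \impl G_1$ this is direct. Assuming $A_1$ and $\GF\token_1$, I would note that $\GF\sch$ comes from $\GF\sch_1$; that $\G(\neg\token_1 \impl \F\trcv_1)$ follows from $\GF\token_1$ together with the structural invariant $\G[\neg\token_i \impl (\trcv_i \iff \X\token_i)]$ (the only way back into a token state is by receiving); and that $\G(\token_1 \impl \neg\trcv_1)$ follows from the single-token invariant of the ring, since a process holding the token is never the designated receiver of a token-passing transition. Hence the hub antecedent holds on all of $\pi(1)$, so $P \models \text{Eq.\ref{tok_rings:eq:hub-abstraction}}$ gives $\varphi = A \impl G$ on $\pi(1)$; as $A_1$ holds, $G_1$ holds.

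The main obstacle is the token-release clause $A_1 \impl \G(\token_1 \impl \F\tsnd_1)$, because it is conditioned only on $A_1$ and \emph{not} on $\GF\token_1$, so the hub antecedent is not obviously available. I would resolve this by contradiction together with a case split. Suppose $A_1$ holds but at some moment $m$ we have $\token_1$ while $\tsnd_1$ never holds afterward. By the structural constraint $\G[\token_i \impl (\tsnd_i \iff \X\neg\token_i)]$, sending is the only way to lose the token, so process $1$ holds the token forever from $m$. On that suffix the hub antecedent holds anyway: $\GF\sch_1$ is inherited, $\G(\neg\token_1 \impl \F\trcv_1)$ is vacuous, and $\G(\token_1 \impl \neg\trcv_1)$ holds because process $2$ is permanently in a non-token state and thus never sends. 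Invoking the hub consequent's clause $A_{loc} \impl \G(\token \impl \F\tsnd)$ with $A_{loc}=A_1$ now forces $\F\tsnd_1$ from $m$, contradicting the assumption. This case analysis---``either the token returns infinitely often, or it is eventually locked, and in both cases the hub assumptions hold vacuously or by $\GF\token$''---is the delicate point that makes the hub abstraction sound precisely for the localized specification.

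Finally I would dispatch the three structural clauses $\G[\tsnd_1 \impl \token_1]$, $\G[\token_1 \impl (\tsnd_1 \iff \X\neg\token_1)]$, $\G[\neg\token_1 \impl (\trcv_1 \iff \X\token_1)]$; these are intrinsic invariants of the process template $P$ (items i--iv of the template definition) and of the ring's token-passing semantics, hence hold on every local path independently of the environment, so they transfer to $\pi$ unconditionally. Collecting the clauses shows that every infinite run of $P^{\pring(2)}$ satisfies the body of Eq.\ref{tok_rings:eq:localised} for process $1$, and by symmetry for every process, giving $P^{\pring(2)} \models \text{Eq.\ref{tok_rings:eq:localised}}$.
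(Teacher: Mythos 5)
The paper states this theorem without proof, offering only the one-line justification that ``Eq.\ref{tok_rings:eq:localised} only requires the guarantee to hold on paths where the token is passed infinitely often,'' so there is no official argument to compare against; your write-up supplies the missing proof, and it is correct. Your treatment of the guarantee clause is exactly the formalisation of the paper's remark: under $\GF\token_1$ the hub assumption $\G(\neg\token\impl\F\trcv)$ becomes derivable (the only $NT\to T$ transition is a receive), and $\G(\token\impl\neg\trcv)$ follows from the single-token invariant, so the hub consequent transfers. The genuinely valuable addition is your observation that the token-release obligation $A_1 \impl \G(\token_1\impl\F\tsnd_1)$ is \emph{not} guarded by $\GF\token_1$ and therefore needs the separate stuck-token case analysis; this is the clause the paper's one-sentence justification silently skips, and your contradiction argument closes it. One presentational imprecision: you claim the hub antecedent holds ``on that suffix'' from the moment $m$ where the token gets stuck, but to invoke $P\models\text{Eq.\ref{tok_rings:eq:hub-abstraction}}$ you need it on the whole run $\pi(1)$ from its initial state (a suffix of a run need not be a run). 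This is easily repaired: at any position before $m$ where $\neg\token_1$ holds, the process must receive the token before reaching $m$, so $\F\trcv_1$ holds there too, and the antecedent is satisfied globally. With that patch the argument is complete.
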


Furthermore, we can replace $\GF\sch$ with \emph{true}, which can introduce unsoundness wrt.\ Eq.\ref{tok_rings:eq:localised}.
But this step is sound for formulas where the environment cannot
violate guarantees by not scheduling a process.
This is true for all examples we consider in the next section.

\subsection{Evaluating Optimizations}\label{tok_rings:optimisations:experiments}

For the evaluation of optimizations we developed an automatic
parameterized synthesis tool PARTY~\cite{party}.
The tool and the benchmarks are available at {\small\url{https://github.com/5nizza/Party/}}.
PARTY
\li
\-[(1)] identifies the cutoff of a given LTL specification,
\-[(2)] adds token-ring specific guarantees and assumptions to the specification,
\-[(3)] translates the modified specification into a UCT using LTL3BA~\cite{LTL3BA},
\-[(4)] for a given cutoff and system size bound, builds the SMT constraints,
\-[(5)] solves the constraints using SMT solver Z3 v.4.1~\cite{Moura08}.
        If the solver reports unsatisfiability,
        then no model for the current bound exists,
        and the tool goes to step 4 and increases the bound until
        the user interrupts execution or a model is found.
        A model synthesized represents a Moore machine
        that can be copied to form a token-ring system of any size.
\il

We run the experiments on a single core of a Linux machine
with two 4-core 2.66 GHz Intel Xeon processors and 64~GB RAM.
Reported times in tables include all the steps of the tool.
For long running examples, SMT solving contributes most of the time. 
Timings reported in tables are timings of one particular run,
although we observed that the behaviour of optimizations timings does not change much on different runs.

For the evaluation of optimizations we run the tool,
with different sets of optimizations enabled,
on three examples:
a simple arbiter, a full arbiter, and a ``Pnueli'' arbiter.
All benchmarks contain the mutual exclusion property $\forall i \neq j. \G\neg g_i \land g_j$,
for which a cutoff is $4$ according to Theorem~\ref{tok_rings:thm:cutoffs}.
We show solving times in Table~\ref{tok_rings:tab:opt}.
The horizontal axis of the table has columns for token rings of different sizes.
Each successive optimization below includes previous optimizations.

\parbf{Incremental solving}
Solving times can be sped up considerably by synthesizing a ring of size $2$,
then checking whether the solution is correct for a ring of size $4$.
For instance, for the full arbiter, the general solution was found in $\approx\!24$ seconds
when synthesizing a ring of size $2$ (time from the ``original'' row in Table~\ref{tok_rings:tab:opt}).
Checking if the solution is correct for a ring of size $4$ takes additional $\approx\!30$ seconds,
thus reducing the synthesis time from more than 2 hours (column ``full4'' in the same row)
to $\approx\!54$ seconds.
Times for incremental solving are not given in the table,
because its contribution is small when optimizations ``strengthening'', ``modular'', and ``async hub'' are applied.

\parbf{Strengthening}
This version refers to two optimizations described in Section~\ref{tok_rings:sec:optimizations}:
localizing of assume-guarantee properties and
rewriting liveness assumptions from properties with safety guarantees.
Formula rewriting significantly reduces the size of the automaton:
for example, the automaton corresponding to the ``Pnueli'' arbiter in a ring of size 4
reduces its size from 1700 to 31 states
(from 41 to 16 for the full arbiter).

\parbf{Modular}
In this version,
constraints for formulas of the form $\phi_i \land \phi_{i,j}$ are
generated separately for local properties $\phi_i$ and for global properties $\phi_{i,j}$,
using the same symbols for transition and output functions.
Constraints for $\phi_i$ are generated for a ring of size 2,
and constraints for $\phi_{i,j}$ for a ring of size 4.
These sets of constraints are then conjoined in one query and given to the SMT solver.
Such separate generation of constraints leads to smaller automata and queries,
resulting in approximately $10$x speed up.

\parbf{Hub abstractions}
By replacing one of the processes in a ring of size 2 with assumptions on its behavior,
we reduce the synthesis of a ring of size two to the synthesis of a single process.
In row ``async hub'' the process is synthesized in an asynchronous setting,
while in row ``sync hub'' the process is assumed to be always scheduled.
On these examples, the speed up is insignificant.

\begin{table}[tb]
\caption{Effect of optimizations on synthesis time (in seconds, t/o=2h)}
\label{tok_rings:tab:opt}
\small
\centering
\setlength{\tabcolsep}{3pt}
\begin{tabular}{ lrrrrrrrrr }
\toprule
 & simple4 & full2 & full3 & full4 & pnueli2 & pnueli3 & pnueli4 & pnueli5 & pnueli6 \\
\midrule
original          & 3 & 24   & 934 & t/o & 23 & 6737 & t/o & t/o & t/o\\
strengthening    & 1 & 6 & 81 & 638 & 2 & 13 & 90 & 620 & 6375 \\
modular           & 1 & 4 & 8 & 13 & 2 & 4 & 11 & 49 & 262 \\
async hub               & 1 & 2 & 2 & 5 & 2 & 3 & 9 & 37 & 236 \\
sync hub                & 1 & 1 & 2 & 4 & 2 & 3 & 8 & 42 & 191 \\
\midrule
total speedup      & $3$ & $20$ & $10^2$ & $\ge\!\!10^3$ & $10$ & $10^3$ & $\ge\!\!10^3$ & $\ge\!\!10^2$ & $\ge\!\!40$ \\
\bottomrule
\end{tabular}
\end{table}

\subsection{Discussion}

We showed how optimizations of the SMT encoding, along with modular application of cutoff results, strengthening and abstraction techniques, leads to a significant speed-up of parameterized synthesis. Experimental results show speed-ups of more than three orders of magnitude for some examples.


In the next section,
we use these optimizations to tackle AMBA specification.
This will not work out of the box and we will introduce more tricks specifically tailored to the AMBA.

\section{AMBA Protocol Case Study}\label{amba:sec}

We demonstrate how to synthesize a parameterized implementation of the AMBA AHB,
with guaranteed correctness for any number of masters.
To this end, we translate the LTL specification of the 
AMBA AHB (as found in~\cite{BarbaraThesis}) into a version that is suitable 
for parameterized synthesis in token rings, and address several challenges 
with respect to theoretical applicability and practical feasibility:
\li
\- We show how to localize global input and output signals (those that cannot be assigned to one particular master).
   This is necessary since our approach is based on the replication of components that act only on local information.
\- We extend the cutoff results to fully asynchronous timing model and systems with two process templates.
\- We describe further optimizations that make synthesis feasible,
   in particular based on the insight that the AMBA protocol features three 
   different types of accesses, and the control structures for these accesses 
   can be synthesized step-by-step.
\il

\subsection{Description of the AMBA Protocol} \label{sec:amba}

ARM’s \emph{Advanced Microcontroller Bus Architecture} (AMBA)~\cite{AMBAspec} 
is a communication bus for a number of masters and
clients on a microchip. One of the crucial parts of AMBA is
the \emph{Advanced High-performance Bus} (AHB),
a system bus for the efficient connection of processors, memory, and devices.

For convenience, the input signals are depicted in \textcolor{red}{red} color
and the outputs are \textcolor{blue}{blue}.

The bus arbiter ensures that only one master accesses the bus at any time.
Masters send \hbusreq\ to the arbiter if they want access, and receive
\hgrant\ if they are allowed to access it.
Masters can also ask for different
kinds of \emph{locked transfers} that cannot be interrupted.

The exact arbitration 
protocol for AMBA is not specified. Our goal is to synthesize a protocol 
that guarantees safety and liveness properties. According to the 
specification, any device that is connected to the bus will react to an input 
with a delay of one time step. I.e., we are considering Moore machines. In 
the following, we introduce briefly which signals are used to
realize the arbiter of this bus for masters.

\parbf{Requests and grants}
The identifier of the master which is 
currently active is stored in the $n+1$-bit signal \hmaster[$n$:0], 
with $n$ chosen such that the number of masters fit into $n+1$ bits.
To request the 
bus, master $i$ raises signal \hbusreqi. The arbiter decides who will be 
granted the bus next by raising signal \hgranti. When
the client raises \hready, the bus access starts at the next tick, and there 
is an update \hmaster[$n$:0] := i, where \hgranti\ is currently active.

\parbf{Locks and bursts}
A master can request a locked access by raising both \hbusreqi\ and \hlocki. In this case, 
the master additionally sets \hburst[1:0] to either \ambasignal{single} 
(single cycle access), \hburstfour\ (four cycle burst) or \hincr\ (unspecified 
length burst). For a \hburstfour\ access, the bus is
locked until the client has accepted 4 inputs from the master (each signaled by
raising \hready). In case of a \hincr\ access, the bus is
locked until \hbusreqi\ is lowered. The arbiter raises \hmastlock\ if the bus 
is currently locked.

\parbf{LTL specification}\ak{where $\G \neg g_1 \land g_2$?}
The original natural-language specification~\cite{AMBAspec} has been translated into a formal specification in the GR(1) fragment of LTL before in~\cite{BarbaraThesis,Bloem12,GodhalCH13}.
Figure~\ref{amba:fig:AMBAspec} shows the environment assumptions and system guarantees from~\cite{BarbaraThesis} that serve as the basis for our parameterized specification.
The full specification is 
$ (A1 \land \ldots \land A4) \rightarrow (G1 \land \ldots \land G11)$.
\begin{figure}[t]
  \fbox{%
  \begin{minipage}{\textwidth}
  \input{token-systems/amba-spec}
  \end{minipage}}
\caption{Specification of the AMBA AHB~\cite{BarbaraThesis}, in the GR(1) fragment of LTL.
  The inputs are:
  \hburst, \hbusreq[i], \hready, and \hlock[i].
  The outputs are:
  \hmastlock, \hmaster, \hstart, \hdecide, \hlocked, and \hgrant[i].}
\label{amba:fig:AMBAspec}
\end{figure}

\parbf{Challenges}
The AMBA specification has global inputs and outputs (those are without ``$[i]$''),
distinguishes 0 from non-0 processes (G10.1 and G11),
has the assume-guarantee form $\A(\forall i. A1\land...\land A4 \impl \forall i.G1\land...\land G11)$
(thus not in the prenex-indexed form),
has the process quantification inside a temporal operator in G10.2 $\G(\forall i... \impl ...)$,
and requires a synchronous mode of execution (all processes transit simultaneously).
Thus we cannot apply the cutoff results (Theorem~\ref{tok_rings:thm:cutoffs} on page~\pageref{tok_rings:thm:cutoffs})
for parameterized synthesis.
The next section shows how to handle this.


%

\subsection{Handling the AMBA Specification}\label{amba:sec:handling-amba}

  This section shows how to rewrite the AMBA specification into a form
  admissible to the parameterized synthesis.
  We not only rewrite the specification,
  but also extend the cutoff results~\cite{Emerso03} to the resulting class of specifications.
  Note that the resulting specification is not the same as the original AMBA (but closely resembles it),
  due to constraints of the token-ring architecture.
  (For example, token rings cannot ensure immediate granting of a client,
   because the token has to travel to the corresponding process first.)
  The resulting specification describes a round-robin arbiter
  with different granting schemes and one special process.


\subsubsection{Special 0-process: two process templates $(A,B)$}

  The specification distinguishes between master number 0 and all other masters.
  We support this by synthesizing two different process implementations,
  $A$ for the 0-process and $B$ for non-0 processes:
  the $A$-process serves master 0 and the $B$-processes serve the other masters.
  We denote a token-ring system composed of one $A$ process and $n$ copies of $B$
  using the notation $(A,B)^{(1,n)}$.
  The modified parameterized synthesis problem is to find $(A,B)$ such that
  $\forall n: (A,B)^{(1,n)} \models \Phi$.
  Later we will separate the specification into two parts:
  one will talk about process $A$, another will talk about $B$-processes.

\subsubsection{Localizing global outputs}

  The AMBA specification has global outputs \hmastlock, \hmaster, \hstart, \hdecide, and \hlocked.
  They depend on the global state of the system,
  which is not handled by the work on parameterized model checking of token ring systems~\cite{Emerso03,AJKR14}.
  To overcome this,
  we introduce local versions of the global outputs and build global outputs from them:
  \li
  \- $\hmaster = i$ whenever $\hmasteri$ is high, and 
  \- for every global output $glob$ from $\{\hmastlock, \hstart, \hdecide, \hlocked\}$,
     $glob = \exists{i}.\ \toki \land glob_i$.
  \il
    
  We replace each global output with its local version, e.g., $\hstart$ is replaced by $\hstarti$.
  Note that the limited communication interface (via token passing)
  does not make AMBA specification unrealizable,
  although processes cannot access the value of global outputs when they do not possess the token.
  Intuitively, this is because the token is the shared resource that guarantees mutual exclusion of grants, and therefore the values of these global signals should always be controlled by the process that has the token. In particular, outputs \hdecide\ and \hstart\ are used to decide when to raise a grant and when to start and end a bus access\footnote{The original AMBA specification~\cite{AMBAspec} does not have these signals---they were introduced to simplify the formalization of the specification~\cite{BarbaraThesis}.}, which should only be done when the token is present.
  Similarly, signals \hmastlock\ and \hmaster\ should be controlled by the process that currently controls the bus (and hence has the token). 

  Finally, we mentioned many times that the token should be used to ensure the mutual exclusion of grants.
  Let us explicitly add this requirement into the specification,
  namely we add G12: $\forall i.\,\hgranti \rightarrow \toki$.
  (The original formula contains only an \emph{implicit} mutual exclusion property:
   G4 defines how \hmaster\ is updated by the \hgranti\ signals,
   which can only be satisfied if \hgranti\ are mutually exclusive.)

\subsubsection{Splitting the specification into two \& other small rewritings}

  Once we localized global outputs,
  we can talk about splitting the AMBA specification into two parts.
  At first, each part will be in the assume-guarantee form,
  where the assumptions talk about all the processes (the only $A$ and all $B$),
  but the guarantees will be separated into
  (i) guarantees for the $B$-processes and
  (ii) guarantees for the process $A$.

  After the localization, guarantees G10.1 and G10.2 become:
  \[\begin{array}{rrlr}
  \forall \mathrm i\neq 0\!:\!\!\!\!\! & \always & \neg \hgranti \impl (\neg \hgranti \weakuntil \hbusreqi) & (G10.1)\\[3pt]
  & \always & (\hdecide[0] \land \forall \mathrm i.\neg \hbusreqi) \impl \nextt \hgrant[0] & (G10.2) \\[3pt]
  \end{array}\]
  Thus, G10.1 is used for $B$-processes,
  while G10.2 is used for the process $A$.
  Let us talk more about G10.2, because it has two issues.

  The first issue with G10.2 is that it requires an immediate reaction to a situation when no process receives a bus request. 
  This is unrealizable in token rings,
  because mutual exclusion of the grants requires possession of the token,
  and the token transmission takes time.
  We modify G10.2 to allow the process $A$ to wait for the token and then immediately react:
  $$
  \G
  \big(
  \neg \tok[0] \land \X \tok[0] \land \forall \mathrm i. \neg \hbusreqi
  \impl
  \nextt \hgrant[0]
  \big).
  \footnote{Maybe you expected to see
    $\always \ (\hdecide[0] \land (\forall \mathrm i: \neg \hbusreqi) \land \nextt \tok[0])
    \impl \nextt \hgrant[0])$, 
    but this makes the specification unrealizable due to the token ring requirement 
    $\always (\tok \impl \eventually \send)$: 
    the environment falsifies it by making $\forall \mathrm i: \neg \hbusreqi$ true whenever the process raises $\hdecide$ (and hence the process should continue granting and cannot release the token).
    To regain realizability one could add additional assumptions (something like $\G\F (\hdecide\land\neg\norequests)$). 
    Instead, we decided to change slightly the specification.%
  }
  $$

  The second issue with G10.2 is the quantifier $\forall{i}$ \emph{inside} the temporal operator $\G$
  (such specifications were not studied in parameterized model checking of token rings).
  It requires the process $A$ to know about inputs of all $B$-processes,
  as it needs to react to a situation where \hbusreqi\ is low for every process.
  To get rid of the nesting $\G(\forall i...)$,
  we introduce a new global input $\norequests$,
  and add the assumption $\forall{i}.\G (\hbusreqi \impl \neg \norequests)$.
  Then G10.2 becomes:
  $$
  \G \ 
  (\neg \tok[0] \land \X \tok[0] \land \norequests)
  \impl \X \hgrant[0]).
  $$
  This strengthens the specification,
  because the environment can set $\neg\norequests$
  even when there are no requests.
  This concludes the discussion of G10.2.

  The last asymmetric property is the guarantee G11. We split it into two parts:
  \li
  \- G11.1: $\neg \hgranti \land \neg \hmastlocki$ for $B$-processes and
  \- G11.2: $\tok[0] \impl \hgrant[0] \land \hmaster[0] \land \neg \hmastlock[0]$ (for $A$).
  \il

\subsubsection{Localizing global inputs}

  The AMBA specification in Figure~\ref{amba:fig:AMBAspec} uses global inputs \hburst,\hready,
  and \norequests\ that we introduced in the previous section.

  First, we introduce local versions $\hburst[i]$, $\hready[i]$, and $\norequests[i]$,
  and add the assumption $\forall i\neq j.\ loc_i = loc_j$ for $loc \in \{\hburst, \hready,\norequests\}$.
  This rewriting does not change the specification.
  The specification becomes
  \begin{align*}
  &\A\Big(\forall{i \neq j}_{i,j \in \{A,B_1,...,B_n\}}. \Phi({i,j}) ~\impl~ \forall{k \in \{B_1,...,B_n\}}. \Psi(k)\Big) ~\land\\
  &\A\Big(\forall{i \neq j}_{i,j \in \{A,B_1,...,B_n\}}. \Phi'({i,j}) ~\impl~ \Psi'(A)\Big),
  \end{align*}
  where each $\Phi(i,j)$ and $\Phi'(i,j)$ talk about propositions of processes $i$ and $j$,
  and $\Psi(k)$ and $\Psi'(A)$ talk about propositions of process $k$ and $A$ respectively.
  Note that $\Psi \neq \Psi'$ because we split the guarantees for $B$-processes and the process $A$
  (the assumptions also slightly differ, so we use $\Phi$ and $\Phi'$).
  Now the specification does neither have global inputs nor global outputs.

  Second, we drop the newly introduced assumptions.
  This means that the original global inputs $\hburst$, $\hready$, and $\norequests$ may have different values for different processes,
  i.e., they are not ``global'' anymore.
  This strengthens the specification,
  because dropping the assumptions enables more environment behaviors (and the formula is universal $\A(...)$).
  The resulting specification becomes
  \begin{equation}\label{eq:amba:assume-guarantee-spec}
  \begin{aligned}
  &\A\big(\forall{i \in \{A,B_1,...,B_n\}}. \Phi(i) ~\impl~ \forall{j \in \{B_1,...,B_n\}}. \Psi(j)\big) ~\land\\
  &\A\big(\forall{i \in \{A,B_1,...,B_n\}}. \Phi'(i) ~\impl~ \Psi'(A)\big).
  \end{aligned}
  \end{equation}
  Figures~\ref{fig:AMBASpecNewI}~and~\ref{fig:AMBASpecNew0} define $\Phi$, $\Psi$, $\Phi'$, and $\Psi'$.
  We stress that $\Phi(i)$ and $\Phi'(i)$
  has to be conjoined over all $B$-processes \emph{and} the process $A$ to form the assumptions.

  \begin{figure}
    \fbox{%
    \begin{minipage}{\textwidth}
    \input{token-systems/amba-spec-new}
    \end{minipage}
    }
    \caption{Parameterized AMBA specification for $B$-processes:
      assumptions $\Phi(i)$ and guarantees $\Psi(i)$.
      G10.2 is omitted, since it is only needed for the process $A$.%
    }
    \label{fig:AMBASpecNewI}
  \end{figure}
  \begin{figure}
    \fbox{%
    \begin{minipage}{\textwidth}
    \input{token-systems/amba-spec-new0}
    \end{minipage}}
    \caption{Parameterized AMBA specification for the process $A$:
      modifications wrt.\ Figure~\ref{fig:AMBASpecNewI}.
      The index $0$ denotes the process $A$.%
    }
    \label{fig:AMBASpecNew0}
  \end{figure}

\subsubsection{Resulting parameterized specification and cutoffs}

  We still cannot apply the cutoff results to the specification in Eq.\ref{eq:amba:assume-guarantee-spec},
  because it is in the assume-guarantee form (and thus is not prenex-indexed)
  and has the synchronous timing model.

  To handle the synchronous timing model, we synthesize a more general case of fully asynchronous systems
  (those work under all ranges of schedulers from the synchronous to the interleaving one).
  This represents a more difficult synthesis task,
  but if the synthesizer finds such a system,
  then the system works in the synchronous setting too
  (because we have universal properties).

  To handle the assume-guarantee issue,
  we localize the assumptions as described in Eq.\ref{tok_rings:eq:localised} on page~\pageref{tok_rings:eq:localised},
  by strengthening
  $\A(\forall i. \phi(i) \impl \forall j. \psi(j))$ into
  $\A\big(\forall i. (\phi(i)\land\GF\token_i \impl \psi(i))\big)$
  where $A_i$ in Eq.\ref{tok_rings:eq:localised} is $\phi(i)$.
  The final specifications are (in LTL):
  \begin{equation}\label{eq:amba:final-spec}
  \boxed{
  \begin{aligned}
    \text{for $B$-processes:}
    &~~\forall{i \in \{B_1,...,B_n\}}. \Phi(i)\land\GF\token_i \impl \Psi(i)
    \text{ and } A_{loc} = \Phi,\\
    \text{for the process $A$:}
    &~~\Phi'(A) ~\impl~ \Psi'(A) \text{ and } A_{loc} = \Phi',
  \end{aligned}
  }
  \end{equation}
  where $\Phi$, $\Phi'$, $\Psi$, and $\Psi'$ are defined in Figures~\ref{fig:AMBASpecNewI}~and~\ref{fig:AMBASpecNew0}
  (Recall that $A_{loc}$ is a formula over process propositions
   such that $\G[\token \land A_{loc} \impl \F \tsnd)]$,
   see definitions on page~\pageref{page:tok_rings:defs:process_template}.)

  Let us prove cutoffs for specifications of the above form.
  The parameterized synthesis problem can be separated into two: find $(A,B)$ such that
  \begin{align*}
  \begin{aligned}
  &\forall n: \largesys ~\models~ \forall{i \in \{B_1,...,B_n\}}. \Phi(i) ~\impl~ \Psi(i) ~\land\\
  &\forall n: \largesys ~\models~ \Phi'(A) \impl \Psi'(A),
  \end{aligned}
  \end{align*}
  where $A_{loc}$ is either $\Phi$ or $\Phi'$.

  \begin{theorem}
    Given two process templates,
    $A=(\Ipr,\Opr,Q^A,Q_0^A,\delta^A,out^A,A^A_{loc})$ and
    $B=(\Ipr,\Opr,Q^B,Q_0^B,\delta^B,out^B,A^B_{loc})$,
    and let $\Iglob=\emptyset$ (no global inputs).
    Assume that initially the process $A$ has the token.
    Then a cutoff is $(1,1)$ (one $A$-process and one $B$-process) for the following PMCPs:
    \li
    \-[(1)] $\forall n: (A,B)^{(1,n)} \models A^A_{loc} \land \GF\token_A ~\impl~ \psi(A)$,
    \-[(2)] $\forall n: (A,B)^{(1,n)} \models \forall i: A^B_{loc,i} \land \GF\token_i ~\impl~ \psi(B_i)$.
    \il
    where
    $A^B_{loc,i}$ is $A^B_{loc}$ with all propositions subscripted with $i$,
    $\psi(p)$ is an LTL formula over propositions of a process $p \in \{A,B_1,...,B_k\}$.
  \end{theorem}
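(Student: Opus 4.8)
The plan is to prove the cutoff by the familiar Monotonicity/Bounding pair (cf.\ Section~\ref{gua:sec:proof-structure}), specialised to token passing. I would begin with two reductions. First, by symmetry of the $B$-processes, property~(2) is equivalent to $\forall n: \sys{n} \models A^B_{loc,1}\land\GF\token_1 \impl \psi(B_1)$, so in both cases a single \emph{distinguished} process is analysed (the $A$-process for~(1), $B_1$ for~(2)). Second, since $\psi$ is built from the action-based next $\X_i$ and otherwise stutter-insensitive operators, the truth of $\psi(p)$ on a run $x$ depends only on the local run $x(p)$ up to stuttering and up to re-interleaving of the other processes' moves. Hence it suffices to show that the distinguished process admits the same set of local runs---those arising with its $A_{loc}$ and $\GF\token$ satisfied---in $\sys{1}$ and in every $\sys{n}$; applied to the runs that satisfy the assumption but falsify $\psi$, this yields both directions of the cutoff equivalence $\sys{1}\models\Phi \Iff \sys{n}\models\Phi$.

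Monotonicity ($\sys{1}\to\sys{n}$): from a run of $\sys{1}$ whose distinguished process realises a local run $r$ with the assumptions holding and $\psi$ failing, I would build a run of $\sys{n}$ realising $r$ by copying the local runs of the two original processes and inserting the additional $B$-processes as pure \emph{pass-through} relays. Each relay idles in a non-token state until the token reaches it, then fires its receive transition and, by requirement~($\dagger$) under $A_{loc}$-satisfying inputs and fair scheduling, reaches a send state and forwards the token. Since no relay ever retains the token, the token keeps circulating; thus $\GF\token$ survives for the distinguished process and its local run---hence the failure of $\psi$---is unchanged modulo stuttering.

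Bounding ($\sys{n}\to\sys{1}$) is the heart of the argument. Given a run of $\sys{n}$ in which the distinguished process realises $r$ with assumptions holding and $\psi$ failing, I would discard the intermediate $B$-processes and let the single surviving neighbour impersonate the entire rest of the ring: it must repeatedly accept the token from the distinguished process and return it. The difficulty is re-aligning the hand-offs, because in $\sys{n}$ the token spends time travelling through $B_2,\dots,B_n$ whereas in $\sys{1}$ it must return across one edge. I would use the asynchronous scheduling freedom: while the neighbour receives the token, advances to a send state, and returns it, I simply do not schedule the distinguished process (letting it stutter in its non-token waiting states), so that each return coincides with the moment $r$ is ready to receive. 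Requirement~($\dagger$) enters precisely here---fed $A_{loc}$-satisfying inputs and scheduled fairly, the neighbour always reaches a send state and cannot trap the token, which keeps $\GF\token$ alive. Sequentialising any simultaneous moves and deleting global stutter steps then yields a genuine run of $\sys{1}$ whose projection onto the distinguished process equals $r$, so $\psi$ still fails.

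The main obstacle is exactly this re-alignment: one must show that delaying only the \emph{stuttering} of the distinguished process (never postponing a transition it is forced to take along $r$) always suffices to synchronise every token hand-off with a send state of the neighbour, and that $\GF\token$ together with~($\dagger$) rules out the degenerate run in which the neighbour keeps the token forever. A secondary point is the asymmetry of the special $A$-process: the two constructions are reused with the roles of ``distinguished process'' and ``neighbour'' exchanged, but one must respect that $A$ starts with the token and that exactly one copy of $A$ occurs in every $\sys{n}$, which is what pins the cutoff at $(1,1)$ rather than at a larger pair.
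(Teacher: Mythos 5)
Your overall skeleton matches the paper's proof: reduce (2) to a single distinguished $B$-process by symmetry, exploit stutter-insensitivity of the action-based semantics, and prove the two directions by re-timing the token hand-offs with stuttering under the asynchronous scheduler. The genuine problem is how you populate the auxiliary processes. In both of your constructions (the pass-through relays for monotonicity, and the ``impersonating neighbour'' for bounding) you \emph{invent} the auxiliary process's behaviour and then appeal to requirement~($\dagger$) to conclude that it eventually reaches a send state and forwards the token. But ($\dagger$) is conditional: it only constrains local runs that satisfy $A_{loc}$, and $A_{loc}$ is a fairness condition over $\Ipr\cup\Opr$, i.e.\ over inputs \emph{and outputs}. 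For a relay whose output behaviour is pinned down by its idle-then-relay role, there may be no input sequence at all that makes $A_{loc}$ true on its local run (think of an $A_{loc}$ that requires an output the relay never produces while idling); in that case ($\dagger$) is vacuous and nothing forces the relay to ever release the token, which kills $\GF\token$ for the distinguished process and collapses the construction. ``Fed $A_{loc}$-satisfying inputs'' is therefore not a move you are entitled to make.

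The paper closes exactly this hole by never constructing auxiliary behaviour from scratch: in the bounding direction it keeps the \emph{actual} local runs of both $A$ and $B_1$ from the large run and only re-times the single hand-off ($B_1$ stutters at its send state while the distinguished process executes the segment of its run that, in the large system, happened while the token travelled through $B_2,\dots,B_n$); in the monotonicity direction the inserted process \emph{mimics} $B_1$, replaying its token segment in each round. Token forwarding is then witnessed by the given counterexample run rather than derived from ($\dagger$): since the negated property contains the conjunct $\GF\token$ for the distinguished process, the token provably returns to it infinitely often in any counterexample, so every process on the ring that ever holds the token also releases it, and those releases can simply be copied. You should replace your appeal to ($\dagger$) by this replay argument; with that change your re-alignment via extra stutter steps (on whichever side is ahead) goes through, and your handling of the special $A$-process and of the degenerate ``token held forever'' case is consistent with the paper's.
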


  \begin{proof}[Proof idea]
    The proof is inspired by the original proof~\cite{Emerso03}.

    \parbf{Item (1)}
    Fix an arbitrary $n>1$ and let $\varphi(A) = A^A_{loc}\land \GF \token_A \impl \psi(A)$.
    We prove that
    $$
    (A,B)^{(1,1)} \models \varphi(A) ~~\Iff~~ \largesys \models \varphi(A).
    $$

    Consider direction $\Rightarrow$. After contra-positioning:
    $$
    (A,B)^{(1,1)} \not\models \varphi(A) ~~\Leftarrow~~ \largesys \not\models \varphi(A).
    $$
    Given a system run of $(A,B)^{(1,1)}$ that satisfies $\neg\varphi(A)$,
    we build a system run of $(A,B)^{(1,n)}$ that satisfies $\neg\varphi(A)$.
    The construction is in Figure~\ref{amba:fig:simulation_proof_1}.
    \begin{figure}[t]\centering{
      \includegraphics[scale=0.7]{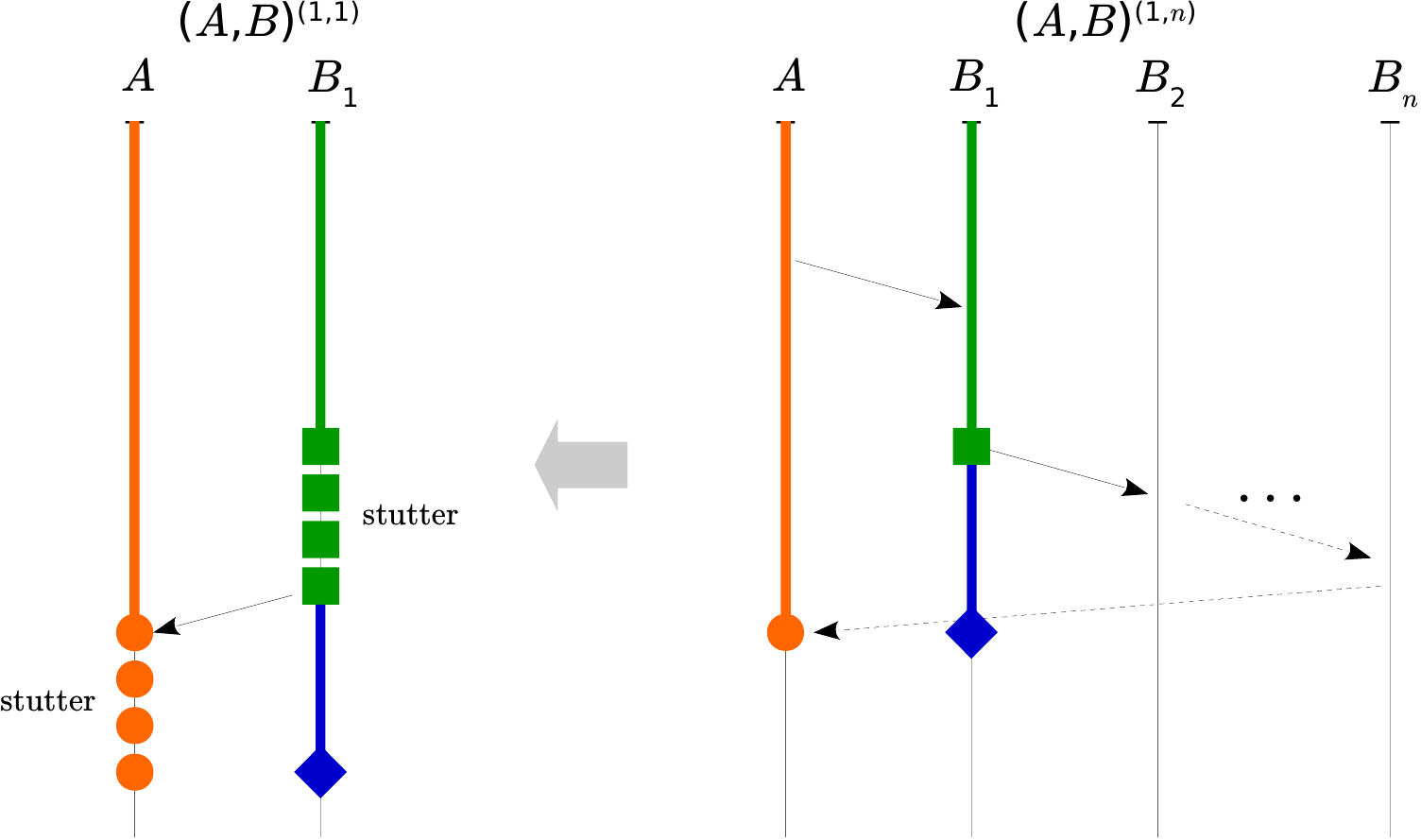}}
      \caption{Constructing a run of a cutoff system from a run of a large system.
        Vertical lines depict (local) paths of the processes,
        the horizontal lines mean the token transmission.
        The process $A$ starts with the token.%
      }
      \label{amba:fig:simulation_proof_1}
    \end{figure}
    We copy the behaviors of processes $A$ and $B_1$ until before $B_1$ sends the token.
    At this moment, we postpone sending the token by $B_1$ and stutter\footnote{
      To ``stutter a process $p$'' means ``not to schedule it''.
      As a result, a stuttered process neither reads inputs nor changes its state.
      In the figures it is shown by repeating a state.%
    }
    it,
    while the process $A$ continues execution until it gets into state \textcolor{orange}{$\CIRCLE$}
    ready to receive the token.
    Then $B_1$ transmits the token to process $A$.
    After that we move process $B_1$ into state \textcolor{blue}{$\blacklozenge$},
    while $A$ stutters in \textcolor{orange}{$\CIRCLE$}.
    Now we are in the original situation and repeat the construction.
    Since the property talks about process $A$ only, the resulting run satisfies it.
    Finally, we assumed that the processes of the large system pass the token infinitely often.
    If some process $B_x \in \{B_1,...,B_n\}$ holds the token forever,
    then we use its behavior for $B_1$ in the cutoff system
    (this may require to insert stuttering steps into behaviors of $B_1$ and $A$ of the cutoff system,
     to synchronize their (finitely many) token transmissions).

    Consider direction $\Leftarrow$. After contra-positioning:
    $$
    (A,B)^{(1,1)} \not\models \varphi(A) ~~\Rightarrow~~ \largesys \not\models \varphi(A).
    $$
    Given a system run of $(A,B)^{(1,n)}$ that satisfies $\neg\varphi(A)$,
    we build a system run of $(A,B)^{(1,1)}$ that satisfies $\neg\varphi(A)$.
    Figure~\ref{amba:fig:simulation_proof_2} shows how to construct a run of a system
    that has one more $B$ process than the cutoff system.
    By repeating the construction we can add the necessary $n-1$ $B$-processes.
    The construction works as follows.
    \begin{figure}[t]\centering{
      \includegraphics[scale=0.7]{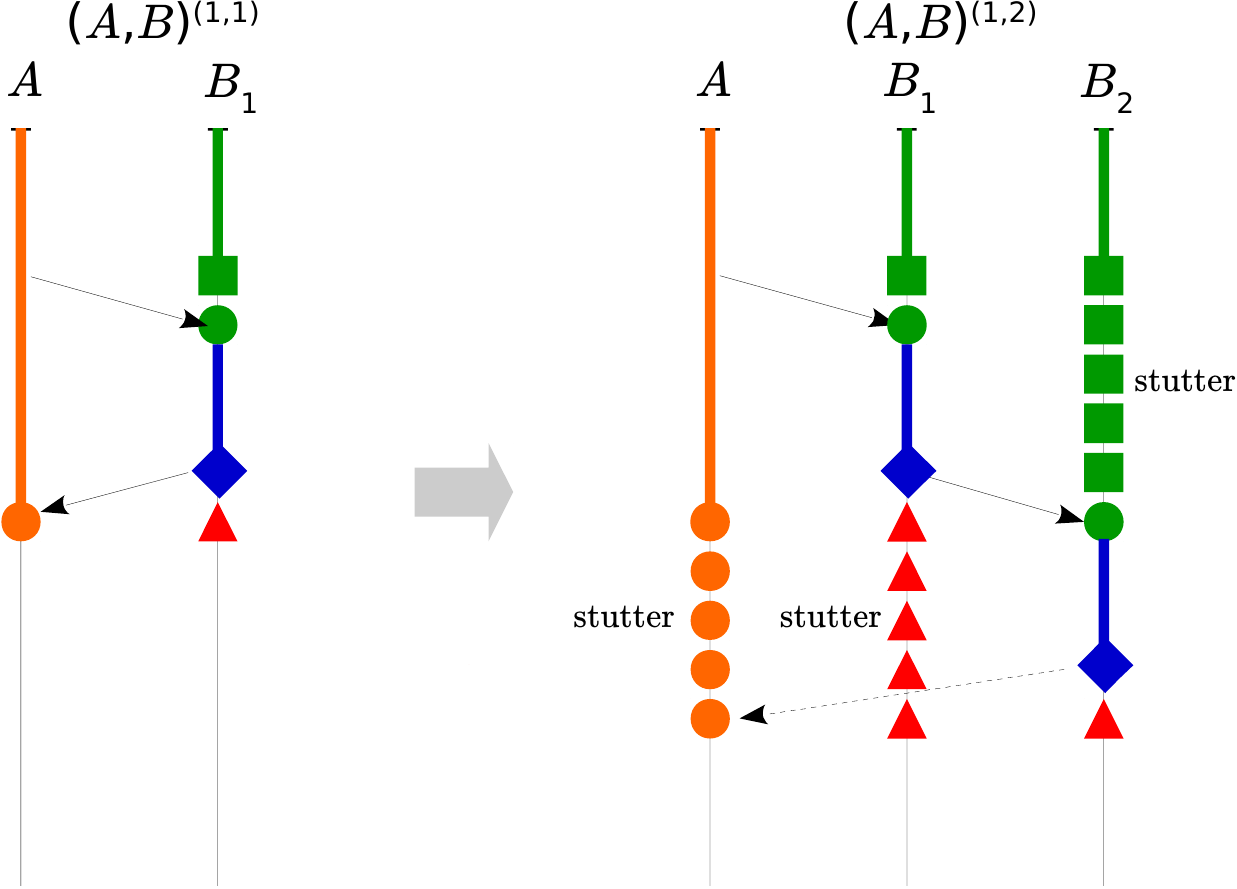}}
      \caption{Constructing a run of a system $(A,B)^{(1,2)}$ from a run of a cutoff system $(A,B)^{(1,1)}$.
        Vertical lines depict (local) paths of the processes,
        the horizontal lines mean the token transmission.
        The process $A$ starts with the token.%
      }
      \label{amba:fig:simulation_proof_2}
    \end{figure}
    The new process $B_2$ copies the behavior of $B_1$ until before $B_1$ receives the token
    (i.e., up to the state \textcolor{green}{$\rule{0.6em}{0.6em}$}).
    Then it stutters in \textcolor{green}{$\rule{0.6em}{0.6em}$} awaiting for the token from process $B_1$.
    After that it copies $B_1$ behavior from state \textcolor{green}{$\CIRCLE$} till \textcolor{blue}{$\blacklozenge$},
    while processes $A$ and $B_1$ stutter.
    Then $B_2$ sends the token to $A$ and we return to the original situation.
    Finally, the case of $B_1$ or $A$ holding the token forever is straightforward.

    \parbf{Item (2)}
    Consider the case $\forall i. \varphi(B_i)$.
    First, we use the symmetry argument: for every $n$,
    $$
    \largesys \models \forall i.\varphi(B_i) ~\Iff~ \largesys \models \varphi(B_1).
    $$
    It holds because, for every $B_i$ and system run that satisfies $\neg\varphi(B_i)$,
    we can construct a run that satisfies $\neg\varphi(B_1)$.
    The latter is possible because all $B$-processes start without the token and $\varphi$ is 1-indexed%
    \footnote{%
      In contrast,
      the symmetry argument will not work for properties of the form $\forall i. \varphi(A,B_i)$,
      because $B_1$, $B_n$, and $B_{x \in \{2,...,n-1\}}$ have different ``relation'' to $A$.
      For example, take the formula $\forall i. \G(\token_A \impl \token_A \W \token_i)$.
      The (wrongly applied) symmetry argument would produce $\G(\token_A \impl \token_A \W \token_1)$,
      which says that the token moves from $A$ to $B_1$ (trivially true in every system),
      but the original formula does not hold.%
    }.

    After applying the symmetry argument,
    we can use the very same constructions as in item (1),
    see Figures~\ref{amba:fig:simulation_proof_1}~and~\ref{amba:fig:simulation_proof_2}.
    Let us only note the case when the token is stuck in some process.
    As for the construction in Figure~\ref{amba:fig:simulation_proof_2},
    this is simple: the token will be stuck in $A$ or in $B_1$ in the large system too.
    Consider the case in Figure~\ref{amba:fig:simulation_proof_1},
    when the token gets stuck in some process $B_d$ for $d \neq 1$.
    This is the only place in the proof where we use the peculiar structure of the formula to verify:
    $A^B_{loc,1} \land \GF\token_1 \impl \psi(B_1)$.
    Recall that the contra-position negates it and gives 
    $A^B_{loc,1} \land \GF\token_1 \land \neg \psi(B_1)$.
    Thus, in the large system the process $B_1$ receives the token infinitely often,
    and we can simply ignore the case\footnote{%
      We \emph{did} consider the case in other proof branches,
      to avoid relying on the peculiarity of the formula.
      We conjecture that in the case of (more general) properties of the form $\forall i. \varphi(B_i)$ (without $\GF\token_i$),
      the cutoff increases to $(1,2)$.%
    }.
  \end{proof}

  Let us note that without the assumption ``$A$ starts with token'' the constructions break.
  We conjecture that in this case a cutoff increases to $(1,2)$.

\subsection{Experiments} \label{amba:sec:experiments}

  In this section,
  we describe optimizations that are crucial for the synthesis of the parameterized AMBA,
  and present synthesis timings and resulting implementations.
  Most of the optimizations were already described in Section~\ref{tok_rings:sec:optimizations}.
  One interesting and not previously described optimization is ``Decompositional synthesis'',
  where the specification is synthesized incrementally, starting from a subset of the properties.
  It is this optimization that allowed us to synthesize the AMBA.

\parbf{Prototype}
  Our prototype is based on our tool \textsc{Party}~\cite{party},
  a synthesizer of parameterized token rings.
  \textsc{Party} is written in Python,
  uses LTL3BA~\cite{LTL3BA} for automata translation and Z3~\cite{Moura08} for SMT solving.
  The prototype and specification files can be found at \small{\url{https://github.com/5nizza/Party/}} (branch `amba-gr1').
  The experiments were run on a x86\_64 machine with $2.6$GHz CPU, $12$GB RAM, Ubuntu OS.

\parbf{Synchronous hub abstraction (Section~\ref{tok_rings:sec:optimizations})}
  Synchronous hub abstraction can be applied to 1-indexed specifications.
  It lets the environment simulate all but one process, and always schedules this process.
  Thus, the synthesizer searches for a process template in the synchronous setting
  with additional assumptions on the environment, namely: 
  (i) the environment sends the token to the process infinitely often, and
  (ii) the environment never sends the token to the process if it already has it.
  The synchronous hub abstraction is \emph{sound and complete} for 1-indexed properties.
  After applying this optimization \emph{any} monolithic synthesis method
  can be applied to the resulting specification
  (in the form of Eq.\ref{tok_rings:eq:hub-abstraction} on page~\pageref{tok_rings:eq:hub-abstraction}).

\parbf{Hardcoding states with and without the token~\cite[Section4]{Khalimov13}}
  The number of states with and without the token in a process template
  defines the degree of the parallelism in a token ring.
  Parallelism increases with the number of states that do not have the token.
  In the AMBA case study, any grants related action depends on having the token.
  Thus we divide the states in the process template:
  (a) one state does not have the token, while
  (b) all other states have the token.
  We do this by hardcoding the $\tok$ output function.

\parbf{Decompositional synthesis of different grant schemes}
  The idea of the decompositional synthesis is:
  synthesize a subset of the properties,
  then synthesize a larger subset using the model from the previous step as the basis.
  Consider an example of the synthesis of the non-0-process of AMBA.
  The flow is: 
  \begin{enumerate}
  \- Assume that every request is a locked request of type \hburstfour, i.e., 
     add the assumption $\always(\hlocki\land\hburst=\hburstfour)$ to the specification.
     This implicitly removes guarantee G2 and assumption A1 from the specification.
     Synthesize the model. The resulting model has $10$ states (states $t0,..,t9$ and transitions between them in Figure~\ref{amba:fig:ith-model}).

  \- Use the model found in the previous step as the basis:
     assert the number of states, values of output functions in these states, transitions for inputs that satisfy the previous assumption.
     Transitions for inputs that violate the assumption from step 1 are not asserted, and thus are left to be synthesized.

     Now relax the assumptions: allow locked and non-locked \hburstfour\ requests, i.e., replace the previous assumption with $\always(\hburst=\hburstfour)$.
     Again, this implicitly removes G2 and A1.
     In contrast to the last step, now guarantee G3 is not necessarily `activated' if there is a request.

     Synthesize the model.
     This may require increasing the number of states (and it does, in the case of non-0 process)---add new states and keep assertions on all the previous states.

  \- Assert the transitions of the model found, as in the previous step.\\
     Remove all added assumptions and consider the original specification.
     Synthesize the final model.
\end{enumerate}

Although for AMBA this approach was successful, it is not clear how general it is.
For example, it does not work if we start with locked \hburstfour\ and \hready\ always high,
and then try to relax it.
Also, the separation into sets of properties to be synthesized was done manually.

\begin{figure}[t]\centering{
\includegraphics[width=\textwidth]{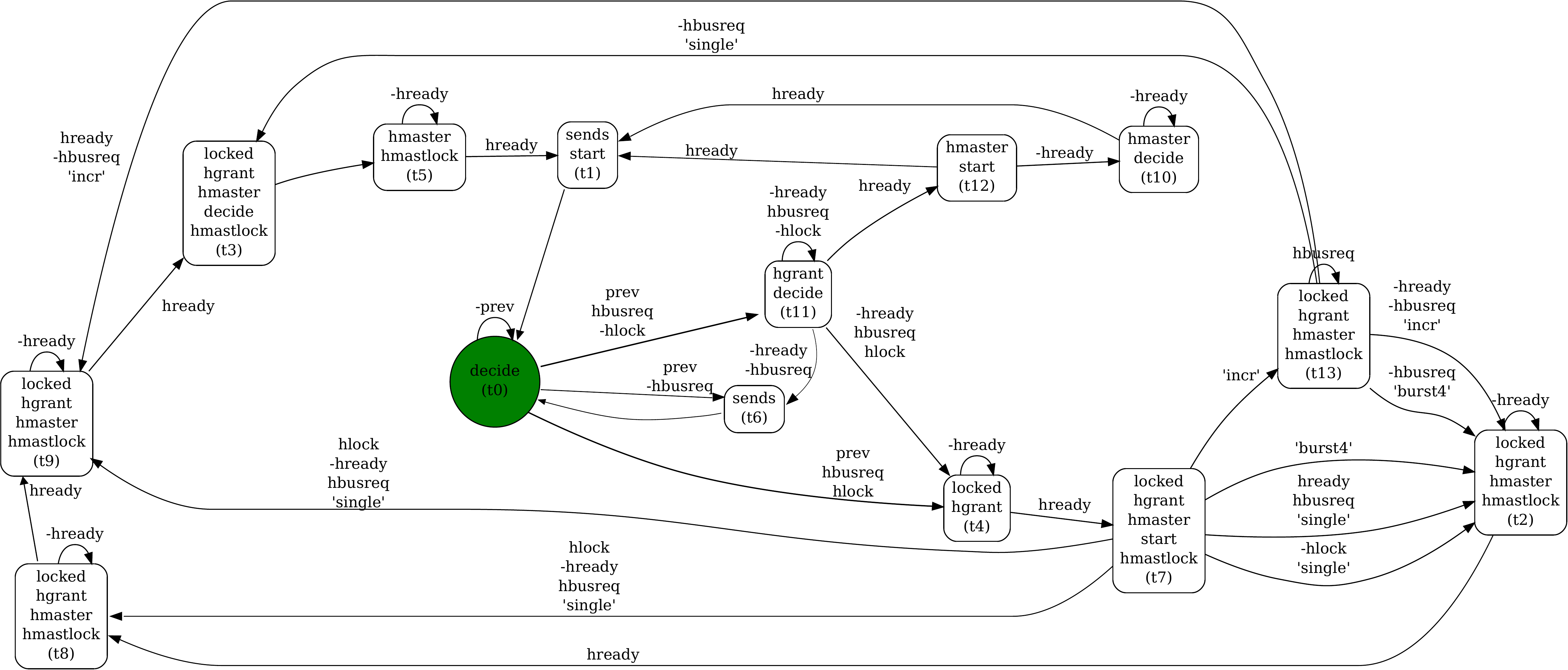}}
\caption{Synthesized model of non-$0$-processes (after manual simplification).
Circle green state ($t0$) is without the token, other states are with the token.
The initial state is $t0$.
States are labeled with their active outputs.
Edges are labeled with inputs, a missing input variable means ``don't care''.
`Burst4' means $\hburst=\hburstfour$, `incr' means $\hburst=\hincr$, `single' means neither of them.
In the first step of decompositional synthesis states $t0,..,t9$ were synthesized, in the second $t10,..,t12$ were added, in the final step state $t13$ was added.}
\label{amba:fig:ith-model}
\end{figure}
\begin{figure}[t]\centering{
\includegraphics[width=\textwidth]{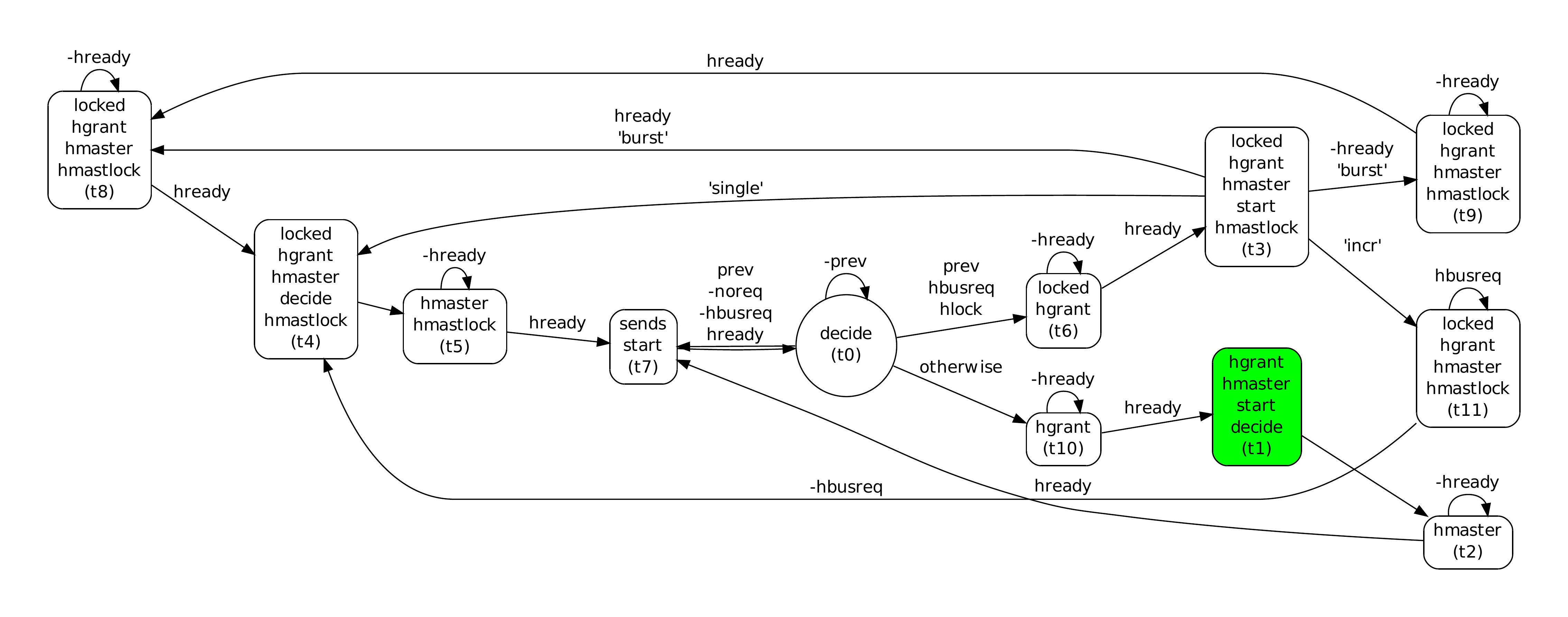}}
\caption{Synthesized model of $0$-processes (after manual simplification).
Circle state ($t0$) is without the token, other states are with the token.
The initial state is $t1$.
States are labeled with their active outputs.
Edges are labeled with inputs, a missing input variable means ``don't care''.
`Burst' means $\hburst=\hburstthree$ (recall we decreased the length of bursts for 0 process), `incr' means $\hburst=\hincr$, `single' means neither of them.
In the first step of decompositional synthesis states $t0,..,t10$ were synthesized, in the second only transitions were synthesized, but no new states added, in the final step $t11$ was added.}
\label{amba:fig:zero-model}
\end{figure}

\parbf{Results}
Synthesis times are in Tables~\ref{amba:tab:non-zero-process}~and~\ref{amba:tab:zero-process}, 
the model synthesized for non-0 process is in Figure~\ref{amba:fig:ith-model}.
The table has timings for the case when all optimizations described in this section are enabled --- it was not our goal to evaluate the optimizations separately, but to find a combination that works for the AMBA case study.

For the $0$-process we considered a simpler version with burst lengths reduced to 2/3 instead of the original 3/4 ticks.
With the original length, the synthesizer could not find a model within 2 hours (it hanged checking 11 state models, while the model has at least 12 states).

Without the decompositional approach,
the synthesizer could not find a model for for non-0 process of the AMBA specification within 5 hours.

\begin{table}[tb]
\footnotesize
\centering
\begin{minipage}[b]{0.45\textwidth}
\centering
\begin{tabular}{ l|cc }
Addit.\ assumptions                          & time & \#states \\
\hline
\rule{0pt}{3ex} \specialcellC{$\always \hlock$ \\ $\always \hburst=\hburstfour$} 
                       & 16min.  & 10  \\
\hline
\rule{0pt}{2ex} \specialcellC{$\always \hburst=\hburstfour$} & 13sec.  & 13  \\
\hline
\rule{0pt}{2ex} 
-- (Full Specification) & 1min.  & 14 \\
\end{tabular}
\caption{Results for non-$0$ process.\\~}
\label{amba:tab:non-zero-process}
\end{minipage}
\hspace{0.5cm}
\begin{minipage}[b]{0.45\textwidth}
\centering
\begin{tabular}{ l|cc }
Addit. assumptions & time & \#states \\
\hline
\rule{0pt}{3ex} \specialcellC{$\always \hlock$ \\ $\always \hburst=\hburstfour$} 
                       & 3h.  & 11  \\
\hline
\rule{0pt}{2ex} \specialcellC{$\always \hburst=\hburstfour$} & 1min.  & 11  \\
\hline
\rule{0pt}{2ex} 
-- (Full Specification) & 1m30s.  & 12 \\
\end{tabular}
\caption{Results for $0$-process \\ (bursts reduced: $3/4 \rightarrow 2/3$).}
\label{amba:tab:zero-process}
\end{minipage}
\end{table}


\subsection{Discussion}

We have shown that parameterized synthesis in token rings can be used to 
solve benchmark problems of significant size, in particular the well-known 
AMBA AHB specification that has been used as a synthesis benchmark for a long 
time. To achieve this goal, we slightly extended the cutoff results that 
parameterized synthesis is based on, and used a number of optimizations in 
the translation of the specification and the synthesis procedure itself to 
make the process feasible.

This is the first time that the AMBA case study, or any other realistic case,
has been solved by an automatic synthesis 
procedure for the parameterized case. However, some of the steps in the 
procedure are manual or use an ad-hoc solution for the specific problem at 
hand, like the limited extension of cutoff results for global inputs, the 
construction of suitable functions to convert local to global outputs, or the 
decompositional synthesis for different grant schemes. Generalizing and 
automating these approaches is a possible future work.

Our synthesized implementation is such 
that the size of the parallel composition grows only linearly with the number of 
components. Thus, for this case study our approach does not only solve the 
problem of increasing synthesis time for a growing number of components, but 
also the problem of implementations that need an exponential amount of memory 
in the number of components. We pay for this small amount of memory with a 
less-than-optimal reaction time, as processes have to wait for the token in 
order to grant a request. This restriction could be remedied by extending the 
parameterized synthesis approach to different system models, e.g., processes 
that coordinate by guarded transitions~\cite{EmersonK03},
or communicate via broadcast messages~\cite{EsparzaFM99}.

\section{Conclusion}\label{tok_rings:sec:conclusion}

In this chapter,
we studied the parameterized synthesis of token-ring systems from the applied perspective.
The starting point was the original approach of Bloem and Jacobs~\cite{JB14},
which could be applied only to toy specifications.
We suggested several optimizations that made
it applicable to larger ``made-up'' specifications.
Then we tackled the real-life specification, that of the AMBA bus protocol,
and suggested further optimizations.
This required us to extend the theory behind the approach.
In the end,
we synthesized a solution for the AMBA specification in the parameterized sense,
for the first time ever.

\backmatter

\bibliographystyle{plain}
\phantomsection
\addcontentsline{toc}{chapter}{Bibliography}
\bibliography{refs}

\end{document}